\renewcommand{\input}[1]{}
\pgfplotsset{compat=1.17}
\DeclareMathOperator*{\argmax}{arg\,max}
\def\EQ#1{\begin{equation*}\begin{split}#1\end{split}\end{equation*}}
\def\colon{\,:\,}
\def\bar{\,|\,}
\def\1{\mathbbm{1}}
\newcommand{\reals}{\mathbb{R}}
\newtheorem{theorem}{Theorem}%[section]
\newtheorem{lemma}{Lemma}[section]
\theoremstyle{definition}
\newtheorem{definition}[lemma]{Definition}
\newtheorem{example}[definition]{Example}
\newcommand{\agents}{[n]}
\newcommand{\nw}{\mathrm{NW}}
\newcommand{\sw}{\mathrm{SW}}
\author[1,2]{Michal Feldman}
\author[1]{Simon Mauras}
\author[1]{Tomasz Ponitka}
\affil[1]{Tel Aviv University}
\affil[2]{Microsoft ILDC}
\title{On Optimal Tradeoffs between EFX and Nash Welfare\thanks{This work was supported by the European Research Council (ERC) under the European Union's Horizon 2020 research and innovation program (grant agreement No. 866132), by the Israel Science Foundation (grant number 317/17), by an Amazon Research Award, by the NSF-BSF (grant number 2020788), and by a grant from the Tel Aviv University Center for AI and Data Science (TAD).

We also thank Amos Fiat for invaluable discussions, and Vishnu V. Narayan for his helpful feedback.}
}
\begin{document}
%\doublespacing

\maketitle

\begin{abstract}
A major problem in fair division is how to allocate a set of indivisible resources among agents fairly and efficiently. {The goal of this work is to characterize the} tradeoffs between 
{two} well-studied measures of fairness and efficiency --- {\em envy freeness up to any item} (EFX) for fairness, and {{\em Nash welfare}} for efficiency
--- {by saying, for given constants $\alpha$ and $\beta$, whether there exists an $\alpha$-EFX allocation that guarantees a $\beta$-fraction of the maximum Nash welfare ($\beta$-MNW).}
For additive valuations, we show that for any $\alpha \in [0,1]$, there exists a partial allocation that is $\alpha$-EFX and $\frac{1}{\alpha+1}$-MNW{, and this tradeoff is tight (for any $\alpha$)}. 
We also show that for $\alpha\in[0,\varphi-1 \approx 0.618]$ these partial allocations can be turned into complete allocations where all items are assigned. Furthermore, for any $\alpha \in [0, 1/2]$, we show that the tight tradeoff of $\alpha$-EFX and $\frac{1}{\alpha+1}$-MNW with complete allocations holds for the more general setting of {\em subadditive} valuations.
{Our results improve upon the current state of the art, for both additive and subadditive valuations, and match the}
best-known approximations of EFX under complete allocations, regardless of Nash welfare guarantees.
Notably, our {constructions} for additive valuations also provide EF1 and constant approximations for maximin share guarantees. 
\end{abstract}

%%%%%%%%%%%%%%%%%%%%%%%%%%%%%%%%%%%%%%%%%
%% ignored if this is the main tex file
%% removes headers if this file is included from AAAI submission
\fi
%%%%%%%%%%%%%%%%%%%%%%%%%%%%%%%%%%%%%%%%%

\section{Introduction}

A common resource allocation setting has $m$ goods and $n$ agents with (possibly different) preferences over the set of goods. 
One of the biggest questions in such scenarios, dating back to \citet*{S1948}, is how to allocate goods {\em fairly} among agents, given their preferences.
This question has attracted extensive attention in many communities, {which led} to {multiple} {different} notions of fairness, capturing various philosophies of what {it} means {to be fair}. 
Another central goal is to divide the items {\em efficiently} so that the collective welfare of the agents, represented by some efficiency measure, is maximized. The aim of this paper is to analyze the extent to which fairness and efficiency can be achieved simultaneously for two well-studied notions of fairness and efficiency.
We study this question in a resource allocation setting with indivisible goods, described below.

\vspace{0.05in}
{\bf Resource allocation problems.}
A resource allocation problem is given by a set of $m$ indivisible goods, a set of $n$ agents{, and} a valuation function $v_i:2^{[m]}\rightarrow \reals^{\geq 0}$ {for every agent $i$}, which assigns a real value $v_i(S)$ to every bundle of goods $S\subseteq [m]$.
{The most common type of valuation functions are {\em additive} valuations, where} there exist individual values $v_{ig}$ such that $v_i(S)=\sum_{g \in S} v_{ig}$ for every bundle $S$ {and agent $i$}. We also consider the widely studied class of {\em subadditive} valuations, where $v_i(S \cup T) \leq v_i(S) + v_i(T)$ for every $S,T \subseteq [m]$. Subadditive valuations constitute the frontier of complement-free valuations \cite{LLN06}.

An allocation of $m$ goods amongst $n$ agents is given by a vector $X=(X_1, \ldots, X_n)$ where the sets $X_i$ and $X_j$ are disjoint for every pair of {distinct} agents $i$ {and} $j$. { The set $X_i$ represents the set of goods allocated to agent $i$.}
An allocation is said to be {\em complete} if $\bigcup_i X_i = [m]$, and {\em partial} if some items {might be} unallocated.
{In resource allocation problems,} one is {typically} interested in complete allocations{,} but in some contexts{,} including in the context of this paper{,} {it also makes sense to} consider partial allocations. 

\vspace{0.05in}
{\bf Fairness notions.}
{The problem of finding allocations that are fair is called
fair division.}
{Many {different definitions of fairness} have been {proposed} in the fair division domain, see \cite{ABFV22} for a survey.}
{One of the most compelling {of those} notions} is that of {\em envy-freeness}, introduced by Foley \cite{F66}. 
An allocation $X=(X_1, \ldots, X_n)$ is envy-free (EF) if for every pair of agents $i$ {and} $j$, it holds that $v_i(X_i) \geq v_i(X_j)$, namely, every agent (weakly) prefers her own allocation over that of any other agent.
In the context of envy-freeness, it is usually required that the allocations are complete. 
Indeed, without this constraint, every instance vacuously admits an EF allocation, namely one where no item is allocated. 
 
A major weakness of EF is that even the simplest settings may not admit complete EF allocations. For example, 
{if there is a single good and two agents with value $1$ for this good, }
then in any complete allocation, only one of the agents gets the {single} good {and} the other {agent is} envious. 

Consequently, various relaxations of EF have been {introduced}. 
{First, consider envy-freeness up to one item (EF1), defined by \citet*{B11}.}
An allocation $X$ is EF1 if for every pair of agents $i$ {and} $j$, there exists an item $g \in X_j$, such that $v_i(X_i) \geq v_i(X_j - g)${, n}amely, agent $i$ does not envy agent $j$ after removing some item from $j$'s bundle.
Using a {well-known} technique {of eliminating} envy cycles, it is not too difficult to show that a complete EF1 allocation always exists \cite{LMMS04}.
For concreteness, consider the following example. 

\begin{example}
\label{ex:simple-ex}
Consider a setting with three items, $\{a,b,c\}$, and two agents, $\{1,2\}$, and suppose that both agents have identical additive valuations, with values $v(a)=v(b)=1$ and $v(c)=2$.
\end{example}

\noindent In Example~\ref{ex:simple-ex}, the allocation $X_1=\{a\}$ and $X_2=\{b,c\}$ is EF1. Indeed, after removing $c$ from $X_2$, agent 1 has no envy towards agent 2.

A stronger notion than EF1 is {
envy freeness up to {\em any} item (EFX), introduced by \citet*{CKMPSW16}.}
An allocation $X$ is EFX if for every pair of agents $i$ {and} $j$, and {\em every} item $g \in X_j$, it holds that $v_i(X_i) \geq v_i(X_j - g)$.
{This} is a stronger {condition} than EF1 since the requirement for no envy applies to the removal of {\em any} item from $j$'s bundle.
For instance, in Example~\ref{ex:simple-ex}, the allocation $X_1=\{a\}$ and $X_2=\{b,c\}$ is not EFX, since after removing $b$ from $X_2$, agent $1$ still envies agent $2$.

Unfortunately, complete EFX allocations are not known to exist even for additive valuations, except in several special cases such as identical valuations \cite{PR18}, identical items \cite{LMMS04}, or the case of three agents \cite{CGM20}. 
Arguably, the existence of EFX allocations is the most enigmatic problem in the fair division domain~\cite{P20}.

A natural step to approach the existence of EFX allocations is to consider the approximate notion of $\alpha$-EFX,  defined by \citet*{PR18}.
An allocation $X$ is $\alpha$-EFX, for {some} $\alpha \in [0,1]$, if for every pair of agents $i$ {and} $j$, and {every} item $g \in X_j$, it holds that $v_i(X_i) \geq \alpha \cdot v_i(X_j - g)$.
The existence of $\alpha$-EFX allocations has been studied for several classes of valuations.
In particular, previous work has established (i) the existence of $(\varphi-1)$-EFX allocations for additive valuations, where $\varphi \approx 1.618$ is the golden ratio~\cite{AMN20}, and (ii) the existence of $1/2$-EFX allocations for subadditive valuations \cite{PR18}. 

\vspace{0.05in}
{\bf Efficiency notions.}
Common measures {that capture the efficiency of an allocation} are: (i) social welfare --- the sum of agent values, $\sw(X)=\sum_{i \in \agents}v_i(X_i)$, and (ii) Nash welfare --- the geometric mean of agent values, $\nw(X)=\prod_{i\in \agents}v_i(X_i)^{1/n}$.
In this paper, as in many previous studies, we consider the Nash welfare notion.

{In the literature, Nash welfare is often described as a measure interpolating between social welfare, which captures efficiency, and egalitarian welfare, defined as $\mathrm{EW}(X) = \min_{i \in \agents} v_i(X_i)$, which captures fairness \cite{CKMPSW16}. From that perspective, Nash welfare simultaneously serves as a measure of both fairness and efficiency
}

{The most important property of the Nash welfare is that it} encourages more balanced allocations relative to social welfare.
For example, consider a setting with two agents and two items, where every agent values every item at $1$. The unique maximum Nash welfare (MNW) allocation is the one that allocates one item per agent. In terms of social welfare, however, every allocation is equally good, including one that gives two items to one of the agents and none to the other, as all complete allocations have social welfare of $2$.

{Remarkably}, under additive valuations, every allocation that maximizes Nash welfare is EF1~\cite{CKMPSW16}. {In fact, maximizing Nash welfare is the only welfarist rule satisfying EF1~\cite{YS23}.} 
{Similar results have been shown for other valuation classes: every allocation that maximizes Nash welfare is $1/4$-EF1 when the valuations are subadditive \cite{WLG21}, and it is EFX when the valuations are additive and bi-valued \cite{ABFHV20}, or when they are submodular and dichotomous \cite{BEF21}.}

An allocation $X$ is said to be $\beta$-max Nash welfare ($\beta$-MNW) if the Nash welfare of $X$ is at least a $\beta$ fraction of the maximum Nash welfare.

\vspace{0.05in}
{\bf Fairness vs efficiency trade-off.}
A natural question is whether fairness and efficiency can be achieved simultaneously. \citet*{BLMS19} studied this question with respect to {the efficiency measure of the social welfare} for {instances with} additive valuations. Specifically, the provided bounds for the ``price of fairness" with respect to several fairness notions; namely, the  fraction of the maximum social welfare that can be achieved, when constrained by the corresponding fairness {condition}. 

They showed that the price of fairness of EFX is $\Omega(\sqrt{n})$, i.e., there are instances with additive valuations in which no EFX allocation can achieve more than a $O(1/\sqrt{n})$ fraction of the optimal social welfare. 
On the other hand, it is possible to find (partial) EFX allocations that obtain a constant fraction of the maximum Nash welfare \cite{CGH19}.
This motivates the study of optimal tradeoffs between $\alpha$-EFX and $\beta$-MNW,  {which is the focus of this paper.}

Notably, the original motivation for considering $\alpha$-EFX has been the embarrassment around the EFX existence problem, and the original motivation for considering 
$\beta$-MNW has been the NP-hardness of maximizing Nash welfare \cite{RE09,NRR12a}. 
As it turns out, an equally important motivation for studying approximate notions of EFX and MNW is the fact that one may come at the expense of another.
Thus, understanding the tradeoffs between these fairness and efficiency measures is crucial when designing a resource allocation scheme. 

Prior to our work, the following tradeoffs (demonstrated in Figure~\ref{fig:results}) have been known:
\vspace{-0.02in}
\begin{itemize}
    \item  instances with additive valuations admit a partial EFX allocation that is $1/2$-MNW \cite{CGH19}.
    \item  instances with additive  valuations admit a complete allocation that is $(\varphi-1)$-EFX (with no Nash welfare guarantees) \cite{AMN20}.
    \item  instances with subadditive valuations admit a complete allocation that is $1/2$-EFX and $1/2$-MNW \cite{GHLVV22}.
\end{itemize}

As described in the next section, we extend these results {to} give a more complete picture of the optimal tradeoffs between $\alpha$-EFX and $\beta$-MNW.

\subsection{Our Results}
In this paper we provide results on the optimal trade-offs between approximate EFX and approximate maximum Nash welfare (MNW), for both additive valuations and subadditive valuations. 
Our results are demonstrated in Figure~\ref{fig:results}, where the left and right figures correspond to additive and subadditive valuations, respectively.

\begin{figure}[h!]
    \begin{minipage}{.49\textwidth}
\hspace{2.5cm} Additive valuations\\
\begin{tikzpicture}
\begin{axis}[
  width=\textwidth, height=\textwidth,
  clip = false,
  xtick={0,0.618,1},
  ytick={0,0.5,0.618,1},
  xticklabels = {$0$,$1/2$,$\varphi-1$,$1$},
  yticklabels = {$0$,$\varphi-1$,$1$},
  xlabel={$\alpha$-EFX},
  ylabel={$\beta$-MNW},
  xlabel style={at={(axis description cs:0.25,0)}},
  ylabel style={at={(axis description cs:0,0.25)}},
  xmin=0,  xmax=1, ymin=0,  ymax=1]
\addplot [name path=f,domain=0:1] {1/(1+x)};
\path[name path=topaxis] (axis cs:0,1) -- (axis cs:1,1);
\path[name path=botaxis] (axis cs:0,0) -- (axis cs:1,0);
\tikzfillbetween[of=f and topaxis,on layer=main] {pattern=north east lines};
\tikzfillbetween[of=botaxis and f,on layer=background,soft clip={domain=0:.618}] {black!5!white};
\draw[dotted] (0,0.618) -- (0.618,0.618);
\draw[dotted] (0,0.5) -- (1,0.5);
\draw (0.618,0.618) -- (0.618,0);
\node[fill=white] at (0.6,0.9)
{\footnotesize no partial allocation};
\node at (0.15,0.75)
{\footnotesize $\beta = \frac{1}{1+\alpha}$};
\node at (0.3,0.25) {complete};
\node at (0.3,0.18) {allocations};
\node at (0.8,0.25) {partial};
\node at (0.8,0.18) {allocations};
\begin{scope}[blue]
    \node[anchor=north east] (CGH19) at (0.9,0.45) {\cite{CGH19}};
    \draw[->] (CGH19.5) to (0.99,0.49);
    \node[anchor=south] (AMN20) at (0.5,0.05) {\cite{AMN20}};
    \draw[->] (AMN20) to (0.608,0.01);
    \fill (.618,0) circle (0.1cm); 
    \draw[very thick] (1,0.5) circle (0.1cm); 
    \draw[very thick] (1,0.51) -- (1,1);
\end{scope}
\draw[->] (0.14,0.79) -- (0.17,0.82);
\end{axis}
\end{tikzpicture}
\end{minipage}
\hfill
\begin{minipage}{.49\textwidth}
\hspace{1.8cm} Subadditive valuations\\
\begin{tikzpicture}
\begin{axis}[
  width=\textwidth, height=\textwidth,
  clip = false,
  xtick={0,0.5,1},
  ytick={0,0.5,0.67,1},
  xlabel={$\alpha$-EFX},
  ylabel={$\beta$-MNW},
  xticklabels = {$0$,$1/2$,$1$},
  yticklabels = {$0$,$1/2$,$2/3$,$1$},
  xlabel style={at={(axis description cs:0.25,0)}},
  ylabel style={at={(axis description cs:0,0.25)}},
  xmin=0,  xmax=1, ymin=0,  ymax=1]
\addplot [name path=f,domain=0:1] {1/(1+x)};
\path[name path=topaxis] (axis cs:0,1) -- (axis cs:1,1);
\path[name path=botaxis] (axis cs:0,0) -- (axis cs:1,0);
\tikzfillbetween[of=f and topaxis,on layer=main] {pattern=north east lines};
\tikzfillbetween[of=botaxis and f,on layer=background,soft clip={domain=0:.5}] {black!5!white};
\draw[dotted] (0,0.667) -- (0.5,0.677);
\draw[dotted] (0,0.5) -- (1,0.5);
\draw (0.5,0.67) -- (0.5,0);
\node at (0.25,0.25) {complete};
\node at (0.25,0.18) {allocations};
\node at (0.75,0.25) {?};
\node[fill=white] at (0.6,0.9)
{\footnotesize no partial allocation};
\node at (0.15,0.75)
{\footnotesize $\beta = \frac{1}{1+\alpha}$};
\begin{scope}[blue]
    \node[anchor=north east] (GHLVV22) at (0.42,0.45) {\cite{GHLVV22}};
    \draw[->] (GHLVV22.5) to (0.49,0.49);
    \fill (.5,0.5) circle (0.1cm); 
\end{scope}
\draw[->] (0.14,0.79) -- (0.17,0.82);
\end{axis}
\end{tikzpicture}
\end{minipage}
    \caption{Trade-off between the existence of  $\alpha$-EFX and $\beta$-MNW allocations, for additive (left) and subadditive (right) valuations. 
    Previous positive results are represented by dots (complete dots for complete allocations, and hollow dots for partial allocations). 
    Dots that lie on the curves belong to the regions of positive results. All positive results for additive valuations also guarantee~EF1.}
    \label{fig:results}
\end{figure}
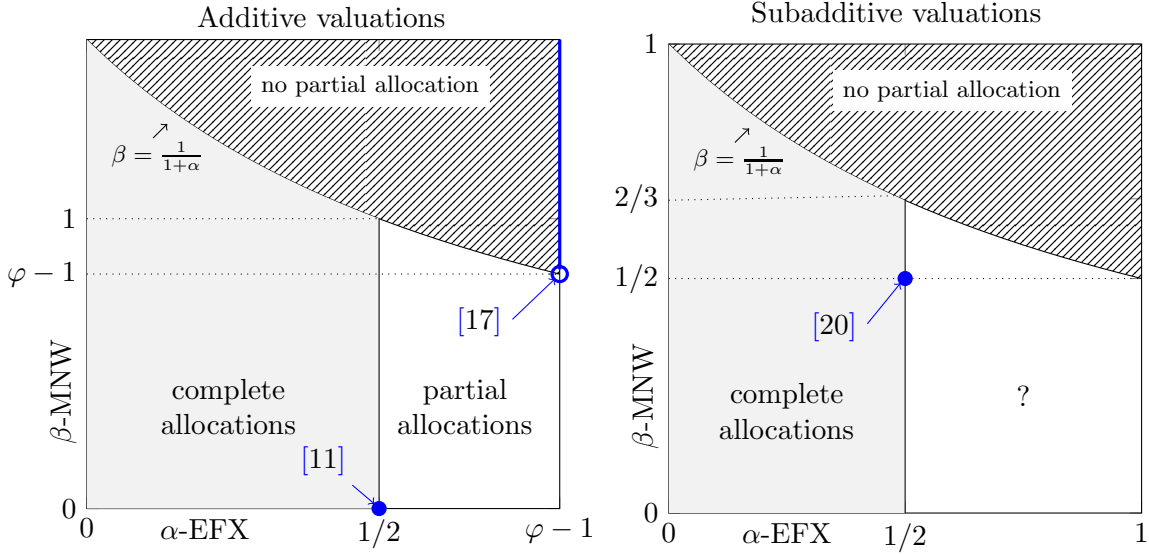

Our first result gives existence guarantees on  partial allocations with approximate EFX and approximate MNW {(and EF1)}, for additive valuations. 

\begin{restatable}
{theorem}{thmadditivepartial}
    Every instance with additive valuations admits a partial allocation that is $\alpha$-EFX{, EF1,} and $\frac{1}{\alpha+1}$-MNW, for every $0 \leq \alpha \leq 1$.
    \label{thm:additive-partial}
\end{restatable}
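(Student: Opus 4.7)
The plan is to start from a maximum Nash welfare (MNW) allocation $X^*$, which is known to be EF1 under additive valuations, and iteratively prune it to obtain a partial allocation $Y$ with $Y_i \subseteq X^*_i$ satisfying the three desired properties. The main tool is the first-order optimality condition of $X^*$: transferring any item $g$ from $j$ to $i$ cannot increase Nash welfare, which gives
$$\frac{v_{ig}}{v_i(X^*_i) + v_{ig}} \;\leq\; \frac{v_{jg}}{v_j(X^*_j)}$$
for every pair $i,j$ and every $g \in X^*_j$. This inequality couples $j$'s valuation of an item in $j$'s bundle to $i$'s, and is the principal source of the tight $\frac{1}{1+\alpha}$ factor.

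The pruning proceeds as follows: starting with $Y = X^*$, while some pair $(i,j)$ satisfies $v_i(Y_i) < \alpha \bigl(v_i(Y_j) - \min_{g \in Y_j} v_{ig}\bigr)$ (i.e., an $\alpha$-EFX violation), remove from $Y_j$ the item $g^\star \in Y_j$ that maximizes $v_{ig}$. This is the single removal that most reduces $v_i(Y_j)$, and since items are only dropped the process terminates in at most $m$ rounds with an $\alpha$-EFX allocation. Moreover, because items are never added, an envy of $i$ toward $j$ can only weaken when items leave $Y_j$, so the EF1 property inherited from $X^*$ survives in the direction of any agent whose bundle has not been pruned.

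The crux of the proof, and the step I expect to be hardest, is showing that each $Y_j$ retains at least a $\frac{1}{1+\alpha}$ fraction of its original $v_j$-value, which would immediately yield $\nw(Y) \geq \nw(X^*)/(1+\alpha)$. I would approach this via a charging argument: each removed item $g^\star \in Y_j$ triggered by envier $i$ contributes $v_{jg^\star}$ to the total $v_j$-loss, and the MNW inequality converts this quantity into something of the form $v_{ig^\star}/(v_i(X^*_i)+v_{ig^\star})$, which in turn is controlled by the envy excess that triggered the removal. A naive per-step bound appears too loose; the tight factor $\frac{1}{1+\alpha}$ likely emerges only after aggregating across all removals from the same bundle, possibly through a multiplicative potential function of the form $\prod_j v_j(Y_j)/v_j(X^*_j)$ whose decrement at each removal can be telescoped against the $\alpha$ parameter.

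A remaining subtlety is guaranteeing EF1 in the direction of an agent $j$ whose bundle was pruned: a shrunk $Y_j$ may initially appear to envy an unpruned $Y_k$ by more than one item. I would address this either by slightly refining the pruning rule so that it skips removals which would destroy EF1, or by arguing post-hoc that the bound $v_j(Y_j) \geq v_j(X^*_j)/(1+\alpha)$, combined with the EF1 structure of $X^*$ and the fact that any $Y_k \subseteq X^*_k$, is strong enough to inherit EF1 from $X^*$. Either route should interact cleanly with the charging argument above and thus complete the proof.
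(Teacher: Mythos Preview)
Your plan has a genuine gap at its core. The removal rule you propose---deleting the item $g^\star$ that \emph{maximizes} $v_{ig}$---fails on the paper's own tight instance (Theorem~\ref{thm:additive-upper}): with $n$ agents, items $a_1,\dots,a_{n-1},b_1,\dots,b_n$, and $v_i(a_j)=1/\alpha+\varepsilon$, $v_i(b_i)=1$, $v_i(b_j)=0$, the MNW allocation is $X^*_j=\{a_j,b_j\}$ for $j<n$ and $X^*_n=\{b_n\}$. Agent $n$ $\alpha$-EFX-envies each $j<n$, and your rule removes $a_j$ (the high-$v_n$ item), leaving $Y_j=\{b_j\}$ with $v_j(Y_j)/v_j(X^*_j)\to \alpha/(1+\alpha)$, strictly below $1/(1+\alpha)$ for $\alpha<1$. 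More fundamentally, the MNW first-order condition you invoke gives a \emph{lower} bound on $v_{jg}$ in terms of $v_{ig}$, so it cannot be used to upper bound $j$'s loss from a removal; your charging argument runs in the wrong direction.

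The paper's algorithm differs in two essential ways. First, when the unmatched agent $i^\star$ envies, it selects $(j^\star,g)$ to \emph{maximize} $v_{i^\star}(Z_{j^\star}-g)$ (i.e., it removes the item of \emph{least} value to the envier), and then \emph{reassigns} the pruned bundle $Z_{j^\star}-g$ to $i^\star$, leaving $j^\star$ unmatched. Thus the final allocation is a matching that may give agent $i$ some $Z_j$ with $j\neq i$; it is not generally of the form $Y_i\subseteq X^*_i$. Second, the proof that $v_j(Z_j)\ge \tfrac{1}{1+\alpha}v_j(X^*_j)$ does not use per-item optimality or a charging/potential argument at all. Instead, it argues by contradiction: if the bound ever fails at $j^\star$, one follows the chain of reassignments (an ``improving sequence'' $j_1=j^\star,j_2=i^\star,\dots,j_\ell$ with $M_{j_{s+1}}=Z_{j_s}$) and builds a new complete allocation $\widehat X$ by shifting the $Z$-parts along this chain. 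The reassignment structure guarantees $v_{j_s}(Z_{j_{s-1}})>v_{j_s}(Z_{j_s})$ for intermediate agents and $v_{j_\ell}(Z_{j_{\ell-1}})>(1/\alpha)v_{j_\ell}(Z_{j_\ell})$ at the untouched endpoint, so $\nw(\widehat X)>\nw(X^*)$, contradicting optimality. The EF1 guarantee is likewise obtained structurally: any agent whose bundle has been touched is required to satisfy full EFX (not merely $\alpha$-EFX), while untouched agents inherit EF1 from the MNW allocation. Neither the reassignment nor the improving-sequence contradiction has an analogue in your proposal, and both appear necessary.
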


For $0 \leq \alpha \leq \varphi - 1 \approx 0.618$, we show that our partial allocations can be turned into complete ones without any loss. 
This is cast in the following theorem.

\begin{restatable}
{theorem}{thmadditivecomplete}
    Every instance with additive valuations admits
    a complete allocation that is $\alpha$-EFX{, EF1,} and $\frac{1}{\alpha+1}$-MNW, for every $0 \leq \alpha \leq \varphi - 1 \approx 0.618$.\label{thm:complete_additive}
\end{restatable}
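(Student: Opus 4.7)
The plan is to derive Theorem~\ref{thm:complete_additive} from Theorem~\ref{thm:additive-partial} by a completion argument. Start with the partial allocation $X$ guaranteed by Theorem~\ref{thm:additive-partial} --- which is already $\alpha$-EFX, EF1, and $\frac{1}{\alpha+1}$-MNW --- and let $P$ denote the set of unallocated items. Observe that additivity together with non-negativity of the valuations ensures that placing any item in any bundle only weakly increases $v_i(X_i)$, so the Nash welfare weakly grows; since the optimal MNW is a property of the instance and does not depend on the allocation, the $\frac{1}{\alpha+1}$-MNW guarantee is preserved by any extension. The task therefore reduces to allocating all of $P$ in a way that preserves $\alpha$-EFX and EF1.

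I would build this extension greedily, processing items $g \in P$ one at a time and assigning each to a carefully chosen recipient $i(g)$. A natural rule, which I expect to work after suitable tie-breaking, is to pick $i(g) \in \argmax_i v_i(g)$, since placing $g$ with the agent who values it most tends to leave $g$ among the low-valued items of $X_{i(g)}$ from the perspective of potential enviers. If needed, an envy-cycle elimination subroutine in the spirit of \cite{LMMS04} can be invoked to rearrange bundles before insertion. Either way, the goal is that at each step the newly inserted item does not become ``heavier'' than existing items in $X_{i(g)}$ from an envier's perspective, so that the item removed in the EFX definition can still absorb $g$.

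The inductive core of the argument is that each insertion preserves $\alpha$-EFX and EF1. Fixing recipient $i := i(g)$ and envier $k \ne i$, the post-insertion $\alpha$-EFX requirement is
\[ v_k(X_k) \geq \alpha \bigl( v_k(X_i) + v_k(g) - m \bigr), \]
where $m := \min_{g' \in X_i \cup \{g\}} v_k(g')$. I would split into two cases according to whether $m = v_k(g)$ (in which case the requirement collapses to $v_k(X_k) \geq \alpha v_k(X_i)$) or $m$ is unchanged by the insertion (in which case the requirement is the old $\alpha$-EFX condition shifted by an extra $\alpha v_k(g)$ slack). In both cases, one combines the inductive hypothesis with the $\frac{1}{\alpha+1}$-MNW lower bound on $v_k(X_k)$ to absorb the extra term. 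The pivotal algebraic identity is $\alpha \leq \frac{1}{\alpha+1}$, equivalently $\alpha^2 + \alpha \leq 1$, equivalently $\alpha \leq \varphi - 1$; this is exactly what closes the induction and explains the threshold in the theorem. A parallel but easier argument handles EF1, using the same recipient rule.

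The main obstacle is the tight coupling between the MNW lower bound and the $\alpha v_k(g)$ slack introduced by each insertion: the resulting inequality must be verified simultaneously for every envier $k$, and only the identity $\alpha(\alpha + 1) \leq 1$ is strong enough to close it uniformly. Beyond $\alpha = \varphi - 1$ this mechanism breaks down, which is consistent with the frontier of positive results drawn in Figure~\ref{fig:results}.
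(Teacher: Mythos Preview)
Your proposal has a genuine gap: you treat Theorem~\ref{thm:additive-partial} as a black box and try to complete the partial allocation using only the properties $\alpha$-EFX, EF1, and $\frac{1}{\alpha+1}$-MNW, but these three properties are not enough. The crucial missing ingredient is a bound on how much any agent values any \emph{unallocated} item. Nothing in those three guarantees rules out an unallocated $g$ with $v_k(g)$ arbitrarily large compared to $v_k(X_k)$; whoever then receives $g$ will be $\alpha$-EFX-envied by $k$ regardless of the recipient rule. Your case analysis also does not close: when $m = v_k(g)$ the post-insertion requirement is $v_k(X_k) \geq \alpha\, v_k(X_i)$, but the inductive hypothesis only gives $v_k(X_k) \geq \alpha\, v_k(X_i - g')$ for each $g' \in X_i$, which is strictly weaker. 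And the ``$\frac{1}{\alpha+1}$-MNW lower bound on $v_k(X_k)$'' you invoke is a bound on the geometric mean $\prod_i v_i(X_i)^{1/n}$, not on any individual $v_k(X_k)$, so it cannot be played against $v_k(g)$ or $v_k(X_i)$.

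What the paper actually does is prove an additional structural property of the \emph{specific} partial allocation produced by Algorithm~\ref{alg:alloc}: $\alpha$-separation, meaning $v_i(x) \leq \alpha \cdot v_i(Z_i)$ for every agent $i$ and every unallocated item $x$ (Lemma~\ref{lem:additive_better_sep}). This is not a consequence of the guarantees in Theorem~\ref{thm:additive-partial}; its proof re-opens the algorithm and exploits the MNW-optimality of the starting allocation through a sequence of exchange arguments (the ``touching sequences'' of Definition~\ref{def:touching_seq}). Once $\alpha$-separation is in hand, the envy-cycle procedure (Lemma~\ref{lem:make_complete}) completes the allocation while preserving $\min(\alpha,\frac{1}{1+\alpha})$-EFX, EF1, and Nash welfare; the equivalence $\alpha \le \varphi - 1 \iff \alpha \le \frac{1}{1+\alpha}$ then yields $\alpha$-EFX. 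So your intuition that envy-cycle elimination and the threshold $\alpha(\alpha+1) \le 1$ are the right ingredients is correct, but the real work lies in establishing separation, not in the insertion step.
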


In particular, Theorem~\ref{thm:complete_additive}
extends the existence of $(\varphi-1)$-EFX complete allocation \cite{AMN20} to $(\varphi-1)$-EFX complete allocation that is also $(\varphi-1)$-MNW.
Note that, by Theorem~\ref{thm:additive-upper} below, $(\varphi-1)$ is the highest possible MNW approximation of a (complete or partial) $(\varphi-1)$-EFX allocation.

Our final positive result shows that for any $0 \leq \alpha \leq 1/2$, the same trade-off between EFX and MNW approximation extends to subadditive valuations.

\begin{restatable}
{theorem}{thmsubadditivecomplete}
    Every instance with subadditive valuations admits a complete allocation that is $\alpha$-EFX and $\frac{1}{\alpha+1}$-MNW, for every $0 \leq \alpha \leq 1/2$.
    \label{thm:complete_subadditive}
\end{restatable}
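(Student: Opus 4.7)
The plan is to follow the same template as the additive construction of Theorem~\ref{thm:additive-partial}, adapted to the subadditive setting. Concretely, we start from a max Nash welfare (MNW) allocation $Y$ and iteratively remove $\alpha$-EFX violations via local repairs, maintaining the invariant that the current allocation is $\frac{1}{1+\alpha}$-MNW throughout. Because valuations are only subadditive, we cannot rely on item-level value bounds as in the additive case, so the repairs must operate at the bundle level; this is exactly where the restriction $\alpha\le 1/2$ enters.

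For the local step, suppose the current allocation $X$ is $\frac{1}{1+\alpha}$-MNW but not $\alpha$-EFX: there exist agents $i,j$ and an item $g\in X_j$ with $\alpha\,v_i(X_j\setminus g)>v_i(X_i)$. In particular, $v_i(X_j)\ge v_i(X_j\setminus g)>\tfrac{1}{\alpha}v_i(X_i)\ge 2\,v_i(X_i)$, so $i$ values $j$'s bundle more than twice what $i$ currently holds. The key structural claim I would establish is that either swapping $X_i$ and $X_j$, or transferring a carefully chosen subset from $X_j$ into $X_i$, yields a new allocation whose Nash welfare is still at least $\frac{1}{1+\alpha}$ of the MNW optimum. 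The argument quantifies a trade-off: agent $i$'s utility more than doubles, while the first-order MNW optimality of $Y$ (inherited through the invariant) upper-bounds how much $v_j$ can drop, giving $v_j(X_i)\le \alpha\,v_j(X_j)$ in the worst case.

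For termination, I would use a potential function in the spirit of \cite{PR18}, for instance a lexicographic ordering on the multiset of $\alpha$-EFX violation magnitudes. Each local repair either strictly decreases this potential or strictly increases the Nash welfare, so the procedure terminates in finitely many steps with a complete allocation that is simultaneously $\alpha$-EFX and $\frac{1}{1+\alpha}$-MNW.

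The main obstacle will be verifying that the local repair never overshoots the $\frac{1}{1+\alpha}$ Nash welfare budget. In the worst case for a bundle swap, $v_j(X_i)$ could be arbitrarily small compared to $v_j(X_j)$, threatening the invariant; the restriction $\alpha\le 1/2$ is exactly what is needed to absorb this loss via the factor-$1/\alpha$ gain for agent $i$. Subadditivity is used precisely to convert the bundle-level envy bound $v_i(X_j\setminus g)>\tfrac{1}{\alpha}v_i(X_i)$ into a useful lower bound on $v_i$ after the repair, without needing any decomposition across items. This also explains why the subadditive construction cannot extend beyond $\alpha=1/2$, matching the $1/2$-EFX threshold of Plaut--Roughgarden.
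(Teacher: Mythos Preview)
Your proposal has a genuine gap at the central step. You assert that after a swap of $X_i$ and $X_j$ (or a partial transfer), the Nash welfare stays above $\frac{1}{1+\alpha}$ of optimum because ``the first-order MNW optimality of $Y$ (inherited through the invariant) upper-bounds how much $v_j$ can drop, giving $v_j(X_i)\le \alpha\,v_j(X_j)$ in the worst case.'' This is not established, and in fact cannot be: there is no reason $v_j(X_i)$ is bounded \emph{below} at all. Agent $j$ may assign value $0$ to every item in $X_i$, in which case the swap drives the Nash welfare to zero regardless of how large $i$'s gain is; the restriction $\alpha\le 1/2$ does nothing to salvage this. The optimality of the \emph{original} MNW allocation $Y$ gives you first-order conditions on $Y$, not on the current allocation $X$ after several repairs, so the ``inherited through the invariant'' step is also unjustified. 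Your alternative of ``transferring a carefully chosen subset from $X_j$ into $X_i$'' is left unspecified, and with only subadditivity there is no obvious way to pick a subset that simultaneously gives $i$ enough and costs $j$ little. Finally, the termination argument is not a proof: a bundle swap need not decrease the lexicographic multiset of violation magnitudes, and ``or strictly increases the Nash welfare'' does not by itself guarantee finiteness.

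The paper's route is structurally quite different. It does \emph{not} maintain a complete allocation throughout. Instead, it runs a substantially more intricate algorithm (Algorithm~\ref{alg:new_subadditive_alg} together with the \textsc{split} procedure) that starts from an MNW allocation and produces a \emph{partial} $\alpha$-EFX allocation with $\frac{1}{1+\alpha}$-MNW; the algorithm tracks white, red, and blue bundles, maintains a matching with a chain/cycle structure, and uses a potential function $\bigl(\tfrac{\alpha+1}{\alpha}\bigr)^{\ell_t}\prod_i v_i(X_i^t)$ to control Nash welfare across many case-split repairs (this is where $\alpha\le 1/2$ is actually used, e.g.\ in Lemmas~\ref{lem:z_incr} and~\ref{lem:feas}). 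Only afterwards is the partial allocation turned into a $1$-separated one via swapping-with-singletons (Lemma~\ref{lem:subadditive_sep}) and then completed by the envy-cycles procedure (Lemma~\ref{lem:make_complete}). Your single local-swap template does not capture any of this machinery, and the obstacle you yourself flag in the last paragraph is precisely the one that breaks the argument.
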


In particular, Theorem~\ref{thm:complete_subadditive} extends the existence of {a} $1/2$-EFX {and} $1/2$-MNW complete allocation by \cite{GHLVV22} to {the existence of a} $1/2$-EFX complete allocation that is also  $2/3$-MNW. 
Note that, by Theorem~\ref{thm:additive-upper} below, $2/3$ is the highest possible MNW approximation of a (complete or partial) $1/2$-EFX allocation, even for additive valuations.

More generally, our tradeoffs are tight, as the following theorem shows.

\begin{restatable}[Impossibility]
{theorem}{thmlowerbound}
\label{thm:additive-upper}
For every $0 < \alpha \leq 1$ and $\beta > \frac{1}{\alpha+1}$, there exists an instance with additive (and hence {also} subadditive) valuations that admits no allocation (even partial) that is $\alpha$-EFX and $\beta$-MNW.
\end{restatable}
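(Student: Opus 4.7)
My plan is to exhibit, for each $\alpha \in (0,1]$ and each $\beta > \frac{1}{1+\alpha}$, an instance in which every allocation (partial or complete) satisfying $\alpha$-EFX has Nash welfare strictly less than $\beta \cdot \mathrm{MNW}$. I will look for an example as small as possible, trying first with $n=2$ additive agents and a handful of items; the parameters of the valuation will depend on $\alpha$ and on a small slack $\delta>0$ chosen so that $\beta > \tfrac{1}{1+\alpha} + \Theta(\delta)$. By taking $\delta \to 0$ the ratio will approach $\tfrac{1}{1+\alpha}$ from above, and any $\beta$ above that threshold will be excluded.

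The structural idea I will use is the following. For $n=2$ and additive valuations, $\mathrm{NW}(X) = \sqrt{v_1(X_1)\,v_2(X_2)}$, so I only need to bound the product $v_1(X_1)\,v_2(X_2)$. I will design the valuations so that the (essentially unique) allocation $X^*$ attaining $\mathrm{MNW} = M$ gives one agent, say agent $1$, a ``small'' bundle of value $v_1(X_1^*)$ close to $1$, while agent $2$'s bundle $X^*_2$ has value close to $1+\alpha$ (both measured with the convention that normalises $M = \sqrt{1\cdot(1+\alpha)}$ in the limit $\delta\to 0$). Crucially, the bundle $X^*_2$ will also be very valuable to agent $1$: specifically, after removing any single item from $X^*_2$, agent $1$ would still value it at more than $\tfrac{1}{\alpha}\, v_1(X^*_1)$, which is exactly the $\alpha$-EFX threshold. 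Thus $X^*$ itself is not $\alpha$-EFX. Any attempt to ``repair'' $X^*$ by moving one or more items from agent $2$ to agent $1$, or by unallocating items (partial allocations), will necessarily bring $v_1(X_1)\cdot v_2(X_2)$ below $\frac{M^2}{(1+\alpha)^2}$ because the repair must shift value of order $\alpha$ across the pair. The step I plan to execute most carefully is a full enumeration of the allocations of the constructed instance, showing that each allocation either (i) violates $\alpha$-EFX, or (ii) has product $v_1(X_1)\,v_2(X_2) \le \tfrac{M^2}{(1+\alpha)^2}$; partial allocations will be handled uniformly since dropping items weakly decreases the product of bundle values.

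To move from the two-agent case to the full statement, I will then note the following two reductions, both of which are routine. First, once the impossibility is established for additive valuations, it follows automatically for subadditive valuations, since additive valuations are subadditive and the same instance witnesses the claim. Second, if the optimal ratio I obtain on my candidate construction equals exactly $\tfrac{1}{1+\alpha}$ in the limit, I can take $\delta>0$ small enough so that $\beta > \tfrac{1}{1+\alpha} + C\delta$ for an explicit constant $C$; the instance corresponding to this $\delta$ then admits no $\alpha$-EFX, $\beta$-MNW allocation.

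The hardest step, I expect, is verifying item (ii) above---that no $\alpha$-EFX allocation beats the bound. The difficulty is that partial allocations are allowed, so I must rule out clever allocations that leave some items unassigned to gain slack in the EFX constraint; I plan to argue that, in the specific instance I construct, ``dropping'' an item strictly decreases the product $v_1(X_1)v_2(X_2)$ by more than the EFX slack it provides, so no partial allocation can do better than the best complete $\alpha$-EFX allocation. A clean way to phrase this is that, on the MNW-optimal Pareto frontier of $(v_1(X_1),v_2(X_2))$, the $\alpha$-EFX feasibility region shrinks the achievable product by the claimed factor $(1+\alpha)^{-2}$, matching the positive result of Theorem~\ref{thm:additive-partial}.
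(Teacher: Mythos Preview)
Your plan has a real gap: a two-agent construction cannot reach the tight ratio $\tfrac{1}{1+\alpha}$. With $n=2$ the Nash welfare is $\sqrt{v_1(X_1)v_2(X_2)}$, so to force $\mathrm{NW}\le \tfrac{1}{1+\alpha}\cdot M$ you need the \emph{product} to drop by a factor $(1+\alpha)^2$, i.e., both agents must simultaneously lose a $\tfrac{1}{1+\alpha}$-fraction of their MNW value. But the $\alpha$-EFX repair only forces a loss on the envied agent, not on the envier. Concretely, take the most natural candidate (one ``big'' item of value $1/\alpha+\varepsilon$ to both agents and two personal items $b_1,b_2$ each worth $1$ to its owner): MNW gives one agent $\{a,b_i\}$ and the other $\{b_j\}$, with $\mathrm{NW}=\sqrt{1/\alpha+1+\varepsilon}$; the best $\alpha$-EFX allocation gives one agent $\{a\}$ and the other her personal item, with $\mathrm{NW}=\sqrt{1/\alpha+\varepsilon}$. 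The ratio tends to $\sqrt{1/(1+\alpha)}$, not $1/(1+\alpha)$. Your asserted bound ``$v_1(X_1)v_2(X_2)\le M^2/(1+\alpha)^2$'' is exactly the step that fails, and no two-agent instance I can see avoids this square-root barrier.

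The paper's proof resolves this by letting the number of agents grow: with $n$ agents, $n-1$ ``big'' items $a_j$ of value $1/\alpha+\varepsilon$ to everyone, and $n$ personal items $b_i$, the MNW allocation gives $n-1$ agents value $1/\alpha+1+\varepsilon$ and one agent value $1$. Any $\alpha$-EFX allocation must separate every $a_j$ from every $b_i$ (otherwise the agent holding only $b$-items envies), so $n-1$ agents each drop to value $\le 1/\alpha+\varepsilon$; the ratio is $\bigl(\tfrac{1/\alpha+\varepsilon}{1/\alpha+1+\varepsilon}\bigr)^{(n-1)/n}$, which tends to $\tfrac{1}{1+\alpha}$ only as $n\to\infty$. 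The limit in $n$ is essential, and your plan should be revised to use a family of instances with growing $n$ rather than a fixed small example.
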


\textbf{Computational remarks.}
While our positive results 
(Theorems~\ref{thm:additive-partial},~\ref{thm:complete_additive}, and~\ref{thm:complete_subadditive}) are stated as existence results, to prove the existence of allocations with the stated guarantees, we construct polynomial-time algorithms that find such allocations, given a max Nash welfare (MNW) allocation as input. 

Moreover, the algorithms used to prove Theorems~\ref{thm:additive-partial} and~\ref{thm:complete_subadditive} apply also when given an {\em arbitrary} allocation as input. 
In particular, these are poly-time algorithms which, given an arbitrary allocation $X$ as input, produce an $\alpha$-EFX allocation that gives at least a $1/(\alpha+1)$ fraction of the Nash welfare of $X$, under the corresponding conditions. 

Thus, when given black-box access to an algorithm that computes a $\beta$-MNW allocation, they provide an $\alpha$-EFX allocation that is also $\beta/(\alpha+1)$-MNW. 
This extension is important in light of the fact that finding a MNW allocation is NP-hard, even for additive valuations \cite{RE09}, while constant approximation algorithms exist, even for subadditive valuations. 

{In particular, {using} this extension, combined with the 
$(e^{-1/e}-\varepsilon)$-MNW approximation for additive {valuations} (\citet*{BKV18}) and the {$(1/4-\varepsilon)$}-MNW approximation for submodular {valuations} (\citet*{GHLVV22}) {and the constant-factor-MNW approximation for subadditive valuations (\citet*{DLRV23})}, our algorithms find in polynomial time {a partial $\alpha$-EFX allocation with  constant-factor-MNW, for any $0 \leq \alpha \leq 1/2$  when valuations are {subadditive,}
%sub-additive (
and any $0 \leq \alpha \leq 1$ when valuations are additive.}
See Appendix~\ref{sec:comp} for more details.
}\\

{
\textbf{Maximin share guarantees.} Even though the allocations that we construct in the proof of Theorem~\ref{thm:complete_additive} are designed with EFX and MNW in mind, they also satisfy other desirable fairness notions related to the maximin share guarantee. More specifically, on top of $\alpha$-EFX, EF1, and $\frac{1}{\alpha+1}$-MNW, these allocations are also $\frac{\alpha}{\alpha^2+1}$-GMMS (and hence $\frac{\alpha}{\alpha^2+1}$-MMS) and $(\varphi-1)$-PMMS. See Appendix~\ref{sec:mms} for definitions and  
more details.

It is worth noting that while the result of \citet*{AMN20} guarantees for every instance with additive valuations the existence of a complete allocation that is $(\varphi-1)$-EFX, EF1, $(\frac{2}{2+\varphi}\approx 0.55)$-GMMS, $(2/3)$-PMMS with no efficiency guarantees, our result gives a complete allocation that is $(\varphi-1)$-EFX, EF1, $(\frac{\varphi-1}{\varphi^2-2\varphi} \approx 0.44)$-GMMS, $(\varphi-1)$-PMMS, and $(\varphi-1)$-MNW.

}

\subsection{Our Techniques}

All of {the proofs of} our existence results are constructive; namely, we {give} algorithms that produce allocations with the desired properties.
In this section, we explain our approach for producing partial and complete allocations. 

\vspace{0.05in}
{\bf Partial allocations.}
For additive valuations, we use Algorithm~\ref{alg:alloc}, which generalizes the algorithm used in \citet*{CGH19} to produce a partial allocation that is (fully) EFX and $1/2$-MNW. 
For the special case of $\alpha = 1$, our algorithm is essentially the same as the original one. 
A very similar modification of the original algorithm was also used by \citet*{GHLVV22} {for the special case of $\alpha = 1/2$}{; however, their modification differs from Algorithm~\ref{alg:alloc} in that it does not guarantee EF1 for additive valuation.}
Both of the algorithms used in \citet*{CGH19} and in \citet*{GHLVV22} were described using a certain notion of EFX feasibility graphs. Here, we present Algorithm~\ref{alg:alloc} in a different way without referring to that notion, which also gives a more direct description of the previous algorithms.

In Algorithm~\ref{alg:alloc}, we start with a maximum Nash welfare allocation and we iteratively drop items that cause envy, possibly reordering the bundles between agents at the same time. We continue to do so until we reach an $\alpha$-EFX {and EF1} allocation.
For example, in the instance described in Example~\ref{ex:simple-ex}, with allocation $X_1=\{a\}, X_2=\{b,c\}$, we first remove item $b$ from $X_2$; 
this eliminates the envy of agent $1$ for agent $2$.

The crucial part of the analysis is to show that the items are removed conservatively so that {for} each agent her final bundle {is worth} at least a $1/(\alpha+1)$ fraction of the bundle she started with, which yields a $1/(\alpha+1)$ approximation to the maximum Nash welfare. To prove this,
{we assume that this condition is violated at some point, and we use this assumption to} construct an allocation with higher Nash welfare than the initial one, which contradicts the assumption that the algorithm {is} given the Nash welfare maximizing allocation as input.

{Consider next subadditive valuations. 
{For $\alpha = 1/2$, the algorithm of}
\citet*{GHLVV22} yields a suboptimal $1/2$ fraction of the maximum Nash welfare.}
We show that {by modifying Algorithm~\ref{alg:alloc}} it is possible to achieve an optimal $2/3$ fraction of the maximum Nash welfare.
Specifically, we modify {the algorithm} to reallocate some of the items that were removed {during the procedure} back to other agents {who value them}.
This is on a very high level what Algorithm~\ref{alg:new_subadditive_alg} is designed to do.
{The technical details are given in the corresponding section.}

\vspace{0.05in}
{\bf Complete allocations.}
The main tool that we use to turn partial allocations into complete ones with the same fairness and Nash welfare guarantees is the envy-cycles procedure \cite{LMMS04}. In this procedure (Algorithm~\ref{alg:envy_cycles}), as long as the allocation is not complete, we take one of the unallocated items and give it to an agent that no other agent envies, or if there is no such agent, then we find a cycle of agents with the property that each agent prefers the bundle of the following agent, and then we improve the allocation by moving the bundles along the cycle. 
It can be shown that if the value of any agent for any of the unallocated items is bounded, then this procedure preserves the initial EFX guarantees, see Lemma~\ref{lem:make_complete}.
The same method was used by \citet*{GHLVV22} to show that there exists a complete $1/2$-EFX and $1/2$-MNW allocation for subadditive valuations, by \citet*{AMN20} to show that there exists a complete $(\varphi -1)$-EFX allocation (with no guarantees on Nash welfare) for additive valuations{, and by~\citet*{FHLSY21} to show that there exists a complete $0.73$-EFR (envy-free up to a random good) allocation for additive valuations.}

Here, the crucial part of the analysis is to provide the appropriate bounds on the value of the unallocated items. In the additive case, we use a similar argument as for showing that the allocation returned by Algorithm~\ref{alg:alloc} is $\frac{1}{\alpha+1}$-MNW. That is, we show that if an item of high value for some agent was removed, then it is possible to construct an allocation with higher Nash welfare than the initial allocation, which again leads to a contradiction. 

In the subadditive case, we use the same argument as \citet*{GHLVV22}. Specifically, we consider a procedure (Algorithm~\ref{alg:sing_swaps}) where if an agent finds one of the unallocated items more valuable than the bundle that was allocated to her, then we swap her bundle with the chosen unallocated item. Eventually, no agent prefers any of the unallocated items to her own bundle, and we can apply the envy-cycles procedure to the resulting partial allocation.

\subsection{Related Literature}

In this section, we review some of the most related literature to our paper. 

\vspace{0.05in}
{\bf Notions of fairness.} The focus of our paper is on the EFX fairness notion, but many other fairness notions have been proposed in the context of resource allocation problems. 
Up until recently, most research in fair division focused on allocating divisible goods, where, unlike in our setting, a single item can be split between agents. \citet*{S1948} introduced the notion of proportionality, then \citet*{F66} and \citet*{V74} initiated the study of envy-freeness by analyzing competitive equilibria from equal incomes (CEEI). These results have been further generalized by \citet*{W85} to cake-cutting settings with one divisible, heterogeneous good.

In the context of indivisible goods, \citet*{B11} introduced the notion of maximin share (MMS) fairness. For additive valuations, \citet*{KPW18} showed that a $2/3$-MMS allocation always exists and can be constructed in polynomial time. Further improvements have been provided by \citet*{GHSSY18} who proved the existence of $3/4$-MMS allocations and by \citet*{GT20} who designed a poly-time algorithm that finds such allocations. {The notion of MMS was further generalized to pairwise maximin share (PMMS)  \cite{CKMPSW16} and groupwise maximin share (GMMS)  \cite{BBMN18}.}

The first relaxation of envy-freeness for indivisible goods, namely EF1, was formally introduced by \citet*{B11}, but an algorithm that produces allocations satisfying EF1 for general monotone valuations was developed even earlier by \citet*{LMMS04}. 

\vspace{0.05in}
{\bf Additional literature on EFX.} \citet*{CKMPSW16} first defined the notion of EFX without determining whether such allocations exist. \citet*{PR18} showed the first existence results for EFX for the settings where there are two agents and where the agents have identical valuations. This was later extended by \citet*{CGM20} to the setting with three agents with additive valuations. 

For approximately EFX allocations, \citet*{PR18} showed that there always exists a $1/2$-EFX allocation even for subadditive valuations, and \citet*{AMN20} improved this to the existence of $(\varphi-1)$-EFX allocations{, albeit only} for additive valuations.

Another line of work
{studies the construction of}
EFX allocations that are almost complete. \citet*{CKMS20} showed that there always exists a partial EFX allocation 
{with at most $n-1$ unallocated elements.} \citet*{CGMMM21} improved this result to allocations with a sublinear number of unallocated items. In settings with $4$ agents, which remains the smallest unresolved case, \citet*{BCFF22} constructed EFX allocations with a single unallocated item.

\vspace{0.05in}
{\bf Additional literature on Nash welfare.} 
The maximum Nash welfare allocation was proposed by \citet*{N50} as a solution to the bargaining problem. Then, Nash welfare was first treated as an efficiency measure by \citet*{KN79}. 

Let us {first} review the results on computing allocations with high Nash welfare in {the} setting with additive valuations. 
When goods are divisible, a MNW allocation can be computed in polynomial time using convex programming, as shown by \citet*{EG59}. However, when goods are indivisible, this problem is NP-hard, even with additive valuations \cite{RE09}. Furthermore, \citet*{GHM18} extended this result by showing that it is NP-hard to find an allocation that is $0.94$-MNW. 
On the positive side, \citet*{BKV18} devised an algorithm that {finds} a $0.69$-MNW {allocation}. 

Beyond the setting of additive valuations, \citet*{GHLVV22} designed an algorithm that finds a $0.25$-MNW allocation for submodular valuations (a strict subclass of subadditive). {Furthermore, \citet*{DLRV23} recently introduced an algorithm returning a constant-factor-MNW allocation for subadditive valuations.}

\vspace{0.05in}
{\bf Trade-offs in fair division.} The trade-offs between social welfare and various fairness constraints are captured by the {\em price of fairness} which was introduced independently by \citet*{BFT11} and \citet*{CKKK09}, and {further investigated by
\cite{BLMS19, BarmanB020, BuLLST22, LiLLTT24}}.
The trade-offs between EF1, EFX, MMS, and PMMS were first studied by \citet*{ABM18}. Moreover, \citet*{AMN20} designed an algorithm that produces an allocation that is simultaneously $(\varphi-1)$-EFX, EF1,  $0.55$-GMMS, and $2/3$-PMMS when the valuations are additive.

\section{Model and Preliminaries}

\subsection{Our Model}
We consider settings with a set  $\agents = \{1, \ldots, n \}$ of $n$ agents, and a set $[m] = \{1, \ldots, m\}$ of $m$ items. 
Every agent $i \in \agents$ has a valuation function denoted by $v_i : 2^{[m]} \to \mathbb{R}^{\geq 0}$, which assigns a real value $v_i (S)$ to every set of items $S\subseteq [m]$.
We consider the following valuation classes:
\begin{itemize}
    \item additive: $v_i(S \cup T) = v_i(S) + v_i(T)$ for any disjoint $S, T \subseteq [m]$.
    \item subadditive: $v_i(S \cup T) \leq v_i(S) + v_i(T)$ for any (not necessarily disjoint) $S, T \subseteq [m]$.
\end{itemize}
An instance of a resource allocation problem is given by a collection of valuation functions $v_1, \ldots, v_n$ over the set of $m$ items.  
Throughout this paper, we use the standard notation $v_i(g) = v_i(\{g\})$ for $g \in [m]$ and $Z-g = Z \setminus \{g \}$ and $Z + g = Z \cup \{g\}$ for $\{g\}, Z \subseteq [m]$.

An allocation $X = (X_1, \ldots, X_n)$ is a collection of disjoint subsets of items, i.e., $X_i \cap X_j = \emptyset$ for $i \neq j$ and $X_i \subseteq [m]$ for every $i \in \agents$. We say that $X$ is complete if $\bigcup_{i \in \agents} X_i = [m]$, and that it is partial otherwise.

The Nash welfare of an allocation $X$ is denoted by $\nw(X) = \prod_{i \in \agents} v_i(X_i)^{1/n}$. For every instance, a maximum Nash welfare (MNW) allocation $X$ is any allocation that maximizes $\nw(X)$ among all possible allocations. We say that an allocation $Z$ is $\beta$-MNW for some $\beta \in [0,1]$ if {it holds that} $\nw(Z) \geq \beta \cdot \nw(X)$.

An allocation $X$ is $\alpha$-EFX if for every $i, j \in \agents$ and $g \in X_j$, it holds that $v_i(X_i) \geq \alpha \cdot v_i(X_j -g )$. Whenever this condition is violated, i.e., $v_i(X_i) < \alpha \cdot v_i(X_j-g) $ {for some $g \in Z_j$, then} we say that the agent $i$ envies agent $j$ in the $\alpha$-EFX sense.
We say that an allocation is EFX if it is $1$-EFX.

\subsection{From Partial Allocations to Complete Allocations.}\label{sec:par_to_com}

{In this section we introduce the notion of $\gamma$-separated allocations, and show how to transform a partial allocation satisfying $\gamma$-separation and $\alpha$-EFX into a complete allocation that gives good EFX guarantees without any loss in Nash welfare.   
This result is key in turning our partial allocations to complete allocations, for both additive and subadditive valuations.}

{We first give the definition of a $\gamma$-separated allocation.}

\begin{definition}[$\gamma$-separation]
    Let $Z=(Z_1, \ldots, Z_n)$ be a partial allocation, and let $U$ be the set of unallocated items in $Z$. We say that $Z$ satisfies \emph{$\gamma$-separation}, for some $\gamma \in [0,1]$, if for every agent $i${, it holds that} $\gamma \cdot v_i(Z_i) \geq v_i(x)$ for all $x \in U$, { i.e.}, agent $i$ prefers $Z_i$ significantly more (by a factor of $1/\gamma$) than any single unallocated item. 
\end{definition}

{The following key lemma shows that any partial allocation that is $\alpha$-EFX and $\gamma$-separated can be turned into a complete $\min(\alpha, 1/(1+\gamma))$-EFX allocation with weakly higher Nash welfare. The proof is based on the envy-cycles procedure introduced by \cite{LMMS04}, see Appendix~\ref{sec:sep_proofs}.}

\begin{restatable}
{lemma}{lemmakecomplete}
    Let $Z=(Z_1, \ldots, Z_n)$ be a partial allocation that is $\alpha$-EFX and satisfies $\gamma$-separation. {Then, there exists}
    a complete allocation $Y$ that is $\min(\alpha, 1/(1+\gamma))$-EFX such that $\nw(Y) \geq \nw(Z)$.
    {Moreover, if $Z$ is EF1, then $Y$ is also EF1.}
    \label{lem:make_complete}
\end{restatable}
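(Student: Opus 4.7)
The plan is to run the envy-cycles procedure of \cite{LMMS04} starting from $Z$. Recall that at each step this procedure examines the directed envy graph on agents (edge $i \to j$ iff $v_i(Z_j) > v_i(Z_i)$) and either (i) finds a \emph{sink} $j$---an agent whom nobody envies---and hands $j$ an arbitrary unallocated item, or (ii) finds a directed cycle and rotates bundles along it so that every involved agent strictly improves. Termination and completeness of the output follow from the standard argument of \cite{LMMS04}.

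The heart of the proof is to maintain, as a loop invariant, the following four properties: (a) the current allocation is $\min(\alpha, 1/(1+\gamma))$-EFX; (b) it satisfies $\gamma$-separation with respect to the still-unallocated items; (c) if $Z$ is EF1, so is the current allocation; (d) its Nash welfare is at least $\nw(Z)$. Rotations preserve all four trivially: the multiset of bundles is merely permuted, each agent in the cycle gets a strictly more valuable bundle, and the unallocated set is untouched, so every pairwise EFX/EF1 comparison retains its previous slack, $\gamma$-separation is inherited, and $\nw$ strictly increases.

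The main obstacle is preserving (a) under item additions, and this is precisely where $\gamma$-separation is used. Suppose sink agent $j$ receives an unallocated item $g$. Fix any $i \ne j$ and any $g' \in Z_j + g$. If $g' = g$, then $(Z_j + g) - g' = Z_j$ and the sink property gives $v_i(Z_i) \geq v_i(Z_j)$, which is in fact full envy-freeness for this pair and item. Otherwise $g' \in Z_j$, and combining monotonicity, subadditivity, the sink property for $j$, and $\gamma$-separation yields
\[
v_i\bigl((Z_j + g) - g'\bigr) \leq v_i(Z_j - g') + v_i(g) \leq v_i(Z_j) + \gamma \cdot v_i(Z_i) \leq (1+\gamma)\, v_i(Z_i),
\]
which is the $\tfrac{1}{1+\gamma}$-EFX condition for this pair and item. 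The other agents' bundles are untouched, so the previous $\alpha$-EFX inequalities against them are inherited. Together this yields (a) with parameter $\min(\alpha, 1/(1+\gamma))$, and this value is stable under further additions since $\min(\min(\alpha, 1/(1+\gamma)), 1/(1+\gamma)) = \min(\alpha, 1/(1+\gamma))$.

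Invariants (b)--(d) are straightforward for additions: $\gamma$-separation persists because the unallocated set shrinks while $Z_j$ only grows in value; EF1 is inherited by designating the just-added $g$ as the removable item, the sink property supplying the required inequality; and Nash welfare weakly increases since adding $g$ only weakly increases $v_j(Z_j)$. Applying these invariants at termination gives the desired complete allocation $Y$.
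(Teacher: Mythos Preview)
Your proof is correct and follows essentially the same envy-cycles approach as the paper. The only stylistic difference is that you explicitly carry $\gamma$-separation as a running invariant and split the EFX check at a sink addition into the cases $g'=g$ and $g'\in Z_j$, whereas the paper handles both at once via $v_i((Z_j+g)-g')\le v_i(Z_j+g)\le v_i(Z_j)+v_i(g)$; one very minor omission in your write-up is the explicit check that the sink agent $j$'s own EFX inequalities toward others survive after receiving $g$, which is immediate from monotonicity.
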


{
With this lemma in hand, in order to establish the existence of complete allocations with good EFX and MNW guarantees, it suffices to produce partial allocations with good Nash welfare and separation guarantees. 
}

\section{Additive Valuations}\label{sec:additive}

In this section, we study instances with additive valuations. In Section~\ref{sec:additive-partial}, we prove Theorem~\ref{thm:additive-partial} by constructing a partial allocation with the desired EFX and MNW guarantees. In Section~\ref{sec:additive-complete}, we prove Theorem~\ref{thm:complete_additive} by using 
the techniques from Section~\ref{sec:par_to_com}
to {turn the partial allocation into a complete allocation with the desired guarantees.}

\subsection{Partial Allocations}
\label{sec:additive-partial}

{Our main result in this section is the following:}

\thmadditivepartial*
\begin{algorithm}
\caption{{Additive valuations.}}
\label{alg:alloc}
\begin{flushleft}
\hspace*{\algorithmicindent} \textbf{Input} $(X_1, \ldots, X_n)$ is a complete MNW allocation. \\
\hspace*{\algorithmicindent} \textbf{Output} $(M_1, \ldots, M_n)$ is a partial $\alpha$-EFX and $\frac{1}{\alpha+1}$-MNW allocation.
\end{flushleft}
\vspace{-0.15in}
\begin{algorithmic}
[1]\State match $i$ to $J$ means $M_i \gets J$
\State unmatch $Z_i$ means $M_u \gets \bot$ if there is $u$ matched to $Z_i$
\Procedure{alg}{} 
\State $Z\gets(X_1, \ldots, X_n)$
\State $M\gets (\bot, \ldots, \bot)$
\While{there is an agent $i$ with $M_i = \bot$}
\State $i^\star \gets $ any agent with $M_i = \bot$%\Comment{$i^\star$ is an unmatched agent}
\If{{($Z_{i^\star}=X_{i^\star}$ and }$v_{i^\star}(Z_{i^\star}) \geq \alpha \cdot v_{i^\star}(Z_j -g)$ for all $j$ and $g \in Z_j$) {or \label{line:first_if}\\ \;\;\;\;\;\;\;\;\;\;\;\;\;\, ($Z_{i^\star} \neq X_{i^\star}$ and $v_{i^\star}(Z_{i^\star}) \geq v_{i^\star}(Z_j -g)$ for all $j$ and $g \in Z_j$)}}\label{line:if_add}
    \State unmatch $Z_{i^\star}$
    \State match $i^\star$ to $Z_{i^\star}$
\Else  
    \State $j^\star,g \gets $ any $j^\star$ and $g \in Z_{j^\star}$ that maximize $v_{i^\star}(Z_{j^\star}-g)$\label{line:first_else}
    \State unmatch $Z_{j^\star}$
    \State change $Z_{j^\star}$ to $Z_{j^\star} - g$\label{line:change_z_add}
    \State match $i^\star$ to $Z_{j^\star}$\label{line:last_else}
\EndIf
\EndWhile\label{while}
\State \textbf{return} $(M_1, \ldots, M_n)$
\EndProcedure
\end{algorithmic}
\end{algorithm}

{Algorithm~\ref{alg:alloc} receives as input a Nash welfare maximizing allocation $X=(X_1, \ldots, X_n)$, and produces an allocation $M=(M_1, \ldots, M_n)$ that is both $\alpha$-EFX and $\frac{1}{\alpha+1}$-MNW. We refer to $M$ as a matching between agents and bundles {of items}.
We say that an agent $i$ is \emph{matched} to $M_i$ if $M_i \neq \bot$, and that $i$ is \emph{unmatched} otherwise.
The algorithm maintains a set of bundles $(Z_1, \ldots, Z_n)$ which are initially set to $Z_i = X_i$. 
Throughout the algorithm, every matched agent $i$ is matched to some bundle $Z_j$. 
We say that $Z_j$ is \emph{matched} to $i$ if $M_i = Z_j$.
The matching $M$ never assigns the same bundle to two agents, and at the end of the algorithm, every bundle $Z_j$ is matched to some agent $i$. This is formally stated in the following claim whose proof is deferred to Appendix~\ref{sec:proofs-additive}.
}

\begin{samepage}
\begin{restatable}{claim}{claaddbasic}
\label{cla:add_basic}
    At the end of each iteration of the algorithm, 
    \begin{enumerate}[(i)]
        \item for any agent $i$, it holds that $Z_i \subseteq X_i$,
        \item for any agent $i$, it holds that $M_i \in \{\bot\} \cup \{Z_j \bar j \in \agents\}$, and \label{list:mi_structure}
        \item for any distinct agents $i_1$ and $i_2$ with $M_{i_1}, M_{i_2} \neq \bot$, it holds that $M_{i_1} \neq M_{i_2}$.
    \end{enumerate}
\end{restatable}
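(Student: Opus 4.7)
My plan is to prove all three invariants simultaneously by induction on the number of completed iterations of the while loop in Algorithm~\ref{alg:alloc}. For the base case, after initialization we have $Z_i = X_i$ for every $i$, which gives (i); and $M_i = \bot$ for every $i$, which gives (ii) trivially and makes (iii) vacuous. So the interesting content lies in the inductive step.

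For the inductive step I would fix an iteration, denote by $Z$ and $M$ the state at the start and by $Z'$ and $M'$ the state at the end, and split into the two branches of the if-statement on line~\ref{line:if_add}. For invariant (i), observe that in the ``if'' branch no bundle $Z_j$ is modified, so $Z'_j = Z_j \subseteq X_j$ follows from the inductive hypothesis; in the ``else'' branch only $Z_{j^\star}$ is modified, to $Z_{j^\star} - g \subseteq Z_{j^\star} \subseteq X_{j^\star}$, and all other bundles are unchanged.

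For invariants (ii) and (iii) the key observation is that the \textbf{unmatch} primitive is exactly what preserves them: before any reassignment of a bundle $Z_k$ to $i^\star$ (either $k = i^\star$ in the if branch, or $k = j^\star$ in the else branch), the algorithm first sets $M_u \gets \bot$ for the unique (by inductive hypothesis (iii)) agent $u$ previously matched to $Z_k$, if such $u$ exists. Hence after the update, the only agent matched to $Z'_k$ is $i^\star$ itself; every other matched agent $i \neq i^\star$ either retained its previous $M_i = Z_{j_i}$ for some $j_i \neq k$ (which is unchanged, so $M'_i = Z'_{j_i}$), or was unmatched to $\bot$. This simultaneously establishes (ii) for $M'_{i^\star}$ and (iii) by ruling out any collision with the new match. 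In the else branch I would emphasize that the order matters: unmatching happens before $Z_{j^\star}$ is modified on line~\ref{line:change_z_add}, so when line~\ref{line:last_else} matches $i^\star$ to the updated $Z_{j^\star}$, the agent formerly matched to the old $Z_{j^\star}$ is already $\bot$, which is what lets (ii) and (iii) survive the set change.

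I do not anticipate a real obstacle: the proof is entirely bookkeeping, and the subtle point is only to verify that the \textbf{unmatch} operation is always invoked on the bundle about to be (re)assigned or shrunk, so no stale pointer to an old bundle can remain. This should fit in a short paragraph in the appendix, with the induction framed as above and a sentence per invariant per branch.
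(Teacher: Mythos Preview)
Your proposal is correct and follows essentially the same approach as the paper's own proof: both argue that each of the three invariants is preserved across iterations, with (i) following because bundles only shrink and (ii)--(iii) following because the \textbf{unmatch} primitive is always invoked before a bundle is modified or reassigned. Your version is simply more explicitly structured as an induction with a case split on the two branches, whereas the paper states the same observations more tersely in invariant-preservation language.
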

\end{samepage}

{
The high-level idea of the algorithm is as follows: 
The algorithm starts off by setting $Z_i=X_i$ for every agent $i${;} recall that $X$ is {an} MNW allocation. {At first,} no agent is matched. 
{Then, the algorithm} proceeds by shrinking the bundles $Z_j$'s and {matching} them to agents in a way that eliminates envy. 
More precisely, as long as there is an unmatched agent $i^\star$, one of the following two operations takes place{.}
(i) If $Z_{i^\star}$ is ``good enough'' for $i^\star$, then $i^\star$ is matched to $Z_{i^\star}$. 
{The condition for $Z_{i^\star}$ to be good enough for $i^\star$ in the case where $Z_{i^\star} = X_{i^\star}$ is that if $i^\star$ gets $Z_{i^\star}$, then she does not envy any other bundle $Z_j$ for any agent $j$, in the $\alpha$-EFX sense. 
If $Z_{i^\star} \subsetneq X_{i^\star}$, then the condition is that if $i^\star$ gets $Z_{i^\star}$, then she does not envy any other bundle $Z_j$ for any agent $j$, in the EFX sense, which is a stronger requirement than in the $Z_{i^\star} = X_{i^\star}$ case.}
(ii) Otherwise, {if $Z_{i^\star}$ is not good enough for $i^\star$,} the algorithm picks the most valuable (from the perspective of $i^\star$) {\em strict} subset of $Z_{j^\star}$ for some $j^\star$, shrinks $Z_{j^\star}$ to the chosen strict subset, and matches $i^\star$ to the new $Z_{j^\star}$ (leaving the agent previously matched to $Z_{j^\star}$, if any, unmatched). {It can be shown} that in this case, $i^\star$ does not envy any other bundle $Z_j$ in the stronger sense of EFX (rather than $\alpha$-EFX).
}

This procedure terminates because every time the second operation takes place, $Z_{j^\star}$ shrinks. This is shown in the following lemma.

\begin{restatable}{lemma}{runtimelemma}\label{lem:addruntimelemma}
Algorithm~\ref{alg:alloc} terminates in a polynomial (in $n$ and $m$) number of steps.
\end{restatable}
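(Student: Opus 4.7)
The plan is to use a simple potential function argument. Let $\Phi = \sum_{j \in [n]} |Z_j|$, so $\Phi$ starts at $m$ and is always nonnegative. I would first observe that each execution of the else-branch (Lines~\ref{line:first_else}--\ref{line:last_else}) strictly decreases $\Phi$ by exactly one, since an item $g$ is removed from $Z_{j^\star}$, while each execution of the if-branch leaves the bundles $Z_j$ intact and hence $\Phi$ unchanged. This immediately bounds the total number of else-branch iterations by $m$.

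To bound the if-branch iterations, I would prove the following key structural claim: once the if-branch has been executed for an unmatched agent $i^\star$ (matching her to $Z_{i^\star}$), she remains matched to $Z_{i^\star}$ until the next else-branch iteration with $j^\star = i^\star$. The argument is a direct case analysis on which step could unmatch $i^\star$, i.e., remove the edge $(i^\star, Z_{i^\star})$ from $M$. Such a removal can only occur via an ``unmatch $Z_{i^\star}$'' step. The if-branch for some $i' \neq i^\star$ instead unmatches $Z_{i'}$, which by Claim~\ref{cla:add_basic} is an index-distinct bundle (the $Z_j$'s remain disjoint sub-bundles of the initially disjoint $X_j$'s); the if-branch for $i^\star$ itself cannot fire while $M_{i^\star} \neq \bot$; and the else-branch with $j^\star \neq i^\star$ only unmatches $Z_{j^\star}$.

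Given the claim, between two consecutive else-branch iterations with the same $j^\star = i^\star$ (or between the start of the algorithm and the first such iteration), the if-branch for $i^\star$ fires at most once. Since the else-branch with $j^\star = i^\star$ can fire at most $|X_{i^\star}|$ times (each such firing removes one element from $Z_{i^\star} \subseteq X_{i^\star}$), the if-branch for $i^\star$ fires at most $|X_{i^\star}|+1$ times. Summing over $i^\star \in [n]$ yields at most $m+n$ if-branch iterations, and thus at most $2m+n$ iterations of the while-loop overall. Each iteration performs only polynomially many arithmetic operations under additive valuations (in particular, computing the $\arg\max$ in Line~\ref{line:first_else} takes $O(nm)$ time), so the algorithm terminates in time polynomial in $n$ and $m$. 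The only nontrivial ingredient is the structural claim above; the rest is routine bookkeeping.
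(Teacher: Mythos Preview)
Your proof is correct and uses essentially the same idea as the paper's, though packaged differently. The paper employs a single lexicographic potential $\phi_t = (|\mathsf{Removed}_t|, |\mathsf{SelfMatched}_t|)$, where $\mathsf{Removed}_t=\bigcup_i X_i\setminus Z_i$ and $\mathsf{SelfMatched}_t=\{i:M_i=Z_i\}$, and observes that $\phi_t$ strictly increases at every iteration (the first coordinate in the else-branch, the second in the if-branch), yielding a bound of $(m+1)n$ iterations. Your argument separates the same two observations: else-branches decrease $\sum_j|Z_j|$ (equivalently increase $|\mathsf{Removed}|$), and each if-branch for $i^\star$ creates a self-match $M_{i^\star}=Z_{i^\star}$ that persists until the next else-branch with $j^\star=i^\star$. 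Your finer per-agent accounting yields the tighter bound $2m+n$, but the underlying mechanism is the same.
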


\begin{proof} 
    For any $t$, consider the state of the algorithm after exactly $t$ iterations of the while loop{, and} let $\mathsf{Removed}_t = \bigcup_{i\in \agents} X_i \setminus Z_i$ and $\mathsf{SelfMatched}_t = \{ i \in \agents \bar M_i = Z_i\}$. Consider the tuple ${\phi}_t = (|\mathsf{Removed}_t|, |\mathsf{SelfMatched}_t|)$. Note that by the design of the algorithm, the tuple ${\phi}_t$ strictly increases lexicographically with each iteration of the algorithm. Since there are at most $(m+1)n$ values that ${\phi}_t$ can take, the algorithm terminates after at most $(m+1)n$ iterations.
\end{proof}

{Let us first make a few simple observations that are crucial to the analysis of the algorithm. First, t}he matching of the algorithm ensures
that $M_{i^\star}$ is ``good enough'' for $i^\star$ whenever $i^\star$ is matched. 
This in fact implies that the final allocation $M$ is $\alpha$-EFX {and EF1}. 
{Second, whenever an agent $i$ {with an untouched bundle} is matched by the second operation (lines~\ref{line:first_else}-\ref{line:last_else}), she is matched to a bundle that she prefers significantly more (by a factor of $1/\alpha$) to $Z_i$.
Third, any touched bundle, i.e., one from which the algorithm removed an element, is matched. These observations are formally stated in the following claim, whose} proof is deferred to Appendix~\ref{sec:proofs-additive}.

\begin{restatable}{claim}{claaddineq}
\label{cla:add_ineq}
    At the end of each iteration, for any matched agent $i$, it holds that
    \begin{enumerate}[(i)]
        \item {if $Z_i = X_i$, then} $v_i(M_i) \geq \alpha \cdot v_i(Z_j-g)$ for all $j$ and all $g \in Z_j$, \label{list:efx} 
        \item {if $Z_i \neq X_i$, then $v_i(M_i) \geq v_i(Z_j-g)$ for all $j$ and all $g \in Z_j$,} \label{list:ef1} 
        \item {if $M_i \neq Z_i$, then $v_i(M_i) > v_i(Z_i)$,} \label{list:1_inequality}
        \item if $M_i \neq Z_i$ {and $Z_i = X_i$}, then $v_i(M_i) > (1/\alpha) \cdot v_i(Z_i)$, and \label{list:a_inequality}
         \item for any agent $i$ with $Z_i \subsetneq X_i$, there is some agent $u$ with $M_u = Z_i$.\label{list:unmatched}
    \end{enumerate}
\end{restatable}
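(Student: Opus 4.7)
The plan is to establish all five conditions simultaneously by induction on the number of iterations of the while loop. At initialization every $M_i = \bot$ and $Z_i = X_i$, so (i)--(iv) are vacuous and (v) holds trivially. For the inductive step I would split on the two branches of the if-else. In the if-branch, the $Z_j$'s do not change and only the matcher of $Z_{i^\star}$ is modified (its previous owner, if any, is unmatched and $i^\star$ is matched to it). The two disjuncts of the if-condition directly yield (i) or (ii) for $i^\star$, depending on whether $Z_{i^\star}$ equals $X_{i^\star}$; conditions (iii) and (iv) are vacuous because $M_{i^\star} = Z_{i^\star}$; and (v) holds because $Z_{i^\star}$ is still matched (now to $i^\star$) while no other bundle lost its matcher. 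All other previously matched agents inherit their guarantees from the induction hypothesis.

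The else-branch is where the work lies. First I would show $j^\star \neq i^\star$: the failure of the if-condition supplies some $j, g$ with $v_{i^\star}(Z_j - g) > v_{i^\star}(Z_{i^\star})$ --- either directly when $Z_{i^\star} \neq X_{i^\star}$, or after dividing by $\alpha \leq 1$ when $Z_{i^\star} = X_{i^\star}$ --- while additivity gives $\max_g v_{i^\star}(Z_{i^\star} - g) \leq v_{i^\star}(Z_{i^\star})$, so the maximizer must come from some $j \neq i^\star$. For $i^\star$ itself, the maximality of $v_{i^\star}(Z_{j^\star} - g)$ over all $j$ and $g \in Z_j$, combined with additivity when comparing against the further-shrunken new $Z_{j^\star}$, gives $v_{i^\star}(M_{i^\star}) \geq v_{i^\star}(\text{new }Z_j - g')$ for every $j$ and every $g' \in \text{new }Z_j$; this implies both (i) and (ii), and combined with the strict inequality from the if-failure yields (iii) and (iv).

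The subtle case is agent $i = j^\star$ itself, whose bundle has just strictly shrunk, moving the applicable claim from (i) to (ii) --- that is, from $\alpha$-EFX to full EFX. If $M_{j^\star}$ was $Z_{j^\star}$ itself (set by a past if-branch step), then the ``unmatch $Z_{j^\star}$'' line unmatches $j^\star$ in the current iteration and the claim is vacuous. Otherwise $j^\star$ was previously matched in an else-branch step to some $Z_k$ with $k \neq j^\star$; the very analysis above, applied at that earlier matching, already established the full-EFX bound $v_{j^\star}(M_{j^\star}) \geq v_{j^\star}(Z_l - g')$ for every $l, g'$ at that time, and this bound is preserved as bundles only shrink (by additivity). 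For any other matched agent untouched in the current iteration, $M_i$ is fixed while $v_i(Z_l - g')$ is non-increasing in the shrinking of $Z_{j^\star}$, so (i) and (ii) are preserved. Finally, (v) holds because the new $Z_{j^\star}$ is immediately matched to $i^\star$, and the only agent unmatched in this step was matched to the old $Z_{j^\star}$, so no other matched bundle loses its matcher. The main obstacle is precisely this auxiliary strengthening: the property ``whenever $M_i \neq Z_i$, the stronger EFX bound holds'' is not among (i)--(v) but is essential for closing the induction at $j^\star$ right after a shrink; establishing it once at each else-branch match, and then invoking monotonicity of shrinking, is what makes the induction go through.
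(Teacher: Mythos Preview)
Your proposal is correct and follows essentially the same approach as the paper: establish the invariants at the moment an agent becomes matched (using the if-condition for the self-match branch and the maximality of $(j^\star,g)$ for the else branch), and then argue preservation under shrinking. Your treatment is in fact more careful than the paper's on one point: the paper's proof says only that ``decreasing other bundles does not create any new envy'', which glosses over the case where $Z_{j^\star}$ itself shrinks while $j^\star$ remains matched to some $M_{j^\star}\neq Z_{j^\star}$, so that the applicable premise switches from (i) to (ii). Your auxiliary invariant --- full EFX whenever $M_i\neq Z_i$, established at each else-branch match and preserved under shrinking --- is exactly what closes this gap, and it is implicitly what the paper relies on without naming it. One small omission: you do not explicitly check (iii) for $j^\star$ after the shrink, but this is immediate from the inductive (iii) together with $v_{j^\star}(Z_{j^\star}^{\text{new}})\le v_{j^\star}(Z_{j^\star}^{\text{old}})$; likewise (iv) becomes vacuous for $j^\star$ since $Z_{j^\star}^{\text{new}}\subsetneq X_{j^\star}$. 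Also, in a couple of places you invoke ``additivity'' where monotonicity already suffices.
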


We now present the key lemma in our analysis.

\begin{lemma}\label{lem:additive_bound}
At the end of the run of the algorithm, we have $v_i(Z_i) \geq \frac{1}{\alpha+1} \cdot v_i(X_i)$.
\end{lemma}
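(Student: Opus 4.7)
The plan is a proof by contradiction: assume that at the end of the algorithm some agent $i^*$ satisfies $v_{i^*}(Z_{i^*}) < \frac{1}{\alpha+1}\,v_{i^*}(X_{i^*})$, and construct a complete allocation with strictly greater Nash welfare than $X$, contradicting the maximality of $X$. By additivity, the assumption is equivalent to $v_{i^*}(X_{i^*}\setminus Z_{i^*}) > \frac{\alpha}{\alpha+1}\,v_{i^*}(X_{i^*})$, which in particular forces $Z_{i^*} \subsetneq X_{i^*}$.

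The main construction is as follows. Because the while loop terminates only once every agent is matched, and Claim~\ref{cla:add_basic}(\ref{list:mi_structure}) guarantees $M_i \in \{Z_j \colon j \in \agents\}$ with distinct $M_i$'s for distinct matched agents, the final matching $M$ is a bijection between agents and bundles $(Z_1,\ldots,Z_n)$. For every $k \in \agents$, set $R_k := X_k \setminus Z_k$ and define the candidate allocation $Y_k := M_k \cup R_k$. I would first check that $Y$ is a valid complete allocation: the $M_k$'s partition $\bigcup_j Z_j$ by bijectivity, the $R_k$'s partition $[m]\setminus \bigcup_j Z_j$, and $M_k \cap R_\ell = \emptyset$ for every pair $(k,\ell)$, since either $M_k$ lies in some $X_{\sigma(k)}$ disjoint from $X_\ell \supseteq R_\ell$, or $M_k = Z_\ell$ is disjoint from $R_\ell = X_\ell \setminus Z_\ell$.

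Next, by additivity,
\begin{equation*}
v_k(Y_k) \;=\; v_k(M_k) + v_k(R_k) \;=\; v_k(M_k) + v_k(X_k) - v_k(Z_k).
\end{equation*}
Claim~\ref{cla:add_ineq}(\ref{list:1_inequality}) gives $v_k(M_k) \geq v_k(Z_k)$ for every $k$, with strict inequality whenever $M_k \neq Z_k$. Hence $v_k(Y_k) \geq v_k(X_k)$ for every $k$, with a strict inequality whenever $M_k \neq Z_k$. If the matching is not the identity permutation, some $k$ has $M_k \neq Z_k$, and then $\nw(Y) > \nw(X)$, contradicting the maximality of $X$.

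The hard part is the remaining case in which $M$ is the identity matching (so $Y$ coincides with $X$ in value, giving no direct contradiction). To handle this, I would track the last iteration $t$ at which $Z_{i^*}$ was shrunk and identify the agent $i^\star$ who was matched to the shrunken $Z_{i^*}$ at iteration $t$. The failure of the if-condition at $t$ yields $v_{i^\star}(Z_{i^*}) > (1/\alpha)\,v_{i^\star}(Z_{i^\star})$ (where $Z_{i^\star}$ is unchanged from $t$ to the end in this scenario), while the success of the if-condition at $i^*$'s final matching iteration (line~\ref{line:if_add}, second sub-branch) yields $v_{i^*}(Z_{i^*}) \geq v_{i^*}(Z_j - g)$ for all $j$ and $g \in Z_j$. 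Combining these with the maximality of $X$ applied to pairwise bundle swaps between $i^*$ and $i^\star$, together with our key assumption $v_{i^*}(R_{i^*}) > \alpha\,v_{i^*}(Z_{i^*})$, I aim to exhibit a swap that reassigns $R_{i^*}$ and swaps $Z_{i^*}$ with $X_{i^\star}$ (or a closely related allocation) so as to strictly increase Nash welfare over $X$. The main obstacle is carefully tracking how the intermediate states of the $Z_j$'s evolve across iterations and choosing the right swap to turn the collected strict inequalities into an outright contradiction with the maximality of $X$.
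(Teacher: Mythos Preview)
Your first step correctly shows that the allocation $Y_k = M_k \cup (X_k \setminus Z_k)$ Pareto-dominates $X$ whenever $M_k \neq Z_k$ for some $k$, and hence that the terminal matching must be the identity. But notice that this argument does not use the contradiction hypothesis at all; it is true in every run of the algorithm. Consequently, the ``hard case'' is not a residual case --- it is the only case, and the entire content of the lemma lies there.

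In that case your argument is incomplete, and the inequality you claim is not justified. You assert that the last toucher $i^\star$ of $Z_{i^*}$ satisfies $v_{i^\star}(Z_{i^*}) > (1/\alpha)\,v_{i^\star}(Z_{i^\star})$, but Claim~\ref{cla:add_ineq}(\ref{list:a_inequality}) only gives the $1/\alpha$ factor when $Z_{i^\star} = X_{i^\star}$ at the moment of touching. Nothing guarantees that the last toucher of $i^*$ is itself untouched; if $Z_{i^\star} \subsetneq X_{i^\star}$ you only get the weaker inequality $v_{i^\star}(Z_{i^*}) > v_{i^\star}(Z_{i^\star})$, and a two-agent swap between $i^*$ and $i^\star$ is then insufficient to contradict the optimality of $X$. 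Your parenthetical ``$Z_{i^\star}$ is unchanged from $t$ to the end'' is neither obviously true nor the property you need: what is required is that $Z_{i^\star}$ is unchanged from the \emph{start} to time $t$.

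The paper sidesteps this difficulty by looking at the \emph{first} iteration at which the invariant $v_i(Z_i) \geq \frac{1}{\alpha+1}\,v_i(X_i)$ fails, rather than at the end of the run. At that moment the matching is not yet a bijection, and one follows an \emph{improving sequence} $j_1 = j^\star, j_2 = i^\star, \ldots, j_\ell$ along the current matching ($M_{j_{s+1}} = Z_{j_s}$). Either the sequence cycles back to $j^\star$ (Pareto improvement, Case~1), or it terminates at an agent $j_\ell$ whose bundle $Z_{j_\ell}$ is unmatched and therefore, by Claim~\ref{cla:add_ineq}(\ref{list:unmatched}), satisfies $Z_{j_\ell} = X_{j_\ell}$. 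This is precisely what unlocks the $1/\alpha$ factor at $j_\ell$ via Claim~\ref{cla:add_ineq}(\ref{list:a_inequality}); combined with the $\frac{\alpha}{\alpha+1}$ loss at $j^\star$ and the strict gains along the intermediate agents, one builds an allocation $\widehat{X}$ with $\nw(\widehat{X}) > \nw(X)$. The missing idea in your proposal is this chain construction and, crucially, the choice of time (first violation) that guarantees an untouched endpoint.
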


{
The proof of the lemma actually shows that the invariant condition $v_i(Z_i) \geq  \frac{1}{\alpha+1} \cdot v_i(X_i)$ for all agents $i$ is preserved throughout the entire execution of the algorithm. 
This condition ensures that the final allocation is $\frac{1}{\alpha+1}$-MNW.
The proof of is based on the idea that whenever the condition is violated, we can construct an alternative allocation $\widehat{X}$ with a higher Nash welfare than $X$, contradicting the optimality of $X$.

Before presenting the proof of Lemma~\ref{lem:additive_bound}, we need the following definition.
}

\begin{definition}[Improving sequence] \label{def:improving_seq}
At the end of each iteration where the algorithm executes lines~\ref{line:first_else}-\ref{line:last_else}, we construct an improving sequence $j_1, \ldots, j_\ell$ in the following way. Set $j_1 = j^\star$. Next, for $s \geq 1$, until either (i) $Z_{j_{s}}$ is matched to $j^\star$, or (ii) $Z_{j_{s}}$ is unmatched, inductively define $j_{s+1}$ to be the unique agent so that $M_{j_{s+1}} = Z_{j_{s}}$. Let $\ell = s$ for $s$ where the process ends.
\end{definition}

Note that the improving sequence has  $j_2 = i^\star$ by the design of the algorithm.

{
With this definition in hand, we are ready to present the proof of Lemma~\ref{lem:additive_bound}.
}

\begin{proof}[Proof of Lemma~\ref{lem:additive_bound}]
Suppose for the purpose of contradiction that the algorithm gets to a point where $v_{j^\star}(Z_{j^\star}) < \frac{1}{\alpha+1} \cdot v_{j^\star}(X_{j^\star})$, and consider the first time it happens. Note that it must be the case that the algorithm executed lines \ref{line:first_else}-\ref{line:last_else} during that iteration {because otherwise no bundle is modified}. Let $j_1, \ldots, j_\ell$ be the improving sequence constructed at the end of that iteration (Definition~\ref{def:improving_seq}). See Figure~\ref{fig:additive_bound} for {the} pictures accompanying this proof.

\underline{Case 1: $j$ ends with condition (i).} Consider first the case where $Z_{j_{\ell}}$ is matched to $j^\star$. Let us define an allocation
\begin{align*}    
\widehat{X}_{j_s} = (X_{j_s} \setminus Z_{j_s}) \cup Z_{j_{s-1}} && \textrm{for } 1 \leq s \leq \ell
\end{align*}
where $j_{0}=j_\ell$, and $\widehat{X}_i = X_i$ for $i \notin \{j_1, \ldots, j_\ell\}$. Observe that for all $s$, since $M_{j_s} = Z_{j_{s-1}}$, it holds that $v_{j_s}(Z_{j_{s-1}}) > v_{j_s}(Z_{j_s})$ by {Claim~\ref{cla:add_ineq}(\ref{list:1_inequality})}, and hence
\EQ{
v_{j_s}(\widehat{X}_{j_s}) = v_{j_s}(X_{j_s}) - v_{j_s}(Z_{j_s}) + v_{j_s}(Z_{j_{s-1}}) > v_{j_s}(X_{j_s})
}
where the first equality holds by additivity. It follows that $\widehat{X}$ is a Pareto improvement of $X$, contradicting the NW-maximiality of $X$. {In particular, note that this implies that at the end of the execution of the algorithm, where all the agents are matched, it must hold that $M_i = Z_i$ for any agent $i$.}

\underline{Case 2: $j$ ends with condition (ii).} 
 Consider next the case where $Z_{j_\ell}$ is unmatched. By Claim~\ref{cla:add_ineq}(\ref{list:unmatched}), it holds that $Z_{j_\ell} = X_{j_\ell}$. Let $\widehat{X}$ be the allocation constructed in the following way
 \begin{align*}
   &\widehat{X}_{j^\star} = X_{j^\star} \setminus Z_{j^\star} \\*
&\widehat{X}_{j_s} = (X_{j_s}\setminus Z_{j_s}) \cup Z_{j_{s-1}}  && \textrm{for } 2 \leq s < \ell \\*
&\widehat{X}_{j_\ell} = X_{j_\ell} \cup Z_{j_{\ell-1}}  
 \end{align*}
and $\widehat{X}_i = X_i$ for $i \notin \{j_1, \ldots, j_\ell\}$. We have the following inequalities (the second equality in each line is by additivity). The contradiction assumption that $v_{j^\star}(Z_{j^\star}) < \frac{1}{\alpha+1} \cdot v_{j^\star}(X_{j^\star})$ implies
\EQ{
v_{j^\star}(\widehat{X}_{j^\star}) =
v_{j^\star}(X_{j^\star} \setminus Z_{j^\star}) &= v_{j^\star}(X_{j^\star}) - v_{j^\star}(Z_{j^\star}) > \frac{\alpha}{\alpha+1}  \cdot v_{j^\star}(X_{j^\star}).
}
By {Claim~\ref{cla:add_ineq}(\ref{list:1_inequality})}, it holds that
\EQ{
v_{j_s}(\widehat{X}_{j_s}) = v_{j_s}((X_{j_s}\setminus Z_{j_s}) \cup Z_{j_{s-1}}) = v_{j_s}(X_{j_s}) + v_{j_s}(Z_{j_{s-1}})- v_{j_s}(Z_{j_s}) > v_{j_s}(X_{j_s}).} 
It follows from {Claim~\ref{cla:add_ineq}(\ref{list:a_inequality})} and the fact that $X_{j_\ell}=Z_{j_\ell}$ that
\EQ{
v_{j_\ell}(\widehat{X}_{j_\ell}) = v_{j_\ell}(X_{j_\ell} \cup Z_{j_{\ell-1}}) = v_{j_\ell}(X_{j_\ell}) + v_{j_\ell}(Z_{j_{\ell-1}}) > v_{j_\ell}(X_{j_\ell}) + \frac{1}{\alpha} \cdot v_{j_\ell}(Z_{j_\ell}) = \frac{\alpha+1}{\alpha} \cdot v_{j_\ell}(X_{j_\ell}).}
Finally, combining the inequalities above gives
\EQ{ 
v_{j^\star}(\widehat{X}_{j^\star})& \cdot v_{i^\star}(\widehat{X}_{i^\star}) \cdot v_{j_{3}}(\widehat{X}_{j_{3}}) \cdots v_{j_{\ell-1}}(\widehat{X}_{j_{\ell-1}}) \cdot v_{j_\ell}(\widehat{X}_{j_\ell}) \\
&> \, \frac{\alpha}{\alpha+1}v_{j^\star}(X_{j^\star}) \cdot v_{i^\star}(X_{i^\star}) \cdot v_{j_{3}}(X_{j_{3}} ) \cdots v_{j_{\ell-1}}(X_{j_{\ell-1}})  \cdot \frac{\alpha+1}{\alpha}v_{j_\ell}(X_{j_\ell}).
}
This implies that $\nw({\widehat{X}}) > \nw({X})$, contradicting the NW-maximality of $X$.
\end{proof}

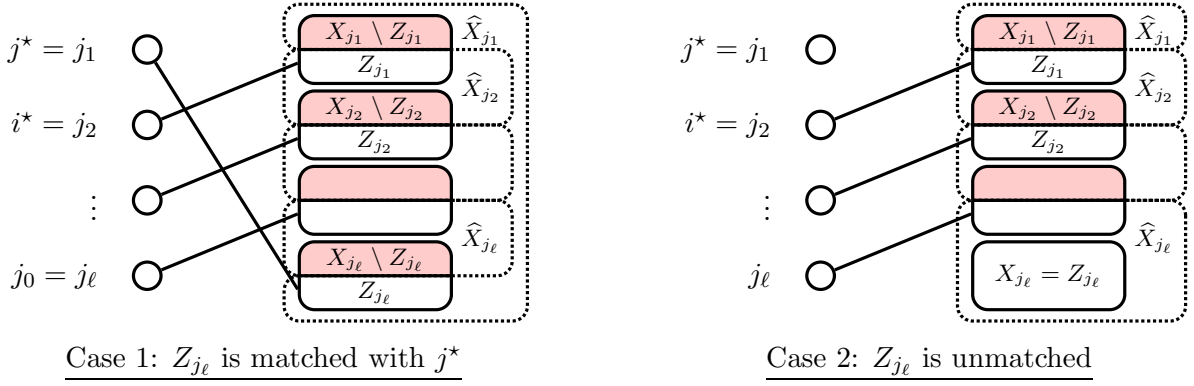
\begin{figure}[t]
\begin{center}
\def\yscale{0.5}
\begin{minipage}{.45\textwidth}
\centering
\begin{tikzpicture}[very thick,yscale=\yscale]
\foreach \i in {2,...,5}
{
    \begin{scope}
        \clip (2,10-2*\i) rectangle (4,10-2*\i+1);
        \node[fill, red!20!white, rounded corners=.2cm, minimum width=2cm, minimum height=1.8*\yscale cm] at (3,10-2*\i){};
    \end{scope}
    \draw (2,10-2*\i) -- (4,10-2*\i);
}
\foreach \i in {2,...,5}
{
    \node[draw,circle] (j\i) at (0,10-2*\i) {};
    \node[draw, rounded corners=.2cm, minimum width=2cm, minimum height=1.8*\yscale cm] (X\i) at (3,10-2*\i){};
}
\node[anchor=east] at (-.5,6) {$j^\star = j_1$};
\node[anchor=east] at (-.5,4) {$i^\star = j_2$};
\node[anchor=east] at (-.5,2) {$\vdots$};
\node[anchor=east] at (-.5,0) {$j_0 = j_{\ell}$};
{\footnotesize
\node at (3,6.45) {$X_{j_1}\setminus Z_{j_1}$};
\node at (3,4.45) {$X_{j_2}\setminus Z_{j_2}$};
\node at (3,0.45) {$X_{j_{\ell}}\setminus Z_{j_{\ell}}$};
\node at (3,5.55) {$Z_{j_1}$};
\node at (3,3.55) {$Z_{j_2}$};
\node at (3,-.45) {$Z_{j_{\ell}}$};
\node[anchor=west] at (4,6.5) {$\widehat{X}_{j_1}$};
\node[anchor=west] at (4,5) {$\widehat{X}_{j_2}$};
\node[anchor=west] at (4,1) {$\widehat{X}_{j_\ell}$};
}
\draw (j5) -- (X4.190);
\draw (j4) -- (X3.190);
\draw (j3) -- (X2.190);
\draw (j2) -- (X5.190);
\begin{scope}[densely dotted, rounded corners=.2cm]
    \draw (2,0) -- (1.8,0) -- (1.8,-1.2) -- (5,-1.2) -- (5,7.2) -- (1.8,7.2) -- (1.8,6) -- (2,6);
    \draw (4.8,6) -- (4.8,4) -- (1.8,4) -- (1.8,6) -- cycle;
    \draw (4.8,4) -- (4.8,2) -- (1.8,2) -- (1.8,4) -- cycle;
    \draw (4.8,2) -- (4.8,0) -- (1.8,0) -- (1.8,2) -- cycle;
\end{scope}
\end{tikzpicture}
\smallbreak\underline{Case 1: $Z_{j_\ell}$ is matched with $j^\star$}
\end{minipage}
\hfill
\begin{minipage}{.45\textwidth}
\centering
\begin{tikzpicture}[very thick,yscale=\yscale]
\foreach \i in {2,...,4}
{
    \begin{scope}
        \clip (2,10-2*\i) rectangle (4,10-2*\i+1);
        \node[fill, red!20!white, rounded corners=.2cm, minimum width=2cm, minimum height=1.8*\yscale cm] at (3,10-2*\i){};
    \end{scope}
    \draw (2,10-2*\i) -- (4,10-2*\i);
}
\foreach \i in {2,...,5}
{
    \node[draw,circle] (j\i) at (0,10-2*\i) {};
    \node[draw, rounded corners=.2cm, minimum width=2cm, minimum height=1.8*\yscale cm] (X\i) at (3,10-2*\i){};
}
\node[anchor=east] at (-.5,6) {$j^\star = j_1$};
\node[anchor=east] at (-.5,4) {$i^\star = j_2$};
\node[anchor=east] at (-.5,2) {$\vdots$};
\node[anchor=east] at (-.5,0) {$j_{\ell}$};
{\footnotesize
\node at (3,6.45) {$X_{j_1}\setminus Z_{j_1}$};
\node at (3,4.45) {$X_{j_2}\setminus Z_{j_2}$};
\node at (3,5.55) {$Z_{j_1}$};
\node at (3,3.55) {$Z_{j_2}$};
\node at (3,0) {$X_{j_\ell} = Z_{j_\ell}$};
\node[anchor=west] at (4,6.5) {$\widehat{X}_{j_1}$};
\node[anchor=west] at (4,5) {$\widehat{X}_{j_2}$};
\node[anchor=west] at (4,1) {$\widehat{X}_{j_\ell}$};
}
\draw (j5) -- (X4.190);
\draw (j4) -- (X3.190);
\draw (j3) -- (X2.190);
\begin{scope}[densely dotted, rounded corners=.2cm]
    \draw (4.8,7.2) -- (4.8,6) -- (1.8,6) -- (1.8,7.2) -- cycle;
    \draw (4.8,6) -- (4.8,4) -- (1.8,4) -- (1.8,6) -- cycle;
    \draw (4.8,4) -- (4.8,2) -- (1.8,2) -- (1.8,4) -- cycle;
    \draw (4.8,2) -- (4.8,-1.2) -- (1.8,-1.2) -- (1.8,2) -- cycle;
\end{scope}
\end{tikzpicture}
\smallbreak\underline{Case 2: $Z_{j_\ell}$ is unmatched}
\end{minipage}
\end{center}
\vspace{-.2cm}
\caption{The structure of the new allocation $\widehat{X}$ in Case 1 and Case 2. The edges represent the matching $M$.}
\label{fig:additive_bound}
\end{figure}

{
We now have all the ingredients needed to prove Theorem~\ref{thm:additive-partial}.
}

\begin{proof}[Proof of Theorem \ref{thm:additive-partial}]
The allocation $(M_1, \ldots, M_n)$ is $\alpha$-EFX by Claim~\ref{cla:add_ineq}(\ref{list:efx}, \ref{list:ef1}) and Claim~\ref{cla:add_basic}(\ref{list:mi_structure}). {Moreover, $M$ is EF1 since either $Z_i \subsetneq X_i$ and the property follows from Claim~\ref{cla:add_ineq}(\ref{list:ef1}), or $Z_i = X_i$ and then, since $X$ is EF1 by the result of \citet*{CKMPSW16}, for all agents $j$, it holds that $v_i(M_i) \geq v_i(Z_i) = v_i(X_i) \geq v_i(X_j - g) \geq v_i(Z_j - g)$ for some $g \in X_j$.}

Fix any agent $i$. Let $j$ be the unique agent matched to $Z_i$. If $j \neq i$, then by {Claim~\ref{cla:add_ineq}(\ref{list:1_inequality})}, it holds that $v_i(Z_j) \geq v_i(Z_i)$. If $j=i$, then clearly $v_i(Z_j) \geq v_i(Z_i)$. 
Therefore, by Lemma~\ref{lem:additive_bound},
    \EQ{
    \prod_{i \in \agents} v_i(M_i)^{1/n} &\geq \prod_{i \in \agents} v_i(Z_i)^{1/n} 
    \geq \prod_{i \in \agents} \left( \frac{1}{\alpha+1} \cdot v_i(X_i)\right)^{1/n} 
    =  \frac{1}{\alpha+1} \cdot \prod_{i \in \agents} v_i(X_i)^{1/n}
    }
    and so the result follows.
\end{proof}

\subsection{Complete Allocations}\label{sec:additive-complete}

{
In this section, we provide the following result.
}
\thmadditivecomplete*

{Let $(Z_1, \ldots, Z_n)$ be the bundles at the end of the run of Algorithm~\ref{alg:alloc}. }

{The following lemma is the key component in the proof of  Theorem~\ref{thm:complete_additive}. 
It offers additional analysis of Algorithm~\ref{alg:alloc}, showing that the allocation returned by this algorithm is $\alpha$-separated.
}

\begin{restatable}{lemma}{lemadditivebettersep}
\label{lem:additive_better_sep} The partial allocation
$(Z_1, \ldots, Z_n)$ satisfies $\alpha$-separation.
\end{restatable}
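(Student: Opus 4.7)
The plan is to proceed by contradiction. Suppose that $v_i(x) > \alpha \cdot v_i(Z_i)$ for some agent $i$ and some unallocated item $x \in U$; let $\hat j$ denote the agent with $x \in X_{\hat j}$ and let $t$ be the iteration of Algorithm~\ref{alg:alloc} at which $x$ was removed from $Z_{\hat j}^{t-1}$. The goal is to derive a contradiction with the MNW-optimality of the input allocation $X$. The case $i = \hat j$ is immediate: additivity and $x \in X_i \setminus Z_i$ give $v_i(X_i) \geq v_i(Z_i) + v_i(x) > (1+\alpha) v_i(Z_i)$, contradicting the bound $v_i(Z_i) \geq \tfrac{1}{\alpha+1} v_i(X_i)$ from Lemma~\ref{lem:additive_bound}.

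For $i \neq \hat j$, I would construct an alternative allocation $\widehat{X}$ with $\nw(\widehat{X}) > \nw(X)$. The construction follows the template of Case~2 in the proof of Lemma~\ref{lem:additive_bound}, applied at iteration $t$, augmented with a rerouting of $x$ to agent $i$. We must be in Case~2 at iteration $t$ (the improving sequence $\hat j = j_1, a = j_2, \ldots, j_\ell$ from Definition~\ref{def:improving_seq} ends with $Z_{j_\ell}^t$ unmatched, hence $Z_{j_\ell}^t = X_{j_\ell}$ by Claim~\ref{cla:add_ineq}\eqref{list:unmatched}), since Case~1 would yield an unconditional Pareto improvement contradicting the MNW-optimality of $X$. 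Assuming $i \notin \{j_1, \ldots, j_\ell\}$ for concreteness, I would set $\widehat{X}_i = X_i \cup \{x\}$, $\widehat{X}_{\hat j} = X_{\hat j} \setminus Z_{\hat j}^{t-1}$, $\widehat{X}_{j_s} = (X_{j_s} \setminus Z_{j_s}^t) \cup Z_{j_{s-1}}^t$ for $2 \leq s < \ell$, $\widehat{X}_{j_\ell} = X_{j_\ell} \cup Z_{j_{\ell-1}}^t$, and $\widehat{X}_k = X_k$ elsewhere. The assumption together with Lemma~\ref{lem:additive_bound} gives $v_i(\widehat{X}_i)/v_i(X_i) > \tfrac{2\alpha+1}{\alpha+1}$, while Claim~\ref{cla:add_ineq}\eqref{list:1_inequality}--\eqref{list:a_inequality} give $v_{j_\ell}(\widehat{X}_{j_\ell})/v_{j_\ell}(X_{j_\ell}) > \tfrac{\alpha+1}{\alpha}$ and $v_{j_s}(\widehat{X}_{j_s})/v_{j_s}(X_{j_s}) > 1$ for intermediate $s$. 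Since $\tfrac{2\alpha+1}{\alpha+1} \cdot \tfrac{\alpha+1}{\alpha} = \tfrac{2\alpha+1}{\alpha}$, the product of Nash welfare ratios strictly exceeds $1$ whenever the ratio at $\hat j$, equal to $1 - v_{\hat j}(Z_{\hat j}^{t-1})/v_{\hat j}(X_{\hat j})$, is at least $\tfrac{\alpha}{2\alpha+1}$.

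The hard part will be controlling the ratio at $\hat j$, which can be small if $Z_{\hat j}^{t-1}$ retains most of $\hat j$'s original value. I would resolve this via a secondary case split: when the ratio at $\hat j$ is large enough, the modified Case~2 construction above directly gives the required Nash welfare improvement; in the complementary regime, I would combine the invariant-derived bound $v_{\hat j}(x) \leq v_{\hat j}(X_{\hat j} \setminus Z_{\hat j}^t) \leq \tfrac{\alpha}{\alpha+1} v_{\hat j}(X_{\hat j})$ with the MNW-optimality inequality $v_i(x) \cdot v_{\hat j}(X_{\hat j} - x) \leq v_i(X_i) \cdot v_{\hat j}(x)$ (obtained by noting that transferring $x$ from $\hat j$ to $i$ cannot strictly improve Nash welfare), which under the assumption implies $v_{\hat j}(x) > \tfrac{\alpha}{2\alpha+1} v_{\hat j}(X_{\hat j})$, to reach a contradiction. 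Sub-cases where $i$ lies in the improving sequence chain $\{j_1, \ldots, j_\ell\}$ would be dispatched by analogous adaptations of the chain construction.
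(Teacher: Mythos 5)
There is a genuine gap in the regime you yourself flag as the hard part, and your proposed fix does not close it. In the complementary regime your two bounds are mutually consistent rather than contradictory: the exchange inequality $v_i(x)\cdot v_{\hat j}(X_{\hat j}-x)\leq v_i(X_i)\cdot v_{\hat j}(x)$ together with $v_i(x)>\alpha v_i(Z_i)\geq \tfrac{\alpha}{\alpha+1}v_i(X_i)$ indeed yields $v_{\hat j}(x)>\tfrac{\alpha}{2\alpha+1}v_{\hat j}(X_{\hat j})$, but the invariant only gives the upper bound $v_{\hat j}(x)\leq v_{\hat j}(X_{\hat j}\setminus Z_{\hat j})\leq \tfrac{\alpha}{\alpha+1}v_{\hat j}(X_{\hat j})$, and $\tfrac{\alpha}{2\alpha+1}<\tfrac{\alpha}{\alpha+1}$, so no contradiction follows. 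Nor does the defining condition of the hard regime help: it says $v_{\hat j}(X_{\hat j}\setminus Z_{\hat j}^{t-1})$ is small, but $x\in Z_{\hat j}^{t-1}$, so this does not upper-bound $v_{\hat j}(x)$. Thus the core difficulty you identified --- that in your Case-2-style construction the factor contributed by $\hat j$, namely $1-v_{\hat j}(Z_{\hat j}^{t-1})/v_{\hat j}(X_{\hat j})$, can be arbitrarily small --- remains unresolved, and the sub-cases with $i$ inside the chain are also only asserted, not handled.

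The paper's proof avoids this problem by a structurally different construction. Instead of giving $x$ to $i$ on top of $X_i$ and leaving $\hat j$ with only its previously removed items, it keeps $Z_j$ with agent $j$ (so $j$'s ratio is at least $\tfrac{1}{\alpha+1}$ by Lemma~\ref{lem:additive_bound}, a fixed constant rather than an uncontrolled quantity), gives the entire removed part $X_j\setminus Z_j$ to agent $i$ \emph{in place of} $Z_i$, and compensates the loss of $Z_i$ along a ``touching sequence'' (Definition~\ref{def:touching_seq}) anchored at $i$ and built from the final state via the last-to-touch relation, ending at an untouched agent who gains a factor $\tfrac{\alpha+1}{\alpha}$ by Claim~\ref{cla:last_touched_simple}. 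Optimality of $X$ then yields $v_i(\widehat X_i)\leq \alpha v_i(X_i)$ for $\widehat X_i=(X_i\setminus Z_i)\cup(X_j\setminus Z_j)$, and the per-item bound $v_i(x)\leq\alpha v_i(Z_i)$ follows from the ratio manipulation in Lemma~\ref{lem:trick} --- an ingredient your sketch is missing and would need in any case, since your chain argument can at best bound $v_i$-values against $v_i(X_i)$ whereas the separation statement is relative to $v_i(Z_i)$ (for untouched $i$ these coincide, but not in general). If you want to salvage your route, you would need an analogue of both the constant lower bound on $\hat j$'s factor and a conversion step of the Lemma~\ref{lem:trick} type; as written, the proof does not go through.
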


The remainder of this section is dedicated to proving this lemma. We prove the $\alpha$-separation property by constructing an allocation $\widehat{X}$ {which is built from $X$} and using the optimality of $X$ to infer that the Nash welfare of $\widehat{X}$ is at most the Nash welfare of $X$.
The construction of $\widehat{X}$ is based on a non-trivial redistribution of the items among agents, which requires additional definitions, as follows.

\begin{definition}[Touching]\label{def:touching}
For any agent $i$ with $Z_i \subsetneq X_i$, we say that agent $k$ was the \emph{last one to touch} $i$ if in the last iteration in Algorithm \ref{alg:alloc} with $j^\star = i$, it was the case that $i^\star = k$. If, on the other hand, $Z_i = X_i$, then we say that agent $i$ is \emph{untouched}.    
\end{definition}

The following inequalities follow {from Claim~\ref{cla:add_ineq}}.

\begin{restatable}{claim}{clalasttouchedsimple}
\label{cla:last_touched_simple} 
Let $i$ be any agent with $Z_i \subsetneq X_i$ and let $k$ be the last agent to touch $i$. Then, 
\begin{enumerate}[(i)]
    \item {$v_k(Z_i) > v_k(Z_k)$}, and\label{list:touching_1_inequality}
    \item {if $Z_k = X_k$, then $v_k(Z_i) > (1/\alpha) \cdot v_k(Z_k)$.} \label{list:touching_a_inequality}
\end{enumerate}
\end{restatable}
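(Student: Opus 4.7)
The plan is to apply Claim~\ref{cla:add_ineq}(\ref{list:1_inequality}) and~(\ref{list:a_inequality}) at the iteration where $k$ performs its ``last touch'' on bundle~$i$, and then propagate the resulting inequalities to the end of Algorithm~\ref{alg:alloc}. Let $t$ denote the last iteration in which $j^\star = i$; by Definition~\ref{def:touching} we have $i^\star = k$ at that iteration, and since $Z_i$ strictly shrank, the algorithm executed lines~\ref{line:first_else}--\ref{line:last_else} and set $M_k \gets Z_i$ at the end of iteration~$t$.

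Before invoking Claim~\ref{cla:add_ineq}, I would verify that $i \neq k$, so that $M_k = Z_i$ is indeed distinct from $Z_k$. If instead $j^\star = i^\star = k$, then the maximizer chosen on line~\ref{line:first_else} would produce some $g \in Z_k$ with $v_k(Z_k - g) > v_k(Z_k)$ (or with $\alpha \cdot v_k(Z_k - g) > v_k(Z_k)$, depending on which branch of the if-condition on line~\ref{line:if_add} failed), both impossible by additivity and non-negativity of $v_k$. Hence at the end of iteration~$t$ we have $M_k = Z_i$ and $M_k \neq Z_k$.

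For part~(i), Claim~\ref{cla:add_ineq}(\ref{list:1_inequality}) applied at the end of iteration~$t$ yields $v_k(Z_i) > v_k(Z_k)$ evaluated at that moment. Since $k$ is the \emph{last} agent to touch~$i$, the bundle $Z_i$ never changes after iteration~$t$. The bundle $Z_k$, on the other hand, can only lose items in subsequent iterations, and by additivity together with non-negativity of individual item values, $v_k(Z_k)$ can only decrease. Combining these observations propagates the strict inequality to the final state. For part~(ii), suppose $Z_k = X_k$ in the final allocation. Since each $Z_j$ only shrinks during the algorithm, $Z_k$ must equal $X_k$ throughout, including at iteration~$t$. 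Applying Claim~\ref{cla:add_ineq}(\ref{list:a_inequality}) at iteration~$t$ therefore yields $v_k(Z_i) > (1/\alpha) \cdot v_k(Z_k)$, and the same persistence argument (trivial here, since $Z_k$ is unchanged) transfers the bound to the final state.

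The only real thing to get right is the bookkeeping between iteration~$t$ and the end of the algorithm: arguing that $Z_i$ is frozen after~$t$ (immediate from ``last to touch'') and that $v_k(Z_k)$ only weakly decreases afterward (from additivity and non-negativity of item values). Once this bookkeeping is in place, both inequalities follow directly from the corresponding parts of Claim~\ref{cla:add_ineq}.
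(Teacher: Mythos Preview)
Your proof is correct and follows essentially the same approach as the paper: apply Claim~\ref{cla:add_ineq}(\ref{list:1_inequality}) and~(\ref{list:a_inequality}) at the last iteration with $j^\star = i$, then use $Z_i' = Z_i$ (since this was the last touch) and $Z_k \subseteq Z_k'$ (bundles only shrink) to carry the inequalities to the final state. The paper leaves the fact $i \neq k$ implicit; your explicit verification that $j^\star \neq i^\star$ in the else branch is a welcome piece of extra rigor but not a departure in method.
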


\begin{proof}
      Denote by $(Z_1', \ldots, Z_n')$ the {bundles} at the end of the last iteration where $j^\star = i$. Note that $Z_r \subseteq Z_r' \subseteq X_r$ for any agent $r$ by the design of the algorithm. By the assumption that $k$ is the last agent that touched $i$, during {the chosen} iteration, it holds that $i^\star = k$. Moreover, by the assumption that this is the last time the algorithm removes an element from $Z_i$, it holds that $Z_i = Z_i'$. It follows that 
      {
\begin{align*}
    v_k(Z_k) &\leq v_k(Z_k') && (\text{since } Z_k \subseteq Z_k')\\
    &<   v_k(Z_i') && (\text{by Claim \ref{cla:add_ineq}(\ref{list:1_inequality})}) \\
    &= v_k(Z_i) && (\text{since } Z_i = Z_i')
    \intertext{which gives property (i). {Furthermore,} if we assume $Z_k=X_k$, then it holds that}
    v_k(Z_k) &\leq v_k(Z_k') && (\text{since } Z_k \subseteq Z_k')\\
    &<  \alpha \cdot v_k(Z_i') && (\text{by Claim \ref{cla:add_ineq}(\ref{list:a_inequality})}) \\
    &= \alpha \cdot v_k(Z_i) && (\text{since } Z_i = Z_i')
\end{align*}}
which gives property (ii).
\end{proof}

We now define a touching sequence, cf. Definition~\ref{def:improving_seq}.

\begin{definition}[Touching sequence] \label{def:touching_seq}
Given any {agent} $i$, we define a \emph{touching sequence} $k_1, \ldots, k_\ell$ in the following way. Let {$k_1 = i$}. For every {$s \geq 1$}, until either
(i) $k_s$ is untouched, or (ii) the last agent to touch $k_s$ is already in the sequence $k_1, \ldots, k_{s-1}$,
    define $k_{s+1}$ to be the last agent to touch $k_s$. Let $\ell = s$ for $s$ where the process ends.
\end{definition}

The proof of Lemma~\ref{lem:additive_better_sep}  uses the following technical lemma.

\begin{lemma} \label{lem:trick}
    Let $i$ and $j$ be any distinct agents. Let $\widehat{X}_i = (X_i \setminus Z_i) \cup (X_j \setminus Z_j)$.
    Suppose that $v_i(\widehat{X}_i) \leq \alpha \cdot v_i(X_i)$. Then, $v_i(x) \leq \alpha \cdot v_i(Z_i)$ for all $x \in X_j \setminus Z_j$.
\end{lemma}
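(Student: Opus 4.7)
The plan is to prove this lemma directly from the hypothesis using additivity; no MNW-optimality argument or appeal to the algorithm's structure is actually needed, since the statement is purely about the valuation $v_i$ and the sets $X_i, Z_i, X_j, Z_j$ (and we only know $Z_i \subseteq X_i$, $Z_j \subseteq X_j$, which gives us disjoint unions to exploit).

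First I would unfold the hypothesis via additivity. Writing $A = X_i \setminus Z_i$ and $B = X_j \setminus Z_j$, additivity of $v_i$ gives
\[
v_i(\widehat{X}_i) \;=\; v_i(A) + v_i(B), \qquad v_i(X_i) \;=\; v_i(A) + v_i(Z_i),
\]
where the second equality uses $Z_i \subseteq X_i$ and the first uses the disjointness of $A$ and $B$ (which is immediate because $A \subseteq X_i$, $B \subseteq X_j$, and $X_i \cap X_j = \emptyset$ since $X$ is an allocation). Substituting into the assumed inequality $v_i(\widehat{X}_i) \leq \alpha \cdot v_i(X_i)$ and rearranging yields
\[
(1-\alpha)\, v_i(A) + v_i(B) \;\leq\; \alpha \cdot v_i(Z_i).
\]

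Next I would drop the first term on the left-hand side: since $\alpha \leq 1$ and valuations are nonnegative, $(1-\alpha)\, v_i(A) \geq 0$, so
\[
v_i(B) \;\leq\; \alpha \cdot v_i(Z_i).
\]
Finally, for any single $x \in B$, additivity and nonnegativity give $v_i(x) \leq v_i(B)$, which combined with the previous inequality yields $v_i(x) \leq \alpha \cdot v_i(Z_i)$, as required.

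There is no real obstacle here — the proof is a one-line consequence of additivity plus $\alpha \leq 1$. The only thing worth being careful about is that we are genuinely in the additive setting and that $A$ and $B$ are disjoint (so that $v_i(A) + v_i(B) = v_i(A \cup B)$ holds with equality, not just inequality), which is guaranteed because the $X_k$'s come from an allocation.
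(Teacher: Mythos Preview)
Your proof is correct. Both your argument and the paper's rely on the same two ingredients (additivity plus $\alpha \leq 1$), but the algebra is organized differently: the paper bounds the ratio $v_i(x)/v_i(Z_i)$ by writing $v_i(Z_i) = v_i(X_i) - v_i(X_i\setminus Z_i)$ and $v_i(X_j\setminus Z_j) = v_i(\widehat{X}_i) - v_i(X_i\setminus Z_i)$, then invoking the mediant-type inequality $(a-c)/(b-c) \leq a/b$ for $0 \leq c \leq a \leq b$; you instead substitute the additive decompositions directly into the hypothesis, obtain $(1-\alpha)\,v_i(X_i\setminus Z_i) + v_i(X_j\setminus Z_j) \leq \alpha\, v_i(Z_i)$, and drop the nonnegative first term. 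Your route is slightly more elementary---it avoids division (and hence any implicit assumption that $v_i(Z_i) > 0$) and does not need the auxiliary fraction inequality---while the paper's ratio formulation makes the role of the hypothesis $v_i(\widehat{X}_i)/v_i(X_i) \leq \alpha$ visually explicit. Either way the content is the same.
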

\begin{proof}
    Observe that for any $x \in X_j \setminus Z_j$, {it holds that}
\begin{align*}
    \frac{v_i(x)}{v_i(Z_i)} &\leq \frac{v_i(X_j \setminus Z_j)}{v_i(Z_i)} && (\text{by monotonicity of }v_i) \\ 
    &= \frac{v_i((X_i \setminus Z_i) \cup (X_j \setminus Z_j)) - v_i(X_i\setminus Z_i)}{v_i(X_i)- v_i(X_i\setminus Z_i)} && (\text{by additivity of } v_i)\\
    &\leq \frac{v_i((X_i \setminus Z_i) \cup (X_j \setminus Z_j))}{v_i(X_i)} && (\star) \\
    &\leq \alpha && (\text{by the assumption})
\end{align*}
where ($\star$) follows because if $0 \leq c<a\leq b$, then $(a-c)/(b-c) \leq a/b$.
\end{proof}

{We are now ready to prove Lemma~\ref{lem:additive_better_sep}. 
}

\begin{proof}[Proof of Lemma~\ref{lem:additive_better_sep}]
Fix any agents $i$ and $j$ {with $Z_j \neq X_j$} and any item $x \in X_j \setminus Z_j$. The goal is to show that $v_i(x) \leq \alpha \cdot v_i(Z_i)$.  See Figure~\ref{fig:additive_better_sep} for the pictures accompanying Cases 4 and 5 of this proof.

\underline{Case 1: $i=j$.} First, assume that $i=j$. Then, it holds that 
\begin{align*}
v_i(x) &\leq v_i(X_i \setminus Z_i) && (\text{by monotonicity of }v_i) \\
&= v_i(X_i) - v_i(Z_i)  && (\text{by additivity of }v_i) \\
&\leq (\alpha+1) \cdot v_i(Z_i) - v_i(Z_i)  && (\text{by Lemma \ref{lem:additive_bound}}) \\
&= \alpha \cdot v_i(Z_i)
\end{align*}
which is the desired bound.

    Next, assume that $i \neq j$. Let $k_1, \ldots, k_\ell$ be the touching sequence starting with {$k_1=i$} (Definition \ref{def:touching_seq}).

\underline{Case 2: {$k$ ends with condition (ii)}.} {Suppose that $k_w$ is the last agent to touch $k_\ell$ for some $1 \leq w < \ell$.} Let $(\widehat{X}_1, \ldots, \widehat{X}_n)$ be the following allocation
\begin{align*}
    &\widehat{X}_{k_w} = (X_{k_w} \setminus Z_{k_w}) \cup Z_{k_{\ell}} \\
    &\widehat{X}_{k_s} = (X_{k_s} \setminus Z_{k_s}) \cup Z_{k_{s-1}}  && \text{\ for $w+1 \leq s \leq \ell$} 
\end{align*}
and $\widehat{X}_r = X_r$ for any $r \notin \{k_w, \ldots, k_\ell\}$. By {Claim \ref{cla:last_touched_simple}(\ref{list:touching_1_inequality})}, $v_{k_s}(Z_{k_{s-1}}) > v_{k_s}(Z_{k_s})$, and so
\begin{align*}
    v_{k_s}(\widehat{X}_{k_s}) = v_{k_s}((X_{k_s} \setminus Z_{k_s}) \cup Z_{k_{s-1}}) = v_{k_s}(X_{k_s}) - v_{k_s}(Z_{k_s}) + v_{k_s}(Z_{k_{s-1}}) > v_{k_s}(X_{k_s})
\end{align*}
and similarly, $v_{k_w}(Z_{k_\ell}) > v_{k_w}(Z_{k_w})$, and so
\begin{align*}
    v_{k_w}(\widehat{X}_{k_w}) = v_{k_w}((X_{k_w} \setminus Z_{k_w}) \cup Z_{k_{\ell}}) = v_{k_w}(X_{k_w}) - v_{k_w}(Z_{k_w}) + v_{k_w}(Z_{k_{\ell}}) > v_{k_w}(X_{k_w})
\end{align*}
and hence $\widehat{X}$ is a Pareto improvement of $X$, contradicting the NW-maximality of $X$.

{For the remainder of the proof, we assume that $k$ ends with condition (i).}

\underline{Case 3: {$\ell = 1$}.} If {$k_1 = i$} is not touched at the end of the run of the algorithm, i.e., $Z_i = X_i$, then
consider the allocation $(\widehat{X}_1, \ldots, \widehat{X}_n)$ defined by 
\begin{align*}
 &\widehat{X}_i = X_i \cup (X_j \setminus Z_j) \\
 &\widehat{X}_j = Z_j
\end{align*}
and $\widehat{X}_r = X_r$ for all $r \notin \{i,j\}$. The following must hold
\begin{align*}
    1 &\geq \frac{v_i(\widehat{X}_i)}{v_i({X}_i) } \cdot \frac{v_j(\widehat{X}_j) }{ v_j({X}_j) } && (\text{by the optimality of } X) \\
    &=\frac{v_i(X_i \cup (X_j\setminus Z_j))}{v_i(X_i)} \cdot \frac{v_j(Z_j)}{v_j(X_j)} &&  (\text{by the construction of } \widehat{X})\\
    &\geq \frac{v_i(X_i \cup (X_j\setminus Z_j))}{v_i(X_i)} \cdot \frac{1}{\alpha+1} && (\text{by Lemma \ref{lem:additive_bound}}) \\
    &= \left(1 + \frac{v_i(X_j\setminus Z_j)}{v_i(X_i)}\right) \cdot \frac{1}{\alpha+1} && (\text{by additivity of } v_i)
\end{align*}
and so 
\begin{align*}
    v_i(x) &\leq v_i(X_j \setminus Z_j) && (\text{by monotonicity of } v_i) \\
    &\leq \alpha\cdot v_i(X_i) && (\text{by the above})\\
    &= \alpha\cdot v_i(Z_i)&& (\text{since }Z_i = X_i)
\end{align*}
which is the desired bound.

\underline{Case 4: {$\ell > 1$} {and $j \neq k_t$ for all $2 \leq t \leq \ell$}.}
    Let $(\widehat{X}_1, \ldots, \widehat{X}_n)$ be the following allocation
    \begin{align*}
        &\widehat{X}_j = Z_j \\
        &\widehat{X}_i = (X_i \setminus Z_i) \cup (X_j \setminus Z_j) \\
        &\widehat{X}_{k_s} = (X_{k_s} \setminus Z_{k_s}) \cup Z_{k_{s-1}} && \text{for } {2} \leq s < \ell \\
        &\widehat{X}_{k_\ell} = X_{k_\ell} \cup Z_{k_{\ell-1}}
    \end{align*}
    where $\widehat{X}_r = X_r$ for any $r \notin \{k_1, \ldots, k_\ell\}$.  
    By Lemma \ref{lem:additive_bound},
    \begin{align*}
        v_j(\widehat{X}_j) = v_{j}(Z_j)
        \geq (1/(\alpha+1)) \cdot v_j(X_j).
    \end{align*}
    By {Claim \ref{cla:last_touched_simple}(\ref{list:touching_1_inequality})}, for ${2} \leq s < \ell$, it holds that
       \begin{align*}
        v_{k_s}(\widehat{X}_{k_s}) = v_{k_s}((X_{k_s} \setminus Z_{k_s}) \cup Z_{k_{s-1}})
        = v_{k_s}(X_{k_s})-v_{k_s}(Z_{k_s}) + v_{k_s}(Z_{k_{s-1}}) 
        \geq v_{k_s}(X_{k_s}).
    \end{align*}
    Finally, the assumption that $X_{k_\ell} = Z_{k_\ell}$ and {Claim~\ref{cla:last_touched_simple}(\ref{list:touching_a_inequality})} give
       \begin{align*}
        v_{k_\ell}(\widehat{X}_{k_\ell}) = v_{k_\ell}(X_{k_\ell} \cup Z_{k_{\ell-1}}) 
        = v_{k_\ell}(X_{k_\ell})+ v_{k_\ell}(Z_{k_{\ell-1}}) 
        \geq (1 + 1/\alpha) \cdot v_{k_\ell}(X_{k_\ell}).
    \end{align*}
    Combining the inequalities above gives, by optimality of $X$, 
    \begin{align*}
    1 &\geq \frac{v_j(\widehat{X}_j)}{v_j(X_j)} \cdot 
    \frac{v_i(\widehat{X}_i)}{v_i(X_i)} \cdot 
    \frac{v_{k_3}(\widehat{X}_{k_3})}{v_{k_3}(X_{k_3})} \cdots
    \frac{v_{k_\ell}(\widehat{X}_{k_\ell})}{v_{k_3}(X_{k_\ell})} \\
    &\geq (1/(\alpha+1)) \cdot\frac{v_i(\widehat{X}_i)}{v_i(X_i)} \cdot 1 \cdots 1 \cdot (1 + 1/\alpha) \\
    &= {\frac{1}{\alpha}} \cdot \frac{v_i(\widehat{X}_i)}{v_i(X_i)} 
\end{align*}
and after rearranging we can apply Lemma~\ref{lem:trick}.

\underline{Case 5: {$\ell > 1$ and $j = k_t$ for some $2 \leq t \leq \ell$}.} {Note that since we assume that $X_j \setminus Z_j$ is non-empty, it cannot be that $t = \ell$.} Let $(\widehat{X}_1, \ldots, \widehat{X}_n)$ be the following allocation
{\begin{align*}
        &\widehat{X}_j = Z_{k_{t-1}}\\
        &\widehat{X}_i = (X_i \setminus Z_i) \cup (X_j \setminus Z_j) \\
        &\widehat{X}_{k_\ell} = X_{k_\ell} \cup Z_{k_{\ell-1}} \\ 
        &\widehat{X}_{k_s} = (X_{k_s} \setminus Z_{k_s}) \cup Z_{k_{s-1}} && \text{for } s \in \{2,\ldots,t-1,t+1,\ldots,\ell-1\}     \end{align*}}
    where $\widehat{X}_r = X_r$ for any $r \notin \{k_1, \ldots, k_\ell\}$.  
    By Lemma \ref{lem:additive_bound} and {Claim \ref{cla:last_touched_simple}(\ref{list:touching_1_inequality})},
    \begin{align*}
        {v_j(\widehat{X}_j) = v_j(Z_{k_{t-1}}) 
        \geq v_j(Z_j)
        \geq (1/({\alpha+1})) \cdot v_j(X_j).}
    \end{align*}
    By {Claim~\ref{cla:last_touched_simple}(\ref{list:touching_1_inequality})}, for {$s \in \{2,\ldots,t-1,t+1,\ldots,\ell-1\}$}, it holds that
       \begin{align*}
        v_{k_s}(\widehat{X}_{k_s}) = v_{k_s}((X_{k_s} \setminus Z_{k_s}) \cup Z_{k_{s-1}}) 
        = v_{k_s}(X_{k_s})-v_{k_s}(Z_{k_s}) + v_{k_s}(Z_{k_{s-1}}) 
        \geq v_{k_s}(X_{k_s}).
    \end{align*}
    {
    By Claim~\ref{cla:last_touched_simple}(\ref{list:touching_a_inequality}) and the fact that $X_{k_\ell} = Z_{k_\ell}$, it holds that
    \begin{align*}
        v_{k_\ell}(\widehat{X}_{k_\ell}) = v_{k_{\ell}}(X_{k_\ell} \cup Z_{k_{\ell-1}}) \geq v_{k_{\ell}}(X_{k_\ell}) + (1/\alpha) \cdot v_{k_{\ell}}(X_{k_{\ell}}) = ((\alpha+1)/{\alpha}) \cdot v_{k_\ell}(X_{k_\ell})
    \end{align*}
    }
      Combining the inequalities above gives, by optimality of $X$, that
    \begin{align*}
    1 &\geq \frac{v_j(\widehat{X}_j)}{v_j(X_j)} \cdot 
    \frac{v_i(\widehat{X}_i)}{v_i(X_i)} \cdot 
    \frac{v_{k_3}(\widehat{X}_{k_3})}{v_{k_3}(X_{k_3})} \cdots
    \frac{v_{k_\ell}(\widehat{X}_{k_\ell})}{v_{k_\ell}(X_{k_\ell})}  \\
    &\geq {(1/({\alpha+1}))} \cdot \frac{v_i(\widehat{X}_i)}{v_i(X_i)} \cdot 1 \cdots 1 \cdot {((\alpha+1)/{\alpha})}.
\end{align*}
Similarly to before, after rearranging we can apply Lemma~\ref{lem:trick}. This completes the proof.
\end{proof}

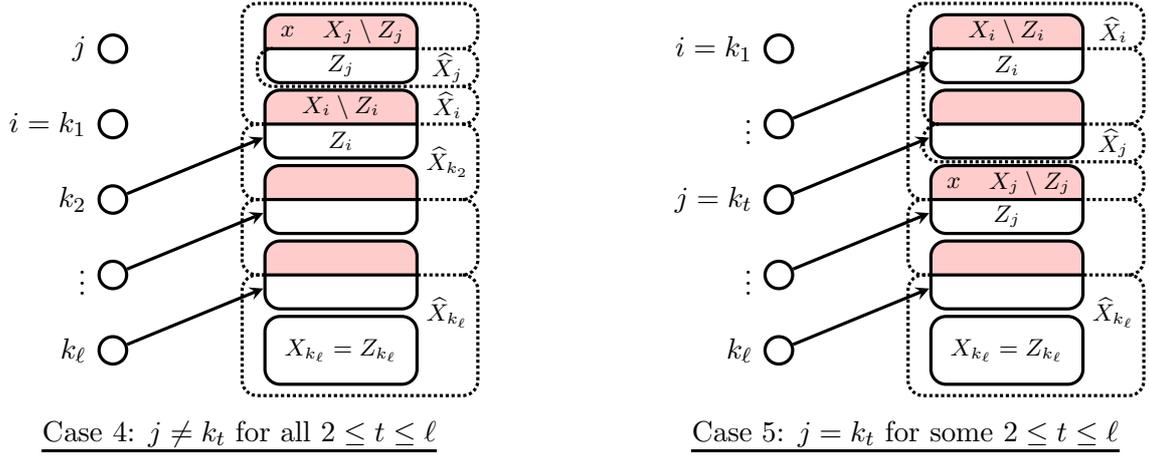
\begin{figure}[t] 
\begin{center}
\def\yscale{.5}
\def\xscale{1}
\begin{minipage}[b]{.45\textwidth}
\centering
\begin{tikzpicture}[very thick,xscale=\xscale,yscale=\yscale]
\foreach \i in {2,...,5}
{
    \begin{scope}
        \clip (1,2*\i) rectangle (3,2*\i+1);
        \node[fill, red!20!white, rounded corners=.2cm, minimum width=2*\xscale cm, minimum height=1.8*\yscale cm] at (2,2*\i){};
    \end{scope}
    \draw (1,2*\i) -- (3,2*\i);
}
\foreach \i in {1,...,5}
{
    \node[draw,circle] (j\i) at (-1,2*\i) {};
    \node[draw, rounded corners=.2cm, minimum width=2*\xscale cm, minimum height=1.8*\yscale cm] (X\i) at (2,2*\i){};
}
\node[anchor=east] at (-1.2,10) {$j$};
\node[anchor=east] at (-1.2,8) {$i=k_1$};
\node[anchor=east] at (-1.2,6) {$k_2$};
\node[anchor=east] at (-1.2,4) {$\vdots$};
\node[anchor=east] at (-1.2,2) {$k_\ell$};
{\footnotesize
    \node at (1.3,10.45) {$x$};
    \node at (2.3,10.45) {$X_{j}\setminus Z_{j}$};
    \node at (2,9.55) {$Z_{j}$};
    \node at (2,8.45) {$X_{i}\setminus Z_{i}$};
    \node at (2,7.55) {$Z_{i}$};
    \node at (2,2) {$X_{k_\ell} = Z_{k_\ell}$};
    \node at (3.4,9.5) {$\widehat{X}_{j}$};
    \node at (3.4,8.5) {$\widehat{X}_{i}$};
    \node at (3.4,7) {$\widehat{X}_{k_2}$};
    \node at (3.4,3) {$\widehat{X}_{k_\ell}$};
}
\draw [->,>=stealth] (j3) -- (X4.190);
\draw [->,>=stealth] (j2) -- (X3.190);
\draw [->,>=stealth] (j1) -- (X2.190);
\begin{scope}[densely dotted, rounded corners=.2cm]
    \draw (1,8) -- (.7,8) -- (.7,10) -- (.7,11.2) -- (3.8,11.2) -- (3.8,10) -- (3,10); 
    \draw (1,10) -- (.9,10) -- (.9,9) -- (3.8,9) -- (3.8,10) -- (3.6,10); 
    \draw (3.6,9) -- (3.8,9) -- (3.8,8) -- (3,8);
    \draw (3.6,8) -- (3.8,8) -- (3.8,6) -- (3,6);
    \draw (3.6,6) -- (3.8,6) -- (3.8,4) -- (3,4);
    \draw (1,8) -- (.7,8) -- (.7,6) -- (1,6);
    \draw (1,6) -- (.7,6) -- (.7,4) -- (1,4);
    \draw (3.6,4) -- (3.8,4) -- (3.8,.8) -- (.7,.8) -- (.7,4) -- (1,4);
\end{scope}
\end{tikzpicture}
\smallbreak
\underline{Case 4: $j\neq k_t$ for all $2 \leq t \leq \ell$}
\end{minipage}
\hfill
\begin{minipage}[b]{.45\textwidth}
\centering
\begin{tikzpicture}[very thick,xscale=\xscale,yscale=\yscale]
\foreach \i in {2,...,5}
{
    \begin{scope}
        \clip (1,2*\i) rectangle (3,2*\i+1);
        \node[fill, red!20!white, rounded corners=.2cm, minimum width=2*\xscale cm, minimum height=1.8*\yscale cm] at (2,2*\i){};
    \end{scope}
    \draw (1,2*\i) -- (3,2*\i);
}
\foreach \i in {1,...,5}
{
    \node[draw,circle] (j\i) at (-1,2*\i) {};
    \node[draw, rounded corners=.2cm, minimum width=2*\xscale cm, minimum height=1.8*\yscale cm] (X\i) at (2,2*\i){};
}
\node[anchor=east] at (-1.2,10) {$i=k_1$};
\node[anchor=east] at (-1.2,8) {$\vdots$};
\node[anchor=east] at (-1.2,6) {$j=k_t$};
\node[anchor=east] at (-1.2,4) {$\vdots$};
\node[anchor=east] at (-1.2,2) {$k_\ell$};
{\footnotesize
    \node at (1.3,6.45) {$x$};
    \node at (2.3,6.45) {$X_{j}\setminus Z_{j}$};
    \node at (2,5.55) {$Z_{j}$};
    \node at (2,10.45) {$X_{i}\setminus Z_{i}$};
    \node at (2,9.55) {$Z_{i}$};
    \node at (2,2) {$X_{k_\ell} = Z_{k_\ell}$};
    \node at (3.4,7.5) {$\widehat{X}_{j}$};
    \node at (3.4,10.5) {$\widehat{X}_{i}$};
    \node at (3.4,3) {$\widehat{X}_{k_\ell}$};
}
\draw [->,>=stealth] (j4) -- (X5.190);
\draw [->,>=stealth] (j3) -- (X4.190);
\draw [->,>=stealth] (j2) -- (X3.190);
\draw [->,>=stealth] (j1) -- (X2.190);
\begin{scope}[densely dotted, rounded corners=.2cm]
    \draw (1,6) -- (.7,6) -- (.7,10) -- (.7,11.2) -- (3.8,11.2) -- (3.8,10) -- (3,10); 
    \draw (1,8) -- (.9,8) -- (.9,7) -- (3.8,7) -- (3.8,8) -- (3.6,8); 
    \draw (3.6,10) -- (3.8,10) -- (3.8,8) -- (3,8);
    \draw (3.6,7) -- (3.8,7) -- (3.8,6) -- (3,6);
    \draw (1,10) -- (.9,10) -- (.9,8) -- (1,8);
    \draw (3.6,6) -- (3.8,6) -- (3.8,4) -- (3,4);
    \draw (1,6) -- (.7,6) -- (.7,4) -- (1,4);
    \draw (3.6,4) -- (3.8,4) -- (3.8,.8) -- (.7,.8) -- (.7,4) -- (1,4);
\end{scope}
\end{tikzpicture}
\smallbreak
\underline{Case 5: $j = k_t$ for some $2 \leq t \leq \ell$}
\end{minipage}
\end{center}
\caption{The structure of the new allocation $\widehat{X}$ in Case 4 and Case 5: the touching sequence has length $\ell >2$, and ends with condition (1). The edges represent the touching relation.}
\label{fig:additive_better_sep}
\end{figure}

Finally, we prove the main theorem of this section.

\begin{proof}[Proof of Theorem \ref{thm:complete_additive}]
    {Since $v_i(M_i) \geq v_i(Z_i)$ by {Claim~\ref{cla:add_ineq}(\ref{list:1_inequality})}, it follows from Lemma~\ref{lem:additive_better_sep} that the allocation $(M_1, \ldots, M_n)$ returned by the algorithm is $\alpha$-separated.} Hence, by Lemma~\ref{lem:make_complete}, there exists a complete  allocation {that is $\min(\alpha,\frac{1}{\alpha+1})$-EFX{, EF1,} and $\frac{1}{\alpha+1}$-MNW}. This proves the theorem  since for $\alpha \leq \varphi-1$ it holds that $\frac{1}{\alpha+1} \geq \alpha$ and so $\min(\alpha,\frac{1}{\alpha+1}) = \alpha$.
\end{proof}

\section{Subadditive Valuations}\label{sec:subadditive}

In this section, we study settings with subadditive valuations. 
In Section~\ref{sec:subadditive-partial}, we prove Theorem~\ref{thm:subadditive-partial-better} by constructing a partial allocation with the desired EFX and MNW guarantees. 
In Section~\ref{sec:subadditive-complete}, we prove Theorem~\ref{thm:complete_subadditive} by using the 
{techniques from Section~\ref{sec:par_to_com}}
to turn the partial allocation into a complete allocation with the same guarantees.

\subsection{Partial Allocations}\label{sec:subadditive-partial}

In this section, we prove the following theorem.

\begin{theorem}
     {Every instance with subadditive valuations admits a partial allocation that is $\alpha$-EFX and $\frac{1}{\alpha+1}$-MNW, for every $0 \leq \alpha \leq 1/2$.}
    \label{thm:subadditive-partial-better}
\end{theorem}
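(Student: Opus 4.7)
The plan is to introduce Algorithm~\ref{alg:new_subadditive_alg}, a variant of Algorithm~\ref{alg:alloc} tailored for subadditive valuations. It starts from a maximum Nash welfare allocation $X$ and runs a bundle-shrinking loop analogous to the additive case to produce a partial $\alpha$-EFX allocation, but adds a secondary phase that redistributes items removed from the bundles back to agents who value them, whenever doing so preserves $\alpha$-EFX. This extra redistribution is precisely what lifts the Nash welfare guarantee from the $1/2$ of \citet*{GHLVV22} to $\frac{1}{\alpha+1}$.

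\textbf{First steps.} First I would formally define Algorithm~\ref{alg:new_subadditive_alg} and prove that it terminates in polynomial time, using a monovariant analogous to the one in Lemma~\ref{lem:addruntimelemma} that tracks the number of items removed from each bundle and the number of self-matched agents, with the extra reallocation phase handled by a similar lexicographic progress measure. Next I would verify the $\alpha$-EFX property by adapting Claims~\ref{cla:add_basic} and~\ref{cla:add_ineq}: the shrinking phase preserves $\alpha$-EFX exactly as in the additive analysis, and a removed item $g$ is handed to an agent $i$ only when the enlarged bundle $M_i + g$ still does not $\alpha$-envy (and is not $\alpha$-envied by) any other bundle. Subadditivity suffices for these checks, since only the monotonicity of $v_i$ is used in propagating the invariants.

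\textbf{Nash welfare.} The core of the proof is to establish the $\frac{1}{\alpha+1}$-MNW guarantee, following the template of Lemma~\ref{lem:additive_bound}. I would assume for contradiction that at some point in the execution an agent $j^\star$ satisfies $v_{j^\star}(Z_{j^\star}) < \frac{1}{\alpha+1} \cdot v_{j^\star}(X_{j^\star})$, consider the first time this happens, and construct an alternative allocation $\widehat{X}$ with strictly higher Nash welfare than $X$, contradicting its optimality. The construction shifts bundles along an improving sequence (Definition~\ref{def:improving_seq}), splitting into the same two cases as in the additive proof (sequence closing on $j^\star$ versus ending at an untouched agent).

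\textbf{Main obstacle.} The main difficulty is that the additive proof of Lemma~\ref{lem:additive_bound} crucially used the additive identity $v_{j_s}((X_{j_s} \setminus Z_{j_s}) \cup Z_{j_{s-1}}) = v_{j_s}(X_{j_s}) - v_{j_s}(Z_{j_s}) + v_{j_s}(Z_{j_{s-1}})$; under subadditivity, only the monotonicity bound $v_{j_s}((X_{j_s} \setminus Z_{j_s}) \cup Z_{j_{s-1}}) \geq v_{j_s}(Z_{j_{s-1}})$ is available. Overcoming this is where the secondary reallocation phase pays off: because $v_{j^\star}(M_{j^\star})$ is small relative to $v_{j^\star}(X_{j^\star})$, subadditivity forces $v_{j^\star}(X_{j^\star} \setminus M_{j^\star}) \geq v_{j^\star}(X_{j^\star}) - v_{j^\star}(M_{j^\star}) > \frac{\alpha}{\alpha+1} \cdot v_{j^\star}(X_{j^\star})$, so the removed items collectively retain substantial value. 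The way the reallocation phase has been guarded allows one of these items to enlarge the terminal bundle $\widehat{X}_{j_\ell}$ of the improving sequence by enough to produce a Nash welfare gain. The restriction $\alpha \leq 1/2$ is precisely what closes the arithmetic: with $\frac{\alpha}{\alpha+1} \leq 1/3$ and $\frac{1}{\alpha+1} \geq 2/3$, the product of ratios $v_{j_s}(\widehat{X}_{j_s})/v_{j_s}(X_{j_s})$ along the sequence exceeds $1$, yielding the desired contradiction.
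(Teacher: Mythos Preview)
Your proposal has a genuine gap at the Nash welfare step, and the paper's argument is structurally different from what you sketch.

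The improving-sequence contradiction from Lemma~\ref{lem:additive_bound} does not survive the passage to subadditive valuations in the way you describe. For intermediate agents $j_s$ in the sequence you need $v_{j_s}(\widehat{X}_{j_s}) \geq v_{j_s}(X_{j_s})$, where $\widehat{X}_{j_s} = (X_{j_s}\setminus Z_{j_s}) \cup Z_{j_{s-1}}$. Under additivity this follows from $v_{j_s}(Z_{j_{s-1}}) > v_{j_s}(Z_{j_s})$ together with the identity $v_{j_s}(\widehat{X}_{j_s}) = v_{j_s}(X_{j_s}) - v_{j_s}(Z_{j_s}) + v_{j_s}(Z_{j_{s-1}})$. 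Under subadditivity you only get $v_{j_s}(\widehat{X}_{j_s}) \geq v_{j_s}(Z_{j_{s-1}})$, and there is no reason $v_{j_s}(Z_{j_{s-1}})$ should dominate $v_{j_s}(X_{j_s})$; the gap $v_{j_s}(X_{j_s}) - v_{j_s}(Z_{j_s})$ that the additive argument exploits is simply not recoverable. Your proposed fix, that ``the removed items collectively retain substantial value'' for $j^\star$ and can be routed to $j_\ell$, only helps at the two endpoints; it does nothing for the potentially many intermediate ratios, each of which may be strictly below $1$. The arithmetic you cite for $\alpha\le 1/2$ does not close this: you have one factor $\ge \frac{\alpha+1}{\alpha}$ at $j_\ell$ and one $\ge \frac{\alpha}{\alpha+1}$ at $j^\star$, but an unbounded number of intermediate factors with no lower bound.

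The paper abandons the contradiction template entirely for subadditive valuations. Algorithm~\ref{alg:new_subadditive_alg} is not ``Algorithm~\ref{alg:alloc} plus a redistribution phase''; it is built around a \textsc{split} procedure with six cases that, whenever a bundle $Z_{j}$ is modified, \emph{simultaneously} shrinks $X_j$ (and possibly some $X_k$) so that the invariant $v_i(Z_i^t) \ge \frac{1}{\alpha+1}\,v_i(X_i^t)$ holds by construction at every step, with no appeal to optimality of $X$. Since $\prod_i v_i(X_i^t)$ can now decrease, the Nash welfare bound is obtained not by contradiction but by a potential-function argument: each shrink of some $X_i$ is accompanied by the creation of a new ``blue'' chain whose head agent satisfies $v_i(M_i) > \frac{1}{\alpha}\,v_i(Z_i)$, and one shows that $\bigl(\frac{\alpha+1}{\alpha}\bigr)^{\ell_t}\prod_i v_i(X_i^t)$ is nondecreasing in $t$. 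The restriction $\alpha\le 1/2$ enters in several places, e.g.\ bounding $v_i(R^t) \le 2\,v_i(M_i)$ in the EFX proof and the inequality $1 - \frac{\alpha}{\alpha+1} - \alpha \ge \frac{\alpha^2}{\alpha+1}$ in the potential analysis, not in the endpoint arithmetic you propose.
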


Algorithm~\ref{alg:new_subadditive_alg} receives 
{as input an arbitrary complete}
allocation $X = (X_1, \ldots, X_n)$, and
produces an allocation $M = (M_1, \ldots, M_n)$ that is $\alpha$-EFX and 
{satisfies $\nw(M) \geq \frac{1}{\alpha+1} \cdot \nw(X)$}.
{To prove Theorem~\ref{thm:subadditive-partial-better} we will eventually let $X$ be a MNW allocation. For the analysis of the algorithm, however, we treat $X$ as an arbitrary complete allocation because this then allows us to immediately prove Theorem~\ref{thm:subadditive-complete-poly} in Appendix~\ref{sec:comp}.}
For clarity, we present some of the operations that Algorithm~\ref{alg:new_subadditive_alg} performs as the procedure $\textsc{split}$ in Algorithm~\ref{alg:split_bundle}. The algorithm maintains a set of bundles $(Z_1, \ldots, Z_n)$ which initially satisfy $Z_i = X_i$. We say that the bundles $Z_j$ are {\em white}, the bundles $X_j \setminus Z_j$ are {\em red}, and any other bundle is {\em blue}.  See Figure~\ref{fig:subadditive_plit} for an additional description of the procedure $\textsc{split}$.

\begin{algorithm}[h!]
\caption{{Subadditive valuations.}}
\label{alg:new_subadditive_alg}
\begin{flushleft}
\hspace*{\algorithmicindent} \textbf{Input} $(X_1, \ldots, X_n)$ is {an arbitrary} complete allocation. \\
\hspace*{\algorithmicindent} \textbf{Output} $(M_1, \ldots, M_n)$ is a partial $\alpha$-EFX {allocation with $\nw(M) \geq \frac{1}{\alpha+1} \cdot \nw(X)$}.
\end{flushleft}
\vspace{-0.15in}
\begin{algorithmic}
[1]\State match $i$ to $J$ means $M_i \gets J$
\State unmatch $Z_i$ means $M_u \gets \bot$ if there is $u$ matched to $Z_i$
\State change $Z_i$ to $J$ means $Z_i \gets J$
\State shrink $X_i$ means $X_i \gets Z_i$
\Procedure{alg}{}
\State $Z \gets(X_1, \ldots, X_n)$
\State $M \gets(\bot, \ldots, \bot)$
\While{there is an agent $i$ with $M_i = \bot$}\label{line:while}
\State $i \gets $ any agent with $M_{i} = \bot$ \Comment{$i$ is an unmatched agent}\label{line:i_def}
\State $\mathcal{B} \gets \{ Z_{j}-g \bar j \in {[n]}, g \in Z_{j} \} \,\cup $\Comment{$\mathcal{B}$ is the set of available bundles}\label{line:b_def}
\State \;\;\;\;\;\;\; $\{ X_{j} \setminus Z_{j} \bar j \in {[n]} \} \,\cup $
\State \;\;\;\;\;\;\; $\{ M_{j}-g \bar j \in {[n]}, g \in M_{j} \}$
\State $J \gets$ any $J \in \mathcal{B}$ that maximizes $v_{i}(J)$ \Comment {$J$ is the favorite bundle of $i$}\label{line:j_def}
\If{$v_{i}(Z_{i}) \geq \alpha \cdot v_{i}(J')$ for all $J' \in \mathcal{B}$} \Comment{$Z_{i}$ is good enough for $i$ (Case 1)}\label{line:self_cond}
    \State unmatch $Z_i$
    \label{line:first_unmatch}
    \State match $i$ to $Z_i$\label{line:first_match}
\ElsIf{$J = Z_{j}-g$ for some $j$ and $g \in Z_{j}$} \Comment{$J$ is white (Case 2)}\label{line:if_white}
   \State run procedure \textsc{split} 
\ElsIf{$J = X_{j} \setminus Z_{j}$ for some $j$} \Comment{$J$ is red (Case 3)}
    \State shrink $X_j$\label{line:shrink1}
    \State match $i$ to $J$\label{line:m2}
\ElsIf{$J = M_{j}-g$ for some $j$ and $g \in M_{j}$} \Comment{$J$ is blue (Case 4)}
    \State unmatch $j$ (i.e., $M_j \gets \bot$)
    \label{line:third_unmatch}
    \State match $i$ to $J$\label{line:m3}
\EndIf
\EndWhile
\State \textbf{return} $(M_1, \ldots, M_n)$
\EndProcedure
\end{algorithmic}
\end{algorithm}

\begin{samepage}
\begin{algorithm}[p]
\caption{Case 2 of Algorithm~\ref{alg:new_subadditive_alg}.}
\label{alg:split_bundle}
\begin{algorithmic}
[1]\Procedure{split}{}
    \State unmatch $Z_j$
    \label{line:second_unmatch}
    \State match $i$ to $J$
    \label{line:match}
    \State $R \gets X_{j} \setminus J$ \Comment{$J \cup R = X_{j}$}\label{line:r_def}
    \State $\mathcal{K}_S \gets \{ k  \bar M_k = Z_k, \; v_k(Z_k) < \alpha \cdot v_k(S) \} $\Comment{{$\mathcal{K}_S$ is defined for any subset of $R$}}
    \If{$v_{j}(g) \geq (\alpha/(\alpha+1)) \cdot v_{j}(X_{j})$} \Comment{$\{g\}$ is good enough for $j$ (Case 2.1)}\label{line:g_cond}
        \State change $Z_j$ to $\{ g \}$\label{line:change1}
        \State shrink $X_j$\label{line:shrink2}
    \ElsIf{$\mathcal{K}_S = \emptyset$ for all $S \subseteq R$}\label{line:if_s}
        \If{$v_{j}(R) < (\alpha/(\alpha+1)) \cdot v_{j}(X_{j})$} \Comment{$J$ is good enough for $j$ (Case 2.2)}\label{line:l_perp_cond}
            \State change $Z_{j}$ to $J$\label{line:mod_z}
        \Else \Comment{$R$ is good enough for $j$ (Case 2.3)}
            \State change $Z_j$ to $R$\label{line:change2}
            \State shrink $X_j$\label{line:shrink3}
        \EndIf
    \Else
        \State $S \gets$ any $S \subseteq R$ of minimal size with $\mathcal{K}_S \neq \emptyset$\Comment{$S \subseteq R$}
        \State $k \gets $ any agent from $\mathcal{K}_{S}$
        \If{$v_{j}(J) \geq \alpha \cdot v_{j}(X_{j})$} \Comment{$J$ is good enough for $j$ (Case 2.4)}\label{line:l_cond}
            \State change $Z_j$ to $J$\label{line:change3}
            \State shrink $X_k$ and $X_j$\label{line:shrink4}
            \State match $k$ to $S$\label{line:k1}
        \ElsIf{$v_{j}(S) \geq (\alpha/(\alpha+1)) \cdot v_{j}(X_{j})$} \Comment{$S$ is good enough for $j$ (Case 2.5)}\label{line:s_cond}
            \State change $Z_{j}$ to $S$\label{line:change4}
            \State shrink $X_j$\label{line:shrink5}
        \Else \Comment{$R \setminus S$ is good enough for $j$ (Case 2.6)}
            \State change $Z_j$ to $R\setminus S$\label{line:change5}
            \State shrink $X_k$ and $X_j$\label{line:shrink6}
            \State match $k$ to $S$\label{line:k2}
        \EndIf
    \EndIf
\EndProcedure
\end{algorithmic}
\end{algorithm}

\begin{figure}[p]
    \begin{center}
    \begin{tikzpicture}[very thick,yscale=0.65,xscale=0.88]
\node at (3.5,1) {\Large$\Rightarrow$};
\node at (3.5,-2) {\Large$\Rightarrow$};
%%%%%%%%%%%%%%%%%%%%%%%%%%%%%%%%%%%%
\begin{scope}
    \begin{scope}
        \clip (0,1) rectangle (3,2);
        \fill[red!20!white,rounded corners=.2cm]
            (0,0) rectangle (3,2);
    \end{scope}
    \draw[rounded corners=.2cm] (0,0) rectangle (3,2);
    \draw (0,1) -- (3,1);
    \node[draw,circle] (i) at (-1,0.5) {};
    \node at (-1.4,0.5) {$i^t$};
    \draw[->,>=stealth] (i) -- (0,.5);
    \node at (1.5,0.5) {$Z_{j^t}^{t-1}$};
    \node at (1.5,1.5) {$X_{j^t}^{t-1}\setminus Z_{j^t}^{t-1}$};
    \node at (2.6,0.5) {\small$g^t$};
\end{scope}
%%%%%%%%%%%%%%%%%%%%%%%%%%%%%%%%%%%%
\begin{scope}[yshift=-3cm]
    \draw[rounded corners=.2cm]
    (0,0) rectangle (2.2,1);
    \draw[rounded corners=.2cm]
    (0,1) -- (2.2,1) -- (2.2,0) -- (3,0) -- (3,2) -- (0,2) -- cycle;
    \draw[rounded corners=.2cm]
    (2.3,0.1) rectangle (2.9,1.9);
    \node[draw,circle] (i) at (-1,0.5) {};
    \node[draw,circle] (k) at (-1,-.2) {};
    \node at (-1.4,0.5) {$i^t$};
    \node at (-1.4,-.2) {$k^t$};
    \draw[->,>=stealth] (i) -- (0,.5);
    \draw[->,>=stealth, rounded corners=.2cm]
    (k) -- (2.3,-.2) -- (2.4,0.2);
    \node at (1.25,0.5) {$J^t$};
    \node at (2.6,1) {$S^t$};
    \node at (1.25,1.6) {$R^t$};
\end{scope}
%%%%%%%%%%%%%%%%%%%%%%%%%%%%%%%%%%%%
\begin{scope}[xshift=4cm]
    \begin{scope}
        \clip (0,1) rectangle (3,2);
        \fill[pattern=north east lines,rounded corners=.2cm]
        (0,1) -- (2.2,1) -- (2.2,0) -- (3,0) -- (3,2) -- (0,2) -- cycle;
    \end{scope}
    \draw (2,1) -- (3,1);
    \draw[fill=blue!20!white,rounded corners=.2cm] 
    (0,0) rectangle (2.2,1);
    \draw[rounded corners=.2cm]
    (0,1) -- (2.2,1) -- (2.2,0) -- (3,0) -- (3,2) -- (0,2) -- cycle;
    \node at (1.1,0.5) {\small$M_{i^t}^t$};
    \node at (2.6,0.5) {\small$g^t$};
    \node at (1.5,-.4) {\underline{Case 2.1}};
\end{scope}
%%%%%%%%%%%%%%%%%%%%%%%%%%%%%%%%%%%%
\begin{scope}[xshift=7.5cm]
    \fill[red!20!white,rounded corners=.2cm]
    (0,0) rectangle (3,2);
    \fill[white] (0,0) rectangle (2.2,1);
    \draw[rounded corners=.2cm]
    (0,0) rectangle (3,2);
    \draw (0,1) -- (2.2,1) -- (2.2,0);
    \node at (1.1,0.5) {\small$M_{i^t}^t$};
    \node at (2.6,0.5) {\small$g^t$};
    \node at (1.5,-.4) {\underline{Case 2.2}};
\end{scope}
%%%%%%%%%%%%%%%%%%%%%%%%%%%%%%%%%%%%
\begin{scope}[xshift=11cm]
    \draw[fill=blue!20!white,rounded corners=.2cm]
    (0,0) rectangle (2.2,1);
    \draw[rounded corners=.2cm]
    (0,1) -- (2.2,1) -- (2.2,0) -- (3,0) -- (3,2) -- (0,2) -- cycle;
    \node at (1.1,0.5) {\small$M_{i^t}^t$};
    \node at (2.6,0.5) {\small$g^t$};
    \node at (1.5,-.4) {\underline{Case 2.3}};
\end{scope}
%%%%%%%%%%%%%%%%%%%%%%%%%%%%%%%%%%%%
\begin{scope}[yshift=-3cm,xshift=4cm]
    \draw[pattern=north east lines,opacity=1,rounded corners=.2cm]
    (0,1) -- (2.2,1) -- (2.2,0) -- (3,0) -- (3,2) -- (0,2) -- cycle;
    \draw[fill=blue!20!white,rounded corners=.2cm]
    (2.3,0.1) rectangle (2.9,1.9);
    \draw[rounded corners=.2cm] (0,0) rectangle (2.2,1);
    \node at (1.1,0.5) {\small$M_{i^t}^t$};
    \node at (2.6,1) {\footnotesize$M_{k^t}^t$};
    \node at (1.5,-.4) {\underline{Case 2.4}};
\end{scope}
%%%%%%%%%%%%%%%%%%%%%%%%%%%%%%%%%%%%
\begin{scope}[yshift=-3cm,xshift=7.5cm]
    \draw[pattern=north east lines,opacity=1,rounded corners=.2cm]
    (0,1) -- (2.2,1) -- (2.2,0) -- (3,0) -- (3,2) -- (0,2) -- cycle;
    \draw[fill=blue!20!white,rounded corners=.2cm]
    (0,0) rectangle (2.2,1);
    \draw[fill=white,rounded corners=.2cm]
    (2.3,0.1) rectangle (2.9,1.9);
    \node at (1.1,0.5) {\small$M_{i^t}^t$};
    \node at (1.5,-.4) {\underline{Case 2.5}};
\end{scope}
%%%%%%%%%%%%%%%%%%%%%%%%%%%%%%%%%%%%
\begin{scope}[yshift=-3cm,xshift=11cm]
    \draw[fill=blue!20!white,rounded corners=.2cm]
    (0,0) rectangle (2.2,1);
    \draw[rounded corners=.2cm]
    (0,1) -- (2.2,1) -- (2.2,0) -- (3,0) -- (3,2) -- (0,2) -- cycle;
    \draw[fill=blue!20!white,rounded corners=.2cm]
    (2.3,0.1) rectangle (2.9,1.9);
    \node at (1.1,0.5) {\small$M_{i^t}^t$};
    \node at (2.6,1) {\footnotesize$M_{k^t}^t$};
    \node at (1.5,-.4) {\underline{Case 2.6}};
\end{scope}

\end{tikzpicture}
    \end{center}
    \vspace{-0.5cm}
    \caption{Split operation, as defined in Algorithm~\ref{alg:split_bundle}.
    The new allocation is described by the matching $M^t$.
    Items that are thrown away (not in any white, red, or blue bundle) are hatched.
    Regarding the new structure of $X_{j^t}$, in each case the bundle $Z_{j^t}^{t}$ is colored in white and the bundle $X_{j^t}^{t}\setminus Z_{j^t}^{t}$ is colored in red.}
    \label{fig:subadditive_plit}
    \vspace{-0.5cm}
\end{figure}
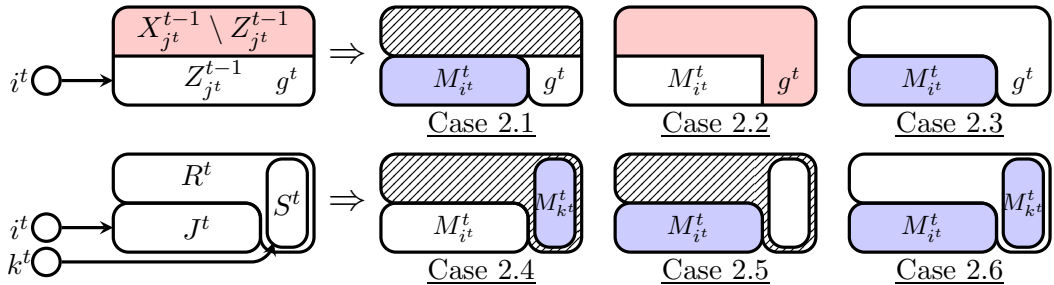
 
\end{samepage}

The high-level idea of the algorithm is as follows: initially, 
%{it holds that} $Z_i = X_i$, and 
no agent is matched. 
The while loop inside the algorithm continues as long as there is an unmatched agent $i$. 
During each iteration, we define the set of available bundles to include {(1)} all the strict subsets of white bundles, {(2)} all the red bundles, and {(3)} all the strict subsets of the matched blue bundles\footnote{Since the valuations are monotone, it suffices to include the subsets obtained after removing a single element.}. 
Then, if $Z_{i}$ is ``good enough'' for $i$ (i.e., if $i$ gets $Z_{i}$, then she does not envy any other available bundle in the $\alpha$-EFX sense), the algorithm matches $Z_{i}$ to $i$. Otherwise, the algorithm picks the most valuable (from the perspective of $i$) available bundle $J$ and depending on whether $J$ is white, red, or blue, the algorithm modifies the bundles and the matching in an appropriate way to eliminate the envy of agent $i$.

We introduce the following timestamps to be able to talk about the state of the algorithm during different phases.

\begin{definition}[Timestamps]
    Let $T$ denote the total number of iterations in a single run of the algorithm. Let $Z_i^t$, $X_i^t$, and $M_i^t$ denote the values of the global variables $Z_i$, $X_i$, and $M_i$ after exactly $t$ iterations for $0 \leq t \leq T$.  Let $i^t$, $j^t$, $u^t$, $k^t$, $g^t$, $R^t$, $S^t$, $J^t$, $\mathcal{B}^t$, $\mathcal{K}_S^t$ denote the values corresponding to the local variables defined during the $t$-th iteration for $1 \leq t \leq T$.   For simplicity of notation, also define $X_i^{T+1} = X_i^T$, $Z_i^{T+1} = Z_i^T$, $M_i^{T+1} = M_i^T$, and $\mathcal{B}^{T+1} = \{ Z_{j}^{T} - g \bar j \in \agents, g \in Z_j^{T}\} \cup \{X_j^{T} \setminus Z_j^{T} \bar j \in \agents\} \cup \{ M_j^{T} -g \bar j \in \agents, g  \in M_j^{T} \}.$
\end{definition}

In particular, $Z_i^0$, $X_i^0$, and $M_i^0$ are the initial values of the global variables, and $Z_i^T$, $X_i^T$, and $M_i^T$ are the final values. {Note that some of the local variables might not be defined in every case, but in the algorithm analysis, we will refer to them only in the cases where they are well-defined.} To see the relation between the timestamps for the global and local variables, consider the following example: line \ref{line:i_def} of Algorithm~\ref{alg:new_subadditive_alg} implies that $M_{i^t}^{t-1} = \bot$.

Algorithm~\ref{alg:new_subadditive_alg}, {like} Algorithm~\ref{alg:alloc}, sometimes shrinks $Z_i$ by removing some of its elements (lines \ref{line:change1} and \ref{line:change3} of Algorithm~\ref{alg:split_bundle}).
However, Algorithm~\ref{alg:new_subadditive_alg} might also modify the bundles $Z$ and $X$ in additional ways.
Specifically, Algorithm~\ref{alg:new_subadditive_alg} might change $Z_i$ to a subset of the red bundle $X_i \setminus Z_i$ (lines \ref{line:change2}, \ref{line:change4}, and \ref{line:change5} of Algorithm~\ref{alg:split_bundle}) or it might shrink $X_i$ by setting it to $Z_i$ (line \ref{line:shrink1} of Algorithm~\ref{alg:new_subadditive_alg} and lines \ref{line:shrink2}, \ref{line:shrink3}, \ref{line:shrink4}, \ref{line:shrink5}, \ref{line:shrink6} of Algorithm~\ref{alg:split_bundle}). 
Nonetheless, Algorithm~\ref{alg:new_subadditive_alg} preserves the same invariant as Algorithm~\ref{alg:alloc} so that $v_i(Z_i) \geq (1/(\alpha+1)) \cdot v_i(X_i)$. 
This and some other basic properties of the algorithm are captured in the following technical claim. The proof is deferred to Appendix~\ref{sec:proofs-subadditive}.

\begin{restatable}
{claim}{clabasic}
After $t$ iterations of Algorithm~\ref{alg:new_subadditive_alg}, the following holds
\begin{enumerate}[(i)]
        \item $v_i(X_i^t \setminus Z_i^t) \leq (\alpha/(\alpha+1)) \cdot v_i(X_i^t)$ for each agent $i$,
        \item $v_i(Z_i^t) \geq (1/(\alpha+1)) \cdot v_i(X_i^t)$ for each agent $i$,
        \item $i^t$, $j^t$, and $k^t$ are distinct, and
        \item $J^t$ and $S^t$ are non-empty.
    \end{enumerate}
\label{cla:basic}
\end{restatable}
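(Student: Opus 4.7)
The proof will proceed by induction on the iteration index $t$. The base case $t=0$ is immediate: initially $Z_i^0 = X_i^0$, so $X_i^0 \setminus Z_i^0 = \emptyset$ and (i) and (ii) hold at zero and one respectively, while (iii) and (iv) are vacuous since no local variables have been defined at that point. For the inductive step I will assume (i)--(iv) hold through iteration $t-1$ and verify them at iteration $t$ by a case analysis over which branch of Algorithm~\ref{alg:new_subadditive_alg}/\ref{alg:split_bundle} is executed.

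For (i) and (ii), the key observation is that $X_j$ or $Z_j$ can only be modified in Case 3 of Algorithm~\ref{alg:new_subadditive_alg} or inside the split procedure of Case 2; Cases 1 and 4 leave all bundles untouched, so the invariants pass through by the inductive hypothesis. In Case 3 and in Cases 2.1, 2.3, 2.4, 2.5, 2.6, every updated bundle $X_j$ (and $X_k$, when applicable) is simultaneously shrunk so that $X_j^t = Z_j^t$, and both invariants hold trivially. The only non-routine case is Case 2.2, where $Z_j$ is changed to $J^t$ while $X_j$ is left intact; here $X_j^t \setminus Z_j^t = R^t$, the branch condition $v_j(R^t) < (\alpha/(\alpha+1))\,v_j(X_j^t)$ is precisely (i), and subadditivity yields $v_j(J^t) \geq v_j(X_j^t) - v_j(R^t) > (1/(\alpha+1))\,v_j(X_j^t)$, which is (ii).

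For (iii), note that $i^t$ is unmatched whereas $k^t \in \mathcal{K}_{S^t}$ satisfies $M_{k^t} = Z_{k^t} \neq \bot$, so $k^t \neq i^t$. To show $j^t \neq i^t$: in Case 4, $j^t$ must be matched (since $M_{j^t} \neq \bot$), so trivially $j^t \neq i^t$; in Case 3, if $j^t = i^t$ then combining the inductive hypotheses (i) and (ii) at time $t-1$ yields $v_{i^t}(J^t) = v_{i^t}(X_{i^t} \setminus Z_{i^t}) \leq \alpha\,v_{i^t}(Z_{i^t})$, and since $J^t$ is the $v_{i^t}$-maximizer over $\mathcal{B}^t$, the Case 1 condition would have triggered, a contradiction; in Case 2, if $j^t = i^t$ then by monotonicity $v_{i^t}(Z_{i^t}) \geq v_{i^t}(Z_{i^t} - g^t) = v_{i^t}(J^t) \geq \alpha\,v_{i^t}(J^t)$ for $\alpha \leq 1$, again forcing Case 1. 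Finally, $k^t \neq j^t$ because line~\ref{line:second_unmatch} of the split procedure unmatches $Z_{j^t}$ before $\mathcal{K}_{S^t}$ is computed, so at that moment $M_{j^t} \neq Z_{j^t}$ and hence $j^t \notin \mathcal{K}_{S^t}$.

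For (iv), if $J^t \in \mathcal{B}^t$ had $v_{i^t}(J^t) = 0$ (in particular if $J^t = \emptyset$), then $v_{i^t}(Z_{i^t}) \geq 0 = \alpha\,v_{i^t}(J')$ for every $J' \in \mathcal{B}^t$ and the algorithm would enter Case 1; since we only consider $J^t$ in Cases 2--4, we must have $v_{i^t}(J^t) > 0$ and hence $J^t \neq \emptyset$. For $S^t$, observe that $\mathcal{K}_\emptyset = \emptyset$ because $v_k(\emptyset) = 0 \leq v_k(Z_k^t)$ for every agent $k$, so any $S \subseteq R^t$ with $\mathcal{K}_S \neq \emptyset$ must be non-empty. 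The main obstacle is the bookkeeping inside the split procedure: one has to pair each modification of $Z_{j^t}$ with the corresponding shrink of $X_{j^t}$ (and of $X_{k^t}$ in Cases 2.4 and 2.6) to ensure the invariants are restored, and isolate the single branch (Case 2.2) where $X_j$ is left untouched and subadditivity must be invoked.
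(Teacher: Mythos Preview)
Your proof is correct and follows the paper's overall structure: induction on $t$, with parts (i) and (ii) reducing to the single non-shrunk case (Case 2.2), and parts (iii) and (iv) handled by showing the relevant Case 1 / matching conditions would have triggered otherwise. The one substantive difference is your argument for $k^t \neq j^t$. The paper shows that $j^t$ fails the \emph{value} condition in $\mathcal{K}_{S^t}$: using (i), (ii), and the negation of Case 2.1 it bounds $v_{j^t}(S^t) \leq v_{j^t}((X_{j^t}^{t-1}\setminus Z_{j^t}^{t-1}) + g^t)$ and concludes $v_{j^t}(Z_{j^t}^{t-1}) \geq \alpha\,v_{j^t}(S^t)$. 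You instead show that $j^t$ fails the \emph{matching} condition: after line~\ref{line:second_unmatch} (and line~\ref{line:match}, which affects only $i^t \neq j^t$), necessarily $M_{j^t} \neq Z_{j^t}$, so $j^t \notin \mathcal{K}_S$ for any $S$. Your route is purely structural and avoids the quantitative bounds; the paper's route is independent of the precise moment at which $\mathcal{K}_S$ is evaluated. Both are valid, and yours is arguably cleaner here.
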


We refer to $M^t$ as a matching between agents and bundles of items. 
We say that after $t$ iterations, an agent $i$ is {\em matched} to $M_i^t$ if $M_i^t \neq \bot$, and that $i$ is {\em unmatched} otherwise. We {also} define the matching time as follows.

\begin{definition}[Matching time] Let $i$ be any matched agent. We define the $\emph{matching time}$ (before $t$) of $i$ to be the largest $1 \leq r \leq t$ so that $M_i^{r} \neq M_i^{r-1}$, i.e, $i$ was matched during the $r$-th iteration.
\end{definition}
Note that the matching time is well-defined because $M_i^0 = \bot$ and $M_i^{T} \neq \bot$.

 {Let us first make a few simple observations that are crucial in the analysis of the algorithm. First,} throughout the execution of the algorithm, every matched agent is either matched to a white bundle $Z_j^t$ or $i$ is matched to a blue bundle $M_i^t$, which is disjoint from all white and red bundles. 
 {Second,} the matching is constructed so that no item is allocated to two agents. 
 {Third,} between any two iterations during which agent $i$ is matched, the bundle $M_i^t$ is not changed in any way. These properties are formalized in the following claim, {whose} proof is deferred to Appendix~\ref{sec:proofs-subadditive}.

\vspace{0.1in}

\begin{restatable}
{claim}{clazconst}
After $t$ iterations of the algorithm, the following holds
\begin{enumerate}[(i)]
    \item all the white bundles $M_i^t$ are disjoint, and
    \item all the blue bundles $M_i^t$ and the bundles $X_1^t, \ldots, X_n^t$ are disjoint.
\end{enumerate}
Moreover, for any matched agent $i$ with $r$ defined as the matching time of $i$ (before $t$),
\begin{enumerate}[(i)]
    \setcounter{enumi}{2}
    \item $M_i^r = M_i^{r+1} = \ldots = M_i^t$, and
    \item if $M_i^r = Z_a^r$ for some agent $a$, then $Z_a^r = Z_a^{r+1} = \ldots = Z_a^t$.
\end{enumerate}
\label{cla:z_const}\end{restatable}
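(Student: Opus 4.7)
My plan is to prove all four parts simultaneously by induction on $t$. The base case $t=0$ is trivial since $M_i^0=\bot$ for every agent. For the inductive step, I will first establish the auxiliary invariant $Z_i^t \subseteq X_i^t \subseteq X_i^0$ for all $i$ and $t$, which follows by direct inspection of the algorithm: the operation ``shrink $X_i$'' only shrinks $X_i$, and every modification of $Z_{j^t}$ (lines~\ref{line:change1}, \ref{line:mod_z}, \ref{line:change3}, \ref{line:change2}, \ref{line:change4}, \ref{line:change5}) replaces $Z_{j^t}$ with a subset of $X_{j^t}^{t-1}$. Since $X_1^0, \ldots, X_n^0$ partition the items, both the $Z_i^t$'s and the $X_i^t$'s remain pairwise disjoint throughout.

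Part (i) then follows immediately: the ``unmatch $Z_i$'' operations (lines~\ref{line:first_unmatch}, \ref{line:second_unmatch}) ensure that each $Z_a^t$ is matched to at most one agent, so the white bundles $M_i^t$ correspond to distinct $Z_a^t$, which are pairwise disjoint by the invariant. Part (iii) is essentially just a restatement of the definition of matching time: since $r$ is the largest $s \le t$ with $M_i^s \ne M_i^{s-1}$, we have $M_i^s = M_i^{s-1}$ for every $r < s \le t$, which yields the chain $M_i^r = M_i^{r+1} = \cdots = M_i^t$.

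For part (ii), I will perform a case-by-case analysis on iteration $t$. Pre-existing blue bundles $M_a^{t-1}$ untouched by iteration $t$ remain disjoint from each $X_b^t \subseteq X_b^{t-1}$ and from each other by the inductive hypothesis of (ii). For each newly created blue bundle, I will verify it is either a subset of $X_{j^t}^{t-1} \setminus X_{j^t}^t$---namely, it consists of items just removed from $X_{j^t}$ during the iteration (this covers the blue bundles produced in Cases~2.1, 2.3, 2.4, 2.5, 2.6, and 3)---or a subset of the previous blue bundle $M_{j^t}^{t-1}$ (Case~4, where $M_{j^t}^{t-1}$ is itself discarded as $j^t$ becomes unmatched). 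In either situation, the new blue bundle is disjoint from every current $X_b^t$, and pairwise disjointness among blue bundles is preserved because the fresh items come from a region (namely $X_{j^t}^{t-1} \setminus X_{j^t}^t$) that, by the inductive invariant of (ii), cannot already intersect any existing blue bundle.

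Finally, part (iv) will be proved by contradiction. Suppose some iteration $s$ with $r < s \le t$ modifies $Z_a$; then $s$ must be a Case~2 iteration with $j^s = a$, which begins by unmatching $Z_a$ at line~\ref{line:second_unmatch}. By (iii) and the inductive hypothesis of (iv) up to time $s-1$, we have $M_i^{s-1} = M_i^r = Z_a^r = Z_a^{s-1}$, so this step would set $M_i^s \gets \bot$, contradicting the fact that $M_i^t \ne \bot$ and $M_i^s = \cdots = M_i^t$ by (iii). The main obstacle I anticipate is the bookkeeping for part (ii): there are six subcases inside Case~2 (plus Cases~1, 3, and 4), and in each one I must carefully track which of $M_{i^t}^t$, $M_{k^t}^t$, and $X_{j^t}^t$ change and check their disjointness against every other blue bundle and $X$-bundle.
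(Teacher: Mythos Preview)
Your overall strategy matches the paper's: parts (i) and (ii) are routine bookkeeping (the paper simply says ``by the design of the algorithm''), part (iii) is immediate from the definition of matching time, and part (iv) is where the real content lies. Your outline for (i)--(iii) is fine.

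There is, however, a genuine gap in your argument for (iv). You write that the unmatch operation at line~\ref{line:second_unmatch} ``would set $M_i^s \gets \bot$, contradicting $M_i^t \neq \bot$ and $M_i^s = \cdots = M_i^t$''. But $M_i^s$ denotes the value of $M_i$ at the \emph{end} of iteration $s$, and after the unmatch step there are still lines in the same iteration that could re-match $i$: line~\ref{line:match} matches $i^s$ to $J^s$, and lines~\ref{line:k1},~\ref{line:k2} match $k^s$ to $S^s$. If $i$ were re-matched to some bundle, you would no longer have $M_i^s = \bot$, and the contradiction would not go through. The paper's proof closes this gap by explicitly ruling out both possibilities: $i \neq i^s$ because $M_{i^s}^{s-1} = \bot$ whereas $M_i^{s-1} = Z_a^{s-1} \neq \bot$; and $i \neq k^s$ because membership in $\mathcal{K}_{S^s}^s$ forces $M_{k^s}^{s-1} = Z_{k^s}^{s-1}$, so $i = k^s$ would give $Z_i^{s-1} = Z_a^{s-1}$, hence (as $Z_a^{s-1} \ni g^s$ is nonempty and the $X_b^0$ are disjoint) $i = a = j^s$, contradicting $j^s \neq k^s$ from Claim~\ref{cla:basic}(iii). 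You need to insert this re-matching analysis to make the argument complete.
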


{Furthermore, observe that} the matching constructed by the algorithm induces a certain structure of chains and cycles of agents which are described in the following claim. See Figure~\ref{fig:subadditive_structure} for a pictorial representation of this structure.
The proof of the claim is deferred to Appendix~\ref{sec:proofs-subadditive}.

\begin{restatable}
{claim}{clachains}
    After $t$ iterations of the algorithm, all the agents are divided into disjoint chains and cycles where each chain $i_1, \ldots, i_\ell$ is of one of the following forms
    \begin{enumerate}[(i)]
        \item $Z_{i_1}^t = X_{i_1}^t$, $M_{i_j}^t = Z_{i_{j+1}}^t$ for $1 \leq j < \ell$, and $M_{i_\ell}^t = \bot$, or
        \item $Z_{i_1}^t = X_{i_1}^t$, $M_{i_j}^t = Z_{i_{j+1}}^t$ for $1 \leq j < \ell$, and $M_{i_\ell}^t$ is blue,
    \end{enumerate}
    and each cycle $i_1, \ldots, i_\ell$ is of one of the following forms
    \begin{enumerate}[(i)]
        \setcounter{enumi}{2}
        \item $\ell = 1$, and $M_{i_1}^t = Z_{i_1}^t$, or
        \item $\ell > 1$, $M_{i_j}^t = Z_{i_{j+1}}^t$ for $1 \leq j < \ell$, and $M_{i_\ell}^t = Z_{i_1}^t$.
    \end{enumerate}
\label{cla:chains}
\end{restatable}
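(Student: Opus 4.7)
The plan is to prove the claim by induction on the iteration count $t$. The base case $t=0$ is immediate: for every agent $i$ we have $M_i^0 = \bot$ and $Z_i^0 = X_i^0$, so every agent forms a singleton chain of form (i). For the inductive step, the key observation is that in the inductive structure the only unmatched agents are the tails of chains of form (i). Consequently, the agent $i = i^t$ picked in line~\ref{line:i_def} must be the tail of some chain $i_1, \ldots, i_\ell = i$ of form (i), and I would proceed by a case analysis on which branch of Algorithm~\ref{alg:new_subadditive_alg} (and, for Case 2, of the split procedure in Algorithm~\ref{alg:split_bundle}) is executed in iteration $t$.

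The easy cases are Cases 1, 3, and 4. In Case 1 (self-match), Claim~\ref{cla:z_const}(i) implies that the only agent possibly matched to $Z_i^{t-1}$ is the predecessor $i_{\ell-1}$, which exists precisely when $\ell \geq 2$; after the unmatch-then-match, the chain splits into $i_1, \ldots, i_{\ell-1}$ of form (i) and a singleton cycle $\{i\}$ of form (iii), with only the cycle remaining if $\ell = 1$. In Case 3 (red), shrinking $X_{j^t}$ converts the selected red bundle into a blue one assigned to $i$, so $i$'s chain becomes form (ii) while the chain or cycle containing $j^t$ is unchanged. In Case 4 (blue), unmatching $j^t$ transforms its form-(ii) chain into a form-(i) chain, while $i$'s chain becomes form (ii) with the new blue bundle $M_{j^t}^{t-1} - g$.

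The main obstacle is Case 2 (the split procedure), which has six sub-cases and can additionally involve an auxiliary agent $k \in \mathcal{K}_S$. For each sub-case I would trace the bundle and matching updates and verify three invariants: (a) every newly formed chain head satisfies $Z_\cdot = X_\cdot$, which is enforced by the ``shrink $X_j$'' and ``shrink $X_k$'' calls whenever $Z_j$ is redefined or $k$ stops being matched to $Z_k$; (b) bundles matched to distinct agents remain pairwise disjoint, which follows from Claim~\ref{cla:z_const}; and (c) the local rearrangement of the chains and cycles previously containing $i^t$, $j^t$, and (in sub-cases 2.4 and 2.6) $k$ produces only structures of the four permitted forms. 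In particular, in sub-cases 2.4 and 2.6 the auxiliary agent $k$ is, by the definition of $\mathcal{K}_S$, a singleton cycle of form (iii) that becomes a singleton chain of form (ii) with blue bundle $S$; in every sub-case $i$'s chain becomes form (ii); and whenever $j^t$ was not the head of its own chain or cycle, the unique agent previously matched to $Z_{j^t}^{t-1}$ is unmatched in line~\ref{line:second_unmatch} of the split procedure and becomes the new tail of a shorter chain of form (i).
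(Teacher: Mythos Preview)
Your inductive approach is valid in principle, but the Case~2 analysis contains a concrete error. You assert that ``in every sub-case $i$'s chain becomes form (ii)'', i.e.\ that $M_{i^t}^t$ is blue after \textsc{split}. This is false in sub-cases 2.2 and 2.4: in both, the line ``change $Z_j$ to $J$'' sets $Z_{j^t}^t = J^t$, so $M_{i^t}^t = J^t = Z_{j^t}^t$ is a \emph{white} bundle. Hence $i^t$'s chain does not terminate in a blue bundle; it concatenates with whatever structure previously contained $j^t$. For example, in sub-case 2.2 with $j^t$ the head of a separate chain $c_1,\ldots,c_m$, the result is the single chain $i_1,\ldots,i_\ell,c_1,\ldots,c_m$; if instead $j^t = i_q$ lies in $i$'s own chain, a cycle of form (iv) on $i_q,\ldots,i_\ell$ is created. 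The claim still holds in every such situation, but not for the reason you give, and tracing these merges correctly (including the possibility that $j^t$ was in a cycle of form (iv)) is exactly where the work lies. Your treatment of sub-cases 2.1, 2.3, 2.5, 2.6, and of the auxiliary agent $k$ in 2.4/2.6, is correct.

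The paper sidesteps this case analysis by a different argument. It builds a directed graph on bundles (white $Z_a^t$'s together with blue $M_a^t$'s) with an edge from $Z_a^t$ to $M_a^t$ whenever $M_a^t \neq \bot$, observes via Claim~\ref{cla:z_const} that every vertex has in-degree at most $1$ and every blue vertex has out-degree $0$, and then reduces the entire claim to the single invariant that every white bundle with $Z_a^t \subsetneq X_a^t$ has in-degree exactly $1$. That invariant is proved by looking at the last iteration that modified $Z_a$: only sub-case 2.2 changes $Z_a$ without shrinking $X_a$, and there $i^r$ is matched to the new $Z_a$. Your corrected induction would also work, but the paper's route is shorter and isolates the one nontrivial fact rather than re-verifying the full structure in every branch.
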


\begin{figure}[t]
    \begin{center}
\def\yscale{0.5}
\begin{minipage}[b]{.32\textwidth}
\centering
%%%%%%%%%%%%%%%%%%%%%%%%%%%%%%%%%%%%%%%%%%%%%%%%%%%%%%%%%%%%%%%%%%%
\begin{tikzpicture}[very thick,yscale=\yscale]
\foreach \i in {1,...,3}
{
    \begin{scope}
        \clip (1,2*\i) rectangle (3,2*\i+0.9);
        \node[fill, red!20!white, rounded corners=.2cm, minimum width=2cm, minimum height=1.8*\yscale cm] at (2,2*\i){};
    \end{scope}
    \draw (1,2*\i) -- (3,2*\i);
}
\foreach \i in {1,...,4}
{
    \node[draw,circle] (j\i) at (0,2*\i) {};
    \node[draw, rounded corners=.2cm, minimum width=2cm, minimum height=1.8*\yscale cm] (X\i) at (2,2*\i){};
}
{\footnotesize
    \draw (j4) edge node[pos=0.2,above] {
    } (X3.190);
    \draw (j3) edge node[pos=0,below] {
    } (X2.190);
    \draw (j2) edge node[pos=0,below] {
    } (X1.190);
    \node at (2,8) {$Z^t_{i_1} = X^t_{i_1}$};
    \node at (2,6.45) {$X_{i_2}^t\setminus Z^t_{i_2}$};
    \node at (2,2.45) {$X_{i_\ell}^t\setminus Z^t_{i_\ell}$};
    \node at (2,5.55) {$Z^t_{i_2}$};
    \node at (2,1.55) {$Z^t_{i_\ell}$};
}
\node[anchor=east] at (-.2,8) {$i_1$};
\node[anchor=east] at (-.2,6) {$i_2$};
\node[anchor=east] at (-.3,4.2) {$\vdots$};
\node[anchor=east] at (-.2,2) {$i_\ell$};
\end{tikzpicture}
\par\vspace{.9cm}\underline{Chain of type (i)}
\end{minipage}
\begin{minipage}[b]{.32\textwidth}
\centering
%%%%%%%%%%%%%%%%%%%%%%%%%%%%%%%%%%%%%%%%%%%%%%%%%%%%%%%%%%%%%%%%%%%
\begin{tikzpicture}[very thick,yscale=\yscale]
\foreach \i in {1,...,3}
{
    \begin{scope}
        \clip (1,2*\i) rectangle (3,2*\i+0.9);
        \node[fill, red!20!white, rounded corners=.2cm, minimum width=2cm, minimum height=1.8*\yscale cm] at (2,2*\i){};
    \end{scope}
    \draw (1,2*\i) -- (3,2*\i);
}
\foreach \i in {1,...,4}
{
    \node[draw,circle] (j\i) at (0,2*\i) {};
    \node[draw, rounded corners=.2cm, minimum width=2cm, minimum height=1.8*\yscale cm] (X\i) at (2,2*\i){};
}
\node[draw, fill=blue!20!white, rounded corners=.2cm, minimum width=1cm, minimum height=1*\yscale cm] (X0) at (1.5,0) {};
{\footnotesize
    \draw (j4) edge node[pos=0.2,above] {
    } (X3.190);
    \draw (j3) edge node[pos=0,below] {
    } (X2.190);
    \draw (j2) edge node[pos=0,below] {
    } (X1.190);
    \draw (j1) edge node[pos=0,below] {
    }  (X0.180);
    \node at (2,8) {$Z^t_{i_1} = X^t_{i_1}$};
    \node at (2,6.45) {$X_{i_2}^t\setminus Z^t_{i_2}$};
    \node at (2,2.45) {$X_{i_\ell}^t\setminus Z^t_{i_\ell}$};
    \node at (2,5.55) {$Z^t_{i_2}$};
    \node at (2,1.55) {$Z^t_{i_\ell}$};
}
\node[anchor=east] at (-.2,8) {$i_1$};
\node[anchor=east] at (-.2,6) {$i_2$};
\node[anchor=east] at (-.3,4.2) {$\vdots$};
\node[anchor=east] at (-.2,2) {$i_\ell$};
\end{tikzpicture}
\smallbreak\underline{Chain of type (ii)}
\end{minipage}
\begin{minipage}[b]{.32\textwidth}
\centering
%%%%%%%%%%%%%%%%%%%%%%%%%%%%%%%%%%%%%%%%%%%%%%%%%%%%%%%%%%%%%%%%%%%
\begin{tikzpicture}[very thick,yscale=\yscale]
\foreach \i in {1}
{
    \begin{scope}
        \clip (1,2*\i) rectangle (3,2*\i+0.9);
        \node[fill, red!20!white, rounded corners=.2cm, minimum width=2cm, minimum height=1.8*\yscale cm] at (2,2*\i){};
    \end{scope}
    \draw (1,2*\i) -- (3,2*\i);
}
\foreach \i in {1}
{
    \node[draw,circle] (j\i) at (0,2*\i) {};
    \node[draw, rounded corners=.2cm, minimum width=2cm, minimum height=1.8*\yscale cm] (X\i) at (2,2*\i){};
}
\node[anchor=east] at (-.2,2) {$i_1$};
{
    \footnotesize
    \draw (j1) edge node[pos=.4,below] {
    }  (X1.190);
    \node at (2,2.45) {$X_{i_1}^t\setminus Z^t_{i_1}$};
    \node at (2,1.55) {$Z^t_{i_1}$};
}
\end{tikzpicture}
\underline{Cycle of type (iii)}
\bigbreak
%%%%%%%%%%%%%%%%%%%%%%%%%%%%%%%%%%%%%%%%%%%%%%%%%%%%%%%%%%%%%%%%%%%
\begin{tikzpicture}[very thick,yscale=\yscale]
\foreach \i in {1,...,3}
{
    \begin{scope}
        \clip (1,2*\i) rectangle (3,2*\i+0.9);
        \node[fill, red!20!white, rounded corners=.2cm, minimum width=2cm, minimum height=1.8*\yscale cm] at (2,2*\i){};
    \end{scope}
    \draw (1,2*\i) -- (3,2*\i);
}
\foreach \i in {1,...,3}
{
    \node[draw,circle] (j\i) at (0,2*\i) {};
    \node[draw, rounded corners=.2cm, minimum width=2cm, minimum height=1.8*\yscale cm] (X\i) at (2,2*\i){};
}
{
    \footnotesize
    \node at (2,6.45) {$X_{i_1}^t\setminus Z^t_{i_1}$};
    \node at (2,5.55) {$Z^t_{i_1}$};
    \node at (2,2.45) {$X_{i_\ell}^t\setminus Z^t_{i_\ell}$};
    \node at (2,1.55) {$Z^t_{i_\ell}$};
    \draw (j3) edge node[pos=0,below] {
    }  (X2.190);
    \draw (j2) edge node[pos=0,below] {
    }  (X1.190);
    \draw (j1) edge node[pos=-.06,below] {
    }  (X3.190);
}
\node[anchor=east] at (-.2,6) {$i_1$};
\node[anchor=east] at (-.3,4.2) {$\vdots$};
\node[anchor=east] at (-.2,2) {$i_\ell$};
\end{tikzpicture}
\underline{Cycle of type (iv)}
\end{minipage}
\end{center}
    \vspace{-.5cm}
    \caption{The structure of the matching $M^t$ over the course of Algorithms~\ref{alg:new_subadditive_alg}, as described in Claim~\ref{cla:chains}. Each edge connects an agent $k$ to her bundle $M_k^t$. In the beginning, everyone is unmatched and there are only chains of type (i) of length $\ell=1$. In the end, everyone is matched and there are no chains of type (i).}
    \label{fig:subadditive_structure}
\end{figure}
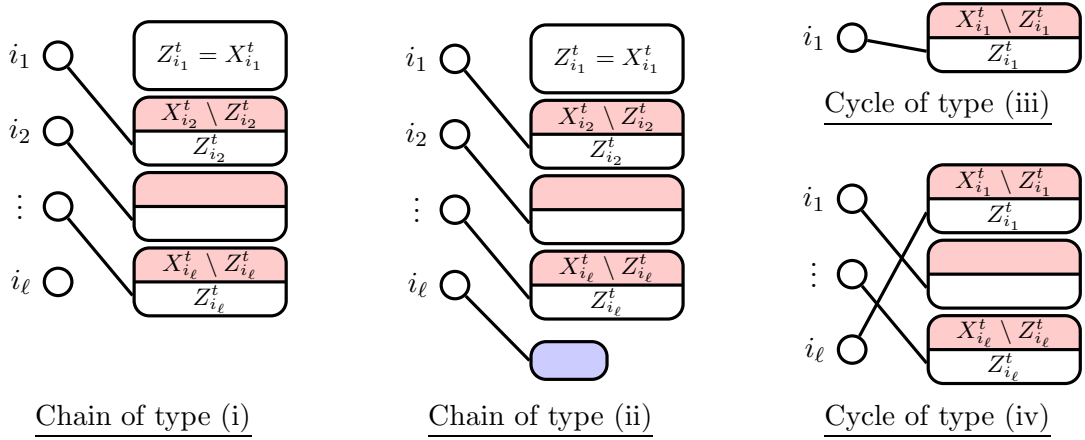

Let us first analyze the running time of the algorithm.

\begin{restatable}{lemma}{lemsubaddruntime}
    The algorithm terminates in a polynomial (in $n$ and $m$) number of iterations. Moreover, at the end of the run of the algorithm, it holds that $M_i^t \neq \bot$ for all $i$.
\end{restatable}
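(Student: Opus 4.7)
The plan is to exhibit a lexicographically ordered integer-valued potential $\Phi_t$ of polynomial range that strictly decreases at every iteration of the while loop. Polynomial termination will follow immediately, and the second conclusion is an instant consequence of the while loop's exit condition, which asserts precisely that no agent $i$ has $M_i=\bot$.

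Concretely, I would take
\[
\Phi_t \,=\, \Bigl(\,\sum_{i \in \agents} |X_i^t|,\;\; \sum_{i \in \agents} |Z_i^t|,\;\; \sum_{i\,:\,M_i^t \text{ blue}} |M_i^t|,\;\; n - |\{i : M_i^t = Z_i^t\}|\,\Bigr),
\]
ordered lexicographically. Each coordinate is a non-negative integer bounded by $mn$, so $\Phi_t$ ranges over at most $O(m^3 n)$ distinct values. The remaining work is a case analysis showing that $\Phi_t$ strictly drops per iteration, broken down by which branch of Algorithm~\ref{alg:new_subadditive_alg} (and, within Case 2, Algorithm~\ref{alg:split_bundle}) is executed.

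In Cases 3 and 2.1, 2.3--2.6, the iteration performs a ``shrink $X_j$'' that strictly removes at least one item from $X_j$; non-emptiness of the removed set follows from the enabling condition of each branch together with Claim~\ref{cla:basic}(iv) (which gives $J^t,S^t \neq \emptyset$). Hence the first coordinate of $\Phi_t$ strictly decreases. In Case 2.2, $X$ is unchanged but $Z_j$ loses the element $g$, so the second coordinate strictly decreases. In Case 4, $X$ and $Z$ are both unchanged; the set of blue-matched agents retains the same size (agent $j$ leaves it and agent $i$ enters it), but $j$'s former blue bundle $M_j^{t-1}$ is replaced by the strictly smaller blue bundle $M_j^{t-1}-g$ assigned to $i$, so the third coordinate drops by exactly one. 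The subtle case is Case 1: the first three coordinates are all unchanged, and the only change to the matching is that $i$ becomes self-matched, possibly at the cost of unmatching any agent $u$ previously matched to $Z_i$. Since $i$ was previously unmatched, $u \neq i$; moreover, the $X_r^t$ are pairwise disjoint (and hence so are the $Z_r^t \subseteq X_r^t$), so if $Z_i \neq \emptyset$ then $Z_u \neq Z_i$ and therefore $u$ was not self-matched. Consequently the self-matched count strictly increases, so the fourth coordinate drops.

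The main obstacle is the degenerate situation underlying Case 1, namely the possibility that several agents have $Z_i^t=\emptyset$ simultaneously (which can happen only for agents with $v_i\equiv 0$). I would sidestep this by a preprocessing step: every such trivial agent is matched to the empty bundle up front and removed from further consideration, after which every $Z_i^t$ arising in the main loop is non-empty and the Case 1 argument above is valid. With strict lex-decrease of $\Phi_t$ established, the while loop terminates in $O(m^3 n)$ iterations, and its exit condition delivers $M_i^T \neq \bot$ for every agent $i$.
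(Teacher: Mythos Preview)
Your approach matches the paper's in spirit: both prove termination via a lexicographic potential of polynomial range. The paper's potential is the 3-tuple $(|\mathsf{Deleted}_t|,|\mathsf{Blue}_t|,|\mathsf{Red}_t|)$, where $\mathsf{Deleted}_t$ is the set of items no longer in any white, red, or blue bundle, and it asserts this strictly increases each iteration. Your 4-tuple is a decreasing variant whose extra coordinate is dedicated to Case~1, and indeed Case~1 leaves the paper's 3-tuple unchanged, so your added coordinate actually handles a case the paper's stated potential glosses over.

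There is a small but real gap in your Case~1 edge-case analysis. You assert that $Z_i^t=\emptyset$ can occur only when $v_i\equiv 0$; this is false. In fact $Z_i^t=\emptyset$ if and only if $X_i^0=\emptyset$, which happens whenever the input allocation gives agent $i$ nothing (for instance when $m<n$), regardless of $v_i$. The ``only if'' direction holds because every branch of \textsc{split} that modifies $Z_j$ sets it to a non-empty set: Cases~2.1--2.5 directly (using Claim~\ref{cla:basic}(iv)), and Case~2.6 because one can only reach it when $v_j(X_j)>0$ (otherwise the Case~2.1 test $v_j(g)\geq 0$ already passes), whence subadditivity together with the failed tests for Cases~2.4 and~2.5 gives $v_j(R\setminus S)>0$. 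Your preprocessing idea therefore works once you preprocess the agents with $X_i^0=\emptyset$ rather than those with $v_i\equiv 0$; with that correction the argument is complete.
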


\begin{proof}
Let $\mathsf{White}_t = \bigcup_{i \in \agents} Z_i^t$, $\mathsf{Blue}_t = \bigcup_{M_i^t \text{ is blue}} M_i^t$, and $\mathsf{Red}_t = \bigcup_{i \in \agents} (X_i^t \setminus Z_i^t)$ be the sets of items contained in white, blue, and red bundles, respectively. Let $\mathsf{Deleted}_t = [m] \setminus (\mathsf{White}_t \cup \mathsf{Blue}_t \cup \mathsf{Red}_t)$. Consider the tuple ${\phi}_t = (|\mathsf{Deleted}_t|, |\mathsf{Blue}_t|, |\mathsf{Red}_t|)$. There are $(m+1)^3$ possible values that ${\phi}_t$ can take and by the design of the algorithm, ${\phi}_t$ increases lexicographically with every iteration of the algorithm. Hence, the number of iterations is bounded by $(m+1)^3$.

The second part of the statement follows because of the while loop condition.
\end{proof}

{We now present the first key lemma in the analysis.}

\begin{restatable}{lemma}{lemfeas}\label{lem:feas}
    The partial allocation $(M_1^T, \ldots, M_n^T)$ returned by the algorithm is $\alpha$-EFX.
\end{restatable}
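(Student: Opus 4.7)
The plan is to prove the following invariant by induction on $t$: after iteration $t$, for every matched agent $i$ and every $J' \in \mathcal{B}^{t+1}$, we have $v_i(M_i^t) \geq \alpha \cdot v_i(J')$. Taking $t = T$ and noting that $M_j^T - g$ belongs to $\mathcal{B}^{T+1}$ for every matched $j$ and $g \in M_j^T$, this invariant immediately yields the $\alpha$-EFX property of $(M_1^T, \ldots, M_n^T)$.

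The base case $t = 0$ is vacuous since no agent is matched. For the inductive step, I will consider the four main cases of Algorithm~\ref{alg:new_subadditive_alg}. In all four cases, the invariant for the newly matched agent $i^t$ follows either from the explicit condition on line~\ref{line:self_cond} (in Case 1) or from the maximality of $J^t$ in $\mathcal{B}^t$ (in Cases 2--4), combined with the observation that every element of $\mathcal{B}^{t+1}$ is either identical to or dominated (via monotonicity and subadditivity) by an element of $\mathcal{B}^t$. For an agent $a \neq i^t$ matched in an earlier iteration, the invariant is obtained from the inductive hypothesis $v_a(M_a^{t-1}) \geq \alpha \cdot v_a(J')$ for $J' \in \mathcal{B}^t$ together with a decomposition of each new element of $\mathcal{B}^{t+1}$ into pieces contained in old elements of $\mathcal{B}^t$. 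Cases 1, 3, and 4 are straightforward because the changes to $Z$, $X$, and $M$ only introduce new bundles that are subsets of bundles already present in $\mathcal{B}^t$.

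The main obstacle is Case 2 (the split procedure), where $Z_{j^t}$ and $X_{j^t}$ may undergo substantive modifications and where an additional self-matched agent $k^t$ may be re-matched to $S^t$. I will verify the invariant in each of the six sub-cases 2.1--2.6 separately. The key structural ingredients driving these verifications are: (i) the conditions $\mathcal{K}_{S'} = \emptyset$ for all proper subsets $S' \subsetneq S^t$ (guaranteed by the minimality of $S^t$, or by the branch $\mathcal{K}_S = \emptyset$ for all $S \subseteq R^t$ in sub-cases 2.2 and 2.3), which bound $v_k(S')$ by $(1/\alpha) \cdot v_k(Z_k^{t-1}) = (1/\alpha) \cdot v_k(M_k^t)$ for every self-matched agent $k$; (ii) the membership $k^t \in \mathcal{K}_{S^t}$ in sub-cases 2.4 and 2.6, which yields $v_{k^t}(M_{k^t}^t) = v_{k^t}(S^t) > (1/\alpha) \cdot v_{k^t}(Z_{k^t}^{t-1})$ and thus provides the invariant for $k^t$ against its own post-split bundles; and (iii) the threshold conditions on $v_{j^t}$ that govern which sub-case is triggered, which bound $v_{j^t}$ of the new $Z_{j^t}^t$ and $X_{j^t}^{t} \setminus Z_{j^t}^{t}$ in terms of $v_{j^t}(X_{j^t}^{t-1})$ and thereby maintain the invariant for $j^t$ itself via Claim~\ref{cla:basic}(i--ii).

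I anticipate that the hardest sub-cases are those where $Z_{j^t}^t$ is drawn from the red region $R^t$ rather than from the old $Z_{j^t}^{t-1}$ (sub-cases 2.3, 2.5, 2.6), because there the new white and red bundles of $j^t$ in $\mathcal{B}^{t+1}$ are not subsets of old elements of $\mathcal{B}^t$. In these sub-cases I will decompose the new bundle $B$ at $j^t$ as a union of a subset of $X_{j^t}^{t-1} \setminus Z_{j^t}^{t-1}$ (bounded by the old red bundle in $\mathcal{B}^t$) and a subset of $S^t$ or $\{g^t\}$ (bounded by the $\mathcal{K}$-conditions for self-matched agents and, for non-self-matched $a$, by tracing $a$'s matching history through the chain/cycle structure of Claim~\ref{cla:chains} back to the iteration when $a$ was matched and invoking the inductive hypothesis at that time), using $\alpha \leq 1/2$ to absorb the constant-factor loss from subadditivity.
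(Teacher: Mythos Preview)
Your overall invariant and organization match the paper's, and for the newly matched agent $i^t$ (via the argmax property of $J^t$), for the freshly re-matched agent $k^t$ (via $k^t\in\mathcal{K}_{S^t}$, giving $v_{k^t}(S^t) > (1/\alpha)\,v_{k^t}(M_{k^t}^{t-1})$), and for self-matched agents (via $\mathcal{K}_H^t=\emptyset$ on the whole new bundle $H$, not on pieces), your sketch is on track. The genuine gap is the treatment of a \emph{non-self-matched} agent $a$ whose matching time $r$ is strictly earlier than $t$.

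For such an $a$ your plan decomposes the new bundle $B\subseteq R^t$ into a piece inside the old red bundle $X_{j^t}^{t-1}\setminus Z_{j^t}^{t-1}\in\mathcal{B}^t$ and a piece inside $S^t$ or $\{g^t\}$. But the only control you have on the first piece is the inductive hypothesis at step $t$, namely $v_a(M_a^{t-1})\geq\alpha\cdot v_a(\cdot)$, which already carries the factor $\alpha$. Subadditivity over the two pieces then gives $v_a(B)\leq(2/\alpha)\,v_a(M_a^t)$, i.e.\ only $(\alpha/2)$-EFX; the hypothesis $\alpha\leq 1/2$ does not rescue this. ``Tracing $a$'s matching history back to time $r$ and invoking the inductive hypothesis at that time'' does not help as written either: the decomposition you propose produces pieces that live in $\mathcal{B}^t$, not in $\mathcal{B}^r$, and the inductive hypothesis at $r$ still comes with the $\alpha$.

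What is missing is Lemma~\ref{lem:remove_one}: for every $r<t$ there exists $g'\in Z_{j^t}^{r-1}$ with $R^t\subseteq X_{j^t}^{r-1}-g'$. This lets one split $R^t$ into $(Z_{j^t}^{r-1}-g')$ and $(X_{j^t}^{r-1}\setminus Z_{j^t}^{r-1})$, both of which belong to $\mathcal{B}^r$ (respectively $\mathcal{B}^{r-1}$ when $M_a^r=S^r$). Now one invokes not the inductive hypothesis but the stronger fact that at time $r$ the agent was matched either to $J^r=\arg\max_{J\in\mathcal{B}^r}v_a(J)$, or to $S^r$ with $v_a(S^r)>(1/\alpha)\,v_a(M_a^{r-1})\geq v_a(J')$ for all $J'\in\mathcal{B}^{r-1}$. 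Either way $v_a(M_a^t)\geq v_a(\cdot)$ on each piece \emph{without} the $\alpha$, so subadditivity yields $v_a(B)\leq 2\,v_a(M_a^t)\leq(1/\alpha)\,v_a(M_a^t)$, which is exactly the bound required. The proof of Lemma~\ref{lem:remove_one} is itself nontrivial, because between iterations $r$ and $t$ the bundle $Z_{j^t}$ may have been ``reset'' into the red region (sub-cases 2.3, 2.5, 2.6); one must track the last such reset to locate a suitable $g'$.
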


{To prove that the allocation returned by the algorithm is $\alpha$-EFX, 
we  show that a stronger property is preserved throughout the execution of the algorithm, namely, for every matched agent $i$, it holds that $v_i(M_i^t) \geq \alpha \cdot v_i(J)$ for all $J \in \mathcal{B}^t$. It is not too hard to see that this property is satisfied when an agent gets matched. Showing that the property is preserved when the set of available bundles changes requires a bit more care.}

First, recall that, as we discussed earlier, the variables $Z$ and $X$ might be modified in multiple ways, which also means that the set of available bundles might change in {a few} different ways. Hence, we introduce the following definition to capture the difference between the set of available bundles before and after the $t$-th iteration.

\begin{definition}[New available bundles] For any $t > 0$, we say that $H \in \mathcal{B}^{t+1}$ is a \emph{new available bundle} (at time $t$) if there is no $H' \supseteq H$ so that $H' \in \mathcal{B}^{t}$, i.e., the bundle $H$ becomes available due to the changes made during the $t$-th 
iteration.
\end{definition}

Below, we {characterize} all the new available bundles. The proof is deferred to Appendix~\ref{sec:proofs-subadditive}.

\begin{restatable}
{claim}{clanewbund}
 Let $t > 0$ {and $H \in \mathcal{B}^{t+1}$ be a new available bundle (at time $t$). Then, either
 \begin{itemize}
     \item $H \subseteq R^{t}$ (and the algorithm went into Case 2.2 or 2.3), or
     \item $H \subsetneq S^{t} \subseteq R^{t}$ {or $H \subseteq R^t \setminus S^t$} (and the algorithm went into Case 2.4, 2.5, or 2.6).
 \end{itemize}
In the remaining cases, there are no new available bundles.}
\label{cla:new_bund}
\end{restatable}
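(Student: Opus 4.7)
The proof is by case analysis on which branch of Algorithm~\ref{alg:new_subadditive_alg} is executed in the $t$-th iteration. For each branch, I will track how the values of $Z_j$, $X_j$, and $M_j$ change, enumerate the bundles in $\mathcal{B}^{t+1}$ that did not appear in $\mathcal{B}^t$, and for each candidate $H$ either exhibit a superset $H' \supseteq H$ with $H' \in \mathcal{B}^t$ (showing $H$ is not new) or verify that $H$ satisfies one of the containment conditions in the claim. The main tool throughout is the disjointness guaranteed by Claim~\ref{cla:z_const}, together with the invariant $Z_j^t \subseteq X_j^t$ and the fact that the $X_j^t$'s are pairwise disjoint: every element $H' \in \mathcal{B}^t$ is contained entirely in one $Z_j^{t-1}$, entirely in one red set $X_j^{t-1} \setminus Z_j^{t-1}$, or entirely in one blue bundle $M_j^{t-1}$, and these regions are pairwise disjoint.

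In the ``easy'' branches (Case 1, Case 3, Case 4, and the sub-case Case 2.1), I will show that every candidate new bundle in $\mathcal{B}^{t+1}$ has an explicit superset in $\mathcal{B}^t$. For Case 1, nothing but the matching $M$ itself changes, and the changed matchings are all to existing white bundles, so $\mathcal{B}^{t+1} = \mathcal{B}^t$. For Cases 3, 4, and 2.1, the only candidate new entries in $\mathcal{B}^{t+1}$ come from the newly created blue bundle $M_{i^t}^t = J^t$ and its strict subsets; but in each of these sub-cases, $J^t$ itself is already in $\mathcal{B}^t$ — as a red bundle $X_{j^t}^{t-1} \setminus Z_{j^t}^{t-1}$ in Case 3, as $M_{j^t}^{t-1} - g$ in Case 4, or as the white-subset $Z_{j^t}^{t-1} - g$ in Case 2.1 — so all strict subsets $J^t - g'$ inherit this superset.

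The substantive work is in Cases 2.2, 2.3, 2.4, 2.5, and 2.6. The common feature is that one of the newly created bundles (a red bundle in Case 2.2, a new white bundle in Cases 2.3, 2.5, 2.6, or a new blue bundle $M_{k^t}^t = S^t$ in Cases 2.4 and 2.6) is constructed as a subset of $R^t = X_{j^t}^{t-1} \setminus J^t = (X_{j^t}^{t-1} \setminus Z_{j^t}^{t-1}) \cup \{g\}$, and thus potentially straddles the boundary between the old white region $Z_{j^t}^{t-1}$ (contributing only the item $g$) and the old red region $X_{j^t}^{t-1} \setminus Z_{j^t}^{t-1}$. For any candidate new bundle $H$ that contains $g$, no element of $\mathcal{B}^t$ can serve as a superset by the disjointness observation above, so $H$ is genuinely new; conversely, such $H$ is by construction a subset of $R^t$, and in Cases 2.4--2.6 it further lies as a strict subset of $S^t$ or as a subset of $R^t \setminus S^t$, depending on which newly created bundle it is a subset of. Candidates $H$ that avoid $g$ lie entirely in $X_{j^t}^{t-1} \setminus Z_{j^t}^{t-1}$, which is itself in $\mathcal{B}^t$, so they are not new.

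The main obstacle is the careful bookkeeping across the six sub-cases of Case 2, each of which redistributes the items of $R^t$ slightly differently among $Z_{j^t}^t$, $X_{j^t}^t$, and $M_{k^t}^t$, and one must verify for each sub-case which of the three families in $\mathcal{B}^{t+1}$ (white-subsets, red, blue-subsets) can contribute new bundles. The underlying technique, however, is uniform: enumerate the candidates produced by the case, split according to whether the candidate contains $g$, and invoke disjointness. No ingredient beyond Claims~\ref{cla:basic} and~\ref{cla:z_const} is needed.
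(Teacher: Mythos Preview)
Your proposal is correct and follows the same approach as the paper's proof: a branch-by-branch case analysis that, for each sub-case of Algorithm~\ref{alg:new_subadditive_alg}, enumerates the bundles that can enter $\mathcal{B}^{t+1}$ and either exhibits a superset already in $\mathcal{B}^t$ (so the bundle is not new) or verifies the required containment in $R^t$ (respectively $S^t$ or $R^t\setminus S^t$). The one inessential over-claim---that every candidate $H\subseteq R^t$ containing $g$ is \emph{genuinely} new---fails when $H=\{g\}$ (which is covered by $Z_{j^t}^{t-1}-g'\in\mathcal{B}^t$ for any other $g'$), but this does not affect the argument since the claim only asks for the forward implication.
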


{The following lemma captures the final observation needed in the proof of Lemma~\ref{lem:feas}.
}

\begin{lemma}\label{lem:remove_one}
Suppose that the algorithm went into Case 2 during the $t$-th iteration. Then, for any $r < t$, there is some $g' \in Z_{j^t}^{r-1}$ so that $R^t \subseteq X_{j^t}^{r-1}-g'$.
\end{lemma}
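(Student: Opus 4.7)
The plan is as follows. Since $R^t\subseteq X_{j^t}^{t-1}\subseteq X_{j^t}^{r-1}$ by monotonicity of $X_{j^t}$, the conclusion $R^t\subseteq X_{j^t}^{r-1}-g'$ reduces to finding some $g'\in Z_{j^t}^{r-1}$ that lies outside $R^t$; I would therefore aim to exhibit a non-empty subset of $Z_{j^t}^{r-1}$ that is provably disjoint from $R^t$.

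The key preliminary observation is a dichotomy for how $Z_{j^t}$ can evolve over a single iteration $s$: either (i) $Z_{j^t}^s\subseteq Z_{j^t}^{s-1}$, or (ii) iteration $s$ applies Case 2.3, 2.5, or 2.6 with $j=j^t$. I would verify this by inspecting Algorithms~\ref{alg:new_subadditive_alg} and~\ref{alg:split_bundle}: Cases 2.2 and 2.4 with $j=j^t$ simply remove an element from $Z_{j^t}$; the $k=j^t$ sub-cases of Cases 2.4 and 2.6, Case 3 with $j=j^t$, and every iteration not involving $j^t$ leave $Z_{j^t}$ unchanged; and Case 2.1 with $j=j^t$ is ruled out entirely because it would set $Z_{j^t}^s=\{g^s\}$ and hence force $|Z_{j^t}^{t-1}|\leq 1$, contradicting the non-emptiness of $J^t$ from Claim~\ref{cla:basic}(iv).

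I would then split into two cases. If no iteration in $[r,t-1]$ is of type (ii), the dichotomy gives $Z_{j^t}^{r-1}\supseteq Z_{j^t}^{t-1}\supseteq J^t$, so any $g'\in J^t$ (non-empty by Claim~\ref{cla:basic}(iv)) satisfies $g'\notin R^t=X_{j^t}^{t-1}\setminus J^t$. Otherwise let $\tau\in[r,t-1]$ be the \emph{smallest} iteration of type (ii); by the dichotomy on $[r,\tau-1]$ we have $Z_{j^t}^{r-1}\supseteq Z_{j^t}^{\tau-1}\supseteq J^\tau$, and the claim is that any $g'\in J^\tau$ works.

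The crux of this second case, and the main technical obstacle, is verifying $J^\tau\cap R^t=\emptyset$. The cleanest route I see is to argue that $J^\tau$ is a \emph{blue} bundle at time $\tau$, which by Claim~\ref{cla:z_const}(ii) immediately gives $J^\tau\cap X_{j^t}^\tau=\emptyset$; combining with $R^t\subseteq X_{j^t}^{t-1}\subseteq X_{j^t}^\tau$ then yields $J^\tau\cap R^t=\emptyset$. To verify blueness, I would check explicitly that $J^\tau=Z_{j^t}^{\tau-1}-g^\tau$ is disjoint from every $Z_k^\tau$ (white) and every $X_k^\tau\setminus Z_k^\tau$ (red), using that in Cases 2.3, 2.5, and 2.6 the new $X_{j^t}^\tau$ lies inside $R^\tau=(X_{j^t}^{\tau-1}\setminus Z_{j^t}^{\tau-1})\cup\{g^\tau\}$ (which is disjoint from $J^\tau$), while the $X_k^\tau$ for $k\neq j^t$ are disjoint from $X_{j^t}^{\tau-1}\supseteq J^\tau$.
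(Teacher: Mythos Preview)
Your proposal is correct and follows essentially the same approach as the paper: the same dichotomy on how $Z_{j^t}$ evolves per iteration, the same case split on whether a Case~2.3/2.5/2.6 iteration with $j=j^t$ occurs in $[r,t-1]$, and the same choice of $g'\in J^t$ (first case) or $g'\in J^\tau$ (second case). Two minor streamlinings: Case~2.1 already satisfies $Z_{j^t}^s=\{g^s\}\subseteq Z_{j^t}^{s-1}$, so ruling it out separately is unnecessary; and the detour through blueness and Claim~\ref{cla:z_const}(ii) is redundant, since your own check that $X_{j^t}^\tau\subseteq R^\tau$ combined with $R^\tau\cap J^\tau=\emptyset$ already yields $J^\tau\cap X_{j^t}^\tau=\emptyset$ directly---this is exactly what the paper does, concluding via the chain $R^t\subseteq X_{j^t}^{t-1}\subseteq X_{j^t}^\tau\subseteq R^\tau$.
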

\begin{proof}
{Note that for any $1 \leq w \leq T+1$, it holds that $Z_a^w = Z_a^{w-1}$ for any $a \neq j^w$, and either $X_{j^w}^w = Z_{j^w}^w \subseteq R^{w}$ (in Cases 2.3, 2.5, and 2.6) or $Z_{j^w}^w \subseteq Z_{j^w}^{w-1}$ (in the remaining cases).
    Hence,} it must be the case that either
        (i) $Z_{j^t}^{t-1} \subseteq Z_{j^t}^{r-1}$, 
        or (ii) there is some $r \leq w \leq t-1$ so that $X_{j^t}^w = Z_{j^t}^w \subseteq R^w$ and $Z_{j^t}^{w-1} \subseteq Z_{j^t}^{r-1}$ (i.e., the $w$-th iteration was the first iteration after the $(r-1)$-st one where the bundle $Z_{j^t}$ was changed to something else than a subset of itself).
        
        In case (i), pick any $g' \in J^t$,  {which is well-defined since} $J^t$ is non-empty by Claim~\ref{cla:basic}(iv). Since $J^t \subseteq Z_{j^t}^{t-1} \subseteq Z_{j^t}^{r-1}$, it holds that $g' \in Z_{j^t}^{r-1}$. 
        The claim follows because since $J^t \cap R^t = \emptyset$, it holds that $g' \notin R^t$, and so $R^t \subseteq X_{j^t}^{t-1} -g' \subseteq X_{j^t}^{r-1} - g'$.

        In case (ii), pick any $g' \in J^w$,  {which is well-defined since} $J^w$ is non-empty by Claim~\ref{cla:basic}(iv). Since  $J^w \subseteq Z_{j^t}^{w-1} \subseteq Z_{j^t}^{r-1}$, it holds that $g' \in Z_{j^t}^{r-1}$.
        The claim follows because since $J^w \cap R^w = \emptyset$, it holds that $g' \notin R^w$, and so
        $R^t \subseteq X_{j^t}^{t-1} \subseteq X_{j^t}^w \subseteq R^w \subseteq X_{j^t}^{w-1}-g' \subseteq X_{j^t}^{r-1}-g'$.
\end{proof}

We are now ready to prove the key lemma.

\begin{proof}[Proof of Lemma~\ref{lem:feas}]
    We show by induction on $t$ that for every $0 \leq t \leq T+1$ and every matched agent $i$, it holds that $v_i(M_i^t) \geq \alpha \cdot  v_i(J)$ for all $J \in \mathcal{B}^t$. Then, the result follows since it holds that $M_j^T-g \in \mathcal{B}^{T+1}$ for any $j$ and $g \in M_j^T$ and so for any agent $i$, it must be that $v_i(M_i^T) \geq \alpha \cdot v_i(M_j^T-g)$. At $t = 0$, no agents are matched and hence the statement holds. Let $t\geq 0$ and $i$ be such that $M_{i}^{t+1} \neq \bot$. {The inductive assumption is that if $M_a^t \neq \bot$, then $v_a(M_a^t) \geq \alpha \cdot v_a(J)$ for all agents $a$ and  $J \in \mathcal{B}^t$.} Let $r$ be the matching time of $i$ (before $t+1$).

    \underline{Assume that $r=t+1$ and $M_i^{t+1} = Z_i^{t+1}$.} 
    It must be the case that the algorithm went into Case 1 during the $r$-th iteration with $i^r = i$.
    By the if condition in line \ref{line:self_cond} of Algorithm~\ref{alg:new_subadditive_alg}, we have $v_i(Z_i^r) \geq \alpha \cdot v_i(J)$ for all $J \in \mathcal{B}^{r}$ which gives the desired {guarantee}.

\underline{Assume that $r = t+1$ and $M_i^{t+1} = J^r$.} It holds that
    $v_i(M_i^{t+1}) = v_i(J^{t+1}) \geq v_i(J)$ for all $J \in \mathcal{B}^{t+1}$ since $\textstyle{J^{t+1} \in \argmax_{J \in \mathcal{B}^{t+1}} v_i(J)}${, and the statement follows.}

    Note that if $r \leq t$, then for any $H \in \mathcal{B}^{t+1}$ that is {\em not} a new available bundle (at time $t$), there is some $H' \supseteq H$ so that $H' \in \mathcal{B}^t$. By the inductive assumption, it holds that  $v_i(M_i^{t+1}) = v_i(M_i^t) \geq \alpha \cdot v_i(H') \geq \alpha \cdot v_i(H)$. Thus, it is enough to consider new available bundles.

    \underline{Assume that $r \leq t$ and $M_i^{t+1} = Z_i^{t+1}$.} 
    If $r = t$, then the algorithm went to Case 1 during the $r$-th iteration, and there are no new available bundles (at time $t$). {Assume that $r < t$ and l}et $H \in \mathcal{B}^{t+1}$ be a new available bundle (at time $t$).

    {First, observe that if the algorithm went into Cases 2.4, 2.5, or 2.6 during the $t$-th iteration, then it holds that $g^t \in S^t$. Indeed, if $g^t \notin S^t$, then $S^t \subseteq X_{j^t}^{t-1} \setminus Z_{j^t}^{t-1} \in \mathcal{B}^t$, and so, by the inductive assumption, $v_{k^t}(Z_{k^t}^{t-1}) = v_{k^t}(Z_{k^t}^t) \geq \alpha \cdot v_{k^t}(S^t)$ which contradicts  $k^t \in \mathcal{K}_{S^t}^t$. }
    
    We {now} claim that $\mathcal{K}_{H}^{t}$ is an empty set. For the purpose of contradiction, assume that the contrary holds. 
    It cannot be the case that the algorithm went into Case 2.2 or 2.3 during the $t$-th iteration because of the if condition in line~\ref{line:if_s} (Algorithm~\ref{alg:split_bundle}).
    Hence, by Claim~\ref{cla:new_bund}, it must be that $H \subsetneq S^{t}$ {or $H \subseteq R^t \setminus S^t$}. {If $H \subsetneq S^t$, then recall that} $S^{t}$ was chosen to be the minimal set with non-empty $\mathcal{K}_{S^{t}}^{t}$ which contradicts the assumption that $\mathcal{K}_{H}^{t}$ is non-empty. {If $H \subseteq R^t \setminus S^t$, then note that, since $g^t \in S^t$, it holds that $R^t \setminus S^t \subseteq X_{j^t}^{t-1} \setminus Z_{j^t}^{t-1} \in \mathcal{B}^t$ which contradicts the assumption that $H$ is a new available bundle.} This proves the claim. 
    
    It follows that $v_i(Z_{i}^{t-1})\geq \alpha \cdot v_i(H)$ since $i \notin \mathcal{K}_{H}^{t}$ {while} $M_{i}^{t-1} = Z_{i}^{t-1}$. {This} gives the desired inequality as $M_i^{t+1} = Z_i^{t-1}$ by Claim~\ref{cla:z_const}(iv) since $r < t$. 
    
    \underline{Assume that $r \leq t$ and $M_i^{t+1} = J^r$.} Consider $H \in \mathcal{B}^{t+1}$ that is a new available bundle (at time $t$). By Lemma~\ref{lem:remove_one}, pick $g' \in Z_{j^t}^{r-1}$ so that $R^t \subseteq X_{j^t}^{r-1}-g'$.
    Observe that $v_{i}(Z_{j^t}^{r-1}-g') \leq v_{i}(M_i^{t+1})$ and $v_i(X_{j^t}^{r-1}\setminus Z_{j^t}^{r-1}) \leq v_i(M_i^{t+1})$ since both $Z_{j^t}^{r-1}-g'$ and $X_{j^t}^{r-1}\setminus Z_{j^t}^{r-1}$  belong to $\mathcal{B}^r$ and $M_{i}^{t+1} = M_i^r \in \argmax_{J \in \mathcal{B}^r} v_i(J)$. 
     It follows that 
     \begin{align*}
       v_i(H) 
       &\leq v_i(R^t) && (\text{by Claim~\ref{cla:new_bund}}) \\
       &\leq v_i(X_{j^t}^{r-1}-g') && (\text{by the choice of }g') \\
       &\leq v_i(Z_{j^t}^{r-1}-g') + v_i(X_{j^t}^{r-1}\setminus Z_{j^t}^{r-1}) && (\text{by subadditivity of }v_i) \\
       &\leq 2 \cdot v_i(M_i^{t+1})  && (\text{by the above}) \\
       &\leq (1/\alpha) \cdot v_i(M_i^{t+1}) && (\text{since } \alpha \leq 1/2)
     \intertext{
     which is the desired guarantee. \endgraf
 \underline{Assume that $r \leq t+1$ and $M_i^{t+1} = S^r$.} Observe that
 }
        v_i(M_i^{t+1}) 
        &= v_i(S^{r}) && (\text{by the assumption})  \\
        &> (1/\alpha) \cdot v_i(Z_i^{r-1}) && (\text{since } i \in \mathcal{K}_{S^r}^r)  \\
        &= (1/\alpha) \cdot v_i(M_i^{r-1}) && (\text{since } i \in \mathcal{K}_{S^r}^r) \\
        &\geq v_i(J) && (\text{by the inductive assumption})
    \end{align*}
    for all $J \in \mathcal{B}^{r-1}$. {Moreover, it holds that $r \geq 2$ since for the algorithm to match $i$ to $S^r$ during the $r$-th iteration, $i$ must have already been matched during some earlier iteration.}
    
    First, if $r = t+1$, consider any $H \in \mathcal{B}^{t+1}$ that is {\em not} a new available bundle (at time $t$). Then, there is some $H' \in \mathcal{B}^{t}$ so that $H' \supseteq H$ and it follows from the observation above that $v_i(M_i^{t+1}) \geq v_i(H') \geq v_i(H)$.
    
Now, for any $r \leq t+1$, consider $H$ that is a new available bundle (at time $t$). By Lemma~\ref{lem:remove_one}, pick $g' \in Z_{j^t}^{r-2}$ so that $R^t \subseteq X_{j^t}^{r-2}-g'$.
    It follows from the observation above that $v_{i}(Z_{j^t}^{r-2}-g') \leq v_{i}(M_i^{t+1})$ and $v_i(X_{j^t}^{r-2}\setminus Z_{j^t}^{r-2}) \leq v_i(M_i^{t+1})$ since both $Z_{j^t}^{r-2}-g'$ and $ X_{j^t}^{r-2}\setminus Z_{j^t}^{r-2}$ belong to $\mathcal{B}^{r-1}$. Hence, it holds that
    \begin{align*}
     v_i(H) 
       &\leq v_i(R^t) && (\text{by Claim~\ref{cla:new_bund}}) \\
       &\leq v_i(X_{j^t}^{r-2}-g') && (\text{by the choice of }g') \\
       &\leq v_i(Z_{j^t}^{r-2}-g') + v_i(X_{j^t}^{r-2}\setminus Z_{j^t}^{r-2}) && (\text{by subadditivity of }v_i) \\
       &\leq 2 \cdot v_i(M_i^{t+1})  && (\text{by the above}) \\
       &\leq (1/\alpha) \cdot v_i(M_i^{t+1}) && (\text{since } \alpha \leq 1/2)
     \end{align*}
     which completes the proof of the lemma.
\end{proof}

{We now present the second key lemma in the analysis.}

\begin{lemma}\label{lem:subadd_mnw}
    The partial allocation $(M_1^T, \ldots, M_n^T)$ 
    {satisfies $\nw(M) \geq \frac{1}{\alpha+1} \cdot \nw(X)$}.
\end{lemma}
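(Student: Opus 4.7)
My plan is to track a potential function $\Psi^t$ across the iterations of the algorithm, designed so that $\Psi^0 = \prod_i v_i(X_i^0)/(\alpha+1)^n$ and $\Psi^T = \prod_i v_i(M_i^T)$. Specifically, I will define
\[
\Psi^t \;:=\; \prod_{i \in \agents,\; M_i^t \neq \bot} v_i(M_i^t) \;\cdot\; \prod_{i \in \agents,\; M_i^t = \bot} \frac{v_i(Z_i^t)}{\alpha+1}.
\]
At $t=0$ every agent is unmatched and $Z_i^0 = X_i^0$, so $\Psi^0 = \prod_i v_i(X_i^0)/(\alpha+1)^n$. At $t=T$ every agent is matched (by the termination guarantee established earlier), so $\Psi^T = \prod_i v_i(M_i^T)$. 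Hence, if I can prove the monotonicity $\Psi^t \geq \Psi^{t-1}$ for every iteration $t \geq 1$, I obtain $\prod_i v_i(M_i^T) \geq \prod_i v_i(X_i^0)/(\alpha+1)^n$, which is exactly $\nw(M) \geq \nw(X)/(\alpha+1)$.

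The main work is a case-by-case verification of the monotonicity of $\Psi^t$ across the branches of Algorithms~\ref{alg:new_subadditive_alg} and~\ref{alg:split_bundle}. Within each branch only a constant number of agents (from $\{i^t, j^t, k^t\}$ plus any agent severed by an ``unmatch'' operation) have their factors in $\Psi^t$ altered, so I can compute the aggregate ratio directly. For Case~1, agent $i^t$'s factor jumps from $v_{i^t}(X_{i^t}^{t-1})/(\alpha+1)$ to $v_{i^t}(Z_{i^t}^{t-1})$, an improvement by $(\alpha+1) \cdot v_{i^t}(Z_{i^t}^{t-1})/v_{i^t}(X_{i^t}^{t-1}) \geq 1$ via Claim~\ref{cla:basic}(ii). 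In Cases~2.1--2.6 of \textsc{split}, the algorithm explicitly enforces constant-fraction thresholds on $v_{j^t}$ of the new $Z_{j^t}$ (for instance $v_{j^t}(g^t) \geq (\alpha/(\alpha+1)) v_{j^t}(X_{j^t}^{t-1})$ in Case~2.1 and $v_{j^t}(J^t) \geq \alpha \cdot v_{j^t}(X_{j^t}^{t-1})$ in Case~2.4), which, together with Claim~\ref{cla:basic}(ii) applied to $i^t$ and $k^t$ and the optimality of $J^t \in \mathcal{B}^t$ from $i^t$'s perspective (line~\ref{line:j_def}), deliver the needed per-case inequalities on the altered factors.

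The main obstacle I anticipate is accounting for the ``unmatch $Z_i$'' and ``unmatch $j$'' operations, which can cause an agent $u$'s factor in $\Psi^t$ to drop from $v_u(M_u^{t-1})$ to $v_u(Z_u^{t-1})/(\alpha+1)$, a potentially large decrease. To offset it I will invoke the structural Claims~\ref{cla:z_const} and~\ref{cla:chains}: the bundle $M_u^{t-1}$ was produced during an earlier application of Case~2-4 at some matching time $r \leq t-1$, and Claim~\ref{cla:z_const}(iii,iv) forces $M_u$ (and, when $M_u$ is a white bundle, the corresponding $Z$) to be frozen until $u$ is unmatched. The triggering condition for Case~2-4 at time $r$ gives $v_u(M_u^r) > (1/\alpha) \cdot v_u(Z_u^{r-1})$, and combining this with Claim~\ref{cla:basic}(ii) and the Case~1 threshold $v_{i^t}(Z_{i^t}^{t-1}) \geq \alpha \cdot v_{i^t}(M_u^{t-1})$ (which quantifies $i^t$'s compensating gain) should yield non-decrease of the aggregate product. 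The analogous blue-bundle unmatchings in Case~4 are handled similarly, exploiting the fact that a blue bundle only arises as some $J^r$ that beat every $J' \in \mathcal{B}^r$ in $u$'s valuation at its matching time $r$.
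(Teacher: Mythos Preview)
Your potential $\Psi^t$ is \emph{not} non-decreasing, and the gap lies precisely in the ``unmatch'' step you flagged as an obstacle. Consider Case~1 of Algorithm~\ref{alg:new_subadditive_alg} in an iteration where some agent $u\neq i^t$ is currently matched to $Z_{i^t}^{t-1}$ (this occurs, for instance, after a previous Case~2.2 set $M_u = Z_{i^t}$; see Claim~\ref{cla:chains}). The only factors of $\Psi$ that change are those of $i^t$ and $u$, and the product of the two ratios is
\[
\frac{v_{i^t}(Z_{i^t}^{t-1})}{v_{i^t}(Z_{i^t}^{t-1})/(\alpha+1)}\cdot
\frac{v_u(Z_u^{t-1})/(\alpha+1)}{v_u(M_u^{t-1})}
\;=\;\frac{v_u(Z_u^{t-1})}{v_u(M_u^{t-1})}.
\]
By Lemma~\ref{lem:sub_ineq} (equivalently, by the branching condition that forced $u$ into Case~2--4 at its matching time), $v_u(M_u^{t-1})>(1/\alpha)\,v_u(Z_u^{t-1})$, so this ratio is $<\alpha<1$, and indeed it can be made arbitrarily small. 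Your proposed compensation, the ``Case~1 threshold $v_{i^t}(Z_{i^t}^{t-1})\geq \alpha\cdot v_{i^t}(M_u^{t-1})$'', is a statement about $v_{i^t}$, not about $v_u$; since $M_u^{t-1}=Z_{i^t}^{t-1}$, it reduces to the tautology $v_{i^t}(Z_{i^t}^{t-1})\geq\alpha\cdot v_{i^t}(Z_{i^t}^{t-1})$ and places no constraint whatsoever on $u$'s loss. The same issue arises in Case~4, where the drop for the unmatched agent $j^t$ is again governed by $v_{j^t}$ while the gain is governed by $v_{i^t}$; the two can be completely decoupled.

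The paper avoids this by tracking a potential that is insensitive to match/unmatch operations: it uses $\big((\alpha+1)/\alpha\big)^{\ell_t}\prod_i v_i(X_i^t)$, where $\ell_t$ counts chains of type~(ii). The quantities $X_i^t$ change only on \emph{shrink} operations, and each shrink is paired with the creation of a new blue bundle (hence $\ell_t$ increases), which is exactly what Lemma~\ref{lem:xi_lem} exploits. Then at termination, Lemma~\ref{lem:sub_ineq} together with the chain structure (Claim~\ref{cla:chains}) converts the bound on $\prod_i v_i(X_i^T)$ into a bound on $\prod_i v_i(M_i^T)$. To repair your argument you would need a potential whose per-agent factor does not collapse when that agent is unmatched; tracking $M_i^t$ directly cannot achieve this.
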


{{To prove the desired guarantee on the Nash welfare of the final allocation}, we establish several additional auxiliary lemmas.

{First, observe that for any agent $a$, the bundle $Z_a$ never becomes more valuable to $a$ after an iteration of the algorithm than before that iteration.}

\begin{restatable}{lemma}{lemzincr}
\label{lem:z_incr}
    It holds that $v_a(Z_a^{t}) \leq v_a(Z_a^{t-1})$ for any agent $a$ and $t > 0$.
\end{restatable}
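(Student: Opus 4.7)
The plan is to fix an agent $a$ and an iteration $t > 0$, and do a case analysis on how $Z_a$ is modified during the $t$-th iteration. By inspection of Algorithms~\ref{alg:new_subadditive_alg} and~\ref{alg:split_bundle}, the only operation that changes the array $Z$ is ``change $Z_j$ to $\ldots$'', which appears only inside the \textsc{split} procedure and always with $j = j^t$. Hence if $a \neq j^t$, or if the algorithm did not enter Case~2 during iteration $t$, then $Z_a^t = Z_a^{t-1}$ and there is nothing to prove. In what follows I assume $a = j^t$ and the algorithm entered Case~2.

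In subcases~2.1, 2.2, and~2.4, the new $Z_a$ is either $\{g^t\}$ or $J^t = Z_a^{t-1} - g^t$, each of which is a subset of $Z_a^{t-1}$, so monotonicity of $v_a$ yields $v_a(Z_a^t) \leq v_a(Z_a^{t-1})$ directly. The nontrivial subcases are~2.3, 2.5, and~2.6: there $Z_a^t \in \{R^t, S^t, R^t \setminus S^t\}$, and since all three are subsets of $R^t = (X_a^{t-1} \setminus Z_a^{t-1}) \cup \{g^t\}$, by monotonicity it suffices to show $v_a(R^t) \leq v_a(Z_a^{t-1})$.

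For that bound I would apply subadditivity to split $v_a(R^t) \leq v_a(X_a^{t-1} \setminus Z_a^{t-1}) + v_a(g^t)$. Claim~\ref{cla:basic}(i) bounds the first summand by $(\alpha/(\alpha+1)) \cdot v_a(X_a^{t-1})$. For the second summand, reaching subcase~2.3, 2.5, or~2.6 requires the if-condition in line~\ref{line:g_cond} to have failed, so $v_a(g^t) < (\alpha/(\alpha+1)) \cdot v_a(X_a^{t-1})$ as well. Combining gives $v_a(R^t) < (2\alpha/(\alpha+1)) \cdot v_a(X_a^{t-1})$. Finally, using $\alpha \leq 1/2$ yields $2\alpha/(\alpha+1) \leq 1/(\alpha+1)$, and Claim~\ref{cla:basic}(ii) provides $v_a(Z_a^{t-1}) \geq (1/(\alpha+1)) \cdot v_a(X_a^{t-1})$; chaining these three estimates closes the argument.

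The main thing to watch for is that all invocations of Claim~\ref{cla:basic} must be at time $t-1$ (which is consistent with how the algorithm evaluates its conditions at the start of iteration $t$), and that the standing assumption $\alpha \leq 1/2$ of this section is used precisely to absorb the factor of $2$ created by the subadditive split. The remainder of the argument is routine bookkeeping with subadditivity and monotonicity, and no extra structural facts beyond Claim~\ref{cla:basic} are needed.
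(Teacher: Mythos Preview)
Your proposal is correct and follows essentially the same approach as the paper's own proof: isolate Case~2 with $a=j^t$, handle subcases~2.1, 2.2, 2.4 by the subset observation, and for~2.3, 2.5, 2.6 bound $v_a(R^t)$ via subadditivity, Claim~\ref{cla:basic}(i), the failed line~\ref{line:g_cond} condition, the assumption $\alpha\le 1/2$, and Claim~\ref{cla:basic}(ii). The only cosmetic difference is that the paper chains the final inequalities as $2\alpha/(\alpha+1)\cdot v_a(X_a^{t-1}) \le 2\alpha\cdot v_a(Z_a^{t-1}) \le v_a(Z_a^{t-1})$, whereas you first simplify $2\alpha/(\alpha+1)\le 1/(\alpha+1)$ and then apply Claim~\ref{cla:basic}(ii); the two are equivalent.
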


\begin{proof}
    It only happens that $Z_a^{t} \neq Z_a^{t-1}$ for some {agent} $a$ if the algorithm goes into Case 2 during the $t$-th iteration {and $j^t = a$}. In Cases 2.1, 2.2, and 2.4, it holds that $Z_{j^t}^{t} \subseteq Z_{j^t}^{t-1}$. In Cases 2.3, 2.5, and 2.6, it holds that $Z_{j^t}^{t} \subseteq R^t = X_{j^t}^{t-1} \setminus (Z_{j^t}^{t-1}  -g^t )$. Moreover, $v_{j^t}(X_{j^t}^{t-1} \setminus Z_{j^t}^{t-1}) \leq (\alpha/(\alpha+1)) \cdot v_{j^t}(X_{j^t}^{t-1}) $ by Claim~\ref{cla:basic}(i) and $v_{j^t}(g^t) \leq (\alpha/(\alpha+1)) \cdot v_{j^t}(X_{j^t}^{t-1}) $ by the if condition in line \ref{line:g_cond} of Algorithm \ref{alg:split_bundle}.
    Observe that
    \begin{align*}
        v_{j^t}(Z_{j^t}^{t}) 
        &\leq v_{j^t}(X_{j^t}^{t-1} \setminus (Z_{j^t}^{t-1} - g^t)) && (\text{by monotonicity of } v_{j^t}) \\
        &= v_{j^t}((X_{j^t}^{t-1} \setminus Z_{j^t}^{t-1}) + g^t) && (\text{by simple algebra})  \\
        &\leq v_{j^t}(X_{j^t}^{t-1} \setminus Z_{j^t}^{t-1})  + v_{j^t}(g^t) && (\text{by subadditivity of }v_{j^t}) \\
        &\leq 2 \cdot (\alpha/(\alpha+1)) \cdot v_{j^t}(X_{j^t}^{t-1}) && (\text{by the above}) \\
        &\leq 2 \cdot \alpha \cdot v_{j^t}(Z_{j^t}^{t-1}) && (\text{by Claim \ref{cla:basic}(ii)} ) \\
        &\leq v_{j^t}(Z_{j^t}^{t-1}) && (\text{since } \alpha \leq 1/2)
    \end{align*}
    and hence the result follows.
\end{proof}

{Second, observe} that after $t$ iterations, any agent $i$ contained in a chain of type (ii) or a cycle of type (iv) is better off by at least a factor of $(1/\alpha)$ compared to the subset $Z_i^t$ of her initial allocation.}

\begin{restatable}{lemma}{lemsubineq}
    \label{lem:sub_ineq}
    After $t$ iterations, for every matched agent $i$ with $M_i^t \neq Z_i^t$, it holds that $v_i(M_i^t) > (1/\alpha) \cdot v_i(Z_i^t)$.
\end{restatable}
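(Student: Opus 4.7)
The plan is to trace back to the iteration $r$ when $i$ was last matched (her \emph{matching time}), show that at that moment $v_i(M_i^r) > (1/\alpha) \cdot v_i(Z_i^{r-1})$, and then propagate this inequality forward to time $t$ using Claim~\ref{cla:z_const}(iii) (which says $M_i$ doesn't change after the matching time) and Lemma~\ref{lem:z_incr} (which says $v_i(Z_i)$ is non-increasing over iterations).

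First I would let $r$ be the matching time of $i$, and rule out the possibility that $i$ was matched in Case~1 at iteration $r$. If she were, we would have $M_i^r = Z_i^r$, and then Claim~\ref{cla:z_const}(iv) applied with $a = i$ would give $Z_i^r = Z_i^t$, hence $M_i^t = M_i^r = Z_i^r = Z_i^t$, contradicting the hypothesis $M_i^t \neq Z_i^t$. So during iteration $r$, agent $i$ was matched either as $i^r$ (in Cases~2, 3, or 4 of Algorithm~\ref{alg:new_subadditive_alg}) to the bundle $J^r$, or as $k^r$ (in Cases~2.4 or 2.6) to the bundle $S^r$.

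Next I would handle these two subcases:
\begin{itemize}
    \item If $i = i^r$ and $M_i^r = J^r$, then since Case~1 was not taken during iteration $r$, the if condition in line~\ref{line:self_cond} failed, so there exists $J' \in \mathcal{B}^r$ with $v_i(Z_i^{r-1}) < \alpha \cdot v_i(J')$. Since $J^r \in \argmax_{J \in \mathcal{B}^r} v_i(J)$, this gives $v_i(M_i^r) = v_i(J^r) > (1/\alpha) \cdot v_i(Z_i^{r-1})$.
    \item If $i = k^r$ and $M_i^r = S^r$, then the definition of $\mathcal{K}_{S^r}^r$ and the fact that $k^r \in \mathcal{K}_{S^r}^r$ directly yield $v_i(Z_i^{r-1}) < \alpha \cdot v_i(S^r)$, so again $v_i(M_i^r) > (1/\alpha) \cdot v_i(Z_i^{r-1})$.
\end{itemize}

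Finally, I would combine these ingredients: by Claim~\ref{cla:z_const}(iii), $M_i^t = M_i^r$, and by repeated application of Lemma~\ref{lem:z_incr} we get $v_i(Z_i^t) \leq v_i(Z_i^{r-1})$. Chaining the inequalities then gives $v_i(M_i^t) = v_i(M_i^r) > (1/\alpha) \cdot v_i(Z_i^{r-1}) \geq (1/\alpha) \cdot v_i(Z_i^t)$, as required. There is no real obstacle here; the only care needed is to correctly identify the two ways $i$ could have been matched (as $i^r$ versus as $k^r$) and to be precise about the time indices when invoking Claim~\ref{cla:z_const} and Lemma~\ref{lem:z_incr}, particularly since $i$ may have been matched and unmatched several times before reaching her final matching time $r$.
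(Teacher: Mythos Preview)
Your proposal is correct and follows essentially the same argument as the paper's proof: identify the matching time $r$, rule out Case~1 so that $i$ was matched either as $i^r$ to $J^r$ or as $k^r$ to $S^r$, derive $v_i(M_i^r) > (1/\alpha)\cdot v_i(Z_i^{r-1})$ from the failed condition in line~\ref{line:self_cond} or from membership in $\mathcal{K}_{S^r}^r$, and then conclude via Claim~\ref{cla:z_const}(iii) and Lemma~\ref{lem:z_incr}. Your treatment is in fact slightly more explicit than the paper's, which asserts the case split directly without spelling out the use of Claim~\ref{cla:z_const}(iv) to exclude Case~1.
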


\begin{proof}[Proof of Lemma~\ref{lem:sub_ineq}]    
    Let $r$ be the matching time of $i$ (before $t$). Since $i$ is matched during the $r$-th iteration of the algorithm and $M_i^t \neq Z_i^t$, it must be the case that either $i^r=i$ and $M_i^r = J^r$ or $k^r = i$ and $M_i^r = S^r$. In the first case, by the if condition in line \ref{line:self_cond} of Algorithm~\ref{alg:new_subadditive_alg}, there is some $J \in \mathcal{B}^r$ with $v_i(Z_i^{r-1}) < \alpha \cdot v_i(J)$, and hence, 
    \begin{align*}
     v_i(M_i^t) &= v_i(J^r) && (\text{since } M_i^r = M_i^t)  \\
     &\geq v_i(J) && (\textstyle{\text{since }J^r \in \argmax_{J \in \mathcal{B}^r} v_i(J)}) \\
     &> (1/\alpha) \cdot v_i(Z_i^{r-1}) && (\text{by the choice of }J) \\
     &\geq (1/\alpha) \cdot v_i(Z_i^t)  && (\text{by Lemma \ref{lem:z_incr}})
     \intertext{and in the second case, it holds that}
        v_i(M_i^t) &= v_i(S^r) && (\text{since } M_i^r = M_i^t)\\  
        &> (1/\alpha) \cdot v_i(Z_i^{r-1}) && (\text{since }i \in \mathcal{K}_{S^t}^t) \\
        &\geq (1/\alpha) \cdot v_i(Z_i^t) && (\text{by Lemma \ref{lem:z_incr}})
    \end{align*}
    This proves the lemma.
\end{proof}

{Third,} let $\ell_t$ be the number of chains of type (ii) after $t$ iterations. Notice that because of the shrink operations in the algorithm, the product $\prod_{i\in \agents}v_i(X_i^t)$ might decrease after some iterations and so Theorem~\ref{thm:subadditive-partial-better} does not follow directly from Claim~\ref{cla:basic}(ii) analogously to the proof of Theorem~\ref{thm:additive-partial}. However, we consider a different potential function, namely $((\alpha+1)/\alpha)^{\ell_t}  \cdot \prod_{i \in \agents} v_i(X_i^t)$. {Observe that} this value does not decrease {because each time some $X_i$ is shrunk, the algorithm creates a new chain of type (ii) which compensates for the loss in the potential function by Lemma~\ref{lem:sub_ineq}.}

\begin{restatable}
{lemma}{lemxilem}
    It holds that $$\Big(\frac{\alpha+1}{\alpha}\Big)^{\ell_t}   \prod_{i \in \agents} v_i(X_i^t) \geq \Big(\frac{\alpha+1}{\alpha}\Big)^{\ell_{t-1}}   \prod_{i \in \agents} v_i(X_i^{t-1})$$
    {for any $1 \leq t \leq T+1$.}
    \label{lem:xi_lem}
\end{restatable}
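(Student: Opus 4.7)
The plan is to prove the inequality by induction on $t$, using case analysis on the operation performed during iteration $t$ of Algorithm~\ref{alg:new_subadditive_alg}. Denote $\rho_t := \prod_{i \in \agents} v_i(X_i^t) / \prod_{i \in \agents} v_i(X_i^{t-1})$ and $\Delta_t := \ell_t - \ell_{t-1}$; the goal in each case is to verify $((\alpha+1)/\alpha)^{\Delta_t} \cdot \rho_t \geq 1$. The high-level intuition is that each shrink of $X_j$ incurs a multiplicative loss in $\prod_i v_i(X_i^t)$, but the shrink simultaneously releases a blue bundle that attaches to a previously-unmatched agent and creates a new type-(ii) chain whose presence contributes a compensating factor $(\alpha+1)/\alpha$.

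In the three cases with no shrink---Case 1, Case 2.2, and Case 4---we have $\rho_t = 1$, and a direct check using the chain/cycle structure from Claim~\ref{cla:chains} shows that the matching is only rewired without creating or destroying type-(ii) chains, so $\Delta_t = 0$ and the inequality is trivial. In each of the single-shrink cases (2.1, 2.3, 2.5, and 3), the governing if-condition---or Claim~\ref{cla:basic}(ii) in Case 3---yields $v_j(X_j^t) \geq (\alpha/(\alpha+1)) \cdot v_j(X_j^{t-1})$; the bundle released by the shrink becomes blue and is matched to the previously-unmatched $i$, creating exactly one new type-(ii) chain ($\Delta_t = 1$). Thus the verification reduces to $((\alpha+1)/\alpha) \cdot (\alpha/(\alpha+1)) = 1$.

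The double-shrink cases 2.4 and 2.6 require more care since both $X_j$ and $X_k$ are shrunk. In Case 2.4, the condition at line~\ref{line:l_cond} combined with Claim~\ref{cla:basic}(ii) gives $\rho_t \geq \alpha \cdot (1/(\alpha+1))$, and exactly one new type-(ii) chain is created---by $k$, whose type-(iii) self-cycle becomes a type-(ii) chain of length one via the assignment $M_k = S$, with $Z_k = X_k$ still holding after the shrink---yielding the bound $((\alpha+1)/\alpha) \cdot (\alpha/(\alpha+1)) = 1$. In Case 2.6, both $J$ and $S$ become blue bundles that are matched to $i$ and $k$ respectively, creating two new type-(ii) chains ($\Delta_t = 2$); using subadditivity of $v_j$ together with the negations of the conditions in lines~\ref{line:l_cond} and~\ref{line:s_cond}, we derive $v_j(R \setminus S) > ((1-\alpha-\alpha^2)/(\alpha+1)) \cdot v_j(X_j^{t-1})$, and the final bound $((\alpha+1)/\alpha)^2 \cdot \rho_t \geq (1-\alpha-\alpha^2)/\alpha^2 \geq 1$ holds precisely when $\alpha \leq 1/2$ (equivalently, when $2\alpha^2 + \alpha \leq 1$).

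The main obstacle will be carefully tracking how the chain/cycle structure evolves in each case and justifying the claimed value of $\Delta_t$. Since the algorithm may rewire the matching non-trivially---unmatching a predecessor, inserting $i$ before $j$, splitting a cycle into a chain, or converting a self-cycle into a length-one type-(ii) chain---one must check that no existing type-(ii) chain is inadvertently destroyed and that the newly created type-(ii) chains are valid in the sense that their heads satisfy $Z = X$, as required by Claim~\ref{cla:chains}. In particular, it must be verified that in every case that shrinks $X_j$, the agent $j$ (after the shrink) has $Z_j^t = X_j^t$, ensuring the resulting chain structure conforms to the classification in Claim~\ref{cla:chains}.
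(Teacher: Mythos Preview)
Your proposal is correct and follows essentially the same case analysis as the paper, with matching bounds in every case (including the crucial use of $\alpha\leq 1/2$ in Case~2.6). The one simplification you might adopt from the paper is its observation that $\ell_t$ equals the number of blue bundles in $M^t$: since blue bundles are only ever removed in Case~4 (where one is simultaneously created), this lets you read off $\Delta_t$ immediately as the number of newly-created blue bundles, bypassing the chain/cycle rewiring analysis you flag as the ``main obstacle.''
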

\begin{proof}
    Let $\xi = (\alpha+1)/\alpha$.
    Observe that $v_{j^t}(Z_{j^t}^{t-1}) \geq (1/(\alpha+1)) \cdot v_{j^t}(X_{j^t}^{t-1})$ and $v_{k^t}(Z_{k^t}^{t-1}) \geq (1/(\alpha+1)) \cdot v_{k^t}(X_{k^t}^{t-1})$ by Claim~\ref{cla:basic}(ii). {Consider the $t$-th iteration of the algorithm. Note that the number of chains of type (ii) is precisely the number of the blue bundles $M_i^t$. Moreover, the only time a blue bundle $M_i^t$ is removed or changed is in Case 4, where $M_{j^t}^t = \bot$ is no longer a blue bundle. But then, it is also the case that $M_{i^t}^t$ is a new blue bundle, and so the number of blue bundles does not change. Hence, in the remaining cases, it is enough to consider the bundles that become blue during the $t$-th iteration.}

    In Case 3, {the bundle $J^t$ becomes blue which means that there is one new chain of type (ii), i.e., $\ell_t = \ell_{t-1} + 1$, and} it holds that
    \begin{align*}
     \xi^{\ell_{t}} \cdot  v_{j^t}(X_{j^t}^{t})  
     &= \xi  \cdot \xi^{\ell_{t}-1}  \cdot v_{j^t}(Z_{j^t}^{t-1})  \\
     &\geq \xi  \cdot \xi^{\ell_{t-1}}  \cdot  \Big( \frac{1}{\alpha+1} \Big)  \cdot v_{j^t}(X_{j^t}^{t-1}) \\
     &=  \Big( \frac{1}{\alpha} \Big) \cdot  \xi^{\ell_{t-1}}  \cdot v_{j^t}(X_{j^t}^{t-1}) \\
     &> \xi^{\ell_{t-1}}  \cdot v_{j^t}(X_{j^t}^{t-1}).
     \intertext{In Case 2.1, {the bundle $J^t$ becomes blue which means that there is one new chain of type (ii), i.e., $\ell_t = \ell_{t-1} + 1$, and} since $v_{j^t}(g) \geq (\alpha/(\alpha+1)) \cdot v_{j^t}(X_{j^t}^{t-1})$ by the if condition in line \ref{line:g_cond}, it holds that}
     \xi^{\ell_{t}}  \cdot v_{j^t}(X_{j^t}^{t})  
     &= \xi  \cdot \xi^{\ell_{t}-1} \cdot  v_{j^t}(g^t) \\
     &\geq \xi  \cdot \xi^{\ell_{t-1}}  \cdot \Big(\frac{\alpha}{\alpha+1}\Big)  \cdot v_{j^t}(X_{j^t}^{t-1}) \\
     &= \xi^{\ell_{t-1}}  \cdot v_{j^t}(X_{j^t}^{t-1}).
     \intertext{In Case 2.3, {the bundle $J^t$ becomes blue which means that there is one new chain of type (ii), i.e., $\ell_t = \ell_{t-1} + 1$, and} since $v_{j^t}({R}^t) \geq (\alpha/(\alpha+1))\cdot v_{j^t}(X_{j^t}^{t-1})$ by the if condition in line \ref{line:l_perp_cond}, it holds that}
     \xi^{\ell_{t}} \cdot v_{j^t}(X_{j^t}^{t}) 
     &= \xi \cdot \xi^{\ell_{t}-1}  \cdot v_{j^t}({R}^t)\\
     &\geq \xi \cdot \xi^{\ell_{t-1}} \cdot \Big(\frac{\alpha}{\alpha+1}\Big) \cdot v_{j^t}(X_{j^t}^{t-1}) \\
     &= \xi^{\ell_{t-1}}  \cdot v_{j^t}(X_{j^t}^{t-1}).
     \intertext{In Case 2.4, {the bundle $S^t$ becomes blue which means that there is one new chain of type (ii), i.e., $\ell_t = \ell_{t-1} + 1$, and} since $v_{j^t}(J^t) \geq \alpha \cdot v_{j^t}(X_{j^t}^{t-1})$ by the if condition in line \ref{line:l_cond}, it holds that}
     \xi^{\ell_{t}} \cdot v_{j^t}(X_{j^t}^{t})  \cdot v_{k^t}(X_{k^t}^{t}) 
     &= \xi  \cdot \xi^{\ell_{t}-1}  \cdot v_{j^t}(J^t)  \cdot v_{k^t}(Z_{k^t}^{t-1}) \\
     &\geq \xi  \cdot \xi^{\ell_{t-1}} \cdot \alpha   \cdot v_{j^t}(X_{j^t}^{t-1}) \cdot \Big(\frac{1}{\alpha+1}\Big) \cdot v_{k^t}(X_{k^t}^{t-1}) \\
     &= \xi^{\ell_{t-1}}  \cdot v_{j^t}(X_{j^t}^{t-1}) \cdot v_{k^t}(X_{k^t}^{t-1}).
     \intertext{In Case 2.5, {the bundle $J^t$ becomes blue which means that there is one new chain of type (ii), i.e., $\ell_t = \ell_{t-1} + 1$, and} since $v_{j^t}(S^t) \geq (\alpha/(\alpha+1)) \cdot v_{j^t}(X_{j^t}^{t-1})$ by the if condition in line \ref{line:s_cond}, it holds that}
     \xi^{\ell_{t}} \cdot v_{j^t}(X_{j^t}^{t}) 
     &= \xi  \cdot \xi^{\ell_{t}-1}  \cdot v_{j^t}(S^t) \\
     &\geq \xi \cdot \xi^{\ell_{t-1}}  \cdot \Big(\frac{\alpha}{\alpha+1}\Big)  \cdot v_{j^t}(X_{j^t}^{t-1}) \\
     &= \xi^{\ell_{t-1}}  \cdot v_{j^t}(X_{j^t}^{t-1}).
     \intertext{In Case 2.6, {the bundles $J^t$ and $S^t$ become blue which means that there are two new chains of type (ii), i.e., $\ell_t = \ell_{t-1} + 2$, and} by subadditivity of $v_{j^t}$ and the if conditions in lines \ref{line:l_cond} and \ref{line:s_cond}, it holds that}
     v_{j^t}(R^t \setminus S^t) 
     &= v_{j^t}(X_{j^t}^{t-1} \setminus (S^t \cup J^t)) \\
     &\geq v_{j^t}(X_{j^t}^{t-1}) - v_{j^t}(S^t) - v_{j^t}(J^t)\\
     &\geq \Big(1-\frac{\alpha}{\alpha+1}-\alpha\Big)\cdot v_{j^t}(X_{j^t}^{t-1})  \\
     &\geq \Big(\frac{\alpha^2}{\alpha+1}\Big) \cdot v_{j^t}(X_{j^t}^{t-1}) 
     \intertext{where the last inequality follows by the assumption that $\alpha \leq 1/2$. Thus,}
     \xi^{\ell_{t}} \cdot v_{j^t}(X_{j^t}^{t}) \cdot v_{k^t}(X_{k^t}^{t}) 
     &= \xi^2 \cdot \xi^{\ell_{t}-2} \cdot v_{j^t}(R^t \setminus S^t) \cdot v_{k^t}(Z_{k^t}^{t-1}) \\
     &\geq \xi^2 \cdot \xi^{\ell_{t-1}} \cdot \Big(\frac{\alpha^2}{\alpha+1}\Big)  \cdot v_{j^t}(X_{j^t}^{t-1}) \cdot \Big(\frac{1}{\alpha+1}\Big)  \cdot v_{k^t}(X_{k^t}^{t-1})  \\
     &= \xi^{\ell_{t-1}} \cdot v_{j^t}(X_{j^t}^{t-1}) \cdot v_{k^t}(X_{k^t}^{t-1}).
    \end{align*}
    The result follows since in the remaining cases $X_i^{t+1} = X_i^t$ for all $i$.
\end{proof}

{We are now ready to present the proof of Lemma~\ref{lem:subadd_mnw}.}

\begin{proof}[Proof of Lemma~\ref{lem:subadd_mnw}]
Consider the state of the algorithm after $T$ iterations. Observe that there are no chains of type (i) since all the agents are matched.
    {F}or any chain $i_1, \ldots, i_k$ of type (ii), it must be that $$v_{i_1}(M_{i_1}^T) > (1/\alpha) \cdot v_{i_1}(Z_{i_1}^T) = (1/\alpha) \cdot v_{i_1}(X_{i_1}^T)$$ by Lemma~\ref{lem:sub_ineq}, while for $1 < j \leq \ell$, it holds that $$v_{i_j}(M_{i_j}^T) > (1/\alpha) \cdot v_{i_j}(Z_{i_j}^T) \geq v_{i_j}(Z_{i_j}^T) \geq (1/(\alpha+1)) \cdot v_{i_j}(X_{i_j}^T)$$ by Lemma~\ref{lem:sub_ineq} and Claim~\ref{cla:basic}(ii). Moreover, for any cycle $i_1$ of type (iii), it must be that $$v_{i_1}(M_{i_1}^T) 
     =  v_{i_1}(Z_{i_1}^T) \geq (1/(\alpha+1)) \cdot v_{i_1}(X_{i_1}^T)$$ by Claim~\ref{cla:basic}(ii), and for any cycle $i_1, \ldots, i_\ell$ of type (iv), $M_{i_j}^T \neq Z_{i_j}^T$ for all $1 \leq j \leq \ell$, and so $$v_{i_j}(M_{i_j}^T) \geq (1/\alpha) \cdot v_{i_j}(Z_{i_{j}}^T) \geq (1/(\alpha+1)) \cdot v_{i_j}(Z_{i_{j}}^T)$$ by Lemma \ref{lem:sub_ineq}.
    It follows that
    \begin{align*}
       \prod_{i \in \agents} {v_i(M_i)} 
       &\geq \Big(\frac{1}{\alpha+1}\Big)^{n-\ell_T}  \Big(\frac{1}{\alpha}\Big) ^{\ell_T}  \prod_{i \in \agents} v_i(X_i^T) && (\text{by the above}) \\
       &=  \Big(\frac{1}{\alpha+1}\Big)^{n}  \Big(\frac{\alpha+1}{\alpha}\Big)^{\ell_T}  \prod_{i \in \agents} {v_i(X_i^T)} &&(\text{by simple algebra}) \\
       &\geq \Big(\frac{1}{\alpha+1}\Big)^{n}  \Big(\frac{\alpha+1}{\alpha}\Big)^{\ell_0}  \prod_{i \in \agents }v_i(X_i^0)  && (\text{by Lemma~\ref{lem:xi_lem}}) \\
       &= \Big(\frac{1}{\alpha+1}\Big)^{n}  \prod_{i \in \agents} v_i(X_i) && (\text{since } \ell_0 = 0)
    \end{align*}
    which proves the result.
\end{proof}

The main theorem follows immediately from {the two key lemmas}.

\begin{proof}[Proof of Theorem~\ref{thm:subadditive-partial-better}]
{Assume that the algorithm is given the Nash welfare maximizing allocation}
{as input. Then, }
the allocation $M^T$ returned by the algorithm is $\alpha$-EFX by Lemma~\ref{lem:feas} and $\frac{1}{\alpha+1}$-MNW by Lemma~\ref{lem:subadd_mnw}.
\end{proof}

\subsection{Complete Allocations}\label{sec:subadditive-complete}

The main result of this section is the following theorem.

\thmsubadditivecomplete*

To prove Theorem~\ref{thm:complete_subadditive} using Theorem~\ref{thm:subadditive-partial-better}, we use the swapping with singletons technique from \cite[Section 4.1]{GHLVV22}. 
{Using this technique, it can be shown} that any partial allocation can be turned into a $1$-separated allocation while weakly improving the values for all agents. 
This is captured in the following lemma.

\begin{restatable}
{lemma}{lemsubadditivesep}
    For any partial $\alpha$-EFX allocation $(Z_1, \ldots, Z_n)$, we can find another partial $\alpha$-EFX allocation $(Z_1', \ldots, Z_n')$ satisfying $1$-separation with $v_i(Z_i') \geq v_i(Z_i)$ for all agents $i$.
\label{lem:subadditive_sep}
\end{restatable}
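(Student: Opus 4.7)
The plan is to iteratively perform single-item swaps starting from the given partial $\alpha$-EFX allocation. While there exists an agent $i$ and an unallocated item $x$ with $v_i(\{x\}) > v_i(Z_i)$, replace the bundle $Z_i$ with the singleton $\{x\}$, returning all items previously in $Z_i$ to the unallocated pool, and repeat. Upon termination, every agent $i$ will satisfy $v_i(Z_i') \geq v_i(\{x\})$ for every unallocated item $x$, which is precisely the $1$-separation condition.

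For termination, each swap strictly increases $v_i(Z_i)$ for the acting agent $i$ while leaving $v_j(Z_j)$ unchanged for every $j \neq i$. Hence $\sum_j v_j(Z_j)$ strictly increases along the process. Since each $Z_j$ is a subset of the finite set $[m]$, this quantity takes only finitely many possible values, so the procedure terminates after finitely many steps. As an immediate corollary, $v_i(Z_i') \geq v_i(Z_i)$ for every agent $i$.

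The main step is to verify that $\alpha$-EFX is preserved by a single swap of $Z_i$ into $\{x\}$. Envy conditions between pairs of agents $j, k \neq i$ are untouched, since neither $Z_j$ nor $Z_k$ changes. For agent $i$'s envy toward any other agent $j$, the previous $\alpha$-EFX property gives $v_i(Z_i) \geq \alpha \cdot v_i(Z_j - y)$ for all $y \in Z_j$, and since $v_i(\{x\}) > v_i(Z_i)$ and $Z_j$ is unchanged, the new envy condition follows a fortiori. For any other agent $j$'s envy toward $i$, the new bundle $\{x\}$ contains a single item, so for the unique $y = x$ we have $v_j(\{x\} - y) = v_j(\emptyset) = 0$, trivially satisfying the $\alpha$-EFX condition. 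By induction along the sequence of swaps, the final allocation remains $\alpha$-EFX.

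The main subtlety, though mild, is that returning the items of $Z_i$ to the unallocated pool may enable new swaps, possibly by other agents whose bundles were previously valued above every unallocated item; but this is precisely what the iterative loop accounts for, and termination is guaranteed by the monotone potential described above.
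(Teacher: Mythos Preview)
Your proof is correct and follows essentially the same swapping-with-singletons approach as the paper (Algorithm~\ref{alg:sing_swaps}): repeatedly replace an agent's bundle by a preferred unallocated singleton until no such swap exists, then verify $\alpha$-EFX is preserved by the same three-case analysis. The only notable difference is the termination argument: you use the strictly increasing potential $\sum_j v_j(Z_j)$ over a finite range, whereas the paper observes that after an agent's first swap her bundle is a singleton, and strict improvement rules out repeating a singleton, giving at most $m+1$ swaps per agent and hence a polynomial bound; your argument suffices for the existence claim in the lemma but does not yield the polynomial running time the paper uses elsewhere.
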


{The proof of Lemma~\ref{lem:subadditive_sep} appears in Appendix~\ref{sec:sep_proofs}.
We now present the proof of Theorem~\ref{thm:complete_subadditive}.}

\begin{proof}[Proof of Theorem~\ref{thm:complete_subadditive}]
By Theorem~\ref{thm:subadditive-partial-better}, there exists a partial allocation $(Z_1, \ldots, Z_n)$ that is $\alpha$-EFX and $\frac{1}{\alpha+1}$-MNW. 
Then, by Lemma~\ref{lem:subadditive_sep}, the above existence implies the existence of a partial allocation $(Z_1', \ldots, Z_n')$ that is $\alpha$-EFX, $1$-separated, and $\frac{1}{1+\alpha}$-MNW, where the NW guarantee is implied by $\nw(Z') \geq \nw(Z)$. 
The result now follows by Lemma~\ref{lem:make_complete}.
\end{proof}

\section{Impossibility Results}\label{sec:impossibility}

In {the final} section, we provide a couple of impossibility results. 

First, we give an impossibility result for additive valuations. {The proof uses} an instance that generalizes the instance used {by} \citet*{CGH19}.

\thmlowerbound*

\begin{proof}
{{Consider an instance with} $m=2n-1$ items. Denote the first $n-1$ items {by} $a_1, \ldots, a_{n-1}$ and the next $n$ items {by} $b_1, \ldots, b_n$. Define the valuation functions in the following way:
\begin{align*}
v_i(a_j) &= 1/\alpha + \varepsilon && \textrm{for all } j \\
v_i(b_i) &= 1 \\
v_i(b_j) &= 0 && \textrm{for } j \neq i
\end{align*}
{for every agent $i$.}

Consider the allocation $X$ defined by $X_i = \{a_i, b_i\}$ for $i < n$ and $X_n = \{ b_n \}$. The Nash welfare of $X$ is $(1+1/\alpha+\varepsilon)^{(n-1)/n}$. Note that $X$ is not $\alpha$-EFX because agent $n$ envies agent $1$ even after removing $b_1$ from $X_1$, i.e., it holds that $\alpha \cdot v_n(X_1 - b_1) > v_n(X_n)$.

Now, consider any partial allocation $Z$ that is $\alpha$-EFX. Since there are only $n-1$ items denoted by $a$, there must be some agent $i$ who only gets items denoted by $b$ in $Z_i$. Consider any other agent $j$ and suppose that $Z_j$ includes both the item $b_j$ and some other item $a_k$. Note that if that is the case, then
\begin{align*}
v_i(Z_i) \leq 
v_i(\{b_1, \ldots, b_n\}) = 1 < 1 + \alpha \cdot \varepsilon = \alpha \cdot v_i(a_k) \leq \alpha \cdot v_i(\{a_k, b_j\} - b_j)    \leq  \alpha \cdot v_i(Z_j-b_j)
\end{align*}
which contradicts the assumption that $Z$ is $\alpha$-EFX. Hence, no $Z_j$ includes both $b_j$ and some $a_k$.
Therefore, $
    \nw(Z) = v_i(Z_i)^{1/n} \cdot \prod_{a \in \agents-i} v_a(Z_a)^{1/n} \leq \prod_{i \in \agents-i} (1/\alpha + \varepsilon)^{1/n} = (1/\alpha+\varepsilon)^{(n-1)/n}
$. The result follows by taking the limit as $\varepsilon \to 0$ and $n \to \infty$.}
\end{proof}

Now we give an impossibility result for general monotone valuations. 

\begin{restatable}[Monotone]{theorem}{thmmonotone}
\label{thm:monotone-upper}
For every $\alpha > 0$ and $\beta > 0$, there exists an instance with general monotone valuations that admits no allocation (even partial) that is $\alpha$-EFX and $\beta$-MNW.
\end{restatable}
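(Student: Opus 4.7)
The plan is to exhibit, for any $\alpha, \beta > 0$, a single explicit instance with two agents and three items having general monotone valuations on which no allocation (partial or complete) is simultaneously $\alpha$-EFX and $\beta$-MNW. The key idea is a ``catalyst'' item $g$ that is worth only $1$ on its own but blows up the value of any bundle it is joined to, creating a sharp tension: maximizing Nash welfare requires bundling $g$ with another item to collect a huge value $V$, but $\alpha$-EFX forbids precisely this bundling because of the agent who does not hold $g$.

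Concretely, I would take $n = 2$ agents and items $\{g, h_1, h_2\}$, and give both agents the common valuation $v(\{g\}) = 1$, $v(S) = V$ whenever $g \in S$ and $|S| \geq 2$, $v(S) = \epsilon$ whenever $S$ is a nonempty subset of $\{h_1, h_2\}$, and $v(\emptyset) = 0$. The parameters are chosen as $\epsilon \in (0, \alpha)$ and $V > 1/\beta^2$, both possible since $\alpha, \beta > 0$. Monotonicity is a routine check on all pairs $S \subsetneq T$ (any item addition weakly increases value; the only jump is caused by adding $g$). One then computes that the allocation $X_1 = \{g, h_1\}$, $X_2 = \{h_2\}$ achieves $\nw = \sqrt{V\epsilon}$, and by enumeration this is the maximum.

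The main step is to show that every $\alpha$-EFX allocation has Nash welfare at most $\sqrt{\epsilon}$. Suppose some agent $i$ holds a bundle containing $g$ together with at least one $h \in \{h_1, h_2\}$. Then the other agent $j$ does not hold $g$, so $v(X_j) \leq \epsilon$; but removing $h$ from $X_i$ leaves a bundle containing $g$, which $j$ values at least $v(\{g\}) = 1$. The $\alpha$-EFX condition then forces $\epsilon \geq \alpha$, contradicting the choice $\epsilon < \alpha$. Hence in any $\alpha$-EFX allocation the agent holding $g$ (if any) must hold exactly $\{g\}$, and in every remaining case (including partial allocations where $g$ is unallocated) we have $\nw \leq \sqrt{1 \cdot \epsilon} = \sqrt{\epsilon}$. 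Combining, any $\alpha$-EFX allocation achieves at most $\sqrt{\epsilon}/\sqrt{V\epsilon} = 1/\sqrt{V} < \beta$ of the MNW, completing the impossibility.

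The principal obstacle is the case analysis in the main step: one needs to rule out every partial $\alpha$-EFX allocation, not just the obvious candidates. However, the valuation is engineered so that any value above $\epsilon$ requires possession of the single catalyst item $g$, and even the singleton $\{g\}$ is already worth enough to trigger an $\alpha$-EFX violation for the $g$-less agent as soon as $g$ is co-bundled with anything else; this makes the enumeration short and the argument clean.
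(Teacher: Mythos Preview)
Your proof is correct and takes a genuinely different construction from the paper's. Both arguments use two identical agents and a monotone valuation with a single large jump, so that the MNW allocation must straddle the jump while any $\alpha$-EFX allocation is confined below it. The paper realizes this with five \emph{identical} items and a step function of cardinalities, $v(1)=v(2)=1$, $v(3)=\sqrt{N}$, $v(4)=v(5)=N$; the agent with at most two items cannot tolerate the other holding four, so both are capped at three and the NW ratio collapses like $N^{-1/4}$. Your version instead uses three items with a distinguished ``catalyst'' $g$: the whole value $V$ is unlocked only when $g$ is bundled with some $h_i$, and the $\alpha$-EFX condition (checked by removing that $h_i$) forbids exactly this bundling once $\epsilon<\alpha$. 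Your construction is smaller and the tension is more transparent, while the paper's has the mild aesthetic advantage that items are interchangeable and the valuation depends only on cardinality. One cosmetic point: your parameter choices implicitly use $\alpha\le 1$ and $\beta\le 1$ (so that $\epsilon<1<V$ and monotonicity holds); this is harmless since for $\alpha>1$ or $\beta>1$ the statement reduces to the $\alpha=1$, $\beta=1$ case, but it is worth saying explicitly.
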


\begin{proof}{ Let $N > 1$.
    Consider an instance with two identical agents and five identical items. Denote by $v(k)$ the value for either of the agents of getting $k$ items. Let $v(0) = 0$, $v(1) = 1$, $v(2) = 1$, $v(3) = \sqrt{N}$, $v(4)=N$, and $v(5) = N$. One of the MNW allocations is $X$ given by $X_1 = \{1,2,3,4\}$ and $X_2 = \{5\}$ with $\nw(X) = \sqrt{N \cdot 1} = \sqrt{N}$.

    Let $Z$ be any partial allocation that is $2/\sqrt{N}$-EFX.
    Note that there must be an agent who gets at most $2$ items in $Z$. Without loss of generality, assume that $|Z_1| \leq 2$. Then, $v_1(Z_1) \leq 1$. Suppose that $|Z_2| \geq 4$ and let $g \in Z_2$. Then, $(2/\sqrt{N} ) \cdot v_1(Z_2 - g) \geq (2/\sqrt{N}) \cdot \sqrt{N} > v_1(Z_1)$ which contradicts the assumption that $Z$ is $2/\sqrt{N}$-EFX. Therefore, $|Z_2| \leq 3$, and so $NW(Z) \leq \sqrt{1 \cdot \sqrt{N}}$ which implies $\nw(Z) / \nw(X) \leq 1/\sqrt[4]{N}$. The result follows by taking $N \to \infty$.}
\end{proof}

%%%%%%%%%%%%%%%%%%%%%%%%%%%%%%%%%%%%%%%%%
%% ignored if this is the main tex file
%% removes headers if this file is included from AAAI submission
\ifdefined\compileAAAI\else
%%%%%%%%%%%%%%%%%%%%%%%%%%%%%%%%%%%%%%%%%

%\bibliographystyle{sorted}
%\bibliographystyle{ACM-Reference-Format}
\bibliographystyle{unsrtnat}
\bibliography{bib}

\begin{thebibliography}{43}
\providecommand{\natexlab}[1]{#1}
\providecommand{\url}[1]{\texttt{#1}}
\expandafter\ifx\csname urlstyle\endcsname\relax
  \providecommand{\doi}[1]{doi: #1}\else
  \providecommand{\doi}{doi: \begingroup \urlstyle{rm}\Url}\fi

\bibitem[Steinhaus(1948)]{S1948}
Hugo Steinhaus.
\newblock The problem of fair division.
\newblock \emph{Econometrica}, 16:\penalty0 101--104, 1948.

\bibitem[Lehmann et~al.(2006)Lehmann, Lehmann, and Nisan]{LLN06}
Benny Lehmann, Daniel Lehmann, and Noam Nisan.
\newblock Combinatorial auctions with decreasing marginal utilities.
\newblock \emph{Games Econ. Behav.}, 55\penalty0 (2):\penalty0 270--296, 2006.

\bibitem[Amanatidis et~al.(2022)Amanatidis, Birmpas, Filos{-}Ratsikas, and Voudouris]{ABFV22}
Georgios Amanatidis, Georgios Birmpas, Aris Filos{-}Ratsikas, and Alexandros~A. Voudouris.
\newblock Fair division of indivisible goods: {A} survey.
\newblock In \emph{{IJCAI}}, pages 5385--5393. ijcai.org, 2022.

\bibitem[Foley(1966)]{F66}
Duncan~Karl Foley.
\newblock \emph{Resource allocation and the public sector}.
\newblock Yale University, 1966.

\bibitem[Budish(2011)]{B11}
Eric Budish.
\newblock The combinatorial assignment problem: Approximate competitive equilibrium from equal incomes.
\newblock \emph{Journal of Political Economy}, 119\penalty0 (6):\penalty0 1061--1103, 2011.

\bibitem[Lipton et~al.(2004)Lipton, Markakis, Mossel, and Saberi]{LMMS04}
Richard~J. Lipton, Evangelos Markakis, Elchanan Mossel, and Amin Saberi.
\newblock On approximately fair allocations of indivisible goods.
\newblock In \emph{{EC}}, pages 125--131. {ACM}, 2004.

\bibitem[Caragiannis et~al.(2016)Caragiannis, Kurokawa, Moulin, Procaccia, Shah, and Wang]{CKMPSW16}
Ioannis Caragiannis, David Kurokawa, Herv{\'{e}} Moulin, Ariel~D. Procaccia, Nisarg Shah, and Junxing Wang.
\newblock The unreasonable fairness of maximum nash welfare.
\newblock In \emph{{EC}}, pages 305--322. {ACM}, 2016.

\bibitem[Plaut and Roughgarden(2018)]{PR18}
Benjamin Plaut and Tim Roughgarden.
\newblock Almost envy-freeness with general valuations.
\newblock In \emph{{SODA}}, pages 2584--2603. {SIAM}, 2018.

\bibitem[Chaudhury et~al.(2020{\natexlab{a}})Chaudhury, Garg, and Mehlhorn]{CGM20}
Bhaskar~Ray Chaudhury, Jugal Garg, and Kurt Mehlhorn.
\newblock {EFX} exists for three agents.
\newblock In \emph{{EC}}, pages 1--19. {ACM}, 2020{\natexlab{a}}.

\bibitem[Procaccia(2020)]{P20}
Ariel~D. Procaccia.
\newblock An answer to fair division's most enigmatic question: technical perspective.
\newblock \emph{Commun. {ACM}}, 63\penalty0 (4):\penalty0 118, 2020.

\bibitem[Amanatidis et~al.(2020{\natexlab{a}})Amanatidis, Markakis, and Ntokos]{AMN20}
Georgios Amanatidis, Evangelos Markakis, and Apostolos Ntokos.
\newblock Multiple birds with one stone: Beating 1/2 for {EFX} and {GMMS} via envy cycle elimination.
\newblock In \emph{{AAAI}}, pages 1790--1797. {AAAI} Press, 2020{\natexlab{a}}.

\bibitem[Yuen and Suksompong(2023)]{YS23}
Sheung~Man Yuen and Warut Suksompong.
\newblock Extending the characterization of maximum nash welfare.
\newblock \emph{CoRR}, abs/2301.03798, 2023.

\bibitem[Wu et~al.(2021)Wu, Li, and Gan]{WLG21}
Xiaowei Wu, Bo~Li, and Jiarui Gan.
\newblock Budget-feasible maximum nash social welfare is almost envy-free.
\newblock In \emph{{IJCAI}}, pages 465--471. ijcai.org, 2021.

\bibitem[Amanatidis et~al.(2020{\natexlab{b}})Amanatidis, Birmpas, Filos{-}Ratsikas, Hollender, and Voudouris]{ABFHV20}
Georgios Amanatidis, Georgios Birmpas, Aris Filos{-}Ratsikas, Alexandros Hollender, and Alexandros~A. Voudouris.
\newblock Maximum nash welfare and other stories about {EFX}.
\newblock In \emph{{IJCAI}}, pages 24--30. ijcai.org, 2020{\natexlab{b}}.

\bibitem[Babaioff et~al.(2021)Babaioff, Ezra, and Feige]{BEF21}
Moshe Babaioff, Tomer Ezra, and Uriel Feige.
\newblock Fair and truthful mechanisms for dichotomous valuations.
\newblock In \emph{{AAAI}}, pages 5119--5126. {AAAI} Press, 2021.

\bibitem[Bei et~al.(2019)Bei, Lu, Manurangsi, and Suksompong]{BLMS19}
Xiaohui Bei, Xinhang Lu, Pasin Manurangsi, and Warut Suksompong.
\newblock The price of fairness for indivisible goods.
\newblock In \emph{{IJCAI}}, pages 81--87. ijcai.org, 2019.

\bibitem[Caragiannis et~al.(2019)Caragiannis, Gravin, and Huang]{CGH19}
Ioannis Caragiannis, Nick Gravin, and Xin Huang.
\newblock Envy-freeness up to any item with high nash welfare: The virtue of donating items.
\newblock In \emph{{EC}}, pages 527--545. {ACM}, 2019.

\bibitem[Ramezani and Endriss(2009)]{RE09}
Sara Ramezani and Ulle Endriss.
\newblock Nash social welfare in multiagent resource allocation.
\newblock In \emph{{AMEC/TADA}}, volume~59 of \emph{Lecture Notes in Business Information Processing}, pages 117--131. Springer, 2009.

\bibitem[Nguyen et~al.(2012)Nguyen, Roos, and Rothe]{NRR12a}
Trung~Thanh Nguyen, Magnus Roos, and J{\"{o}}rg Rothe.
\newblock A survey of approximability and inapproximability results for social welfare optimization in multiagent resource allocation.
\newblock In \emph{{ISAIM}}, 2012.

\bibitem[Garg et~al.(2023)Garg, Husic, Li, V{\'{e}}gh, and Vondr{\'{a}}k]{GHLVV22}
Jugal Garg, Edin Husic, Wenzheng Li, L{\'{a}}szl{\'{o}}~A. V{\'{e}}gh, and Jan Vondr{\'{a}}k.
\newblock Approximating nash social welfare by matching and local search.
\newblock In \emph{{STOC}}, pages 1298--1310. {ACM}, 2023.

\bibitem[Barman et~al.(2018{\natexlab{a}})Barman, Krishnamurthy, and Vaish]{BKV18}
Siddharth Barman, Sanath~Kumar Krishnamurthy, and Rohit Vaish.
\newblock Finding fair and efficient allocations.
\newblock In \emph{{EC}}, pages 557--574. {ACM}, 2018{\natexlab{a}}.

\bibitem[Dobzinski et~al.(2023)Dobzinski, Li, Rubinstein, and Vondr{\'{a}}k]{DLRV23}
Shahar Dobzinski, Wenzheng Li, Aviad Rubinstein, and Jan Vondr{\'{a}}k.
\newblock A constant factor approximation for nash social welfare with subadditive valuations.
\newblock \emph{CoRR}, abs/2309.04656, 2023.

\bibitem[Farhadi et~al.(2021)Farhadi, Hajiaghayi, Latifian, Seddighin, and Yami]{FHLSY21}
Alireza Farhadi, Mohammad~Taghi Hajiaghayi, Mohamad Latifian, Masoud Seddighin, and Hadi Yami.
\newblock Almost envy-freeness, envy-rank, and nash social welfare matchings.
\newblock In \emph{{AAAI}}, pages 5355--5362. {AAAI} Press, 2021.

\bibitem[Varian(1974)]{V74}
Hal~R Varian.
\newblock Equity, envy, and efficiency.
\newblock \emph{Journal of Economic Theory}, 9\penalty0 (1):\penalty0 63--91, 1974.

\bibitem[Weller(1985)]{W85}
Dietrich Weller.
\newblock Fair division of a measurable space.
\newblock \emph{Journal of Mathematical Economics}, 14\penalty0 (1):\penalty0 5--17, 1985.

\bibitem[Kurokawa et~al.(2018)Kurokawa, Procaccia, and Wang]{KPW18}
David Kurokawa, Ariel~D. Procaccia, and Junxing Wang.
\newblock Fair enough: Guaranteeing approximate maximin shares.
\newblock \emph{J. {ACM}}, 65\penalty0 (2):\penalty0 8:1--8:27, 2018.

\bibitem[Ghodsi et~al.(2018)Ghodsi, Hajiaghayi, Seddighin, Seddighin, and Yami]{GHSSY18}
Mohammad Ghodsi, Mohammad~Taghi Hajiaghayi, Masoud Seddighin, Saeed Seddighin, and Hadi Yami.
\newblock Fair allocation of indivisible goods: Improvements and generalizations.
\newblock In \emph{{EC}}, pages 539--556. {ACM}, 2018.

\bibitem[Garg and Taki(2020)]{GT20}
Jugal Garg and Setareh Taki.
\newblock An improved approximation algorithm for maximin shares.
\newblock In \emph{{EC}}, pages 379--380. {ACM}, 2020.

\bibitem[Barman et~al.(2018{\natexlab{b}})Barman, Biswas, Murthy, and Narahari]{BBMN18}
Siddharth Barman, Arpita Biswas, Sanath Kumar~Krishna Murthy, and Yadati Narahari.
\newblock Groupwise maximin fair allocation of indivisible goods.
\newblock In \emph{{AAAI}}, pages 917--924. {AAAI} Press, 2018{\natexlab{b}}.

\bibitem[Chaudhury et~al.(2020{\natexlab{b}})Chaudhury, Kavitha, Mehlhorn, and Sgouritsa]{CKMS20}
Bhaskar~Ray Chaudhury, Telikepalli Kavitha, Kurt Mehlhorn, and Alkmini Sgouritsa.
\newblock A little charity guarantees almost envy-freeness.
\newblock In \emph{{SODA}}, pages 2658--2672. {SIAM}, 2020{\natexlab{b}}.

\bibitem[Chaudhury et~al.(2021)Chaudhury, Garg, Mehlhorn, Mehta, and Misra]{CGMMM21}
Bhaskar~Ray Chaudhury, Jugal Garg, Kurt Mehlhorn, Ruta Mehta, and Pranabendu Misra.
\newblock Improving {EFX} guarantees through rainbow cycle number.
\newblock In \emph{{EC}}, pages 310--311. {ACM}, 2021.

\bibitem[Berger et~al.(2022)Berger, Cohen, Feldman, and Fiat]{BCFF22}
Ben Berger, Avi Cohen, Michal Feldman, and Amos Fiat.
\newblock Almost full {EFX} exists for four agents.
\newblock In \emph{{AAAI}}, pages 4826--4833. {AAAI} Press, 2022.

\bibitem[Nash(1950)]{N50}
John~F Nash.
\newblock The bargaining problem.
\newblock \emph{Econometrica: Journal of the econometric society}, pages 155--162, 1950.

\bibitem[Kaneko and Nakamura(1979)]{KN79}
Mamoru Kaneko and Kenjiro Nakamura.
\newblock The nash social welfare function.
\newblock \emph{Econometrica: Journal of the Econometric Society}, pages 423--435, 1979.

\bibitem[Eisenberg and Gale(1959)]{EG59}
Edmund Eisenberg and David Gale.
\newblock Consensus of subjective probabilities: The pari-mutuel method.
\newblock \emph{The Annals of Mathematical Statistics}, 30\penalty0 (1):\penalty0 165--168, 1959.

\bibitem[Garg et~al.(2018)Garg, Hoefer, and Mehlhorn]{GHM18}
Jugal Garg, Martin Hoefer, and Kurt Mehlhorn.
\newblock Approximating the nash social welfare with budget-additive valuations.
\newblock In \emph{{SODA}}, pages 2326--2340. {SIAM}, 2018.

\bibitem[Bertsimas et~al.(2011)Bertsimas, Farias, and Trichakis]{BFT11}
Dimitris Bertsimas, Vivek~F. Farias, and Nikolaos Trichakis.
\newblock The price of fairness.
\newblock \emph{Oper. Res.}, 59\penalty0 (1):\penalty0 17--31, 2011.

\bibitem[Caragiannis et~al.(2009)Caragiannis, Kaklamanis, Kanellopoulos, and Kyropoulou]{CKKK09}
Ioannis Caragiannis, Christos Kaklamanis, Panagiotis Kanellopoulos, and Maria Kyropoulou.
\newblock The efficiency of fair division.
\newblock In \emph{{WINE}}, volume 5929 of \emph{Lecture Notes in Computer Science}, pages 475--482. Springer, 2009.

\bibitem[Barman et~al.(2020)Barman, Bhaskar, and Shah]{BarmanB020}
Siddharth Barman, Umang Bhaskar, and Nisarg Shah.
\newblock Optimal bounds on the price of fairness for indivisible goods.
\newblock In \emph{{WINE}}, volume 12495 of \emph{Lecture Notes in Computer Science}, pages 356--369. Springer, 2020.

\bibitem[Bu et~al.(2022)Bu, Li, Liu, Song, and Tao]{BuLLST22}
Xiaolin Bu, Zihao Li, Shengxin Liu, Jiaxin Song, and Biaoshuai Tao.
\newblock On the complexity of maximizing social welfare within fair allocations of indivisible goods.
\newblock \emph{CoRR}, abs/2205.14296, 2022.

\bibitem[Li et~al.(2024)Li, Liu, Lu, Tao, and Tao]{LiLLTT24}
Zihao Li, Shengxin Liu, Xinhang Lu, Biaoshuai Tao, and Yichen Tao.
\newblock A complete landscape for the price of envy-freeness.
\newblock \emph{CoRR}, abs/2401.01516, 2024.

\bibitem[Amanatidis et~al.(2018)Amanatidis, Birmpas, and Markakis]{ABM18}
Georgios Amanatidis, Georgios Birmpas, and Vangelis Markakis.
\newblock Comparing approximate relaxations of envy-freeness.
\newblock In \emph{{IJCAI}}, pages 42--48. ijcai.org, 2018.

\bibitem[Bouveret and Lema{\^{\i}}tre(2016)]{BL16}
Sylvain Bouveret and Michel Lema{\^{\i}}tre.
\newblock Characterizing conflicts in fair division of indivisible goods using a scale of criteria.
\newblock \emph{Auton. Agents Multi Agent Syst.}, 30\penalty0 (2):\penalty0 259--290, 2016.

\end{thebibliography}
\appendix

\section{Computational Remarks}\label{sec:comp}

In this section, we extend the algorithms used to prove our positive results (Algorithms~\ref{alg:alloc} and~\ref{alg:new_subadditive_alg}) to return allocations with the desired guarantees even when the input is not necessarily the optimal MNW allocation. Specifically, we prove the following theorems.

\begin{theorem}
     Assume that the valuations are additive. Then, there exists a polynomial time algorithm that, given a $\beta$-MNW allocation $X$ as input, finds a partial allocation that is $\alpha$-EFX and $\frac{\beta}{\alpha+1}$-MNW, for every $0 \leq \alpha \leq 1$ and $\beta \in (0,1)$.\label{thm:additive-partial-poly}
\end{theorem}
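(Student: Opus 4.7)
The plan is to rerun Algorithm~\ref{alg:alloc} on the $\beta$-MNW input $X$, with a mechanism to repair the input whenever the analysis would break. First, I would observe that the $\alpha$-EFX and EF1 guarantees established in the proof of Theorem~\ref{thm:additive-partial} rely only on Claim~\ref{cla:add_basic} and Claim~\ref{cla:add_ineq}, which are statements about local bookkeeping in the matching and make no use of NW-maximality of the input. Hence, when Algorithm~\ref{alg:alloc} is run on any complete input $X$, the output $M$ is already $\alpha$-EFX and EF1 in polynomial time (by Lemma~\ref{lem:addruntimelemma}). The only ingredient of the proof of Theorem~\ref{thm:additive-partial} that uses NW-maximality is Lemma~\ref{lem:additive_bound}; everything else carries over directly.

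The key idea is to use the proof of Lemma~\ref{lem:additive_bound} constructively rather than as a proof by contradiction. I would wrap Algorithm~\ref{alg:alloc} in an outer loop that monitors the invariant $v_i(Z_i) \geq \tfrac{1}{\alpha+1} v_i(X_i)$ at every iteration. Whenever this invariant would be violated at some agent $j^\star$, build the improving sequence of Definition~\ref{def:improving_seq} together with the allocation $\widehat{X}$ prescribed by the proof of Lemma~\ref{lem:additive_bound} (Case~1 when the sequence closes into a cycle, Case~2 when it terminates at an unmatched bundle). The calculation there establishes $\nw(\widehat{X}) > \nw(X)$ without appealing to NW-maximality of $X$, so one can set $X \leftarrow \widehat{X}$, reset $Z \leftarrow (X_1, \ldots, X_n)$ and $M \leftarrow (\bot, \ldots, \bot)$, and restart the inner algorithm. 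Upon termination of the outer loop, the invariant has held throughout the final inner run, so the argument at the end of the proof of Theorem~\ref{thm:additive-partial} yields $\nw(M) \geq \tfrac{1}{\alpha+1} \nw(X_{\text{final}})$. Since the Nash welfare of the current input is non-decreasing across restarts, this gives $\nw(M) \geq \tfrac{1}{\alpha+1} \nw(X_{\text{input}}) \geq \tfrac{\beta}{\alpha+1} \nw(X^{\star})$, where $X^{\star}$ is a MNW allocation, which is exactly the $\tfrac{\beta}{\alpha+1}$-MNW guarantee we need.

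The main obstacle will be bounding the number of restarts by a polynomial in $n$ and $m$. Each restart is triggered by a structural event in the algorithm's state (an improving sequence of a specific shape involving at most $n$ agents), and the corresponding $\widehat{X}$ is either a strict Pareto improvement on a cycle (Case~1) or a chain-shaped rearrangement anchored at an unmatched bundle (Case~2). The plan is to charge each restart to a polynomial-size combinatorial monovariant that strictly advances with every restart; a natural candidate is the lexicographic ordering of the sorted vector of bundle cardinalities $(|X_1|, \ldots, |X_n|)$, combined with bookkeeping on the matching pattern to track which bundles were shrunk or grew. Composed with the per-run polynomial runtime guaranteed by Lemma~\ref{lem:addruntimelemma}, this would yield the claimed polynomial-time algorithm.
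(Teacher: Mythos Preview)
Your high-level architecture---run Algorithm~\ref{alg:alloc}, detect a failure of the Lemma~\ref{lem:additive_bound} invariant, output the $\widehat{X}$ from that proof, and restart---is exactly what the paper does (Algorithm~\ref{alg:add_poly} wrapped inside Algorithm~\ref{alg:poly_repeat}). Your observation that the $\alpha$-EFX guarantee is purely local and does not need NW-optimality is also correct.

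The genuine gap is in the polynomial bound on the number of restarts. With your threshold $v_{j^\star}(Z_{j^\star}) < \tfrac{1}{\alpha+1}\,v_{j^\star}(X_{j^\star})$, the calculation in the proof of Lemma~\ref{lem:additive_bound} only gives the \emph{strict} inequality $\nw(\widehat{X}) > \nw(X)$; the multiplicative gain can be arbitrarily close to $1$ (imagine the invariant failing by an $\varepsilon$-margin). Your proposed combinatorial monovariant based on sorted bundle cardinalities does not advance: in both Case~1 and Case~2 the construction of $\widehat{X}$ merely shuffles items among agents, so the multiset of bundle sizes can stay the same or oscillate across restarts, while Nash welfare creeps up by amounts that depend on the valuations and are not controlled by $n$, $m$, or $\beta$.

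The paper's fix is to loosen the trigger to the slightly weaker threshold $v_{j^\star}(Z_{j^\star}) < \bigl(\tfrac{1}{\alpha+1}\bigr)^{n/(n-1)} v_{j^\star}(X_{j^\star})$. This buys two things. First, every restart now yields a \emph{quantified} multiplicative improvement $\nw(\widehat{X}) \geq \bigl(1+\tfrac{1}{(\alpha+1)(n-1)}\bigr)^{1/n}\nw(X)$ (Lemma~\ref{lem:poly_lem_2}), which bounds the number of restarts by $O(n^2/\beta)$ since $\nw$ is sandwiched between $\beta\cdot\nw(X^\star)$ and $\nw(X^\star)$ (Lemma~\ref{lem:poly_add_runtime}). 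Second, the looser invariant would naively only give $\bigl(\tfrac{1}{\alpha+1}\bigr)^{n/(n-1)}$-MNW on the successful run, but the paper recovers the full $\tfrac{1}{\alpha+1}$ factor by observing (Lemma~\ref{lem:poly_lem_1}) that at termination there is always at least one agent with $Z_i = X_i$, so the $(n/(n-1))$-th power is only incurred on $n-1$ factors.
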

\begin{theorem}
    Assume that the valuations are subadditive. Then, there exists a polynomial time algorithm that, given a $\beta$-MNW allocation $X$ as input, finds a complete allocation that is $\alpha$-EFX and $\frac{\beta}{\alpha+1}$-MNW, for every $0 \leq \alpha \leq 1/2$ and $\beta \in (0,1)$.\label{thm:subadditive-complete-poly}
\end{theorem}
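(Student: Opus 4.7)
The plan is to run the same three-step pipeline used in the proof of Theorem~\ref{thm:complete_subadditive}, but to track the Nash welfare relative to the input allocation $X$ rather than relative to a MNW optimum. The key observation, already highlighted in the computational remarks preceding the statement, is that Algorithm~\ref{alg:new_subadditive_alg} is designed to accept an arbitrary complete allocation as input, and that the guarantees it provides in Lemmas~\ref{lem:feas} and~\ref{lem:subadd_mnw} are stated relative to $\nw(X)$, not relative to $\nw(X^\star)$ for the optimum $X^\star$.

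Concretely, on input $X$ which is $\beta$-MNW, first execute Algorithm~\ref{alg:new_subadditive_alg} to produce a partial allocation $M$. By Lemma~\ref{lem:feas}, $M$ is $\alpha$-EFX, and by Lemma~\ref{lem:subadd_mnw}, $\nw(M) \geq \tfrac{1}{\alpha+1}\cdot\nw(X) \geq \tfrac{\beta}{\alpha+1}\cdot\nw(X^\star)$, where the second inequality uses the $\beta$-MNW property of $X$. Second, invoke Lemma~\ref{lem:subadditive_sep} on $M$ to obtain a partial $\alpha$-EFX allocation $M'$ that is additionally $1$-separated and satisfies $v_i(M_i') \geq v_i(M_i)$ for every agent $i$, and hence $\nw(M') \geq \nw(M)$. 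Third, apply Lemma~\ref{lem:make_complete} with $\gamma = 1$ to convert $M'$ into a complete allocation $Y$; since $\alpha \leq 1/2$, it holds that $\min(\alpha, 1/(1+\gamma)) = \min(\alpha, 1/2) = \alpha$, so $Y$ is $\alpha$-EFX, and the lemma guarantees $\nw(Y) \geq \nw(M') \geq \tfrac{\beta}{\alpha+1}\cdot\nw(X^\star)$, which is the desired $\tfrac{\beta}{\alpha+1}$-MNW guarantee.

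The remaining task is to verify polynomial running time. Algorithm~\ref{alg:new_subadditive_alg} terminates in at most $(m+1)^3$ iterations (by the potential argument in the preceding runtime lemma), each performing polynomial work such as selecting a favorite available bundle, checking the if-conditions of \textsc{split}, and searching for a minimal $S \subseteq R$ with $\mathcal{K}_S \neq \emptyset$. The swapping-with-singletons procedure underlying Lemma~\ref{lem:subadditive_sep} from \cite{GHLVV22} terminates in polynomial time by a standard value-increasing potential argument, and the envy-cycles procedure invoked by Lemma~\ref{lem:make_complete} is well-known to run in polynomial time.

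The only point requiring care is confirming that neither Lemma~\ref{lem:feas} nor Lemma~\ref{lem:subadd_mnw} secretly appealed to the MNW-optimality of $X$, since in the proof of Theorem~\ref{thm:subadditive-partial-better} that hypothesis is invoked only in the last step to pass from $\nw(X)$ to $\nw(X^\star)$. Inspection of both proofs shows that the analyses are purely invariant-based: Claim~\ref{cla:basic}(ii) maintains $v_i(Z_i^t) \geq \tfrac{1}{\alpha+1} v_i(X_i^t)$, and Lemma~\ref{lem:xi_lem} tracks the monotone potential $\bigl(\tfrac{\alpha+1}{\alpha}\bigr)^{\ell_t} \prod_i v_i(X_i^t)$ across iterations. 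Neither argument compares the current state against an external optimum, so the bound $\nw(M) \geq \tfrac{1}{\alpha+1}\cdot\nw(X)$ holds for any complete input $X$, which is exactly what the three-step pipeline requires.
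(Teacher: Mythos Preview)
Your proposal is correct and follows exactly the approach the paper takes: run Algorithm~\ref{alg:new_subadditive_alg} on the given $\beta$-MNW input, then complete via Lemmas~\ref{lem:subadditive_sep} and~\ref{lem:make_complete}. The paper's own proof of this theorem is extremely terse---it cites only Lemmas~\ref{lem:feas} and~\ref{lem:subadd_mnw} and does not even spell out the completion steps---so your write-up is in fact more thorough, and your explicit verification that those two lemmas never appeal to optimality of $X$ is a useful addition.
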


For subadditive valuations, Algorithm~\ref{alg:new_subadditive_alg} already satisfies the conditions of Theorem~\ref{thm:subadditive-complete-poly}.

\begin{proof}[Proof of Theorem~\ref{thm:subadditive-complete-poly}]
Given any $\beta$-MNW allocation $X$ as input, Algorithm~\ref{alg:new_subadditive_alg} produces an allocation $(M_1, \ldots, M_n)$ that is $\alpha$-EFX (by Lemma~\ref{lem:feas}) and $\frac{\beta}{\alpha+1}$-MNW (by Lemma~\ref{lem:subadd_mnw}).   
\end{proof}

For additive valuations, we first modify Algorithm~\ref{alg:alloc} in a similar way to \cite[Section 5]{CGH19}, see Algorithm~\ref{alg:add_poly}. Suppose that Algorithm~\ref{alg:add_poly} is given as input an allocation $X$ that is $\beta$-MNW for some $\beta \in (0,1)$. Observe that the statements of Claims~\ref{cla:add_basic} and \ref{cla:add_ineq} also hold for Algorithm~\ref{alg:add_poly} in this setting.

\begin{algorithm}
\caption{{Additive valuations.}}
\label{alg:add_poly}
\begin{flushleft}
\hspace*{\algorithmicindent} \textbf{Input} $(X_1, \ldots, X_n)$ is a complete $\rho$-MNW allocation. \\
\hspace*{\algorithmicindent} \textbf{Output} either $(M_1, \ldots, M_n)$ is a partial $\alpha$-EFX and $\frac{\rho}{\alpha+1}$-MNW allocation, \\
\hspace*{\algorithmicindent} 
\;\;\;\;\;\;\;\;\;\;\;\;\; or $(\widehat{X}_1, \ldots, \widehat{X}_n)$ is a complete $(1 + \frac{1}{\alpha+1} \cdot \frac{1}{n-1})^{1/n}\rho$-MNW allocation.
\end{flushleft}
\vspace{-0.15in}
\begin{algorithmic}
[1]\State match $i$ to $J$ means $M_i \gets J$
\State unmatch $Z_i$ means $M_u \gets \bot$ if there is $u$ matched to $Z_i$
\Procedure{alg}{} 
\State $Z\gets(X_1, \ldots, X_n)$
\State $M\gets (\bot, \ldots, \bot)$
\While{there is an agent $i$ with $M_i = \bot$}
\State $i^\star \gets $ any agent with $M_i = \bot$
\If{{($Z_{i^\star}=X_{i^\star}$ and }$v_{i^\star}(Z_{i^\star}) \geq \alpha \cdot v_{i^\star}(Z_j -g)$ for all $j$ and $g \in Z_j$) {or \\ \;\;\;\;\;\;\;\;\;\;\;\;\;\, ($Z_{i^\star} \neq X_{i^\star}$ and $v_{i^\star}(Z_{i^\star}) \geq v_{i^\star}(Z_j -g)$ for all $j$ and $g \in Z_j$)}}
    \State unmatch $Z_{i^\star}$\label{line:a_1}
    \State match $i^\star$ to $Z_{i^\star}$\label{line:a_2}
\Else  
    \State $j^\star,g \gets $ any $j^\star$ and $g \in Z_{j^\star}$ that maximize $v_{i^\star}(Z_{j^\star}-g)$
    \If{$Z_{j^\star}$ is unmatched}\label{line:if_un}
        \State match $i^\star$ to $Z_{j^\star}$\label{line:b}
    \Else
        \State unmatch $Z_{j^\star}$\label{line:else_1}
        \State match $i^\star$ to $Z_{j^\star}$
        \State $j_1, \ldots, j_\ell \gets$ the improving sequence (Definition~\ref{def:improving_seq})\label{line:else_2}
        \If{$j$ ends with condition (ii)}\label{line:if_ii}
            \State change $Z_{j^\star}$ to $Z_{j^\star} - g$\label{line:remove_1}
            \State match $i^\star$ to $Z_{j^\star}$\label{line:remove_2}
            \If{$v_{j^\star}(Z_{j^\star}) < (\frac{1}{\alpha+1})^{n/(n-1)} \cdot v_{j^\star}(X_{j^\star})$}\label{line:if_x_small}
                \State $\widehat{X} \gets (X_1, \ldots, X_n)$
                \State $\widehat{X}_{j^\star} \gets X_{j^\star} \setminus Z_{j^\star}$\label{line:hat_def}
                \State $\widehat{X}_{j_s} \gets (X_{j_s} \setminus Z_{j_s}) \cup Z_{j_{s-1}}$ for $2 \leq s < \ell$
                \State $\widehat{X}_{j_\ell} \gets X_{j_\ell} \cup Z_{j_{\ell-1}}$            
                \State \textbf{return} $(\widehat{X}_1, \ldots, \widehat{X}_n)$\label{line:return_x}
            \EndIf
        \EndIf
    \EndIf
\EndIf
\EndWhile
\State \textbf{return} $(M_1, \ldots, M_n)$\label{line:return_m}
\EndProcedure
\end{algorithmic}
\end{algorithm}

Recall that Lemma~\ref{lem:additive_bound} guarantees that if the input allocation $X$ is MNW, then the outcome of Algorithm~\ref{alg:alloc} satisfies $v_i(Z_i) \geq (1/(\alpha+1)) \cdot v_i(X_i)$ for all agents $i$. The proof of the lemma is by contradiction: if $v_i(Z_i)$ becomes less than $(1/(\alpha+1)) \cdot v_i(X_i)$ for some agent $i$, then there is a way to construct a new allocation $\widehat{X}$ with higher Nash welfare than $X$, contradicting optimality of $X$. Note that the same argument applies to Algorithm~\ref{alg:add_poly}, and so the statement of Lemma~\ref{lem:additive_bound} holds for the outcome of Algorithm~\ref{alg:add_poly} as well, assuming that $X$ is MNW.

In the general case, if $X$ is not necessarily a MNW allocation, Algorithm~\ref{alg:add_poly} essentially simulates Algorithm~\ref{alg:alloc}\footnote{The only difference is that Algorithm~\ref{alg:add_poly} does not remove $g$ from $Z_{j^\star}$ if either $Z_{j^\star}$ is unmatched or the improving sequence ends with condition (i), while Algorithm~\ref{alg:alloc} removes $g$ from $Z_{j^\star}$ in both cases.} as long as it holds that $v_i(Z_i) \geq (1/(\alpha+1))^{n/(n-1)} \cdot v_i(X_i)$ for all agents $i$. Note that this is a slightly weaker condition that $v_i(Z_i) \geq (1/(\alpha+1)) \cdot v_i(X_i)$ since $1/(\alpha+1) < 1$ and $n/(n-1) > 1$. If this condition is not violated throughout the execution of the algorithm, then the algorithm returns an allocation $(M_1, \ldots, M_n)$ in line~\ref{line:return_m}. In this case, the allocation $M$ is $\alpha$-EFX (by Claims~\ref{cla:add_basic} and \ref{cla:add_ineq}) and we show below that $M$ is $\frac{\beta}{\alpha+1}$-MNW. 

%\vspace{0.4in}
\begin{lemma}\label{lem:poly_lem_1}
    If Algorithm~\ref{alg:add_poly} returns an allocation $M$ in line~\ref{line:return_m}, then $M$ is $\frac{\beta}{\alpha+1}$-MNW.
\end{lemma}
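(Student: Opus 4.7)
The plan is to establish, by an invariant argument patterned after the proof of Lemma~\ref{lem:additive_bound}, that $v_i(Z_i) \geq \tfrac{1}{\alpha+1} v_i(X_i)$ holds for every agent $i$ throughout the execution of Algorithm~\ref{alg:add_poly}. Once this invariant is established, Claim~\ref{cla:add_ineq}(iii) gives $v_i(M_i) \geq v_i(Z_i)$ for every matched agent $i$, and since the algorithm terminates only when every agent is matched, we get
\[
\nw(M) \;\geq\; \tfrac{1}{\alpha+1} \cdot \nw(X) \;\geq\; \tfrac{\beta}{\alpha+1} \cdot \nw^*,
\]
which is exactly the $\tfrac{\beta}{\alpha+1}$-MNW guarantee in the lemma.

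I would prove the invariant by contradiction, following the two-case decomposition used in Lemma~\ref{lem:additive_bound}. Assume the $t$-th iteration is the first where $v_{j^\star}(Z_{j^\star}^t) < \tfrac{1}{\alpha+1} v_{j^\star}(X_{j^\star})$. Because $Z$ is modified only in the ``ends with condition (ii)'' branch (lines~\ref{line:remove_1}-\ref{line:remove_2}), the violating iteration must execute those lines; in particular, one can ignore the Case 1 branch (improving sequence ending with condition (i)) in the contradiction argument, since Algorithm~\ref{alg:add_poly} does not touch any $Z_i$ in lines~\ref{line:a_1}-\ref{line:b} and the invariant is a condition on the $Z_i$'s only. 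I would then construct the witness allocation $\widehat{X}$ exactly as in Case 2 of Lemma~\ref{lem:additive_bound}: using the improving sequence $j_1, \ldots, j_\ell$ (Definition~\ref{def:improving_seq}), set $\widehat{X}_{j^\star} = X_{j^\star}\setminus Z_{j^\star}$, $\widehat{X}_{j_s} = (X_{j_s}\setminus Z_{j_s}) \cup Z_{j_{s-1}}$ for $2 \leq s < \ell$, and $\widehat{X}_{j_\ell} = X_{j_\ell}\cup Z_{j_{\ell-1}}$. Applying Claim~\ref{cla:add_ineq}(iii) along the body of the sequence and Claim~\ref{cla:add_ineq}(iv) together with $Z_{j_\ell} = X_{j_\ell}$ at the tail yields, by a computation mirroring Case 2 of Lemma~\ref{lem:additive_bound},
\[
\frac{\nw(\widehat{X})^n}{\nw(X)^n} \;\geq\; 1 + \frac{1}{(\alpha+1)(n-1)}.
\]
This is precisely the improvement factor certified by the ``return $\widehat{X}$'' branch of Algorithm~\ref{alg:add_poly}, so the check in line~\ref{line:if_x_small} would have been triggered at the $t$-th iteration and the algorithm would have returned $\widehat{X}$ in line~\ref{line:return_x}, contradicting the hypothesis that $M$ was returned in line~\ref{line:return_m}.

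The main obstacle is the algebraic calibration that links the displayed inequality with the specific threshold $(\tfrac{1}{\alpha+1})^{n/(n-1)}$ used in line~\ref{line:if_x_small}. Concretely, one has to verify that whenever $v_{j^\star}(Z_{j^\star}) < \tfrac{1}{\alpha+1} v_{j^\star}(X_{j^\star})$ holds at the first violating iteration, the NW ratio $\bigl(1 - v_{j^\star}(Z_{j^\star})/v_{j^\star}(X_{j^\star})\bigr) \cdot \tfrac{\alpha+1}{\alpha}$ obtained from the witness construction dominates $1 + \tfrac{1}{(\alpha+1)(n-1)}$, so that the trigger threshold, though quantitatively weaker than the desired invariant, still fires in time. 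This is a short elementary estimate using $\ln(1+x) \geq x - x^2/2$ (or a direct comparison of $p^{n/(n-1)}$ and $p$ for $p = \tfrac{1}{\alpha+1}$); it is the same calibration that certifies the $(1 + \tfrac{1}{(\alpha+1)(n-1)})^{1/n}\rho$-MNW guarantee advertised in the algorithm's output specification, so the two bounds are intrinsically linked.
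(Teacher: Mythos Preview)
Your proposed invariant $v_i(Z_i) \geq \tfrac{1}{\alpha+1}\,v_i(X_i)$ is too strong: Algorithm~\ref{alg:add_poly} does \emph{not} maintain it. The trigger in line~\ref{line:if_x_small} fires only when $v_{j^\star}(Z_{j^\star}) < \bigl(\tfrac{1}{\alpha+1}\bigr)^{n/(n-1)} v_{j^\star}(X_{j^\star})$, and since $\tfrac{1}{\alpha+1} < 1$ this threshold is \emph{strictly smaller} than $\tfrac{1}{\alpha+1}$. So $v_{j^\star}(Z_{j^\star})/v_{j^\star}(X_{j^\star})$ can sit anywhere in the interval $\bigl[(\tfrac{1}{\alpha+1})^{n/(n-1)},\,\tfrac{1}{\alpha+1}\bigr)$ without the algorithm returning $\widehat{X}$, yet your invariant is already violated. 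Your ``calibration'' paragraph has the inequality backwards: the trigger threshold is quantitatively \emph{stronger} (harder to satisfy), not weaker, than the invariant you want. Concretely, at the boundary $v_{j^\star}(Z_{j^\star}) = \tfrac{1}{\alpha+1}\,v_{j^\star}(X_{j^\star})$ the NW ratio you compute is $\bigl(1-\tfrac{1}{\alpha+1}\bigr)\cdot\tfrac{\alpha+1}{\alpha} = 1$, so it does \emph{not} dominate $1+\tfrac{1}{(\alpha+1)(n-1)}$, and the elementary estimate you allude to cannot be carried out.

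The paper's proof uses the only invariant the algorithm actually guarantees, namely $v_a(Z_a) \geq \bigl(\tfrac{1}{\alpha+1}\bigr)^{n/(n-1)} v_a(X_a)$ for every $a$, and recovers the missing factor by a second, separate observation: there is always at least one agent $i$ with $Z_i = X_i$ (a short counting argument using Claim~\ref{cla:add_ineq}(\ref{list:unmatched}) and the if condition in line~\ref{line:if_un}). With one agent contributing the full $v_i(X_i)$ and the remaining $n-1$ agents contributing at least $\bigl(\tfrac{1}{\alpha+1}\bigr)^{n/(n-1)} v_a(X_a)$, the product telescopes to exactly $\tfrac{1}{\alpha+1}\,\nw(X)$. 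This untouched-agent argument is the idea your proposal is missing.
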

\begin{proof}
    Note that at end of the run of the algorithm, it holds that $v_a(Z_a) \geq (\frac{1}{\alpha+1})^{n/(n-1)} \cdot v_a(X_a)$ for all agents $a$ by the if condition in line~\ref{line:if_x_small}.
    
    Moreover, there is at least one agent $i$ with $Z_i = X_i$. Indeed, suppose that this is not the case, and consider the iteration during which the algorithm executes line~\ref{line:remove_1} and it holds that $Z_{j} \subsetneq X_j$ for all $j \neq j^\star$. Then, by Claim~\ref{cla:add_ineq}(iii), at the beginning of that iteration, all bundles $Z_j$ with $j \neq j^\star$ must be matched to some agents. Moreover, by the if condition in line~\ref{line:if_un}, the bundle $Z_{j^\star}$ must also be matched to some agent. Therefore, all bundles are matched at the beginning of that iteration, which contradicts the assumption that $i^\star$ is an unmatched agent.
    
    It follows that
    \begin{align*}
        \nw(M) 
        &\geq \prod_{a \in \agents} v_a(Z_a)^{1/n} && (\text{since } v_a(M_a) \geq v_a(Z_a)) \\*
        &=  v_i(X_i)^{1/n}  \prod_{a \in \agents - i} v_a(Z_a)^{1/n} && (\text{since } Z_i = X_i) \\*
        &\geq v_i(X_i)^{1/n}   \prod_{a \in \agents - i} \big(\frac{1}{\alpha+1}\big)^{1/(n-1)}  v_a(X_a)^{1/n} && (\text{by the above}) \\*
        &= \frac{1}{\alpha+1} \cdot \nw(X) 
    \end{align*}
    which gives the result.
\end{proof}

Otherwise, if at some point $v_{j^\star}(Z_{j^\star})$ becomes less than $(1/(\alpha+1))^{n/(n-1)} \cdot v_{j^\star}(X_{j^\star})$ due to the change operation in line~\ref{line:remove_1}, Algorithm~\ref{alg:add_poly} uses the construction from the proof of Lemma~\ref{lem:additive_bound} to return an allocation $\widehat{X}$ for which, as we show below, Nash welfare increases at least by a factor of $(1 + \frac{1}{\alpha+1} \cdot \frac{1}{n-1})^{1/n}$.

\begin{lemma}\label{lem:poly_lem_2}
    If Algorithm~\ref{alg:add_poly} returns an allocation $\widehat{X}$ in line~\ref{line:return_x}, then it holds that $\nw(\widehat{X}) > (1 + \frac{1}{\alpha+1} \cdot \frac{1}{n-1})^{1/n} \cdot \nw(X)$.
\end{lemma}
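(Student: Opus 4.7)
The plan is to show that the construction in lines~\ref{line:hat_def}--\ref{line:return_x} exactly reproduces Case 2 of the proof of Lemma~\ref{lem:additive_bound}, and then run the same Nash-welfare comparison---except with the relaxed threshold $(1/(\alpha+1))^{n/(n-1)}$ from the check in line~\ref{line:if_x_small} in place of $1/(\alpha+1)$. Writing $c := (1/(\alpha+1))^{n/(n-1)}$, the triggering condition combined with additivity gives $v_{j^\star}(\widehat{X}_{j^\star}) > (1-c)\cdot v_{j^\star}(X_{j^\star})$. Since the improving sequence ends with condition (ii) (the check in line~\ref{line:if_ii}), the contrapositive of Claim~\ref{cla:add_ineq}(\ref{list:unmatched}) yields $Z_{j_\ell}=X_{j_\ell}$, so the remaining two bounds from Case 2 carry over verbatim: $v_{j_s}(\widehat{X}_{j_s})>v_{j_s}(X_{j_s})$ for $2\le s<\ell$ by Claim~\ref{cla:add_ineq}(\ref{list:1_inequality}), and $v_{j_\ell}(\widehat{X}_{j_\ell})>((\alpha+1)/\alpha)\cdot v_{j_\ell}(X_{j_\ell})$ by Claim~\ref{cla:add_ineq}(\ref{list:a_inequality}). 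Multiplying the three bounds and using $\widehat X_r = X_r$ outside the improving sequence yields
\[
\frac{\nw(\widehat{X})^n}{\nw(X)^n} \;>\; (1-c)\cdot\frac{\alpha+1}{\alpha}.
\]

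The main (and only non-routine) step is then a clean algebraic inequality: one must verify
\[
(1-c)\cdot\frac{\alpha+1}{\alpha} \;>\; 1 + \frac{1}{(\alpha+1)(n-1)}.
\]
Introducing $y := (\alpha+1)^{-1/(n-1)} \in (0,1)$, so that $y^{n-1} = 1/(\alpha+1)$ and $c = y^n$, a short rearrangement (using $\alpha = (1-y^{n-1})/y^{n-1}$ and $\alpha+1 = 1/y^{n-1}$) reduces the claim to $(n-1)(1-y) > 1 - y^{n-1}$. This follows immediately from the geometric-series identity $1 - y^{n-1} = (1-y)\sum_{k=0}^{n-2}y^k$ combined with the strict bound $\sum_{k=0}^{n-2} y^k < n-1$, which holds since $0<y<1$ (equivalently, $\alpha>0$). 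Taking $n$-th roots then gives $\nw(\widehat{X}) > (1 + \tfrac{1}{(\alpha+1)(n-1)})^{1/n}\cdot\nw(X)$, as desired.

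I expect the only real obstacle to be finding the clean algebraic reduction above: the natural first instinct is to estimate $1 - (\alpha+1)^{-1/(n-1)}$ via a Taylor expansion such as $\ln(\alpha+1)/(n-1)$, but the geometric-series rewriting is both sharper and simpler and gives the stated strict inequality exactly, uniformly in $\alpha\in(0,1]$ and $n\ge 2$.
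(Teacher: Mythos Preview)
Your approach mirrors the paper's exactly: the same three bounds on $v_{j_s}(\widehat X_{j_s})$ (via additivity, Claim~\ref{cla:add_ineq}(\ref{list:1_inequality}), and Claim~\ref{cla:add_ineq}(\ref{list:a_inequality}) together with $Z_{j_\ell}=X_{j_\ell}$ from Claim~\ref{cla:add_ineq}(\ref{list:unmatched})), the same product argument, and then an algebraic comparison of $(1-c)\cdot\frac{\alpha+1}{\alpha}$ with $1+\frac{1}{(\alpha+1)(n-1)}$. The only difference is how that last comparison is carried out: the paper rewrites $c=\frac{1}{\alpha+1}\bigl(1-\frac{\alpha}{\alpha+1}\bigr)^{1/(n-1)}$ and applies Bernoulli's inequality, whereas you substitute $y=(\alpha+1)^{-1/(n-1)}$ and reduce to $(n-1)(1-y)\ge 1-y^{n-1}$ via the geometric-series factorization. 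Both are equally short.

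One small slip: your algebraic inequality is \emph{not} strict when $n=2$, since then $\sum_{k=0}^{n-2}y^k=1=n-1$; indeed a direct check shows $(1-c)\cdot\frac{\alpha+1}{\alpha}=1+\frac{1}{\alpha+1}$ exactly in that case. The fix is immediate: claim only $(1-c)\cdot\frac{\alpha+1}{\alpha}\ge 1+\frac{1}{(\alpha+1)(n-1)}$, and note that the overall bound $\nw(\widehat X)^n/\nw(X)^n>1+\frac{1}{(\alpha+1)(n-1)}$ is still strict because the product step already carries a strict inequality from the if-condition in line~\ref{line:if_x_small}. (The paper's Bernoulli step is likewise only a weak inequality at $n=2$, with strictness coming from the same place.)
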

\begin{proof}
Note that
\begin{align*}
v_{j^\star}(\widehat{X}_{j^\star}) &=
v_{j^\star}(X_{j^\star} \setminus Z_{j^\star}) && (\text{by line~\ref{line:hat_def}}) \\
&= v_{j^\star}(X_{j^\star}) - v_{j^\star}(Z_{j^\star}) && (\text{by additivity of }v_{j^\star}) \\
&> \Big(1 - \Big(\frac{1}{\alpha+1}\Big)^{n/(n-1)} \Big)  \cdot v_{j^\star}(X_{j^\star}) && (\text{by the if condition in line~\ref{line:if_x_small}}) \\
&= \Big(1 - \frac{1}{\alpha+1} \cdot \Big(1 - \frac{\alpha}{\alpha+1}\Big)^{1/(n-1)} \Big)  \cdot v_{j^\star}(X_{j^\star}) && (\text{by simple algebra}) \\
&\geq \Big( 1 - \frac{1}{\alpha+1} \cdot \Big(1 - \frac{\alpha}{\alpha+1} \cdot \frac{1}{n-1}\Big) \Big) \cdot v_{j^\star}(X_{j^\star}) && (\text{by Bernoulli's inequality}) \\
&= \frac{\alpha}{\alpha+1} \cdot \Big(1 + \frac{1}{\alpha+1} \cdot \frac{1}{n-1} \Big) \cdot v_{j^\star}(X_{j^\star})
\end{align*}
    By Claim~\ref{cla:add_ineq}{(\ref{list:1_inequality})}, it holds that
\EQ{
v_{j_s}(\widehat{X}_{j_s}) = v_{j_s}((X_{j_s}\setminus Z_{j_s}) \cup Z_{j_{s-1}}) = v_{j_s}(X_{j_s}) + v_{j_s}(Z_{j_{s-1}})- v_{j_s}(Z_{j_s}) > v_{j_s}(X_{j_s}).} 
It follows from Claim~\ref{cla:add_ineq}{(\ref{list:a_inequality})} and the fact that $X_{j_\ell}=Z_{j_\ell}$ (by Claim~\ref{cla:add_ineq}{(\ref{list:unmatched})}) that
\EQ{
v_{j_\ell}(\widehat{X}_{j_\ell}) = v_{j_\ell}(X_{j_\ell} \cup Z_{j_{\ell-1}}) = v_{j_\ell}(X_{j_\ell}) + v_{j_\ell}(Z_{j_{\ell-1}}) > v_{j_\ell}(X_{j_\ell}) + \frac{1}{\alpha} \cdot v_{j_\ell}(Z_{j_\ell}) = \frac{\alpha+1}{\alpha} \cdot v_{j_\ell}(X_{j_\ell}).}
Finally, combining the inequalities above gives
\begin{align*}
\frac{\nw(\widehat{X})^n}{\nw(X)^n} 
= \frac{v_{j^\star}(\widehat{X}_{j^\star}) \cdot v_{i^\star}(\widehat{X}_{i^\star}) \cdot v_{j_{3}}(\widehat{X}_{j_{3}}) \cdots v_{j_{\ell-1}}(\widehat{X}_{j_{\ell-1}}) \cdot v_{j_\ell}(\widehat{X}_{j_\ell})}{v_{j^\star}(X_{j^\star}) \cdot v_{i^\star}(X_{i^\star}) \cdot v_{j_{3}}(X_{j_{3}} ) \cdots v_{j_{\ell-1}}(X_{j_{\ell-1}})  \cdot v_{j_\ell}(X_{j_\ell})}
> 1 + \frac{1}{\alpha+1} \cdot \frac{1}{n-1}
\end{align*}
and the result follows.
\end{proof}

Finally, we show that Algorithm~\ref{alg:add_poly} runs in polynomial time.

\begin{lemma}
    Algorithm~\ref{alg:add_poly} terminates after polynomially many iterations.\label{lem:first_runtime}
\end{lemma}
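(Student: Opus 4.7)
The plan is to extend the lexicographic potential argument from Lemma~\ref{lem:addruntimelemma} to cover the additional branches introduced by Algorithm~\ref{alg:add_poly}. I would define
\[\phi_t = (|\mathsf{Removed}_t|, |\mathsf{Matched}_t|, |\mathsf{SelfMatched}_t|),\]
with $\mathsf{Matched}_t = \{i \in \agents : M_i \neq \bot\}$ and $\mathsf{SelfMatched}_t = \{i \in \agents : M_i = Z_i\}$, compared lexicographically. First I would verify that all three components are non-decreasing across any single iteration. Then I would identify which component strictly increases in each case: the if branch strictly increases $|\mathsf{SelfMatched}_t|$ (the newly self-matched $i^\star$ was unmatched, and any displaced $u$ with $M_u = Z_{i^\star}$ cannot have been self-matched since $Z_u$ and $Z_{i^\star}$ are disjoint subsets of the disjoint bundles $X_u$ and $X_{i^\star}$); the else branch with $Z_{j^\star}$ unmatched strictly increases $|\mathsf{Matched}_t|$ since $i^\star$ gets matched with no one displaced; and the else branch when the improving sequence ends with condition (ii) either returns in line~\ref{line:return_x} or strictly increases $|\mathsf{Removed}_t|$ by removing $g$.

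The hard part will be handling the else branch when $Z_{j^\star}$ is matched and the improving sequence ends with condition (i), because all three components of $\phi_t$ stay constant: the swap just toggles $i^\star$ from unmatched to matched and $a$ (the previous owner of $Z_{j^\star}$) from matched to unmatched, preserving $|\mathsf{Matched}_t|$; since $j^\star \neq i^\star$ in the else branch, neither agent ends up self-matched; and no item is removed. To handle this, I would augment the potential with a fourth component $-L_t$ taken lexicographically last, where $L_t$ counts the agents that are \emph{not} contained in any cycle of the functional graph $\pi \colon \agents \to \agents \cup \{\bot\}$ defined by $\pi(i) = \rho(Z_i)$, with $\rho(Z_j)$ denoting the unique agent currently matched to $Z_j$ (or $\bot$ if no such agent exists). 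Since $i^\star$ is unmatched pre-swap, nothing maps to $i^\star$ in $\pi$, so $i^\star$ lies outside every cycle pre-swap; condition (i) forces the post-swap cycle $\{j^\star, i^\star, \pi(i^\star), \ldots, k\}$ (with $\pi(k) = j^\star$, i.e., $M_{j^\star} = Z_k$) to contain $i^\star$. I would then argue by case analysis that $L_t$ strictly decreases: $i^\star$ together with the intermediate agents $\pi(i^\star), \ldots, k$ that were not in any cycle before enter the new cycle, dominating the at-most-one agent $a$ that could leave a pre-swap cycle.

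Assuming this, the augmented tuple $\Phi_t = (|\mathsf{Removed}_t|, |\mathsf{Matched}_t|, |\mathsf{SelfMatched}_t|, -L_t)$ strictly lex-increases in every iteration of the while loop, and since $\Phi_t$ ranges over $[0,m] \times [0,n] \times [0,n] \times [-n,0]$ it takes at most $O(mn^3)$ distinct values, so the algorithm terminates in polynomially many iterations. The most delicate step will be rigorously verifying the strict decrease of $L_t$ in case B.i when the pre-swap $\pi$ already has a cycle through $j^\star$; to do this I would use the fact that the in-neighbor of $k$ in $\pi$ is uniquely determined by $M_k$, which forces the chains from $a$ and from $i^\star$ toward $k$ to converge at a common tail, so that counting which members of the pre- and post-swap cycles are genuinely ``lost'' versus ``gained'' always yields a net strict increase in cycle membership.
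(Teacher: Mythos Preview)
Your approach is essentially the paper's: a four-component lexicographic potential whose last coordinate tracks cycles in the matching graph. In fact, $|\mathsf{Cycles}_t|$ as defined in the paper equals $n - L_t$, so your fourth coordinate and the paper's differ only by an additive constant; the paper also swaps the order of $|\mathsf{Matched}|$ and $|\mathsf{SelfMatched}|$, which is immaterial here.

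Two remarks on the plan. First, the claim that ``all three components are non-decreasing across any single iteration'' is false: in the else-else branch with condition~(ii), if $j^\star$ was self-matched (so $u=j^\star$), then $|\mathsf{SelfMatched}_t|$ drops by one. This does not hurt you, because in that branch $|\mathsf{Removed}_t|$ strictly increases, so the tuple still lex-increases; but you should argue lex-increase directly rather than coordinate-wise monotonicity. Second, the ``hard part'' you anticipate---a pre-swap cycle through $j^\star$ when condition~(i) holds---never occurs. The paper observes this in one line: since $M_{i^\star}^{t-1}=\bot$, agent $i^\star$ cannot lie on any cycle at time $t-1$; but if $j^\star$ were on a pre-swap cycle $C$, then tracing predecessors in $C$ from $j^\star$ (using $M_{j^\star}=Z_{j_\ell}$, $M_{j_\ell}=Z_{j_{\ell-1}}$, \dots, all of which are unchanged from $t-1$ to $t$) forces $j_\ell,j_{\ell-1},\dots,j_2=i^\star$ all into $C$, a contradiction. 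So the new cycle $\{j_1,\dots,j_\ell\}$ is disjoint from all old cycles and no old cycle is destroyed, giving the strict increase immediately without your delicate convergence-of-chains analysis.
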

\begin{proof}
    For any $t$, consider the state of the algorithm after exactly $t$ iterations of the while loop{, and} let $\mathsf{Removed}_t = \bigcup_{i\in \agents} X_i \setminus Z_i$ and $\mathsf{Matched}_t = \{ i \in \agents \bar M_i \neq \bot \}$ and $\mathsf{SelfMatched}_t = \{ i \in \agents \bar M_i = Z_i\}$ and $\mathsf{Cycles}_t = \{ (i_1, \ldots, i_k)  \bar i_1, \ldots, i_k \in \agents, M_{i_j} = Z_{i_{j+1}} \text{ for } 1 \leq j \leq k \text{ where } i_{k+1} = i_1\}$. Consider the tuple $$\phi_t = (|\mathsf{Removed}_t|, |\mathsf{SelfMatched}_t|, |\mathsf{Matched}_t|, |\mathsf{Cycles}_t|).$$ 
    Observe that the tuple $\phi_t$ strictly increases lexicographically with each iteration of the algorithm.
    Indeed, if the algorithm executes 
    lines~\ref{line:a_1}-\ref{line:a_2}, then $|\mathsf{SelfMatched}_t|$ increases and $|\mathsf{Removed}_t|$ does not change. 
    If the algorithm executes line~\ref{line:b}, then $|\mathsf{Matched}_t|$ increases and $|\mathsf{Removed}_t|$ and $|\mathsf{SelfMatched}_t|$ do not change. 
    If the algorithm executes lines~\ref{line:remove_1}-\ref{line:remove_2}, then $|\mathsf{Removed}_t|$ increases. 
    Finally, if the algorithm executes lines~\ref{line:else_1}-\ref{line:else_2} during the $t$-th iteration and the if condition in line~\ref{line:if_ii} is not satisfied, then $j$ ends with condition (ii), i.e., $M_{j^\star} = Z_{j_\ell}$. 
    Note that this implies that $(j_1, \ldots, j_\ell) \in \mathsf{Cycles}_t$ and that $j^\star \notin \mathsf{SelfMatched}_{t-1}$. 
    Moreover, since $i^\star$ is unmatched at the beginning of the $t$-th iteration, it must be the case that $j^\star$ does not belong to any cycle in $\mathsf{Cycles}_{t-1}$.
    It follows that  $|\mathsf{Cycles}_t|$ increases and $|\mathsf{Removed}_t|$, $|\mathsf{SelfMatched}_t|$, and $|\mathsf{Matched}_t|$ do not change.
    Therefore, since there are at most polynomially many values that $\phi_t$ can take, the algorithm terminates after polynomially many iterations.  
\end{proof}

Given these results, to obtain {a partial allocation with the desired guarantees}, it is enough to run Algorithm~\ref{alg:add_poly} repeatedly until it returns an allocation $M$ in line~\ref{line:return_m}, see Algorithm~\ref{alg:poly_repeat}.

\begin{algorithm}
\caption{Additive valuations.}
\label{alg:poly_repeat}
\begin{flushleft}
\hspace*{\algorithmicindent} \textbf{Input} $(X_1, \ldots, X_n)$ is a complete $\rho$-MNW allocation. \\
\hspace*{\algorithmicindent} \textbf{Output} $(X_1, \ldots, X_n)$ is a partial $\alpha$-EFX and $\frac{\rho}{\alpha+1}$-MNW allocation. 
\end{flushleft}
\vspace{-0.15in}
\begin{algorithmic}
[1]\Procedure{alg}{} 
\While{$X$ is not $\alpha$-EFX}
    \State $X \gets $ output of Algorithm~\ref{alg:add_poly} with $X$ as input
\EndWhile
\State \textbf{return} $(X_1, \ldots, X_n)$
\EndProcedure
\end{algorithmic}
\end{algorithm}

Let us bound the running time of Algorithm~\ref{alg:poly_repeat}. 
\begin{lemma}
    Algorithm~\ref{alg:poly_repeat} terminates after polynomially many iterations.\label{lem:poly_add_runtime}
\end{lemma}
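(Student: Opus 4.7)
The plan is to bound the number of iterations of the outer while loop by tracking the Nash welfare of the current allocation $X$, which strictly increases whenever Algorithm~\ref{alg:add_poly} returns a complete allocation $\widehat{X}$ rather than an $\alpha$-EFX partial allocation $M$. The key inputs are Lemma~\ref{lem:poly_lem_2}, which quantifies the multiplicative improvement in Nash welfare, and Lemma~\ref{lem:first_runtime}, which guarantees that each individual call to Algorithm~\ref{alg:add_poly} runs in polynomial time.

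First I would observe that whenever Algorithm~\ref{alg:add_poly} does not return an $\alpha$-EFX allocation, the returned complete allocation $\widehat{X}$ satisfies
\[
\nw(\widehat{X}) > \eta \cdot \nw(X), \qquad \text{where } \eta = \Bigl(1 + \frac{1}{\alpha+1} \cdot \frac{1}{n-1}\Bigr)^{1/n} > 1,
\]
by Lemma~\ref{lem:poly_lem_2}. Hence the while loop of Algorithm~\ref{alg:poly_repeat} maintains the invariant that $X$ is a complete allocation whose Nash welfare strictly increases by a factor of at least $\eta$ between consecutive iterations.

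Second, since $\nw(X)$ is bounded above by $\nw(X^*)$, where $X^*$ denotes a maximum Nash welfare allocation, and the initial $X$ satisfies $\nw(X) \geq \rho \cdot \nw(X^*)$, after $k$ iterations the algorithm must satisfy $\rho \cdot \eta^k \leq 1$, so $k \leq \log(1/\rho)/\log(\eta)$. Using $\ln(1+x) \geq x/2$ for $x \in (0,1]$, we have
\[
\log(\eta) = \frac{1}{n} \log\Bigl(1 + \frac{1}{(\alpha+1)(n-1)}\Bigr) \geq \frac{1}{2 n (n-1)(\alpha+1)} = \Omega(1/n^2),
\]
so the number of iterations is $O(n^2 \log(1/\rho))$, which is polynomial in the input size whenever $\rho$ is encoded with polynomially many bits (as is the case for the input approximation ratios considered in this paper).

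Finally, combining this bound with Lemma~\ref{lem:first_runtime}, which guarantees that each call to Algorithm~\ref{alg:add_poly} terminates in polynomial time, yields the total polynomial running time of Algorithm~\ref{alg:poly_repeat}. The main subtlety is ensuring that the Nash welfare of each intermediate complete allocation $\widehat{X}$ remains a valid input for the next call (i.e., is still $\rho'$-MNW for some $\rho' \geq \rho$), which follows immediately since $\nw(\widehat{X}) \geq \nw(X) \geq \rho \cdot \nw(X^*)$; this allows the same bound $\eta$ on multiplicative Nash welfare improvement to be reapplied at each iteration.
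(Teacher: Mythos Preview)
Your proposal is correct and takes a slightly different route from the paper. You retain the multiplicative improvement factor $\eta = (1 + \tfrac{1}{(\alpha+1)(n-1)})^{1/n}$ from Lemma~\ref{lem:poly_lem_2} throughout and bound the number of iterations by $\log(1/\rho)/\log\eta = O(n^2 \log(1/\rho))$. The paper instead batches every $n$ consecutive iterations: applying Lemma~\ref{lem:poly_lem_2} $n$ times yields $\nw(X^{t+n}) - \nw(X^t) > \tfrac{1}{(\alpha+1)(n-1)}\nw(X^t) \geq \tfrac{1}{(\alpha+1)(n-1)}\nw(X^0)$, and telescoping this \emph{additive} increment gives a bound of $n(n-1)(\alpha+1)/\beta$ iterations. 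Your argument is more direct and actually gives a strictly tighter dependence on the approximation parameter (logarithmic rather than linear in $1/\rho$); the paper's version trades a weaker bound for avoiding the estimate on $\log\eta$. The closing remarks about Lemma~\ref{lem:first_runtime} and about $\widehat{X}$ remaining $\rho$-MNW are fine but not needed for the lemma as stated, which concerns only the number of outer iterations.
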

\begin{proof}Let $T+1$ be the total number of iterations of the while loop, and let $X^0, X^1, \ldots, X^{T+1}$ be the consecutive allocations throughout the execution of the algorithm. Note that each of $X^1, \ldots, X^T$ are the allocations returned by Algorithm~\ref{alg:add_poly} in line~\ref{line:return_x}.
For any $0 \leq t \leq T-n$, applying Lemma~\ref{lem:poly_lem_2} exactly $n$ times gives
$$\nw(X^{t+n}) - \nw(X^t) > \frac{1}{\alpha+1} \cdot \frac{1}{n-1} \cdot \nw(X^t) \geq \frac{1}{\alpha+1} \cdot \frac{1}{n-1} \cdot \nw(X^0).$$ 
Thus, if $T \geq n (n-1)(\alpha+1) (1/\beta)$, then it must hold that 
\begin{align*}
    \nw(X^{T}) &\geq \sum_{i=1}^{\lfloor T/n \rfloor} \nw(X^{T-(i-1)n})-\nw(X^{T-in})
    > (1/\beta) \cdot \nw(X^0) 
\end{align*}
which contradicts the assumption that $X^0$ is $\beta$-MNW. Hence, Algorithm~\ref{alg:poly_repeat} terminates after at most $n (n-1)(\alpha+1) (1/\beta)$ iterations.   
\end{proof}

We are now ready to prove Theorem~\ref{thm:additive-partial-poly}.

\begin{proof}[Proof of Theorem~\ref{thm:additive-partial-poly}]
    To obtain a partial allocation with the stated guarantees, we use Algorithm~\ref{alg:poly_repeat}.
    By Lemmas~\ref{lem:poly_lem_1} and~\ref{lem:poly_lem_2}, the output of the algorithm is $\frac{\beta}{\alpha+1}$-MNW, and by Lemmas~\ref{lem:first_runtime} and~\ref{lem:poly_add_runtime}, the algorithm terminates in polynomial times. The output of Algorithm~\ref{alg:poly_repeat} can be turned to a $1$-separated allocation using Algorithm~\ref{alg:sing_swaps}, and that allocation can be turned into a complete one with the desired guarantees using Algorithm~\ref{alg:envy_cycles}. Since both algorithms run in polynomial time (see Appendix~\ref{sec:sep_proofs}), the result follows.
\end{proof}

\section{Separated Allocations}\label{sec:sep_proofs}

\lemmakecomplete*
\begin{proof}
    The envy graph $(V,E)$ induced by a partial allocation $(Y_1, \ldots, Y_n)$ is given by the set of vertices $V = \agents$ and the set of edges $E = \{ (i,j) \in \agents^2 \bar v_i(Y_j) > v_i(Y_i)\}$, i.e., an edge from $i$ to $j$ means that agent $i$ prefers $j$'s bundle $Y_j$ to her own bundle $Y_i$. Notice that any cycle in the envy graph can be eliminated by reallocating the bundles along the cycle, where each agent in the cycle gets a new bundle that she values more than her previous bundle.

    \begin{algorithm}
\caption{Envy cycles procedure.}\label{alg:envy_cycles}
\begin{flushleft}
\hspace*{\algorithmicindent} \textbf{Input} $(Z_1, \ldots, Z_n)$ is a partial allocation. \\
\hspace*{\algorithmicindent} \;\;\;\;\;\;\; \;\; $U$ is the set of unallocated items. \\
\hspace*{\algorithmicindent} \textbf{Output} $(Y_1, \ldots, Y_n)$ is a complete allocation with $NW(Y) \geq NW(Z)$.
\end{flushleft}
\vspace{-0.1in}
\begin{algorithmic}
[1]\Procedure{alg}{}
\State $Y\gets(Z_1, \ldots, Z_n)$
\For{$x \in U$}
\State eliminate any cycles in the envy graph of $(Y_1, \ldots, Y_n)$\label{line:eliminate_cycles}
\State $i^\star \gets$  any agent who is not envied by anyone
\State $Y_{i^\star} \gets Y_{i^\star} + x$  
\EndFor
\State \textbf{return} $(Y_1, \ldots, Y_n)$
\EndProcedure
\end{algorithmic}
\end{algorithm}

    Consider Algorithm \ref{alg:envy_cycles}.
    We will show that the following invariant is preserved throughout the execution of the algorithm. For any pair of agents $i$ and $j$, agent $i$ is $\min(\alpha, 1/(1+\gamma))$-EFX towards agent $j$, i.e., {it holds that}
    $$v_i(Y_i) \geq \min(\alpha, 1/(1+\gamma)) \cdot v_i(Y_j-x) \;\;\; \text{for any item } x \in Y_j.$$
    
    The invariant holds after initializing $Y$ to $Z$ since, as $Z$ is $\alpha$-EFX, it must hold that 
    $$v_i(Y_i) \geq \alpha \cdot v_i(Y_j-x) \geq \min(\alpha, 1/(1+\gamma)) \cdot v_i(Y_j-x)$$
    for any agents $i$ {and} $j$ and any item $x \in Y_j$.

    Moreover, the invariant is preserved by eliminating cycles in the envy graph since the set of bundles remains the same and every agent ends up with a new bundle with at least as high a valuation as the previous bundle.

    It remains to show that allocating item $x$ to $Y_{i^\star}$ does not violate the invariant. Indeed, for any agent $j \neq i^\star$ and any $y \in Y_{i^\star}+x$, we have that 
    \begin{align*}
    v_j(Y_{i^\star}+x-y) 
    &\leq v_j(Y_{i^\star}+x) && (\text{by monotonicity of }v_j) \\
    &\leq v_j(Y_{i^\star}) + v_j(x) && \text{(by subadditivity of } v_j)\\
    &\leq v_j(Y_j) + v_j(x) && (\text{since } i^\star \text{ is not envied by } j)\\
    &\leq v_j(Y_j) + \gamma \cdot v_j(Y_j) && (\text{by } \gamma\text{-separation})\\
    &= (1+\gamma)\cdot v_j(Y_j) \\
    &\leq \max(1/\alpha,1+\gamma) \cdot v_j(Y_j)
    \intertext{and it follows that $j$ is $\min(\alpha, 1/(1+\gamma))$-EFX towards $i^\star$ given the new bundle.
 Furthermore, note that for any agent $j$ and any item $y \in Y_j$ it holds that }
    v_{i^\star}(Y_{i^\star}+x) &\geq v_{i^\star}(Y_{i^\star}) && \text{(by monotonicity of } v_{i^\star})\\
    &\geq \min(\alpha, 1/(1+\gamma))\cdot v_{i^\star}(Y_j-y) && (\text{by the invariant})
 \end{align*}
 and hence $i^\star$ is $\min(\alpha, 1/(1+\gamma))$-EFX towards $j$ given the new bundle. 
 The invariant is also preserved for all pairs not involving $i^\star$ as their allocations have not changed. {It follows that the final allocation is $\min(\alpha, 1/(1+\gamma))$-EFX. 
 
 It remains to show that the EF1 guarantee is also preserved throughout the execution of the envy cycles procedure. As before, eliminating cycles in the envy graph cannot violate this condition. Moreover, if the procedure allocates an item $x$ to $Y_{i^\star}$, then it holds for any agent $j$ that $ v_j((Y_{i^\star}+x) - x) = v_j(Y_{i^\star}) \leq v_j(Y_j)$ since $i^\star$ is not envied by $j$. Therefore, if the initial allocation $Z$ is EF1, then the final allocation is also EF1.
 }
\end{proof}

\lemsubadditivesep*
\begin{proof}
    Consider Algorithm~\ref{alg:sing_swaps}. 
    \begin{algorithm}
\caption{Swapping with singletons procedure.}\label{alg:sing_swaps}
\begin{flushleft}
\hspace*{\algorithmicindent} \textbf{Input} $(Z_1, \ldots, Z_n)$ is a partial allocation, \\ 
\hspace*{\algorithmicindent} \;\;\;\;\;\;\; \;\; $U$ is the set of unallocated items. \\
\hspace*{\algorithmicindent} \textbf{Output} $(Z_1, \ldots, Z_n)$ is a partial $1$-separated allocation.
\end{flushleft}
\vspace{-0.1in}
\begin{algorithmic}
[1]\Procedure{alg}{}
\While{there is an agent $i$ so that $v_i(Z_i) < v_i(x)$ for some $x \in U$}\label{line:ww_cond}
\State $U \gets (U \cup Z_i) - x$
\State $Z_i \gets \{x \}$
\EndWhile
\State \textbf{return} $(Z_1, \ldots, Z_n)$
\EndProcedure
\end{algorithmic}
\end{algorithm}

    Note that Algorithm~\ref{alg:sing_swaps} terminates in a polynomial number of steps. To show this, let $Z_i^t$ denote the variable $Z_i$ after exactly $t$ iterations of the algorithm. For any fixed agent $i$, the number of iterations where $Z_i^t \neq Z_i^{t-1}$ is at most $m+1$ because after the first change $Z_i$ becomes a singleton and with every following change the value $v_i(Z_i)$ strictly increases hence it cannot happen that $Z_i$ was changed to the same singleton more than once.
    
    Now, {we show by induction that the} allocation $(Z_1^t, \ldots, Z_n^t)$ is $\alpha$-EFX. This holds for $t=0$ by the assumption. Assume that $t > 0$ and let $i$ and $j$ be any two distinct agents. Denote by $i^t$ and $x^t$ the agent and the item chosen during the $t$-th iteration of the algorithm.
    
    If $i,j \neq i^t$, then for all $g \in Z_{j}^{t-1} = Z_j^t$, it holds that 
    \begin{align*}
     v_i(Z_i^t) 
     &= v_i(Z_i^{t-1})  && (\text{since } i^t \neq i) \\
     &\geq v_i(Z_j^{t-1} - g) && (\text{by the inductive assumption}) \\
     &= v_i(Z_j^{t}-g)  && (\text{since } j^t \neq i) 
     \intertext{If $i = i^t$, then for all $g \in Z_{j}^{t-1} = Z_j^t$, it holds that}
      v_i(Z_i^t) 
      &= v_i(x^t) && (\text{since } i^t = i) \\
      &\geq v_i(Z_i^{t-1}) && (\text{by the condition in line~\ref{line:ww_cond}}) \\ 
      &\geq v_i(Z_j^{t-1} - g)  && (\text{by the inductive assumption}) \\
      &= v_i(Z_j^{t}-g)  && (\text{since } i^t \neq j) 
      \intertext{If $j = i^t$,  for any $g \in Z_{j}^t = \{x^t\}$, it holds that}
         v_i(Z_i^t) 
         &\geq 0 &&  (\text{by monotonicity of } v_i) \\
         &= v_i(\{x^t\} - g) && (\text{since } g = x^t) \\
         &= v_i(Z_j^{t-1} - g) && (\text{since }j=i^t)
     \end{align*}
    It follows that the allocation returned by the algorithm is $\alpha$-EFX and it is $1$-separated by the condition for the while loop in line~\ref{line:ww_cond}.
\end{proof}

\section{Maximin Share Guarantees}\label{sec:mms}

In this section, we analyze the maximin share guarantees of the allocations given in Theorem~\ref{thm:complete_additive}. 

Let us start with the definition of the maximin shares. Let $\Pi_k(S)$ denote the set of all partitions of the set $S$ into $k$ subsets. We define the {\em maximin share} of agent $i$ with respect to $S$ to be $\mu_i(n, S) = \max_{X \in \Pi_n(S)} \min_{j \in [n]} v_i(X_j)$. We say that an allocation $X = (X_1, \ldots, X_n)$ is $\alpha$-PMMS ($\alpha$-pairwise maximin share) if for every two agents $i$ and $j$ it holds that $v_i(X_i) \geq \alpha \cdot \mu_i(2, X_i \cup X_j)$. We say that an allocation $X$ is $\alpha$-GMMS ($\alpha$-groupwise maximin share) if for every subset of the agents $I \subseteq [n]$ it holds that $v_i(X_i) \geq \mu_i(|I|, \bigcup_{j \in I} X_j)$ for all $i \in I$.

The main result of this section is the following.

\begin{theorem}\label{thm:mms}
    Every instance with additive valuations admits
    a complete allocation that is $\alpha$-EFX, EF1, $\frac{\alpha}{\alpha^2+1}$-GMMS, $(\varphi-1)$-PMMS, and $\frac{1}{\alpha+1}$-MNW, for every $0 \leq \alpha \leq \varphi - 1 \approx 0.618$.
\end{theorem}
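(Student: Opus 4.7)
The plan is to verify that the complete allocation $Y$ constructed in the proof of Theorem~\ref{thm:complete_additive} (Algorithm~\ref{alg:alloc} followed by the envy-cycles procedure of Lemma~\ref{lem:make_complete}) additionally satisfies $(\varphi-1)$-PMMS and $\frac{\alpha}{\alpha^2+1}$-GMMS. The available ingredients are the $\alpha$-EFX, EF1, and $\frac{1}{\alpha+1}$-MNW guarantees of $Y$, the $\alpha$-separation of the intermediate partial allocation (Lemma~\ref{lem:additive_better_sep}), and the MNW-maximality of the seed allocation $X$, which (via the classical Caragiannis et al.\ bound) implies that for any pair of agents $i,j$, the restriction $(X_i, X_j)$ is $(\varphi-1)$-MMS on $X_i \cup X_j$.

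For PMMS, I would fix two agents $i, j$, let $S = Y_i \cup Y_j$, and dispose of the case $|Y_j| \leq 1$ trivially (since then $\mu_i(2, S) \leq v_i(Y_i)$). For $|Y_j| \geq 2$, let $g^\star \in Y_j$ maximize $v_i$; $\alpha$-EFX applied at a non-$g^\star$ item gives $v_i(g^\star) \leq v_i(Y_i)/\alpha$, and combining with EF1 (at $g^\star$) yields $v_i(S) \leq v_i(Y_i)(2\alpha+1)/\alpha$, which routinely produces only a $\frac{2\alpha}{2\alpha+1}$-PMMS bound --- strictly less than $\varphi-1$ for all $\alpha < 1$. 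To close the gap I would argue by contradiction: assuming $v_i(Y_i) < (\varphi-1)\mu_i(2, S)$, I would take the MMS partition $(A, B)$ of $S$ and reassemble $X$ into a higher-Nash-welfare allocation using the touching-sequence construction from the proofs of Lemmas~\ref{lem:additive_bound} and~\ref{lem:additive_better_sep}, tracing each item in $S$ back to its original $X$-bundle and redistributing according to $(A, B)$ to contradict the MNW-maximality of $X$.

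For GMMS, I would analogously fix $I \subseteq \agents$ with $i \in I$, let $T = \bigcup_{j \in I} Y_j$ and $k = |I|$, and target $v_i(Y_i) \geq \frac{\alpha}{\alpha^2+1}\mu_i(k, T)$. Using $\mu_i(k, T) \leq v_i(T)/k$, the task reduces to controlling $v_i(T)/v_i(Y_i)$; a naive sum of the $\alpha$-EFX-plus-EF1 estimate $v_i(Y_j) \leq (\alpha+1)v_i(Y_i)/\alpha$ yields only a weaker factor. To sharpen to $\frac{\alpha}{\alpha^2+1}$ I would separate each $Y_j$ into its heaviest item (each bounded by $v_i(Y_i)/\alpha$ via $\alpha$-EFX at the minimum item) and the rest (also bounded by $v_i(Y_i)/\alpha$ via $\alpha$-EFX at $g^\star$), handle the heavy items by an MMS-distributional argument showing no MMS part can carry too many of them, and handle the bulk by direct summation. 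The main obstacle is the PMMS step: $\alpha$-EFX and EF1 alone do not suffice for $(\varphi-1)$-PMMS when $\alpha < 1$, so the argument must genuinely exploit the MNW structure of $X$; the delicate part is that the pool $Y_i \cup Y_j$ may differ from $X_i \cup X_j$ because the envy-cycles procedure can shuffle items across agents, forcing the Pareto-improvement construction to track item provenance and balance contributions from agents outside $\{i, j\}$ who received reshuffled items.
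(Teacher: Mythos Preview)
Your plan diverges from the paper's proof in both parts, and for PMMS it is working harder than necessary.

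\textbf{PMMS.} The paper does not run a touching-sequence contradiction. Instead it proves a structural lemma (Lemma~\ref{lem:mms_bounds}): for any $j$ with $|Y_j|\ge 2$, one of three cases holds. If $Y_j$ was modified during the envy-cycles phase, the envy-cycles invariant plus $\alpha$-separation give $v_i(Y_j)\le 2\,v_i(Y_i)$. If $Y_j=Z_k$ for some $k$ but $Z_i\subsetneq X_i$, then the \emph{full} EFX guarantee of Claim~\ref{cla:add_ineq}(\ref{list:ef1}) (not merely $\alpha$-EFX) again gives $v_i(Y_j)\le 2\,v_i(Y_i)$. In both of these cases $\mu_i(2,Y_i\cup Y_j)\le\tfrac32\,v_i(Y_i)$, which is already better than $(\varphi-1)^{-1}$. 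The only remaining case is $Y_j=Z_k$ and $Z_i=X_i$, and there the paper simply quotes the Caragiannis et al.\ bound $v_i(X_i)\ge(\varphi-1)\,\mu_i(2,X_i\cup X_k)$ directly. So the ``delicate'' provenance-tracking problem you identify is exactly what the paper avoids: whenever a bundle has been reshuffled, the cheap bound $v_i(Y_j)\le 2\,v_i(Y_i)$ suffices, and the MNW structure is invoked only when both relevant bundles are still (subsets of) original $X$-bundles. Your contradiction plan via an MMS partition $(A,B)$ and touching sequences might be completable, but it is strictly more work and the obstacle you flag is real.

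\textbf{GMMS.} The paper's argument is again a direct bound, not a distributional one. It first strips out all agents $j\ne i$ with $|Y_j|=1$ using the standard fact $\mu_i(k,S)\le\mu_i(k-1,S-g)$, and then bounds each surviving bundle by $v_i(Y_j)\le(\alpha+1/\alpha)\,v_i(Y_i)$ via the same structural lemma (the $(\alpha+1/\alpha)$ factor comes from the Caragiannis et al.\ marginal inequality $v_i(X_k-g^\star)\le v_i(X_i)\min(1,1/v_i(g^\star))$ combined with the $\alpha$-EFX cap $v_i(g^\star)\le v_i(Y_i)/\alpha$). Averaging gives $\mu_i\le(\alpha+1/\alpha)\,v_i(Y_i)$ immediately. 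Your plan of separating each $Y_j$ into its heaviest item plus remainder and then running an MMS-distributional argument on the heavy items is a different route; it may work, but the singleton-removal step is the clean simplification you are missing.
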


The techniques we use to prove the theorem are very similar to those used by~\citet*{ABM18}. Let us first state two useful observations made by~\citet*{BL16}.

\begin{lemma}\label{lem:mu_tech}
    For any set of goods $S$ and any number of agents $k$, it holds that 
    \begin{enumerate}[(i)]
        \item $\mu_i(k, S) \leq \mu_i(k-1, S-g)$ for any good $g \in S$, and\label{list:remove_one_mu}
        \item $\mu_i(k, S) \leq v_i(S) / k$\label{list:average}.
    \end{enumerate}
\end{lemma}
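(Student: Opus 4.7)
The plan is to prove each part by exhibiting an appropriate partition and then applying a simple inequality (monotonicity for part (i), additivity for part (ii)).

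For part (i), I will start by taking a partition $X^{\star} = (X_1^{\star}, \ldots, X_k^{\star}) \in \Pi_k(S)$ that attains the maximum in the definition of $\mu_i(k, S)$, so that $v_i(X_j^{\star}) \geq \mu_i(k, S)$ for every $j$. Fix the index $t$ with $g \in X_t^{\star}$, and pick any other index $t' \neq t$ (such $t'$ exists since the statement is only meaningful for $k \geq 2$). I would then build a $(k-1)$-partition of $S - g$ by removing $g$ from $X_t^{\star}$ and merging the residual set $X_t^{\star} - g$ into $X_{t'}^{\star}$, leaving all remaining parts untouched. By monotonicity of $v_i$, the merged block has value at least $v_i(X_{t'}^{\star}) \geq \mu_i(k, S)$, and every other block of the new partition still has value at least $\mu_i(k, S)$. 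Taking the minimum over the $k-1$ blocks gives a lower bound of $\mu_i(k, S)$ on the minimum value in this particular $(k-1)$-partition of $S - g$, so by the definition of $\mu_i(k-1, S - g)$ as a maximum over all such partitions the claimed inequality follows.

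For part (ii), I will again take a partition $X^{\star} \in \Pi_k(S)$ achieving $\mu_i(k, S)$ and use the elementary fact that the minimum of a list of numbers is at most its average. Concretely, $\mu_i(k, S) = \min_{j \in [k]} v_i(X_j^{\star}) \leq \tfrac{1}{k}\sum_{j \in [k]} v_i(X_j^{\star})$, and the additivity of $v_i$ applied to the disjoint union $S = \bigsqcup_j X_j^{\star}$ rewrites the right-hand side as $v_i(S)/k$.

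There is no real obstacle here: both arguments are short manipulations directly from the definitions. The only mild subtlety is in part (i), where I must use monotonicity (which follows from additivity and nonnegativity of the valuation, as $v_i$ takes values in $\reals^{\geq 0}$) to argue that growing $X_{t'}^{\star}$ by the items $X_t^{\star} - g$ does not decrease its value; this is automatic for additive $v_i$.
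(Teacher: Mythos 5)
Your proof is correct. There is, however, nothing in the paper to compare it against: the paper states this lemma without proof, attributing it to \citet*{BL16}, and your argument is the standard one for these facts --- for (i), take an optimal $k$-partition, delete $g$ from its part and merge the remainder of that part into another part, using monotonicity; for (ii), bound the minimum by the average and use additivity to identify $\tfrac{1}{k}\sum_j v_i(X_j^{\star})$ with $v_i(S)/k$. Your side remarks are also the right ones: part (i) only makes sense for $k \geq 2$, monotonicity indeed follows from additivity with nonnegative item values, and part (ii) genuinely needs additivity, which matches how the paper applies the lemma (only for additive valuations in its maximin-share analysis).
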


We also use two results of~\citet*{CKMPSW16}.

\begin{lemma}\label{lem:mnw}
    Let $X$ be a MNW allocation. For any two agents $i$ and $j$, it holds that 
    \begin{enumerate}[(i)]
         \item $v_i(X_i) \geq (\varphi-1) \cdot \mu_i(2, X_i \cup X_j)$ where the items in $X_i$ are treated as divisible items, and\label{list:mnw_pmms}
        \item $v_i(X_j-g^\star) \leq v_i(X_i) \cdot \min(1,1/v_i(g^\star))$ where $g^\star \in \argmax_{g \in X_j} v_i(g)$.\label{list:mnw_remove_one}
    \end{enumerate}
   
\end{lemma}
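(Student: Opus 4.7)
My plan is to establish both parts of Lemma~\ref{lem:mnw} purely from MNW-optimality of $X$. The key observation is that for any alternative allocation $X'$, $\nw(X') \leq \nw(X)$; applying this to the allocation obtained by moving any subset $S \subseteq X_j$ from $j$ to $i$ yields the \emph{master inequality}
\[
(v_i(X_i) + v_i(S))(v_j(X_j) - v_j(S)) \leq v_i(X_i)\, v_j(X_j),
\]
equivalently $v_j(S) \geq v_j(X_j)\, v_i(S)/(v_i(X_i) + v_i(S))$. Because Nash welfare is scale-invariant in each $v_i$, I will normalize $v_i$ so that $v_i(X_i) = 1$ throughout; this is the natural reading that makes the factor $1/v_i(g^\star)$ in part~(\ref{list:mnw_remove_one}) dimensionally consistent.

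For the EF1 bound $v_i(X_j - g^\star) \leq 1$ in part~(\ref{list:mnw_remove_one}), I apply the master inequality to each singleton $\{g\} \subseteq X_j$, weaken the denominator using $v_i(g) \leq v_i(g^\star)$, sum over $g \in X_j$, and simplify to $v_i(X_j) \leq 1 + v_i(g^\star)$. For the sharper bound $v_i(X_j - g^\star) \leq 1/v_i(g^\star)$, I apply the master inequality twice: with $S = \{g^\star\}$ to obtain the lower bound $v_j(g^\star) \geq v_j(X_j)\, v_i(g^\star)/(1 + v_i(g^\star))$, and with $S = X_j - g^\star$ to obtain the upper bound $v_j(g^\star) = v_j(X_j) - v_j(X_j - g^\star) \leq v_j(X_j)/(1 + v_i(X_j - g^\star))$. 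Equating and cancelling the common factor $v_j(X_j)$ gives $v_i(g^\star)(1 + v_i(X_j - g^\star)) \leq 1 + v_i(g^\star)$, which rearranges to $v_i(g^\star)\, v_i(X_j - g^\star) \leq 1$.

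For part~(\ref{list:mnw_pmms}), I first compute the divisible MMS in closed form. Writing $V_j = v_i(X_j)$, a direct optimization over 2-partitions of $X_i \cup X_j$ (with items of $X_j$ indivisible but $X_i$'s value splittable continuously) gives
\[
\mu_i(2, X_i \cup X_j) = \min\!\left(\tfrac{1+V_j}{2},\; 1 + V_j - v_i(g^\star)\right),
\]
with the two terms corresponding respectively to a perfect balance and to the degenerate case where $g^\star$ alone dominates half the total. I then verify $1 \geq (\varphi - 1)\mu_i$ by cases. In Case~1 ($v_i(g^\star) \leq (1+V_j)/2$, so MMS equals $(1+V_j)/2$), if $v_i(g^\star) \geq 1$ then combining $V_j \leq v_i(g^\star) + 1/v_i(g^\star)$ from part~(\ref{list:mnw_remove_one}) with the Case~1 inequality forces $v_i(g^\star) \leq \varphi$ and $V_j \leq 2\varphi - 1$, yielding $\mu_i \leq \varphi$ and $(\varphi-1)\mu_i \leq \varphi(\varphi-1) = 1$; the sub-case $v_i(g^\star) < 1$ follows directly from EF1. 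In Case~2 ($v_i(g^\star) > (1+V_j)/2$, which also forces $v_i(g^\star) > 1$ and gives $\mu_i = 1 + v_i(X_j - g^\star)$), it suffices to show $v_i(X_j - g^\star) \leq \varphi - 1$; if $v_i(g^\star) \leq \varphi$, the Case~2 defining condition already gives $v_i(X_j - g^\star) < v_i(g^\star) - 1 \leq \varphi - 1$, while if $v_i(g^\star) > \varphi$, the sharp bound of part~(\ref{list:mnw_remove_one}) gives $v_i(X_j - g^\star) \leq 1/v_i(g^\star) < \varphi - 1$.

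The main obstacle is the sharp bound in part~(\ref{list:mnw_remove_one}); it requires spotting that combining two complementary subset transfers ($\{g^\star\}$ and $X_j - g^\star$) produces the symmetric product inequality $v_i(g^\star)\, v_i(X_j - g^\star) \leq v_i(X_i)^2$, which is the scale-invariant form of the stated bound. The $\varphi$ in part~(\ref{list:mnw_pmms}) is essentially forced by the interplay of the sub-cases in Case~1, whose two constraints meet exactly at $v_i(g^\star) = \varphi$.
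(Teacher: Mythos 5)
Your argument is correct, but note that the paper itself does not prove this lemma at all: it is imported verbatim from \citet*{CKMPSW16}, so what you have produced is a self-contained reconstruction of those cited results rather than an alternative to a proof in this paper. Your reconstruction follows the same underlying mechanism as the original (local transfer arguments exploiting NW-optimality): the ``master inequality'' obtained by moving a subset $S\subseteq X_j$ to agent $i$ is exactly the tool used there, the singleton-summation step is the standard EF1 proof, and the two complementary transfers $S=\{g^\star\}$ and $S=X_j-g^\star$ yielding $v_i(g^\star)\,v_i(X_j-g^\star)\le v_i(X_i)^2$ recover the key product bound behind the $\varphi-1$ pairwise guarantee; your case analysis for part (i), with the threshold at $v_i(g^\star)=\varphi$, is the same tight interplay that produces the golden ratio in the original analysis.

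Two small flags, neither of which breaks the argument. First, your ``closed form'' $\mu_i(2,X_i\cup X_j)=\min\bigl(\tfrac{1+V_j}{2},\,1+V_j-v_i(g^\star)\bigr)$ is stated as an equality but is in general only an upper bound: with $v_i(X_i)=1$ and $X_j$ consisting of three items of value $2$ each, the true divisible-$X_i$ maximin share is $3$, while your expression evaluates to $7/2$. Since both branches of your case analysis use only the $\le$ direction (min is at most the average, and at most the value of the side not containing $g^\star$), the proof is unaffected, but the claim should be phrased as an inequality. Second, the cancellations of $v_j(X_j)$ and the normalization $v_i(X_i)=1$ implicitly assume the maximum Nash welfare is positive; the degenerate case $\nw(X)=0$ requires the usual convention (maximizing the number of agents with positive utility, as in \citet*{CKMPSW16}) and is glossed over both by you and by the paper, so it is worth a one-line remark rather than a fix.
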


Before stating the third technical lemma, let us introduce some further notation. Let $X$ be any initial MNW allocation. Let $Z$ be the output of Algorithm~\ref{alg:alloc} with $X$ as input, and let $Y$ be the output of the envy cycles procedure (Algorithm~\ref{alg:envy_cycles}) with $Z$ as input. Note that $Y$ is exactly the allocation that we construct in the proof of Theorem~\ref{thm:complete_additive}; {thus,} $Y$ satisfies $\alpha$-EFX, EF1, and $\frac{1}{\alpha+1}$-MNW. We can obtain the following bounds on the value of any non-singleton bundle.

\begin{lemma}\label{lem:mms_bounds}
    For any two agents $i$ and $j$ such that $|Y_j| \geq 2$, the following hold:
    \begin{enumerate}[(i)]
        \item if $Y_j = Z_k$ for some agent $k$ and $Z_i \subsetneq X_i$, then $v_i(Y_j) \leq 2 \cdot v_i(X_i)$,\label{list:zk_and_subsetneq}
        \item if $Y_j = Z_k$ for some agent $k$ and $Z_i = X_i$, then $v_i(Y_j) \leq (\alpha + 1/\alpha) \cdot v_i(X_i)$, and\label{list:zk_and_equals}
        \item if $Y_j \neq Z_k$ for any agent $k$, then $v_i(Y_j) \leq 2 \cdot v_i(X_i)$.\label{list:modified_during_ec}
    \end{enumerate}
\end{lemma}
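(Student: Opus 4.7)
The plan is to handle the three cases separately, using the structural properties of Algorithm~\ref{alg:alloc} (via Claim~\ref{cla:add_ineq}, and the fact established in the proof of Lemma~\ref{lem:additive_bound} that at termination $M_i = Z_i$ for every $i$), the MNW-maximality of $X$ (via Lemma~\ref{lem:mnw}), and the $\alpha$-separation of $Z$ (Lemma~\ref{lem:additive_better_sep}).

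For case~(i), the hypothesis $Z_i \subsetneq X_i$ activates the stronger (EFX-style) conclusion of Claim~\ref{cla:add_ineq}(\ref{list:ef1}), namely $v_i(Z_i) = v_i(M_i) \geq v_i(Z_k - g)$ for every $g \in Z_k$. Since $|Z_k| = |Y_j| \geq 2$, I would pick $g$ minimizing $v_i(g)$ in $Z_k$, so that $v_i(g) \leq v_i(Z_k)/2$. Plugging into $v_i(Z_k) = v_i(Z_k - g) + v_i(g)$ and using $v_i(Z_i) \leq v_i(X_i)$ gives $v_i(Z_k) \leq v_i(X_i) + v_i(Z_k)/2$, which rearranges to $v_i(Y_j) \leq 2 v_i(X_i)$.

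For case~(ii), the hypothesis $Z_i = X_i$ forces $i$ to be matched in Case~1 of Algorithm~\ref{alg:alloc}, so Claim~\ref{cla:add_ineq}(\ref{list:efx}) only yields the weaker $\alpha$-EFX bound $v_i(X_i) \geq \alpha v_i(Z_k - g)$. The smallest-item trick alone would produce only the loose $v_i(Z_k) \leq 2 v_i(X_i)/\alpha$, so to reach the tight $(\alpha + 1/\alpha)$ factor I plan to combine $\alpha$-EFX with the MNW-based inequality from Lemma~\ref{lem:mnw}(\ref{list:mnw_remove_one}) applied to $X_k$ (noting $|X_k| \geq |Z_k| \geq 2$): for $g^\star \in \argmax_{g \in X_k} v_i(g)$, the lemma provides $v_i(X_k - g^\star) \leq v_i(X_i)^2/\max(v_i(X_i), v_i(g^\star))$. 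The crucial intermediate step is to show $v_i(g^\star) \leq v_i(X_i)/\alpha$, which I prove by case-splitting on $g^\star$: if $g^\star \in Z_k$, since $|Z_k| \geq 2$ I can pick some $g \in Z_k \setminus \{g^\star\}$ and use $\alpha$-EFX to obtain $v_i(g^\star) \leq v_i(Z_k - g) \leq v_i(X_i)/\alpha$; if $g^\star \in X_k \setminus Z_k$, then $g^\star$ was removed by Algorithm~\ref{alg:alloc} and is unallocated in $Z$, so $\alpha$-separation yields $v_i(g^\star) \leq \alpha v_i(X_i) \leq v_i(X_i)/\alpha$. Writing $v_i(Y_j) \leq v_i(X_k) = v_i(g^\star) + v_i(X_k - g^\star)$ and maximizing the right-hand side over $v_i(g^\star) \in (0, v_i(X_i)/\alpha]$ attains the bound $(\alpha + 1/\alpha) v_i(X_i)$ at $v_i(g^\star) = v_i(X_i)/\alpha$.

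Case~(iii) will be the main obstacle. Since $Y_j$ is not a seed, the envy-cycles procedure (Algorithm~\ref{alg:envy_cycles}) appended at least one unallocated item to some $Z_{k_0}$ to produce $Y_j$; let $x_s$ denote the last item added. The non-envy invariant, together with the EF1 analysis in the proof of Lemma~\ref{lem:make_complete}, ensures that the holder of $Y_j - x_s$ was not envied at the moment of addition, giving $v_i(Y_j - x_s) \leq v_i(Y_i)$; combined with $\alpha$-separation ($v_i(x_s) \leq \alpha v_i(X_i)$), this reduces the goal to showing $v_i(Y_i) \leq (2 - \alpha) v_i(X_i)$. This is the hard part, since cycle swaps can hand $i$ a better-for-$i$ seed (whose value case~(ii) controls only up to $(\alpha + 1/\alpha) v_i(X_i)$) and prior item additions can further inflate $v_i(Y_i)$. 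I would address this by structural induction on the history of modifications to $Y_j$ and $Y_i$ during envy cycles, chaining the non-envy inequalities at each item-addition event and invoking cases~(i)/(ii) on the relevant seed bundles, and using $\alpha \leq \varphi - 1$ to ensure the resulting telescoping estimates close up to $2 v_i(X_i)$.
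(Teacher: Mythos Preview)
Your arguments for (i) and (ii) are correct and essentially the paper's. In (ii) you are actually more careful than the paper: you explicitly handle the subcase $g^\star \in X_k \setminus Z_k$ via $\alpha$-separation, whereas the paper's bare citation of Claim~\ref{cla:add_ineq}(\ref{list:efx}) only covers $g^\star \in Z_k$.

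For (iii) you are chasing a phantom. The paper's proof is a one-liner: let $g$ be the last item added to the bundle $Y_j$ during the envy-cycles phase; at that moment the holder of $Y_j-g$ was unenvied, and since agent $i$'s value only increases thereafter, $v_i(Y_j-g)\le v_i(Y_i)$; by $\alpha$-separation $v_i(g)\le\alpha\,v_i(Z_i)\le\alpha\,v_i(Y_i)$; hence $v_i(Y_j)\le(1+\alpha)\,v_i(Y_i)\le 2\,v_i(Y_i)$. Notice the right-hand side is $v_i(Y_i)$, not $v_i(X_i)$: the lemma statement has a typo (the paper's own proof of (i) also concludes with $Y_i$, though there $v_i(Z_i)\le v_i(X_i)$ legitimates either), and the downstream use in Theorem~\ref{thm:mms} explicitly invokes the $Y_i$ version of (i) and (iii).

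Your elaborate plan --- reducing to the auxiliary claim $v_i(Y_i)\le(2-\alpha)\,v_i(X_i)$ and then arguing by ``structural induction'' over the envy-cycles history --- is an attempt to prove the literal $X_i$ bound, which the paper never establishes. That auxiliary claim is not available: already case (ii) allows $v_i(Y_i)$ as large as $(\alpha+1/\alpha)\,v_i(X_i)$ when $Y_i$ is a seed bundle and $Z_i=X_i$, which exceeds $2-\alpha$ for every $\alpha\in(0,1)$, so the proposed induction cannot close. Just prove the $Y_i$ bound.
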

\begin{proof}
    \underline{Property (i).} By Claim~\ref{cla:add_ineq}(\ref{list:efx}), we get $$v_i(Y_j) = v_i(Z_k) \leq 2 \cdot v_i(Z_i) \leq 2 \cdot v_i(Y_i)$$ since $|Z_k| \geq 2$ by the assumption.

    \underline{Property (ii).} Let $g^\star \in \argmax_{g \in X_k} v_i(g)$.
    If $v_i(g^\star) \leq v_i(X_i)$, then $$v_i(Y_j) = v_i(Z_k) \leq v_i(X_k) = v_i(g^\star) + v_i(X_k-g^\star) \leq 2 \cdot v_i(X_i) \leq 2 \cdot v_i(Y_i)$$ 
    by Lemma~\ref{lem:mnw}(\ref{list:mnw_remove_one}).
    Otherwise, again by Lemma~\ref{lem:mnw}(\ref{list:mnw_remove_one}),
    $$v_i(Y_j) = v_i(Z_k) \leq v_i(X_k) = v_i(X_k - g^\star) + v_i(g^\star) \leq (v_i(X_i) /v_i(g^\star)) + v_i(g^\star)  \leq (\alpha+1/\alpha) \cdot v_i(X_i)$$  
    since $v_i(g^\star) \leq (1/\alpha) \cdot v_i(X_i)$ by Claim~\ref{cla:add_ineq}(\ref{list:efx}).

    \underline{Property (iii).} It must be the case that the bundle $Y_j$ was modified during the envy cycles phase, i.e., a good $g$ was added to the bundle at some point when the bundle $Y_j-g$ was not envied by any agent. Hence, it must hold that $$v_i(Y_j) = v_i(Y_j-g) + v_i(g) \leq v_i(Y_i) + \alpha \cdot v_i(Y_i) \leq 2 \cdot v_i(Y_i)$$ by Lemma~\ref{lem:additive_better_sep}.
\end{proof}

We are now ready to prove the main theorem.

\begin{proof}[Proof of Theorem~\ref{thm:mms}.]
    We first prove that $Y$ satisfies $(\varphi-1)$-PMMS. Fix any agent $j$. If $|Y_j| = 1$, then $\mu_i(2, Y_i \cup Y_j) \leq \mu_i(1, Y_i) \leq v_i(Y_i)$ by Lemma~\ref{lem:mu_tech}(\ref{list:remove_one_mu}). If it holds that $Z_i = X_i$ and $Y_j = Z_k$, then we get $v_i(X_i) \geq (\varphi-1) \cdot \mu_i(2, X_i \cup X_k)$ by Lemma~\ref{lem:mnw}(\ref{list:mnw_pmms}) which implies that $v_i(Y_i) \geq (\varphi-1) \cdot \mu_i(2, Y_i  \cup Z_k)$ since $v_i(Y_i) \geq v_i(X_i)$ and $Z_k \subseteq X_k$. Otherwise, Lemma~\ref{lem:mms_bounds}(\ref{list:zk_and_subsetneq},\ref{list:modified_during_ec}) implies that $v_i(Y_j) \leq 2 \cdot v_i(Y_i)$ which implies that $\mu_i(2, Y_i \cup Y_j) \leq (3/2) \cdot v_i(Y_i)$ by Lemma~\ref{lem:mu_tech}(\ref{list:remove_one_mu}).

    Finally, let $J = \{j \neq i : |Y_j| = 1\}$ and observe that 
\begin{align*}
    \mu_i(n, [m]) &\leq \mu_i(n-|J|, [m]-\bigcup_{j \in J} X_j) && (\text{by Lemma~\ref{lem:mu_tech}(\ref{list:remove_one_mu})})\\
    &\leq   \sum_{j \in [n]\setminus J} \frac{v_i(X_j)}{n - |J|} && (\text{by Lemma~\ref{lem:mu_tech}(\ref{list:average})})\\
    &= \sum_{j \in [n]\setminus J} \frac{(\alpha + 1/\alpha) \cdot v_i(X_i)}{n - |J|} && (\text{by Lemma~\ref{lem:mms_bounds}})\\
   &= (\alpha + 1/\alpha) \cdot v_i(X_i)
\end{align*}
which gives the result that $Y$ is $\frac{\alpha}{\alpha^2+1}$-GMMS. 
\end{proof}

\section{Proofs from Section~\ref{sec:additive}}\label{sec:proofs-additive}

\claaddbasic*
\begin{proof}
{
\underline{Property (i).} Each $Z_i$ is initialized to $X_i$, and the only time where we modify $Z_i$ is to remove one of its elements. Hence, the invariant $Z_i\subseteq X_i$ is preserved.

\underline{Property (ii).} Each $M_i$ is initialized to $\bot$. The only time we change $M_i$ is to set it to $Z_{i^\star}$ or $Z_{j^\star}$. Moreover, observe that before modifying any $Z_j$, we unmatch any agent $i$ who could be matched to $M_i = Z_j$ by setting $M_i = \bot$. Hence, the invariant $M_i\in \{\bot\}\cup\{Z_j\,|\,j\in \agents\}$ is preserved.

\underline{Property (iii).} Observe that before matching an agent $i$ to a bundle $Z_j$ we unmatch any agent $u$ who could be matched to $M_u = Z_j$. Hence, agents are always matched to distinct bundles.
}
\end{proof}

\claaddineq*
\begin{proof}
{
\underline{Properties {(i) and (ii)}.} Given a matched agent $i$, it is enough to show that these properties hold at the end of an iteration where $i$ becomes matched because decreasing other bundles does not create any new envy. At the end of each iteration, consider agent $i^\star$ who just became matched. If $i^\star$ is matched to $Z_{i^\star}$, then the if condition in lines \ref{line:first_if} and \ref{line:if_add} implies the {desired property depending on whether $Z_i = X_i$ or $Z_i \neq X_i$}. Otherwise, $i^\star$ is matched to $Z_{j^\star}$, and then by maximality of $(j^\star, g)$ we have that $v_{i^\star}(Z_{j^\star}) \geq v_{i^\star}(Z_j-g)$ for any other $j\neq j^\star$ and $g\in Z_j$.

\underline{Properties {(iii) and (iv)}.} Once again, it is enough to show that the properties hold at the end of an iteration where $i$ becomes matched because decreasing the bundle $Z_i$ will not violate the property. {Consider the state of the algorithm at the beginning of an iteration during which $i^\star$ is matched to $Z_{j^\star}$ for $j^\star \neq i^\star$. If $X_{i^\star} = Z_{i^\star}$, then it must be the case that the condition in line~\ref{line:first_if} is not satisfied, and thus for some $j \in [n]$ and $g' \in Z_j$, it holds that 
\begin{align*}
    v_{i^\star}(Z_{i^\star}) 
    &< \alpha \cdot v_{i^\star}(Z_j - g') && (\text{by the condition in line~\ref{line:first_if}}) \\
    &\leq \alpha\cdot v_{i^\star}(Z_{j^\star}-g) && (\text{by the choice of $j^\star$ and $g$})
\end{align*}
which gives {property (iv)} since $Z_{j^\star}$ is changed to $Z_{j^\star}-g$ in line~\ref{line:change_z_add}. {Similarly, if $X_{i^\star} \neq Z_{i^\star}$, then the condition in line~\ref{line:if_add} is not satisfied, and thus for some $j \in [n]$ and $g' \in Z_j$, it holds that 
\begin{align*}
    v_{i^\star}(Z_{i^\star}) 
    &< v_{i^\star}(Z_j - g') && (\text{by the condition in line~\ref{line:if_add}}) \\
    &\leq  v_{i^\star}(Z_{j^\star}-g) && (\text{by the choice of $j^\star$ and $g$})
\end{align*}
which gives property (iii) since $Z_{j^\star}$ is changed to $Z_{j^\star}-g$ in line~\ref{line:change_z_add}.
}

\underline{Property (v).} Observe that whenever we {change} or unmatch some bundle $Z_{i}$, it is immediately matched to an agent. Hence, any touched bundle $Z_{i}\subsetneq X_i$ is always matched to some agent $u$.
}}
\end{proof}

\section{Proofs from Section~\ref{sec:subadditive}}\label{sec:proofs-subadditive}

\clabasic*
\begin{proof}
    \underline{Properties (i) and (ii).} The only time (along Algorithm~\ref{alg:new_subadditive_alg} and Algorithm~\ref{alg:split_bundle}) where we modify $Z_a$ and $X_a$ is not shrunk to $Z_a$ for some agent $a$ is in line \ref{line:mod_z} in Algorithm \ref{alg:split_bundle}, in which case property (i) follows by the if condition in line \ref{line:l_perp_cond} of Algorithm~\ref{alg:split_bundle}. 
    Property (ii) follows directly from property (i), by subadditivity of $v_i$.

    \underline{Property (iii).} Observe first that the agent $i^t$ is defined in every iteration of the algorithm, the agent $j^t$ is defined in every iteration that went into Case 2, 3, or 4, and the agent $k^t$ is defined in every iteration that went into Case 2.4, 2.5, or 2.6.

    We have that $k^t \neq i^t$ since $M_{i^t}^{t-1} = \bot$ and $M_{k^t}^{t-1} = Z_{k^t}^{t-1}$. 

    To show that $k^t \neq j^t$ in Cases 2.4, 2.5, and 2.6, observe that 
   \begin{align*}
     v_{j^t}(Z_{j^t}^{t-1}) 
     &\geq (1/(\alpha+1)) \cdot v_{j^t}(X_{j^t}^{t-1}) && (\text{by Claim~\ref{cla:basic}(ii)}) \\
     &\geq (1/\alpha) \cdot v_{j^t}(X_{j^t}^{t-1} \setminus Z_{j^t}^{t-1})   && (\text{by Claim~\ref{cla:basic}(i)}) 
     \intertext{and that}
    v_{j^t}(Z_{j^t}^{t-1}) 
    &\geq (1/(\alpha+1)) \cdot v_{j^t}(X_{j^t}^{t-1}) && (\text{by Claim~\ref{cla:basic}(ii)}) \\
    &> (1/\alpha) \cdot v_{j^t}(g)  && (\text{by line~\ref{line:g_cond} of Algorithm~\ref{alg:split_bundle}}) 
    \intertext{and so it follows that}
    v_{j^t}(Z_{j^t}^{t-1}) 
    &> (1/\alpha) \cdot v_{j^t}(X_{j^t}^{t-1} \setminus Z_{j^t}^{t-1}) + (1/\alpha) \cdot v_{j^t}(g) && (\text{by the above})\\
    &\geq (1/\alpha) \cdot v_{j^t}((X_{j^t}^{t-1} \setminus Z_{j^t}^{t-1}) + g)   && (\text{by subadditivity of }v_{j^t})  \\
    &\geq (1/\alpha) \cdot v_{j^t}(S^t) && (\text{since } S^t \subseteq R^t) \\
    &\geq \alpha \cdot v_{j^t}(S^t) && (\text{since } \alpha \leq 1)
    \end{align*}
    and therefore $j^t$ does not satisfy the condition to be included in $\mathcal{K}_{S^t}^{t}$ which implies $k^t \neq j^t$. 
    
    Finally, to show $i^t \neq j^t$ in Cases 2, 3, and 4, note that by the if condition in line~\ref{line:self_cond} of Algorithm~\ref{alg:new_subadditive_alg}, there is some $J' \in \mathcal{B}$ so that $v_{i^t}(J') > v_{i^t}(Z_{i^t}^{t-1})$.
    Suppose that $J^t = Z_{i^t}^{t-1}-g^t$. Then, it holds that 
    \begin{align*}
      v_{i^t}(Z_{i^t}^{t-1}-g^t) 
      &= v_{i^t}(J^t)  && (\text{by the assumption}) \\
      &\geq v_{i^t}(J') && (\text{since }\textstyle{J^t \in \argmax_{J \in \mathcal{B}^t} v_i(J)}) \\
      &> v_{i^t}(Z_{i^t}^{t-1})  && (\text{by the choice of }J')
      \intertext{
    which contradicts monotonicity of $v_{i^t}$.
    Now, suppose that $J^t = X_{i^t}^{t-1} \setminus Z_{i^t}^{t-1}$. Then,}
    v_{i^t}(J') 
    &> v_{i^t}(Z_{i^{t}}^{t-1}) && (\text{by the choice of }J') \\
    &\geq (1/(\alpha+1)) \cdot v_{i^t}(X_{i^t}^{t-1}) && (\text{by Claim~\ref{cla:basic}(ii)}) \\
    &\geq (1/\alpha) \cdot v_{i^t}(X_{i^t}^{t-1} \setminus Z_{i^t}^{t-1}) && (\text{by Claim~\ref{cla:basic}(i)}) \\
    &\geq v_{i^t}(J^t)    && (\text{since }\textstyle{J^t \in \argmax_{J \in \mathcal{B}^t} v_i(J)})
    \end{align*}
    which contradicts $J^t \in \argmax_{J \in \mathcal{B}^t} v_i(J)$. Finally, it cannot be the case that $J^t = M_{i^t}^{t-1} - g^t$ because $M_{i^t}^{t-1} = \bot$. It follows that $i^t \neq j^t$. This proves property (iii).

    \underline{Property (iv).} Notice that if $J^t$ is defined, then by the if condition in line~\ref{line:self_cond} of Algorithm~\ref{alg:new_subadditive_alg}, there is some $J' \in \mathcal{B}^t$ so that $v_{i^t}(Z_{i^t}^{t-1}) < \alpha \cdot v_{i^t}(J')$. Thus, it holds that
    \begin{align*}
    v_{i^t}(\emptyset) 
    &\leq v_{i^t}(Z_{i^t}^{t-1}) && (\text{by monotonicity of } v_{i^t}) \\
    &< \alpha \cdot v_{i^t}(J') && (\text{by the choice of }J') \\
    &\leq \alpha \cdot v_{i^t}(J^t) && (\text{since } \textstyle J^t \in \argmax_{J \in \mathcal{B}^{t}} v_{i^t}(J)) 
    \intertext{which implies that $J^t \neq \emptyset$. Moreover, $S^t$ is non-empty because }
     v_{k^t}(\emptyset)
     &\leq v_{k^t}(Z_{k^t}^t) && (\text{by monotonicity of }v_{k^t}) \\
     &< \alpha \cdot v_{k^t}(S^t) && (\text{since } k^t \in \mathcal{K}^t_{S^t})
    \end{align*}
    which proves property (iii).
\end{proof}

\clazconst*
\begin{proof}
    \underline{Properties (i) and (ii).} This follows by the design of the algorithm.
    
    \underline{Property (iii).} This follows by the definition of the matching time.
    
    \underline{Property (iv).} Assume that $M_i^t = Z_a^r$ for some agent $a$ and that $Z_a$ is modified during the $w$-th iteration for some $r < w \leq t$. Since the change operation is only executed in Case 2, it must be that the algorithm went into Case 2 in the $w$-th iteration of the algorithm with $j^w = a$. Note that $i$ was unmatched in line~\ref{line:second_unmatch} (Algorithm~\ref{alg:split_bundle}). Moreover, $i \neq i^w$ because $M_{i^w}^{w-1} = \bot$, and so $i$ was not re-matched in {line~\ref{line:match}} (Algorithm~\ref{alg:split_bundle}).
    Also, note that $i \neq k^w$. Indeed, if $i = k^w$, then $i = a$ by the definition of $\mathcal{K}_{S^w}^w$, but this implies that $k^w = i = a = j^w$ which contradicts Claim~\ref{cla:basic}(iii).
    Thus, $i$ was not re-matched in lines \ref{line:k1} (Algorithm~\ref{alg:split_bundle}) and \ref{line:k2} (Algorithm~\ref{alg:split_bundle}).
    Therefore, it holds that $M_i^w = \bot$ which contradicts the choice of $r$.
\end{proof}

\clachains*
\begin{proof}
    Consider the graph with vertices $V = \{Z_a^t \colon a \in \agents\} \cup \{M_a^t \colon {a \in \agents}, M_a^t\text{ is blue}\}$ and with edges $E = \{(Z_a^t, M_a^t) \colon {a \in \agents}, M_a^t \neq \bot\}$. By properties (i) and (ii) of Claim~\ref{cla:z_const}, the in-degree of every vertex in $V$ is at most $1$. Moreover, the out-degree of every blue bundle is $0$.

    It is enough to show that the in-degree of every vertex $Z_a^t$ for some agent $a$ with $Z_a^t \subsetneq X_a^t$ is exactly $1$. If that is the case, the result follows. 
    To prove the claim, consider the last iteration $r$ during which the algorithm changed $Z_a$, i.e., it holds that $Z_a^{r-1} \neq Z_a^r = \ldots Z_a^t$. Note that there is such an iteration by the assumption that $Z_a^t \subsetneq X_a^t$.  We show by induction that for all $r \leq h \leq t$, there is some agent $b$ so that $M_b^h = Z_a^h$.
    
    First, note that during the $r$-th iteration, it must be that the algorithm went into Case 2 because there are no change operations in the other cases. In Case 2, the algorithm changes only $Z_{j^r}$ and hence it must hold that $j^r = a$. If the algorithm went into either of Cases 2.1, 2.3, 2.4, 2.5, or 2.6, then it also performed a shrink operation on $X_a$ and hence $X_{a}^r = Z_a^r$ which contradicts the assumption that $Z_a^r = Z_a^t \subsetneq X_a^t \subseteq X_a^r$. In Case 2.2, it holds that $M_{i^r}^r = Z_a^r$. Therefore, the desired property holds for $h = r$.

    Now, assume that there is some agent $b$ with $M_b^{h-1} = Z_a^{h-1}$ {for some $h > r$}. Observe that if $b$ was unmatched in line~\ref{line:first_unmatch} of Algorithm~\ref{alg:new_subadditive_alg}, then it holds that $a = i^h$ and $M_a^h = Z_a^h$. Otherwise,
    \begin{itemize}
        \item it cannot be the case that $b$ was unmatched in line~\ref{line:third_unmatch} (Algorithm~\ref{alg:new_subadditive_alg}) because $M_b^{h-1}$ is not blue,
        \item it cannot be the case that $b$ was unmatched in line~\ref{line:second_unmatch} (Algorithm~\ref{alg:split_bundle}) because then $a = j^t$ and in Case 2 the algorithm always changes $Z_{j^t}$ which cannot happen by the assumption that $Z_a$ was not changed after the $r$-th iteration,
        \item it cannot be the case that $b = i^h$ since $M_{i^h}^{h-1} = \bot$, 
        \item it cannot be the case that $b = k^h$ and $b$ was re-matched in line~\ref{line:k1} (Algorithm~\ref{alg:split_bundle}) or in line~\ref{line:k2} (Algorithm~\ref{alg:split_bundle}) because then since $k^h \in \mathcal{K}_{S^h}^h$, it must hold that $M_b^{h-1} = Z_b^{h-1}$ and so $a=b$; however, this {also} means that $X_a$ was shrunk during the $h$-th iteration which contradicts the assumption that $Z_a^h = Z_a^t \subsetneq X_a^t \subseteq X_a^h$,
    \end{itemize}
    and so it holds that $M_b^h = Z_a^h$. 
    Therefore, there is some $b$ so that $M_b^t = Z_a^t${,} and the result follows.
\end{proof}

\clanewbund*
\begin{proof} 
    In Case 1, it holds that $\mathcal{B}^t = \mathcal{B}^{t-1}$. In Case 3, any $H \in \mathcal{B}^t \setminus \mathcal{B}^{t-1}$ is a subset of $X_{j^{t-1}}^{t-2} \setminus Z_{j^{t-1}}^{t-2} \in \mathcal{B}^{t-1}$. In Case 4, any $H \in \mathcal{B}^t \setminus \mathcal{B}^{t-1}$ is a subset of $M_{j^{t-1}}^{t-2}-g^{t-1} \in \mathcal{B}^{t-1}$. 
    
    {In Case 2.1, observe that  $J^t \in \mathcal{B}^{t-1}$ and $\{g^{t-1}\}$ is a singleton. Hence, any $H \in \mathcal{B}^t \setminus \mathcal{B}^{t-1}$ is a subset of $J^t \in \mathcal{B}^{t-1}$.}
    
    In Cases 2.2 and 2.3, {observe that $J \in \mathcal{B}^{t-1}$ and $R^{t-1} = X_{i^{t-1}}^{t-2} \setminus J^{t-1}$}. Hence, it holds that any $H \in \mathcal{B}^t \setminus \mathcal{B}^{t-1}$ is a subset of $(X_{j^{t-1}}^{t-2} \setminus Z_{j^{t-1}}^{t-2}) + g^{t-1} = R^{t-1}$ {or a subset of $J^t \in \mathcal{B}^{t-1}$}. 
    
    In Case 2.4, 2.5, and 2.6, {observe that  $J^{t-1} \in \mathcal{B}^{t-1}$ and $R^{t-1} = X_{i^{t-1}}^{t-2} \setminus J^{t-1}$.} 
    Hence, it holds that any $H \in \mathcal{B}^t \setminus \mathcal{B}^{t-1}$ is a subset of $S^{t-1} \subseteq R^{t-1}$ { or a subset of $J^{t-1} \in \mathcal{B}^{t-1}$ or a subset of $R^{t-1} \setminus S^{t-1}$}. Moreover, {if $H \subseteq S^{t-1}$, then $H$ must in fact be} a strict subset of $S^{t-1}$ because in cases 2.4 and 2.6, {the bundle} $S^{t-1}$ becomes a blue bundle, and in case 2.5, {the bundle} $S^{t-1}$ becomes a white bundle, and by the definition of $\mathcal{B}^t$, the set of available bundles includes only strict subsets of white and blue bundles.
\end{proof}

\end{document}

%%%%%%%%%%%%%%%%%%%%%%%%%%%%%%%%%%%%%%%%%
%% ignored if this is the main tex file
%% removes headers if this file is included from AAAI submission
\fi
%%%%%%%%%%%%%%%%%%%%%%%%%%%%%%%%%%%%%%%%%